\newcommand{\norm}[1]{{\|#1\|}}
\newtheorem{theorem}{Theorem}
\newtheorem{lemma}{Lemma}
\begin{document}

\title{Quantum Simulation of Nuclear Dynamics in First Quantization}

\author{Luca~Spagnoli}
\affiliation{Physics Department, University of Trento, Via Sommarive 14, I-38123 Trento, Italy}
\affiliation{INFN-TIFPA Trento Institute of Fundamental Physics and Applications, Trento, Italy}
\author{Chiara~Lissoni}
\affiliation{Physics Department, University of Trento, Via Sommarive 14, I-38123 Trento, Italy}
\author{Alessandro~Roggero}
\affiliation{Physics Department, University of Trento, Via Sommarive 14, I-38123 Trento, Italy}
\affiliation{INFN-TIFPA Trento Institute of Fundamental Physics and Applications, Trento, Italy}

\begin{abstract}
The study of real time dynamics of nuclear systems is of great importance to provide theoretical predictions of cross sections relevant for both terrestrial experiments as well as applications in astrophysics. First principles simulations of these dynamical processes is however hindered by an exponential cost in classical resources and the possibility of performing scalable simulations using quantum computers is currently an active field of research. In this work we provide the first complete characterization of the resource requirements for studying nuclear dynamics with the full Leading Order (LO) pionless EFT Hamiltonian in first quantization employing simulation strategies using both product formulas as well as Quantum Signal Processing. In particular, we show that time evolution of such an Hamiltonian can be performed with polynomial resources in the number of particles, and logarithmic resources in the number of single-particle basis states. This result provides an exponential improvement compared with previous work on the same Hamiltonian model in second quantization. We find that interesting simulations for low energy nuclear scattering could be achievable with tens of millions of $T$ gates and few hundred logical qubits suggesting that the study of simple nuclear reactions could be amenable for early fault tolerant quantum platforms.
\end{abstract}

\date{\today}

\maketitle

\tableofcontents

\section{Introduction}

Accurate simulations of dynamical properties of nuclear systems are of central importance for the description of a large number of precesses in low-energy nuclear physics ranging from fusion reactions of light elements in stellar nucleosynthesis~\cite{RevModPhys.83.195,acharya2024solarfusioniiinew,articleQuantStar}, to precision experiments for fundamental physics tackling problems like neutrino properties~\cite{10.1093/ptep/ptv061,dunecollaboration2016longbaselineneutrinofacilitylbnf,huber2022snowmassneutrinofrontierreport,Alvarez_Ruso_2025} and neutrino-less double beta decay~\cite{PhysRevC.97.014606,Cirigliano_2022}. A prominent role in the field is played by ab-initio approaches to nuclear physics where a clear link is maintained between the low-energy Hamiltonians describing the interactions among nucleons and the underlying theory of Quantum Chromodynamics (QCD)~\cite{RevModPhys.81.1773} allowing for a sensible quantification of the theoretical uncertainties~\cite{Epelbaum2015,PhysRevLett.125.202702}. The past decade has witnessed a remarkable progress in the development of ab-initio techniques to study the structure of atomic nuclei and the static properties of nuclear matter with a large part of the nuclear chart now amenable to an ab-initio description~\cite{10.3389/fphy.2020.00379,10.3389/fphy.2023.1129094}.

The ab-initio description of reactions involving nuclei has also seen a similar blossoming of computational techniques~\cite{Navrátil_2016,Johnson_2020,annurev:/content/journals/10.1146/annurev-nucl-102419-033316} with, for example, the description of quasi-elastic scattering in atomic nuclei having made important steps in recent years~\cite{PhysRevC.97.022502,PhysRevC.98.025501,PhysRevC.100.062501,PhysRevX.10.031068,PhysRevC.101.044612,PhysRevC.105.014002,PhysRevLett.127.072501,PhysRevC.109.044314,PhysRevC.110.064004}.
The complexity of the full reaction problem is however still hindering the availability of general, controllable and scalable simulation strategies for ab-initio nuclear dynamics in all the regimes relevant to experiments.

In recent years, the prospect of employing quantum computing to simulate nuclear systems in a scalable way has motivated an increasing number of research groups around the world to develop quantum algorithms to study both structure and dynamics of nuclear systems(see~\cite{cloët2019opportunitiesnuclearphysics,Klco_2022,PRXQuantum.4.027001,Ayral2023,beck2023quantuminformationsciencetechnology} for recent reviews). Given the great success of classical computations to study a variety of static properties of atomic nuclei, it is reasonable to expect that the most important near term impact of quantum simulations will be on the study of dynamical processes involving nucleons.

In this work we develop a quantum-simulation protocol for studying nuclear processes at small energies by considering a system of interacting nucleons in pionless effective field theory~\cite{VANKOLCK1999273,KAPLAN1998390}. This theory is appropriate for nuclear systems with typical momenta smaller than the pion mass, and has been successfully employed to describe both the structure and reactions of light nuclei (see~\cite{RevModPhys.92.025004} for a recent review). In this framework, nucleons are described as non-relativistic point particles with both spin and isospin quantum numbers while the strong force between nucleons is modeled by contact interactions and their derivatives.
At leading order, and assuming Wigner's SU(4) symmetry~\cite{PhysRev.51.106}, nucleons interact through two and three-body interactions according to the following Hamiltonian \cite{Lee_2009}
\begin{equation}
    \begin{split}
        H &= T + V_2  + V_3 \\
        &= -\sum_{i=0}^{\eta-1} \frac{\hbar^2\nabla^2_i}{2\mu} + \frac{1}{2}\sum_{i=0}^{\eta-1}\sum_{j\ne i}^{\eta-1} C\delta (\Vec{r}_i - \Vec{r}_j) + \frac{1}{6}\sum_{i=0}^{\eta-1}\sum_{j\ne i}^{\eta-1}\sum_{k\ne j\ne i}^{\eta-1} G\delta (\Vec{r}_i - \Vec{r}_j)\delta (\Vec{r}_j - \Vec{r}_k)
    \end{split}
    \label{equation:Hamiltonian}
\end{equation}
where $T$ is the kinetic energy, $\mu$ the nucleon mass, $V_2$ the two-body potential, $V_3$ the three-body potential and we have denoted with $\eta$ the number of particles in the system. An explicit potential term that distinguishes singlet and triplet spin pairs, and thus breaks the $SU(4)$ invariance of the model down to $SU(2)$, can be added in the two-body potential. In this first work we do not explicitly include this term.

The model Hamiltonian in Eq.~\eqref{equation:Hamiltonian} requires a regularization of the contact terms, the standard approach is to introduce an ultra-violet (UV) cutoff to smear the delta functions. This can be done in continuous space using explicit regulators~\cite{RevModPhys.81.1773,MACHLEIDT20111} but in this work we choose instead to work on a spatial lattice as in Lattice EFT~\cite{Lee_2009,lahde2019nuclear}. In this case the role of this UV cutoff is played by the lattice spacing $a$. As a result of the regularization, the low energy constants $C$ and $G$ will acquire an explicit $a$ dependence and will need to be renormalized in order to describe low energy nuclear physics correctly. The single particle basis in the calculation is formed by taking a regular lattice in $d$ spatial dimensions and associating $4$ separate spin/isospin states for every site. We will consider a lattice with periodic boundary conditions and $M$ sites per direction, leading to a total number of single particle states given by $\Omega = 4\cdot M^{d}$. In order to study nuclear reactions we need to prepare the initial nuclear states and then perform time-evolution generated by the Hamiltonian in Eq.~\eqref{equation:Hamiltonian}, in this work we focus on the second step only. The total spatial volume of the lattice $V=(aM)^d$ needs to be sufficiently large to be able to extrapolate the simulated results to cross sections in the continuum. Notably, in the nuclear physics context the limit $V\to\infty$ is considered by fixing lattice spacing $a$ first and then taking $M\to\infty$. The UV cutoff dependence is then controlled by performing the procedure for multiple separate lattice spacings. Since the relevant nuclear Hamiltonians, including the one from Eq.~\eqref{equation:Hamiltonian}, conserve the total number of particles $\eta$ and for nuclear reactions $\eta$ is fixed by the process being studied, these calculations are then performed in the dilute limit $\eta/\Omega\ll1$. This is to be contrasted with physical simulations of bulk systems where instead the ratio $\eta/\Omega$ is kept fixed as the volume is increased.

The time evolution generated by the pionless Hamiltonian considered here has been already studied in detail using second quantization in Refs.~\cite{Roggero_2020,watson2023quantumalgorithmssimulatingnuclear}. These works found that the total gate cost to approximate the time-evolution operator $e^{itH}$ for a fixed total time $t$ and error $\epsilon$ scales at least linearly in the volume as $O(\Omega)$. In particular, using a second order product formula the authors of Ref.~\cite{watson2023quantumalgorithmssimulatingnuclear} have shown that the number of $T$-gates required for the simulation with a given number of particles $\eta$ scales as
\begin{equation}
    O\left(\frac{t^{3/2} \eta^{1/2} }{\epsilon^{1/2}} \Omega \log\left( \frac{t\eta \Omega}{\epsilon} \right) \right)\;,
    \label{equation:second_quantization_tr2}
\end{equation}
which is sublinear in $\eta$ and linear in $\Omega$. In addition, since in second quantization the number of qubits scales with the size of the single particle space, the required number of qubits scales as $O(\Omega)$, irrespective of the number of particles $\eta$ in the system.

The linear scaling with $\Omega$ poses an important challenge for simulations of nuclear processes in the early fault-tolerant era since, even for light nuclei with small $\eta$, spatial lattices with at least $M\approx8-10$ are typically required in order to obtain controllable results~\cite{LU2019134863,PhysRevLett.132.062501}. In second quantization this requirement translates into at least $2048$ qubits with a memory efficient encoding such as Jordan-Wigner~\cite{jw} and to $3072$ qubits with a local encoding such as the Verstraete-Cirac~\cite{Verstraete_2005} employed to obtain the gate count of Eq.~\eqref{equation:second_quantization_tr2} above~\cite{watson2023quantumalgorithmssimulatingnuclear}. These substantial memory requirements have motivated the use of a representation of the nuclear system in first quantization for the triton toy model introduced in Ref.~\cite{Roggero_2020} and employed in the first demonstrations on current quantum hardware~\cite{Roggero_2020,PhysRevD.105.074503,PhysRevD.111.034504}. Quantum simulation methods in first quantization have a long history in quantum chemistry~\cite{PhysRevLett.79.2586,Kassal_2008,Jones_2012,Babbush_2019,Su2021,georges2025quantum}. The first proposal of a quantum simulation of nuclear dynamics in first quantization was introduced recently in Ref.~\cite{weiss2025solvingreactiondynamicsquantum}. The total number of qubits required for the algorithm scales as $O(\eta\log(\Omega))$ while they provide an implementation of a single Trotter step in a pionless EFT without three-body interactions with a gate count scaling as $O(\eta^2\log(\Omega))$.
In this work we provide the first complete characterization of the resource requirements for studying nuclear dynamics with the full Leading Order (LO) pionless EFT Hamiltonian employing simulation strategies using both product formulas~\cite{Suzuki1991,Lloyd1996} as well as Quantum Signal Processing~\cite{Low2017, Low_2019} and Generalized Quantum Signal Processing~\cite{motlagh2023generalized}. In particular we find that indeed the gate cost scales polynomially in $\eta$ and poli-logarithmically in $\Omega$. For example, a second order product formula implementation has a T-gate cost scaling as
\begin{equation}
\label{eq:2ndexample}
\widetilde{O}\left(\frac{t^{3/2}\eta^{3/2}}{\epsilon^{1/2}}\log\left(\frac{t\eta\Omega}{\epsilon}\right)\left(\eta^2+\log\left(\frac{t\Omega}{\epsilon}\right)\right)\right)\;,
\end{equation}
which is exponentially better in the volume $\Omega$ than the second quantization result in Eq.~\eqref{equation:second_quantization_tr2} at the expense of being polynomially worse in the number of particles $\eta$.

This is achieved by associating a qubit register of size $\log_2(\Omega)$ qubits to every particle out of which: $d\log_2(M)$ qubits are used to encode the position, and 2 qubits encoding the $4$ possible states of spin and isospin. In first quantization, the total wave function carries the statistics of particles, which means that it has to be initialized to be anti-symmetric with respect to the exchange of two fermions. Once the system is initialized in the right symmetry sector, since the Hamiltonian commutes with the permutation operator, the statistics will be preserved. The cost of antisymmetrizing the wavefunction is $O(\eta\log\eta \log M)$ (see e.g.~\cite{Berry_2018}), but it has to be done only once at the beginning of the algorithm. In this work we consider this antisymmetrization step to be included in the initial state preparation.  A complete summary of the results obtained in this work is provided in Table~\ref{tab:comparision_results} where we quote the scaling of the T gate count $C$ and total number of qubits $Q$ for different time-evolution algorithms. We note in particular that the advantageous scaling from Eq.~\eqref{eq:2ndexample} above is maintained if one goes to higher order product formulas and for QSP.

\begin{table}[]
    \centering
\begin{tabular}{|c|c|c|c|}
\hline
 Encoding  & Method & cost & Ref. \\ \hline
2nd quant.  & PF(2) & $\begin{array}{cc}
   C=O\left(\frac{t^{3/2}}{\epsilon^{1/2}} \eta^{3/2}\Omega^2\log\left(t\Omega\eta/\epsilon\right)\right)   \\
    Q=\Omega
\end{array}$ & \cite{Roggero_2020} \\ \hline
2nd quant.  & PF(2) & $\begin{array}{cc}
C=O\left(\frac{t^{3/2}}{\epsilon^{1/2}} \eta^{1/2}\Omega \log\left( t\eta \Omega/\epsilon \right) \right)\\
Q=O(\Omega)
\end{array}$ & \cite{watson2023quantumalgorithmssimulatingnuclear} \\ \hline
2nd quant.  & PF(2p) & $\begin{array}{cc}
C=O\left(\frac{t^{1+1/(2p)}}{\epsilon^{1/(2p)}} \eta^{1/(2p)}\Omega \log\left( t\eta \Omega/\epsilon \right) \right)\\
Q=O(\Omega)
\end{array}$ & \cite{watson2023quantumalgorithmssimulatingnuclear} \\ \hline
2nd quant.  & QSP & $\begin{array}{cc}
C=\widetilde{O}\left(\left(\Omega t+\log\left(1/\epsilon\right)\right)\Omega\log\left(t\Omega/\epsilon\right)\right)\\
Q=O(\Omega)
\end{array}$ & \cite{Roggero_2020} \\ \hline\hline
1st quant.  & PF(2)& $\begin{array}{cc}
C=\widetilde{O}\left(\frac{t^{3/2}}{\epsilon^{1/2}} \eta^{3/2}\log\left(t\Omega\eta/\epsilon\right)\left(\eta^2+\log\left(\Omega\right)\right)\right)\\
Q=\widetilde{O}\left(\eta\log(\Omega)+\log^2(\Omega)+\log\left(t/\epsilon\right)\right)
\end{array}$ & (this work) \\ \hline
1st quant.  & PF(2p)& $\begin{array}{cc}
C=\widetilde{O}\left(\frac{t^{1+1/(2p)}}{\epsilon^{1/(2p)}} \eta^{1+1/(2p)}\log\left(t\Omega\eta/\epsilon\right)\left(\eta^2+\log\left(\Omega\right)\right)\right)\\
Q=\widetilde{O}\left(\eta\log(\Omega)+\log^2(\Omega)+\log\left(t/\epsilon\right)\right)
\end{array}$ & (this work) \\ \hline
1st quant.  & QSP/GQSP& $\begin{array}{cc}
C=O\left(\left(\eta t+\log\left(\frac{\eta t}{\epsilon}\right)\right)\log(\Omega) \left( \eta  + \log\left( \frac{\eta t \log(\Omega)}{\epsilon} \right) \right)\right)\\
Q=O\left(\eta\log(\Omega)+\log\left(t/\epsilon\right)\right)
\end{array}$ & (this work) \\ \hline
\end{tabular}
\caption{The $T$-gate count $C$ and the number of qubits $Q$ required by a variety of quantum algorithms for time evolution with the pionless EFT Hamiltonian Eq.~\eqref{equation:Hamiltonian}. The total time is $t$, the target error $\epsilon$, the number of particles $\eta$ and the size of the single particle space $\Omega$ (proportional to the spatial volume). In all of these we consider the number of spatial dimensions $d = O(1)$.}
\label{tab:comparision_results}
\end{table}

The organization of the paper is as follows: in Sec.~\ref{sec:firstq} we first introduce in more detail the representation of the Hamiltonian $H$ from Eq.~\ref{equation:Hamiltonian}, which is defined in the continuum, in our discrete single particle basis, we then discuss the implementation of the time evolution operator with product formulas in Sec.~\ref{sec:Exponential_of_Hamiltonian_terms} and Sec.~\ref{ssec:pf} and present our construction of a QSP based simulation in Sec.~\ref{sec:qspimpl}. We then present some numerical result employing realistic parameters in Sec.~\ref{sec:results} and conclude in Sec.~\ref{sec:conclusions}. Many of the more technical details and proofs are left for the Appendices.

\section{First quantization algorithms}
\label{sec:firstq}

In this section, we will provide a detailed derivation of algorithms that approximate the time evolution operator for the Hamiltonian defined in Eq.~\ref{equation:Hamiltonian} in a discrete first quantization encoding. As discussed in the introduction, the single particle basis we employ here corresponds to considering, for the spatial part, a finite lattice with spacing $a$ and using periodic boundary conditions while the spin/isospin part is encoded into two qubits.
For example, in a system with $d=3$ spatial dimensions, the single particle basis is spanned by states of the form
\begin{equation}
\ket{\vec{r},s,t} = \ket{r_x}\otimes\ket{r_y}\otimes\ket{r_z}\otimes\ket{s}\otimes\ket{t}\;,
\end{equation}
where $\ket{r_k}$, for $k=x,y,z$ (in $d=3$), encode the coordinates of a nucleon in register of $m=\lceil\log_2(M)\rceil$ qubits, while the spin $\ket{s}$ and isospin $\ket{t}$ are both encoded into registers with one qubit each. From here on we will assume that the number of lattice sites per spatial dimension $M$ is chosen as a power of $2$ and remove the explicit ceiling in the definition of $m$.
For a system with $\eta$ nucleon we therefore require a register with
\begin{equation}
Q = (dm+2)\eta\;,
\end{equation}
qubits to encode the state of the entire system.

We can now proceed to encode the full Hamiltonian which was presented in Eq.~\eqref{equation:Hamiltonian} for a continuous system on our lattice basis. As mentioned above in the introduction, we employ here Periodic Boundary Conditions (PBC) for our spatial lattice which implies that the particle momenta are quantized in units of
\begin{equation}
\frac{2\pi}{V^{1/d}} = \frac{2\pi}{aM} = \frac{2\pi}{a2^m}\;,
\end{equation}
where $a$ is the lattice spacing and $V$ the total volume. For a single particle, the kinetic energy is therefore mapped to a diagonal operator in momentum space
\begin{equation}
-\frac{\hbar^2\nabla^2}{{2\mu}} \quad\to\quad \frac{\hbar^2}{2\mu a^2} \left(\text{QFT}^{\otimes d}\right)^{\dagger}\left( \sum_{p} E_k(\Vec{p}) \ket{\Vec{p}}\bra{\Vec{p}} \right) \left(\text{QFT}^{\otimes d}\right)\;,
\end{equation}
where $\left(\text{QFT}^{\otimes d}\right)$ and $\left(\text{QFT}^{\otimes d}\right)^\dagger$ denote the $d$ dimensional Quantum Fourier Transform (QFT) and its inverse while $E_k(\Vec{p})$ is the energy of a plane wave state with momentum $\Vec{p}$. Since the standard QFT on $m$ qubits maps states in the interval $[0,2^{m}-1]$ into states on the same interval, but we want to describe momenta with components in $[-2^m/2,2^m/2-1]$, we use the periodicity to introduce for each  direction $w=x,y,z$ the components
\begin{equation}
    q_w = \begin{cases}
        p_w  &0\le p_w \le 2^{m-1}-1 \\
        p_w-2^m  &2^{m-1}\le p_w \le 2^m-1 \\
    \end{cases}
\end{equation}
such that the full kinetic energy can be written as
\begin{equation}
\label{eq:Tdiscrete}
    T = K \left(\text{QFT}^{\otimes d\eta}\right)^{\dagger} \sum_{i=0}^{\eta -1} \sum_{w=0}^{d-1} \sum_{p=0}^{2^{m}-1}  q_w^2 \Pi_{w,i}(p) \left(\text{QFT}^{\otimes d\eta}\right)\;,
\end{equation}
where the overall numerical coefficient $K$ is given by
\begin{equation}
\label{eq:kkin}
K=\frac{\hbar^2}{2\mu a^2}\left(\frac{2\pi}{2^m}\right)^2\;,
\end{equation}
the full $QFT$ now act on all $d\times\eta$ registers of the system and we denoted the projector in the state $\ket{p}$ on the $w$ register for particle $i$ with $\Pi_{w,i}(p)$. Note that the contribution to the energy given by $q_w^2$ can be computed by squaring the right-most $m-1$ bits on the momentum register.

The potential energy contribution in our basis can then be written as
\begin{equation}
\label{eq:Vdiscrete}
\begin{split}
    V &= V_2 + V_3\\
    V_2 &= \frac{C}{2}\sum_{i=0}^{\eta-1}\sum_{j\ne i}^{\eta-1} \sum_{\Vec{r}_i,\Vec{r}_j} \delta_{\Vec{r}_i,\Vec{r}_j} \Pi_{i}(\Vec{r}_i)\Pi_{j}(\Vec{r}_j) \\
    V_3 &=\frac{G}{6}\sum_{i=0}^{\eta-1}\sum_{j\ne i}^{\eta-1}\sum_{k\ne j\ne i}^{\eta-1}
    \sum_{\Vec{r}_i,\Vec{r}_j,\Vec{r}_k}\delta_{\Vec{r}_i,\Vec{r}_j}\delta_{\Vec{r}_i,\Vec{r}_k} \Pi_{i}(\Vec{r}_i)\Pi_{j}(\Vec{r}_j)\Pi_{k}(\Vec{r}_k)\;.
\end{split}
\end{equation}
In the expressions above, the sum over $\Vec{r}_a$, for $a=i,j,k$, run over all the $2^{dm}$ momentum states and we used the following compact notation for projector on particle states and Kronecker delta for vectors
\begin{equation}
\Pi_{i}(\Vec{x}) = \prod_{w=0}^{d-1}\Pi_{w,i}(x_w)\quad\quad\quad\delta_{\Vec{x},\Vec{y}}=\prod_{w=0}^{d-1}\delta_{x_w,y_w}\;.
\end{equation}

These potential operators and the kinetic term are diagonal matrices in the position and momentum basis respectively. This property is useful for performing the time evolution operator in first quantization on quantum computers, as was realized early on (see e.g.~\cite{Kassal_2008}) and it is also the main motivation to use these two basis sets. In general it is also possible to only work in coordinate space and approximate differential operators using finite differences. This is the typical setup used in classical Lattice EFT calculations~\cite{Lee_2009} and in the first implementations of pionless EFT on quantum computers~\cite{Roggero_2020,watson2023quantumalgorithmssimulatingnuclear}. Efficient schemes to employ finite differences for quantum simulations in first quantization have already appeared in the past (see e.g.~\cite{Kivlichan_2017}) but for this first work we don't explore the possibility further.

\subsection{Exponential of Hamiltonian terms}
\label{sec:Exponential_of_Hamiltonian_terms}

In this section we provide results for the direct approximation of the exponentials of both the kinetic energy $T$ and the total potential energy $V=V_2+V_3$. In particular our aim is to design unitaries $U_T(t)$ and $U_V(t)$ such that
\begin{equation}
    \norm{e^{-iTt} - U_T(t)} \le \epsilon_T \quad\quad
    \norm{e^{-iVt} - U_{V}(t)} \le \epsilon_V \;.
\end{equation}

We note in passing that, since $V_2$ and $V_3$ commute with each other, their exponentials could be implemented separately even though, as we will see below, there might be more efficient strategies.

We begin with the following result for the implementation of the unitary $U_T$.
\begin{lemma}[Exponential of Kinetic Energy]
    Consider the kinetic energy term of Eq.~\eqref{eq:Tdiscrete} for a system with $\eta$ particles on a $d$ dimensional lattice with $M=2^{m}$ sites per direction. Let $\lambda_T \ge \norm{T}$ an upper bound on its spectral norm. For any $t\in \mathbb{R}$ and $\epsilon > 0$ there exists a quantum circuit that implements a unitary operator $U_T(t)$ such that $\norm{U_T(t) - e^{-iTt}} < \epsilon$ with a number of T gates
    given by:
    \begin{equation}
    \begin{split}
        T_T(\epsilon, t) &= d\eta \left( 2T_{SQU}(m-1) + T_{DIAG}\left(\frac{\epsilon}{3d\eta}, t\right) + 2T_{QFT} \left(m,\frac{\epsilon}{3d\eta}\right) \right) \\
        &= \widetilde{O}\left[\eta m\left( m + \log\left( \frac{t\eta\lambda_T}{\epsilon}\right) \right) \right]\\
    \end{split}
    \end{equation}
    and a total number of ancilla qubits given by
    \begin{equation}
    b_{QFT}+b_{DIAG}+\max \left( 2b_{QFT}-1,b_{DIAG}, m(m-1) \right)\;,
    \end{equation}
    where we have defined
    \begin{equation}
    \begin{split}
b_{DIAG}&=\left\lceil\log_2\left(\frac{6d\eta t\lambda_T}{\epsilon}\log_2\left(\frac{6d\eta t\lambda_T}{\epsilon}\right)\right)\right\rceil\\
b_{QFT}&=\min\left(m-1,\left\lceil\log_2\left(\frac{3d\eta m}{\epsilon}\right)\right\rceil\right)+1\;.
    \end{split}
    \end{equation}
    The explicit definition of the various gate counts are in Lemma~\ref{lemma:phkick} and Theorems \ref{theorem:multiplication_better} and \ref{theorem:QFT} in the Appendix.
    \label{theorem:kinetic_energy_operator}
\end{lemma}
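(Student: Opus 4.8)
The plan is to read an exact factorisation of $e^{-iTt}$ off Eq.~\eqref{eq:Tdiscrete} and then approximate each factor. Because $\text{QFT}^{\otimes d\eta}$ is unitary and conjugation by a unitary commutes with the matrix exponential,
\[
e^{-iTt} \;=\; \left(\text{QFT}^{\otimes d\eta}\right)^{\dagger}\!\left(\prod_{i=0}^{\eta-1}\prod_{w=0}^{d-1} e^{-iKt\,A_{i,w}}\right)\!\left(\text{QFT}^{\otimes d\eta}\right)\,,\qquad A_{i,w}=\sum_{p=0}^{2^m-1} q_w^2\,\Pi_{w,i}(p)\,,
\]
and the product over the $d\eta$ pairs $(i,w)$ is unambiguous since the $A_{i,w}$ live on disjoint registers and so commute. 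It therefore suffices to (i) synthesise $\text{QFT}^{\otimes d\eta}$ and its inverse approximately and (ii) synthesise each single-register diagonal phase $e^{-iKt\,A_{i,w}}$, which multiplies each momentum eigenstate on register $(i,w)$ by $e^{-iKt\,q_w^2}$.

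For step (ii), on a given momentum register I would first compute the non-negative integer $q_w^2$ into a scratch register of at most $2m-1$ bits; using the periodic relabelling $q_w$ of the excerpt this reduces, after a sign-bit-controlled two's-complement negation, to squaring the $m-1$ low-order bits, which the circuit of Theorem~\ref{theorem:multiplication_better} does exactly at cost $T_{SQU}(m-1)$. Then I would apply the phase $e^{-iKt(\cdot)}$ to the scratch register to accuracy $\epsilon/(3d\eta)$ using the phase-kickback construction of Lemma~\ref{lemma:phkick} at cost $T_{DIAG}(\epsilon/(3d\eta),t)$, and finally uncompute the square. The two squaring steps are exact, so the only error this block introduces is the $\epsilon/(3d\eta)$ of the phase gate. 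For step (i) I would use the standard approximate QFT that discards controlled rotations below the threshold fixed by $b_{QFT}$, which by Theorem~\ref{theorem:QFT} realises each $m$-qubit QFT to error $\epsilon/(3d\eta)$ at cost $T_{QFT}(m,\epsilon/(3d\eta))$.

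The error estimate then follows by telescoping over unitaries: the forward and inverse $\text{QFT}^{\otimes d\eta}$ are each a product of $d\eta$ single-register QFTs and together contribute at most $2\,d\eta\cdot\epsilon/(3d\eta)=2\epsilon/3$ in spectral norm, while the diagonal part is a product of $d\eta$ approximate phases contributing at most $d\eta\cdot\epsilon/(3d\eta)=\epsilon/3$, for a total of at most $\epsilon$. The $T$-count is additive over the $d\eta$ registers, each costing $2T_{SQU}(m-1)+T_{DIAG}(\epsilon/(3d\eta),t)+2T_{QFT}(m,\epsilon/(3d\eta))$, which is the claimed $T_T(\epsilon,t)$; inserting the asymptotic forms of $T_{SQU}$, $T_{QFT}$ and $T_{DIAG}$ from the cited results and using $d=O(1)$ collapses this to $\widetilde{O}[\eta m(m+\log(t\eta\lambda_T/\epsilon))]$. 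Because the registers are treated one at a time, the ancilla count does not acquire a factor $d\eta$: it is $b_{QFT}$ qubits for the rotation-synthesis register of the approximate QFT, plus $b_{DIAG}$ qubits for the phase register of the diagonal gate, plus a single scratch register sized to the largest demand among the three subroutines, $\max(2b_{QFT}-1,\,b_{DIAG},\,m(m-1))$.

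The step I expect to require the most care is the error bookkeeping, together with the arithmetic shortcut for $q_w^2$. One must check that the per-register budgets $\epsilon/(3d\eta)$ really compose to at most $\epsilon$ once every one of the $d\eta$ diagonal phases and all $2d\eta$ individual QFTs hidden inside the two $\text{QFT}^{\otimes d\eta}$ factors is accounted for, that squaring the $m-1$ low bits genuinely reproduces $q_w^2$ under the periodic momentum labeling (including the wrap-around and sign-bit edge cases), and that one scratch register can be reused across all $d\eta$ passes so the qubit overhead stays $\widetilde{O}(m^2+\ldots)$ rather than growing with $\eta$. Everything else is assembling the black boxes of Lemma~\ref{lemma:phkick} and Theorems~\ref{theorem:multiplication_better} and~\ref{theorem:QFT} in the order above.
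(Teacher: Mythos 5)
Your proposal follows essentially the same route as the paper: conjugate by the $d\eta$-fold QFT, compute $q_w^2$ exactly with the squaring circuit of Theorem~\ref{theorem:multiplication_better}, apply the phase via the kickback construction of Lemma~\ref{lemma:phkick} with per-register budget $\epsilon/(3d\eta)$, uncompute, and reuse one scratch register across all $d\eta$ passes so the ancilla count is $b_{QFT}+b_{DIAG}+\max\left(2b_{QFT}-1,\,b_{DIAG},\,m(m-1)\right)$. Your explicit treatment of the sign-bit-controlled negation before squaring is in fact slightly more careful than the paper's remark that one squares the rightmost $m-1$ bits, but the argument and the resulting counts are the same.
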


\begin{proof}
    As shown in Eq.~\eqref{eq:Tdiscrete} in Fourier space $T$ is a diagonal operator with $K\norm{q}^2$ on the diagonal. That matrix can be implemented by using Lemma~\ref{lemma:phkick}, where the oracle $O_{b_{\lambda}}$ is the operator such that $O_{b_{\lambda}}\ket{p}\ket{0}_{b_{\lambda}} = \ket{p}\ket{p^2}_{b_{\lambda}}$. This means that the oracle $O_{b_{\lambda}}$ is nothing but a circuit that performs the multiplication of a bitstring of size $(m-1)$ with itself, and its cost is given by $T_{SQU}(m-1)$ in Theorem \ref{theorem:multiplication_better} and uses a total of $(m-1)(m-2)$ ancilla qubits. Since this operation is performed for one direction at a time, the result can be stored into a register with $b_\lambda=2m-2$ qubits.

    From Lemma~\ref{lemma:phkick} it follows that, including the cost of the $2$ QFTs and repeating the circuit for every particle in the system and for every spatial dimension, we need a total number of $T$ gates equal to
    \begin{equation}
        T_T(\epsilon, t) = d\eta \left( 2T_{SQU}(m-1) + T_{DIAG}(\epsilon_1, t) + 2T_{QFT} (m,\epsilon_2) \right)
    \end{equation}
    where $T_{QFT}(m,\epsilon)$ is given in Theorem~\ref{theorem:QFT}, and $d\eta(\epsilon_1 + 2\epsilon_2) = \epsilon$. For simplicity we can choose $\epsilon_1 = \epsilon_2 = \epsilon/(3d\eta)$ even if, in general, one should take the minimum cost evaluated over all possible choices of $\epsilon_1$ and $\epsilon_2$. With this choice, and since we are computing the eigenvalues exactly, we can use the expression in Equation \ref{eq:b_exact} from Lemma~\ref{lemma:phkick} to estimate the size of the phase register leading to
    \begin{equation}
    b_{DIAG}=\left\lceil\log_2\left(\frac{6d\eta t\lambda_T}{\epsilon}\log_2\left(\frac{6d\eta t\lambda_T}{\epsilon}\right)\right)\right\rceil\;,
    \end{equation}
    together with a number of $T$ gates given by
    \begin{equation}
    T_{DIAG}\left(\frac{\epsilon}{3d\eta}, t\right)=8b\min\left(\left\lceil\frac{2m-1}{2}\right\rceil,w_H\right)\;,
    \end{equation}
    where $w_H$ is the Hamming weight of the constant $\gamma=t\lambda_V/(2\pi)$ approximated with $2m$ bits. For implementing the $QFT$ instead we need a phase register of size
    \begin{equation}
    b_{QFT}=\min\left(m-1,\left\lceil\log_2\left(\frac{3d\eta m}{\epsilon}\right)\right\rceil\right)+1\;.
    \end{equation}

    As for the number of qubits we need
    \begin{enumerate}
        \item $(m-1)(m-2)+b_\lambda=m(m-1)$ qubits for squaring and storing the eigenvalue
        \item $b_{DIAG}$ qubits to apply the diagonal unitary using Lemma~\ref{lemma:phkick}
        \item $2b_{QFT}-1$ qubits to implement the approximate QFT using Theorem~\ref{theorem:QFT}
    \end{enumerate}
    plus two phase registers of sizes $b_{DIAG}$ and $b_{QFT}$ respectively. Since we can reuse the ancilla space for the different steps the overall number of ancilla qubits needed to implement this circuit is
    \begin{equation}
        b_{QFT}+b_{DIAG}+\max \left( 2b_{QFT}-1,b_{DIAG}, m(m-1) \right)
    \end{equation}
    where we have also included the space needed for the two phase registers, one for the $QFT$ and one for the diagonal gate.
\end{proof}

A simple expression for a suitable upperbound on the norm of $T$ is for instance
    \begin{equation}
    \label{eq:lambdaT}
        \norm{T} \le \lambda_T = dK\eta2^{2m-2}=\frac{\hbar^2}{2\mu a^2}d\eta\pi^2\;.
    \end{equation}
For this choice of $\lambda_T$, and using the fact that $m=\log_2(M)=O(\log(\Omega))$, the $T$ gate cost of evolving under the kinetic term scales as
\begin{equation}
T_T(\epsilon, t) =  \widetilde{O}\left[\eta \log(\Omega) \log\left(\frac{t\Omega\eta}{\epsilon}\right) \right]\;.
\end{equation}

We now turn our attention to the potential energy term. The first result we obtain is the following.
\begin{lemma}[Exponential of  Potential Energy]
    Consider the two and three-body potential terms of Eq.~\eqref{eq:Vdiscrete} for a system with $\eta$ particles on a $d$ dimensional lattice with $M=2^{m}$ sites per direction. For any $t\in \mathbb{R}$ and $\epsilon > 0$ there exists a quantum circuit that implements $U_{V_2}(t)$ and $U_{V_3}(t)$ such that $\norm{U_{V_2}(t) - e^{-iV_2t}}<\epsilon_{V_2}$ and $\norm{U_{V_3}(t)-e^{-iV_3t}}<\epsilon_{V_3}$ with a number of $T$ gates given by:
    \begin{equation}
        \begin{split}
            T_{V_2}(\epsilon_{V_2}) &= \frac{\eta(\eta-1)}{2} T_{MCRZ} \left(\frac{2\epsilon_{V_2}}{\eta(\eta-1)}, dm-1 \right) \\
            T_{V_3}(\epsilon_{V_3}) &= \frac{\eta(\eta-1)(\eta-2)}{6} T_{MCRZ} \left( \frac{6\epsilon_{V_3}}{\eta(\eta-1)(\eta-2)}, 2dm-1 \right) \\
        \end{split}
    \end{equation}
    where $T_{MCRZ} (\epsilon)$ is given in Theorem~\ref{theorem:MCRZ}. The number of ancilla qubits needed is equal to the number of ancilla needed to implement $MCRZ$ from Theorem~\ref{theorem:MCRZ}.
    \label{theorem:Trotter_potential_energy}
\end{lemma}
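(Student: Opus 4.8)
The plan is to exploit the fact that every term in $V_2$ — and likewise in $V_3$ — is diagonal in the position basis, so the two-body (resp.\ three-body) contributions all commute and the exponential factorizes \emph{without any Trotter error}. Writing $\delta_{ij}=\delta_{\vec r_i,\vec r_j}$ and collapsing the ordered double sum in Eq.~\eqref{eq:Vdiscrete} to a sum over unordered pairs, $e^{-iV_2 t}=\prod_{i<j}e^{-iCt\,\delta_{ij}}$ is a product of $\binom{\eta}{2}=\tfrac{\eta(\eta-1)}{2}$ commuting unitaries, and similarly $e^{-iV_3 t}=\prod_{i<j<k}e^{-iGt\,\delta_{ij}\delta_{ik}}$ is a product of $\binom{\eta}{3}=\tfrac{\eta(\eta-1)(\eta-2)}{6}$ commuting unitaries; here $\delta_{ij}\delta_{ik}$ is just the projector onto the triple coincidence $\vec r_i=\vec r_j=\vec r_k$. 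It therefore suffices to give a circuit for one factor together with an error budget.

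For a single pair $(i,j)$, $\delta_{ij}$ projects onto the subspace where the $dm$-qubit position registers of particles $i$ and $j$ hold identical bitstrings, so $e^{-iCt\,\delta_{ij}}$ acts as the phase $e^{-iCt}$ there and as the identity elsewhere. I would realize it by first XOR-ing register $i$ into register $j$ in place (a layer of $dm$ CNOTs, Clifford and hence $T$-free), so that register $j$ then holds $\vec r_i\oplus\vec r_j$ and is all-zero precisely when $\vec r_i=\vec r_j$; flipping its bits with $X$ gates; applying a multiply-controlled rotation of the type analyzed in Theorem~\ref{theorem:MCRZ} on these $dm$ qubits — that is, with $dm-1$ controls — with rotation angle proportional to $Ct$; and finally uncomputing the $X$ and CNOT layers. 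The three-body factor is handled identically except that $\vec r_i$ is XOR-ed into \emph{both} the registers of $j$ and of $k$, producing a $2dm$-qubit register that (after the $X$ layer) is all-ones exactly when $\vec r_i=\vec r_j=\vec r_k$, so a multiply-controlled rotation with $2dm-1$ controls and angle proportional to $Gt$ does the job. In both cases the comparison steps act in place on system registers, so the only ancillas used are those internal to the multiply-controlled rotation of Theorem~\ref{theorem:MCRZ}, which gives the stated ancilla count.

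To assemble the error bound, note that because the products over pairs (resp.\ triples) are exact, the entire tolerance $\epsilon_{V_2}$ (resp.\ $\epsilon_{V_3}$) can be spent on synthesizing the $\binom{\eta}{2}$ (resp.\ $\binom{\eta}{3}$) multiply-controlled rotations. Splitting it uniformly and using subadditivity of the operator-norm error under composition of unitaries, it is enough to synthesize each factor to error $2\epsilon_{V_2}/(\eta(\eta-1))$ (resp.\ $6\epsilon_{V_3}/(\eta(\eta-1)(\eta-2))$), which reproduces exactly $T_{V_2}(\epsilon_{V_2})=\binom{\eta}{2}\,T_{MCRZ}\!\big(2\epsilon_{V_2}/(\eta(\eta-1)),\,dm-1\big)$ and $T_{V_3}(\epsilon_{V_3})=\binom{\eta}{3}\,T_{MCRZ}\!\big(6\epsilon_{V_3}/(\eta(\eta-1)(\eta-2)),\,2dm-1\big)$.

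The step I expect to require the most care is the reduction of $e^{-iCt\,\delta_{ij}}$ (and of its three-body analogue) to a \emph{single} gate of exactly the form costed in Theorem~\ref{theorem:MCRZ}: one has to verify that conjugating by the XOR-comparison Clifford really turns the ``registers equal'' projector into the ``all control qubits set'' condition of the multiply-controlled rotation, and that any residual phase on the lone basis state singled out by the controls is either absent for the convention used in Theorem~\ref{theorem:MCRZ} or can be removed by one extra single-qubit rotation at no change to the asymptotic cost. The remaining ingredients — the commutation/exact-factorization argument, the uniform error split, and the accounting of the CNOT and $X$ layers as Clifford — are routine.
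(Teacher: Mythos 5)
Your proposal is correct and follows essentially the same route as the paper: XOR-compare the position registers with a Clifford layer of CNOTs, convert the ``all-zero'' condition into a multi-controlled $R_Z$ (with $dm-1$ controls for pairs and $2dm-1$ for triples, exploiting that the contact terms are diagonal and mutually commuting so the factorization over pairs/triples is exact), uncompute, and split the synthesis error uniformly over the $\binom{\eta}{2}$ or $\binom{\eta}{3}$ rotations. The only cosmetic difference is that you test $\vec r_i=\vec r_j$ and $\vec r_i=\vec r_k$ while the paper tests $\vec r_i=\vec r_j$ and $\vec r_j=\vec r_k$; both single out the same triple-coincidence projector.
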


\begin{proof}
    The action of the two-body potential $V_2$ and the three-body potential $V_3$ is the following:
    \begin{equation}
    \begin{split}
        e^{-iV_2t}\ket{\Vec{r}_i, \Vec{r}_j} &= e^{-iCt \delta_{\Vec{r}_i,\Vec{r}_j}}\ket{\Vec{r}_i, \Vec{r}_j} \\
        e^{-iV_3t}\ket{\Vec{r}_i, \Vec{r}_j, \Vec{r}_k} &= e^{-iGt \delta_{\Vec{r}_i,\Vec{r}_j}\delta_{\Vec{r}_j,\Vec{r}_k}}\ket{\Vec{r}_i, \Vec{r}_j, \Vec{r}_k} \\
    \end{split}
    \end{equation}
    Let us start with the $2$-body potential for simplicity. We have to apply a phase $e^{-iC_0t}$ if and only if the two particles are in the same position. This can be done with a $Z$ rotation
    \begin{equation}
    R_Z(2Ct) = \begin{pmatrix}
        1 & 0 \\
        0 & e^{-iCt} \\
    \end{pmatrix}
    \end{equation}
    that has to be applied on a qubit which is $1$ if and only if the two positions are equal. This can be easily done summing the two position registers and then using a multi-controlled $R_Z$ gate. We show an explicit construction for this circuit in Fig.~\ref{fig:pair_exp_circuit} for a simple case where $m=1$ and $d=3$, so that every particle position is represented by registers composed by $3$ qubits, one for each spatial dimension.
   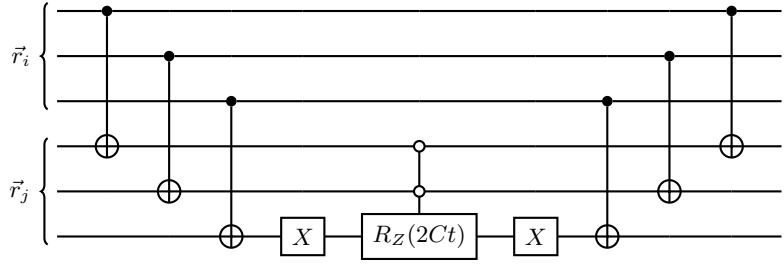
\begin{figure}
\centering
    \begin{quantikz}[row sep={0.6cm,between origins}]
    \lstick[3]{$\Vec{r}_i$} & \ctrl{3} & \qw     & \qw    & \qw    & \qw            & \qw    & \qw    & \qw & \ctrl{3} & \qw \\
    & \qw                              & \ctrl{3}& \qw    &  \qw   &\qw             & \qw    & \qw    & \ctrl{3} & \qw & \qw \\
    & \qw                              & \qw     &\ctrl{3}&  \qw   &\qw             & \qw    &\ctrl{3}& \qw & \qw & \qw \\
    \lstick[3]{$\Vec{r}_j$}& \targ{}   & \qw     & \qw    & \qw    & \octrl{1}      & \qw    & \qw    & \qw & \targ{} & \qw \\
    & \qw                              & \targ{} & \qw    & \qw    & \octrl{1}      & \qw    & \qw    & \targ{} & \qw & \qw \\
    & \qw                              & \qw     & \targ{}&\gate{X}&\gate{R_Z(2Ct)}&\gate{X}& \targ{}& \qw & \qw & \qw \\
    \end{quantikz}
    \caption{Circuit implementation of the exponential of a single two body term for $m=1$ and $d=3$.}
    \label{fig:pair_exp_circuit}
\end{figure}

    The summation of two $dm$-qubits registers will require $2dm$ CNOTs, while the cost of a $(dm-1)$-controlled $Z$-rotation is $T_{MCRZ}(\epsilon, dm-1)$, given in Theorem \ref{theorem:MCRZ}. Since we have $\eta(\eta-1)/2$ pairs, we will need to apply this circuit for every one of them, thus the total number of $T$-gates required to apply $V_2$ is
    \begin{equation}
        T_{V_2}(\epsilon_{V_2}) = \frac{\eta(\eta-1)}{2} T_{MCRZ} \left(\frac{2\epsilon_{V_2}}{\eta(\eta-1)}, dm-1 \right)
    \end{equation}
    For the $3$-body potential, we can generalize the $V_2$ construction. This time we need to check $\Vec{r}_i=\Vec{r}_j$ and also $\Vec{r}_j=\Vec{r}_k$, which means that we will need a $C^{2dm-1}R_Z(\theta)$ gate for every triplet of particles (which are in total $\eta(\eta-1)(\eta-2)/6$). This means that the number of $T$-gates needed to implement $V_3$ are:
    \begin{equation}
        T_{V_3}(\epsilon_{V_3}) = \frac{\eta(\eta-1)(\eta-2)}{6} T_{MCRZ} \left( \frac{6\epsilon_{V_3}}{\eta(\eta-1)(\eta-2)}, 2dm-1 \right)
    \end{equation}
\end{proof}

As mentioned above in the beginning of this section, it is actually advantageous to combine the two potential terms together. In addition, we can apply the $Z$ rotations on ancillas instead saving a factor of two in the rotation count. Putting all together we have the following result

\begin{lemma}[Compact Exponential of  Potential Energy]
    Consider the full potential $V$ from Eq.~\eqref{eq:Vdiscrete} for a system with $\eta$ particles on a $d$ dimensional lattice with $M=2^{m}$ sites per direction. For any $t\in \mathbb{R}$ and $\epsilon > 0$ there exists a quantum circuit that implements a unitary $U_{V}(t)$ such that $\norm{U_{V}(t) - e^{-iVt}}<\epsilon$ with a number of $T$ gates given by:
    \begin{equation}
    \begin{split}
        T_V(\epsilon) &= 4\frac{\eta(\eta-1)(\eta-2)}{3} + \Gamma \left( 2T_{MCX}(dm) + T_{ROT}\left(\frac{\epsilon}{\Gamma} \right) \right) \\
        &= O\left[ \eta^3 \left( m + \log\left( \frac{\eta}{\epsilon} \right) \right) \right]
    \end{split}
    \end{equation}
    where $\Gamma$ is defined as follows:
    \begin{equation}
        \Gamma = \frac{\eta(\eta-1)(\eta-2)}{6} + \frac{\eta(\eta - 1)}{2}\;,
    \end{equation}
    $T_{MCX} $ and $T_{ROT}$ are given in Theorems~\ref{theorem:Z_rotations} and \ref{theorem:MCX}, and a number of ancilla qubits equal to $dm+1$.
    \label{theorem:complact_Trotter_potential_energy}
\end{lemma}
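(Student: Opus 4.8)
The plan is to use that $V=V_2+V_3$ is diagonal in the position basis and that all of its terms commute, so that
\[
e^{-iVt}=\Bigl(\prod_{i<j} e^{-iCt\,\delta_{\vec r_i,\vec r_j}}\Bigr)\Bigl(\prod_{i<j<k} e^{-iGt\,\delta_{\vec r_i,\vec r_j}\delta_{\vec r_j,\vec r_k}}\Bigr),
\]
a product of exactly $\binom{\eta}{2}+\binom{\eta}{3}=\Gamma$ commuting elementary unitaries (the $\tfrac12$ and $\tfrac16$ prefactors in Eq.~\eqref{eq:Vdiscrete} cancel the overcounting of the ordered sums, and the three-body summand depends only on whether all three positions coincide). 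Each elementary unitary is a phase acting non-trivially only on the subspace where two, respectively three, position registers agree, so I would implement each one by (i) writing the relevant equality condition onto a single ancilla flag qubit using Clifford$+T$ operations, (ii) applying one uncontrolled $Z$-rotation on that flag, and (iii) uncomputing the flag. Routing the rotation through the flag is precisely what allows every elementary unitary to need only a \emph{single} rotation synthesis rather than the two implicit in a directly multi-controlled rotation, which gives the factor-of-two saving over Lemma~\ref{theorem:Trotter_potential_energy}; combining $V_2$ and $V_3$ then just lets the same flag-and-rotate pattern be reused across all $\Gamma$ terms.

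For a pair $(i,j)$ I would XOR the two $dm$-qubit position registers bitwise into a $dm$-qubit ancilla scratch with $2dm$ CNOTs (no $T$ cost), set the flag with an all-zero-controlled multi-controlled NOT on the scratch at cost $T_{MCX}(dm)$, apply $R_Z(2Ct)$ on the flag at cost $T_{ROT}$, and reverse the two non-trivial steps, for a per-pair cost $2T_{MCX}(dm)+T_{ROT}$ — exactly the circuit already drawn in Fig.~\ref{fig:pair_exp_circuit} but with the rotation moved onto the flag. For a triplet $(i,j,k)$ the condition $\vec r_i=\vec r_j=\vec r_k$ is the conjunction of two pairwise equalities, so I would compute each pairwise-equality bit in turn with the same XOR-plus-$T_{MCX}(dm)$ gadget, re-using the scratch, combine the two bits into the flag with a Toffoli-style logical AND, rotate with $R_Z(2Gt)$, and then uncompute everything. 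The AND and its uncomputation are exact Clifford$+T$ operations costing $O(1)$ $T$ gates each; summed over the $\binom{\eta}{3}$ triplets these produce the stand-alone $4\eta(\eta-1)(\eta-2)/3$ term, while the remaining pieces match the pair pattern and contribute $\Gamma\bigl(2T_{MCX}(dm)+T_{ROT}(\epsilon/\Gamma)\bigr)$. Uncomputing the scratch and the intermediate bit after each term keeps the ancilla footprint at $dm$ (scratch) $+\,1$ (flag) throughout.

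For the error bound, the only non-exact primitive is the arbitrary-angle $R_Z$: every CNOT, multi-controlled NOT, and AND is realized exactly over Clifford$+T$. Synthesizing each of the $\Gamma$ rotations to operator-norm accuracy $\epsilon/\Gamma$ and using subadditivity of error under composition of unitaries gives $\norm{U_V(t)-e^{-iVt}}\le\Gamma\cdot(\epsilon/\Gamma)=\epsilon$, which fixes the argument of $T_{ROT}$. Substituting $\Gamma=O(\eta^3)$, $T_{MCX}(dm)=\widetilde O(m)$ for $d=O(1)$, and $T_{ROT}(\epsilon/\Gamma)=O(\log(\Gamma/\epsilon))=O(\log(\eta/\epsilon))$ then yields $T_V(\epsilon)=O[\eta^3(m+\log(\eta/\epsilon))]$. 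I expect the only real obstacle to be the triplet construction under the tight $dm+1$ ancilla budget: reducing the three-way equality check to two reuses of the $dm$-controlled gadget plus one AND, with every intermediate register restored before the next of the $\Gamma$ terms, is the one part that is not an immediate copy of the two-body circuit.
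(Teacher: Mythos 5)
Your overall architecture (write an equality condition onto a flag qubit, apply a single $R_Z$ on the flag, uncompute; bound the error by $\Gamma\cdot\epsilon/\Gamma$ via subadditivity) is the same as the paper's, but your accounting for the three-body terms has a genuine gap: treating each triplet $(i,j,k)$ independently requires \emph{two} fresh pairwise-equality tests per triplet, i.e.\ $4T_{MCX}(dm)$ rather than $2T_{MCX}(dm)$, so your construction yields $\binom{\eta}{2}\bigl(2T_{MCX}(dm)+T_{ROT}\bigr)+\binom{\eta}{3}\bigl(4T_{MCX}(dm)+8+T_{ROT}\bigr)$, which exceeds the claimed $\Gamma\bigl(2T_{MCX}(dm)+T_{ROT}\bigr)+4\eta(\eta-1)(\eta-2)/3$ by $\binom{\eta}{3}\cdot 2T_{MCX}(dm)=\tfrac{4}{3}\eta(\eta-1)(\eta-2)\,dm$ T gates --- the same asymptotic order, but not the stated formula. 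The missing idea is amortization: the paper rewrites $V_3=G\sum_{i<j}\delta_{\vec r_i,\vec r_j}\bigl(\sum_{k>j}\delta_{\vec r_j,\vec r_k}\bigr)$ and organizes the circuit as a nested loop, computing the flag for $\delta_{\vec r_i,\vec r_j}$ \emph{once} per pair and using it both for the two-body rotation and as one leg of the AND for every $k>j$; each of the $\Gamma$ terms (pairs and triples alike) then pays for exactly one equality test, which is what collapses the count to $\Gamma\cdot 2T_{MCX}(dm)$ with the two Toffolis per triple (8 T gates via Gidney's construction) appearing as the separate $4\eta(\eta-1)(\eta-2)/3$ term.

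A secondary issue is the ancilla budget: XORing the two position registers into a fresh $dm$-qubit scratch costs $dm$ extra qubits on top of the $dm-1$ already needed inside $T_{MCX}(dm)$, so your layout uses roughly $2dm$ ancillas rather than the claimed $dm+1$. The paper avoids the scratch entirely by XORing $\vec r_i$ in place onto $\vec r_j$ (and undoing it afterwards), so that the open-controlled $C^{dm}X$ acts directly on the $\vec r_j$ register; with that change, plus the amortized nesting above, both the stated gate count and the $dm+1$ ancilla bound follow.
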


\begin{proof}
    Consider the implementation of $V_2$ and $V_3$ described in Lemma~\ref{theorem:Trotter_potential_energy}. Instead of implementing them separately, we can merge the two implementation, and apply $V_2$ for free while applying a modified version of $V_3$. Consider $3$ particles, on which we want to apply the $3$-body potential. In Lemma~\ref{theorem:Trotter_potential_energy}, after doing the proper summation between registers corresponding to particle positions, we applied a $(2dm-1)$-controlled $Z$-rotation, which is equivalent to evaluate $\delta_{\Vec{r}_i,\Vec{r}_j}\delta_{\Vec{r}_i,\Vec{r}_k}$ directly. Now, consider the variation where instead we save $\delta_{\Vec{r}_i,\Vec{r}_j}$ and $\delta_{\Vec{r}_i,\Vec{r}_k}$ individually in two ancilla qubits. We can then use the circuit shown in Fig.~\ref{fig:pair_exp_circuit_anc} to perform the evolution corresponding to the first delta term $\delta_{\Vec{r}_i,\Vec{r}_j}$.
   \begin{figure}
\centering
    \begin{quantikz}[row sep={0.6cm,between origins}]
    \lstick[3]{$\Vec{r}_i$} & \ctrl{3} & \qw & \qw & \qw & \qw & \qw & \qw & \qw & \ctrl{3} & \qw & \qw \\
    & \qw & \ctrl{3} & \qw & \qw & \qw & \qw & \qw & \ctrl{3} & \qw & \qw & \qw \\
    & \qw & \qw & \ctrl{3} & \qw & \qw & \qw & \ctrl{3} & \qw & \qw & \qw & \qw \\
    \lstick[3]{$\Vec{r}_j$} & \targ{} & \qw & \qw & \octrl{1} & \qw & \octrl{1} & \qw & \qw & \targ{} & \qw & \qw \\
    & \qw & \targ{} & \qw & \octrl{1} & \qw & \octrl{1} & \qw & \targ{} & \qw & \qw & \qw \\
    & \qw & \qw & \targ{} & \octrl{1} & \qw & \octrl{1} & \targ{} & \qw & \qw & \qw & \qw \\
    && \lstick{$\ket{0}$} & \qw & \targ{} & \gate{R_Z(2Ct)} & \targ{} & \qw & \qw \\
    \end{quantikz}
    \caption{Alternative implementation of the exponential of a single two body term for $m=1$ and $d=3$.}
    \label{fig:pair_exp_circuit_anc}
\end{figure}
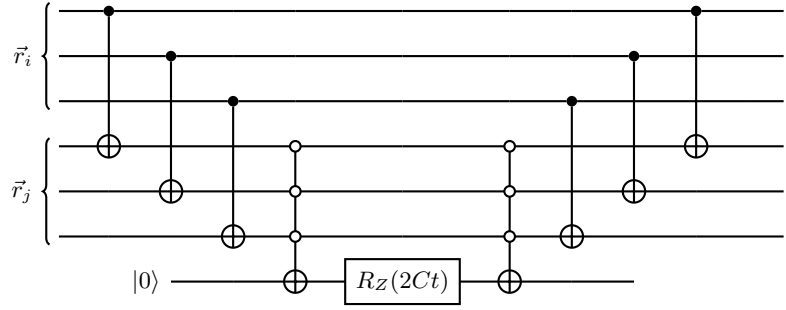

    This will be repeated for every pair $(i,j)$ so that we can apply the two-body potential with the rotation on the ancilla qubit. Then, we can add a subroutine to this circuit to evaluate the three-body potential. An example of the full circuit, for $m=1$ and $d=3$, is shown in Fig~\ref{fig:triple_exp_circuit_anc}.
   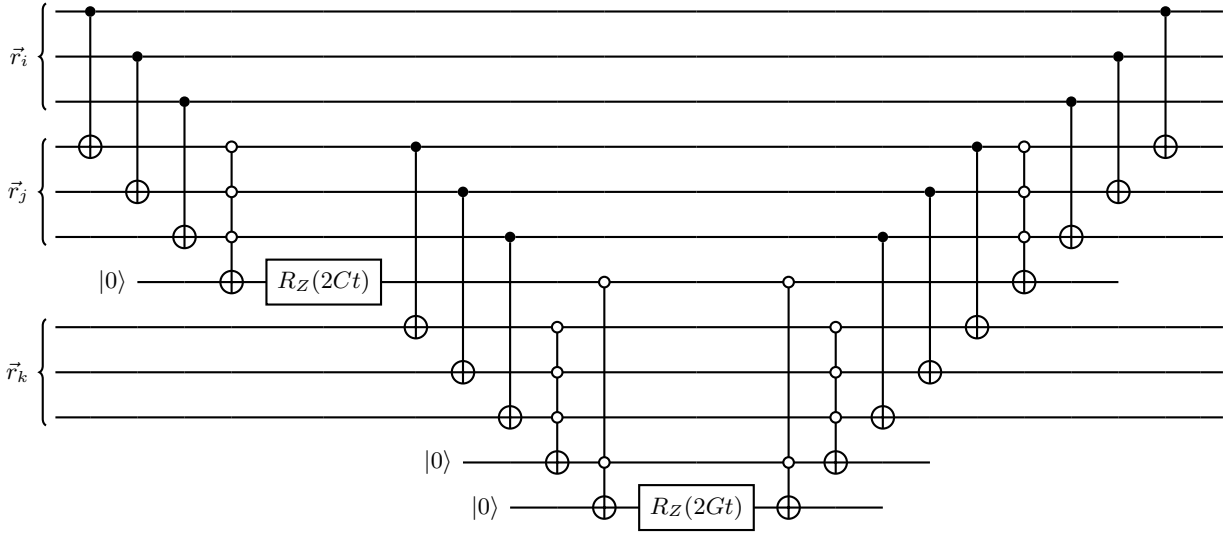
\begin{figure}
\centering
    \begin{quantikz}[row sep={0.6cm,between origins},column sep={0.3cm}]
    \lstick[3]{$\Vec{r}_i$} &\ctrl{3}& \qw    & \qw    & \qw     &\qw&\qw     & \qw    & \qw    & \qw & \qw & \qw & \qw & \qw & \qw & \qw & \qw & \qw & \qw & \qw & \ctrl{3} & \qw & \qw \\
    & \qw                            &\ctrl{3}& \qw    & \qw     &\qw& \qw    & \qw    & \qw    & \qw & \qw & \qw & \qw & \qw & \qw & \qw & \qw & \qw & \qw & \ctrl{3} & \qw & \qw & \qw \\
    & \qw                            & \qw    &\ctrl{3}& \qw     &\qw& \qw    & \qw    &\qw     & \qw & \qw & \qw & \qw & \qw & \qw & \qw & \qw & \qw & \ctrl{3} & \qw & \qw & \qw & \qw \\
    \lstick[3]{$\Vec{r}_j$} & \targ{}& \qw    & \qw    &\octrl{1}&\qw&\ctrl{4}& \qw    & \qw    & \qw & \qw & \qw & \qw & \qw & \qw & \qw & \ctrl{4} & \octrl{1} & \qw & \qw & \targ{} & \qw & \qw \\
    & \qw                            & \targ{}& \qw    &\octrl{1}&\qw& \qw    &\ctrl{4}& \qw    & \qw & \qw & \qw & \qw & \qw & \qw & \ctrl{4} & \qw & \octrl{1} & \qw & \targ{} & \qw & \qw & \qw \\
    & \qw                            & \qw    & \targ{}&\octrl{1}&\qw& \qw    & \qw    &\ctrl{4}& \qw & \qw & \qw & \qw & \qw & \ctrl{4} & \qw & \qw & \octrl{1} & \targ{} & \qw & \qw & \qw & \qw \\
    && \lstick{$\ket{0}$} & \qw & \targ{} & \gate{R_Z(2Ct)} & \qw & \qw & \qw & \qw & \octrl{4} & \qw & \octrl{4} & \qw & \qw & \qw & \qw & \targ{} & \qw & \qw \\
    \lstick[3]{$\Vec{r}_k$} & \qw    & \qw    & \qw    & \qw     &\qw& \targ{}& \qw    & \qw    & \octrl{1} & \qw & \qw & \qw & \octrl{1} & \qw & \qw & \targ{} & \qw & \qw & \qw & \qw & \qw & \qw \\
    & \qw                            & \qw    & \qw    & \qw     &\qw& \qw    & \targ{}& \qw    & \octrl{1} & \qw & \qw & \qw & \octrl{1} & \qw & \targ{} & \qw & \qw & \qw & \qw & \qw & \qw & \qw \\
    & \qw                            & \qw    & \qw    & \qw     &\qw& \qw    & \qw    &\targ{} & \octrl{1} & \qw & \qw & \qw & \octrl{1} & \targ{} & \qw & \qw & \qw & \qw & \qw & \qw & \qw & \qw \\
    &&&&&&& \lstick{$\ket{0}$} & \qw & \targ{} & \octrl{1} & \qw & \octrl{1} & \targ{} & \qw & \qw \\
    &&&&&&&& \lstick{$\ket{0}$}& \qw & \targ{} & \gate{R_Z(2Gt)} & \targ{} & \qw & \qw \\
    \end{quantikz}
    \caption{Circuit implementation of the exponential of a single two body term, for the pair $(i,j)$, and a single three-body term, for the triple $(i,j,k)$, for $m=1$ and $d=3$.}
    \label{fig:triple_exp_circuit_anc}
\end{figure}
    With this implementation we can apply the three-body potential with, as in the two body case, a single rotation on the ancilla qubit. To gain intuition about the full scheme employed here, we first note that the diagonal matrix element of the two- and tree-body potentials can be written as follows
    \begin{equation}
        V_2(\Vec{r}_0,\dots,\Vec{r}_{\eta-1}) = C\sum_{i<j}^{\eta-1} \delta_{\Vec{r}_i,\Vec{r}_j}
    \end{equation}
    \begin{equation}
        V_3(\Vec{r}_0,\dots,\Vec{r}_{\eta-1}) = G \sum_{i<j<k}^{\eta-1}\delta_{\Vec{r}_i,\Vec{r}_j}\delta_{\Vec{r}_j,\Vec{r}_k} = G \sum_{i<j}^{\eta-1}\delta_{\Vec{r}_i,\Vec{r}_j}\left(\sum_{k>j}^{\eta-1}\delta_{\Vec{r}_j,\Vec{r}_k}\right)\;.
    \end{equation}
    The circuit in Fig.~\ref{fig:pair_exp_circuit_anc} is used to apply the evolution corresponding to $\delta_{\Vec{x}_i,\Vec{x}_j}$ and by repeating it for every pair $i,j$ with $i<j$ we compute the evolution undet the full sum $\sum_{i=0}^{\eta-1} \sum_{j>i}^{\eta-1} \delta_{\Vec{r}_i,\Vec{r}_j}$ appearing in both $V_2$ and $V_3$. In order to deal with the full three-body term, for every one of those circuits we need then to also compute the inner sum over $k$. For a fixed $k$, the inner delta is computed by the portion of circuit we added when going from Fig.~\ref{fig:pair_exp_circuit_anc} to Fig.~\ref{fig:triple_exp_circuit_anc}. To do the sum over $k$ we can repeat that inner portion of the circuit but we do not need to compute again the delta between $i,j$ every time.

    Schematically, the algorithm is as follows:
    \begin{enumerate}
        \item for each pair $(i,j)$ store $\ket{1}$ in an ancilla if $\Vec{r}_i=\Vec{r}_j$. This can be done, as sown in Fig.~\ref{fig:pair_exp_circuit_anc} above, using $dm$ CNOT gates and a single $C^{dm}X$ gate with the ancilla as target
        \item apply a rotation gate $R_Z(2Ct)$ to the ancilla to apply the two-body potential
        \item for each $k>j$ do the following:
        \begin{itemize}
            \item store  $\ket{1}$ in a new ancilla if $\Vec{r}_j=\Vec{r}_k$ using $dm$ CNOT gates and one $C^{dm}X$ gate
            \item apply the rotation for the three body potential. To do that, we can either apply a controlled-rotation between the ancilla storing $\delta_{\Vec{r}_i,\Vec{x}_j}$ and the one storing $\delta_{\Vec{r}_j,\Vec{r}_k}$, or apply a Toffoli gate controlled by the two ancillae and a simple rotation on a third ancilla, target of the Toffoli gate that will have to be uncomputed
            \item uncompute the state of the second ancilla using $dm$ CNOT gates and one $C^{dm}X$ gate
        \end{itemize}
    \end{enumerate}

    The total cost of this circuit will be the cost of the circuit in Fig.~\ref{fig:pair_exp_circuit_anc} (used to store $\delta_{\Vec{r}_i,\Vec{r}_j}$ and apply the two-body potential) times the number of pairs in the system, plus the inner portion of the circuit in Fig.\ref{fig:triple_exp_circuit_anc} (used to store $\delta_{\Vec{r}_j,\Vec{r}_k}$ and apply the three-body potential) times the number of triples in the system:
    \begin{equation}
        T_V(\epsilon_V) = 8\frac{\eta(\eta-1)(\eta-2)}{6} + \Gamma \left( 2T_{MCX}(dm) + T_{ROT}\left(\frac{\epsilon_V}{\Gamma} \right) \right)
    \end{equation}
    where we used the strategy from Ref.~\cite{Gidney_2018} to implement each Toffoli with 4 $T$ gates and we introduced
    \begin{equation}
        \Gamma = \frac{\eta(\eta-1)(\eta-2)}{6} + \frac{\eta(\eta - 1)}{2}\;.
    \end{equation}
    As for the number of ancilla qubits, we will need $3$ ancilla plus the ones required to apply the multi-controlled $X$ operation. Considering that we can use the same ancilla many times, the maximum number of ancille that we need at the same time is $dm+1$.

\end{proof}

\subsection{Product formula implementation}
\label{ssec:pf}

Using the implementation of the unitaries $U_T(t)$ and $U_V(t)$ in the previous subsection, we are now ready to implement the full time evolution using product formulas. In order to bound the errors in the product formulas for a fixed number of particles we use fermionic semi-norms. The idea was first introduced in a non-rigorous way in Ref.~\cite{Roggero_2020} and called {\it physical} norm while in Ref.~\cite{Su_2021seminorm} the fermionic semi-norm was formally defined in second quantization by considering projectors onto the particle-preserving subspace. In Appendix \ref{appendix:norm} we define an equivalent fermionic semi-norm in first quantization. In this setting operators trivially conserve the number of particles, while they do not carry the fermionic statistics since that is encoded instead in the wave-function. The equivalence with the second quantization formulation is then done by considering projectors onto the anti-symmetric subspace. Here we briefly show the calculation for the commutator of the first-order product formula using this construction, while in Appendix \ref{appendix:norm} we provide the full definition of this new semi-norm, followed by the calculation of the various upper bounds that will be used in all the product formulas.

First, we can define the fermionic semi-norm of an operator $O$ as $\norm{O}_A = \norm{\Pi_A O \Pi_A}$, where $\Pi_A$ is the projector onto the anti-symmetric subspace (its explicit definition can be found in Appendix~\ref{app:fermionic_seminorm}). As a first example we can start by calculating the following quantity
\begin{equation}
\label{eq:maxi_norm_pot}
    \max_i\norm{\sum_{j\ne i}^{\eta-1} V_{ij}+\sum_{j\ne i}^{\eta-1}\sum_{k\ne i,j}^{\eta-1} V_{ijk}}_A\;,
\end{equation}
where we have denoted with $V_{ij}$ the pair potential from $V_2$ acting on the pair of particles $(i,j)$ and with $V_{ijk}$ the three-body potential from $V_3$ acting on the triplet $(i,j,k)$.
For a fixed particle $i$, the norm of the potential will depend on how many particles are occupying the same site as particle $i$. Due to the fact that we are considering anti-symmetric states, and that in our system there are only $4$ types of fermions distinguished by the spin and isospin quantum numbers, we will have a maximum of $4$ fermions in the same site at a time. This means that we can enumerate all possible arrangements of particles directly. In Table \ref{tab:potential_per_site} we enumerate the number of pairs and triples in which particle $i$ is involved simultaneously, in terms of how many fermions there are in that site.

\begin{table}[h]
    \centering
    \begin{tabular}{|c|c|c|}
        \hline
         fermion occupation & pairs & triples \\
         \hline
         1 & 0 & 0 \\
         \hline
         2 & 1 & 0 \\
         \hline
         3 & 2 & 1 \\
         \hline
         4 & 3 & 3 \\
         \hline
    \end{tabular}
    \caption{Number of pairs and triples on a spatial site as a function of the fermion occupation.}
    \label{tab:potential_per_site}
\end{table}

We can then directly bound the quantity from Eq.~\eqref{eq:maxi_norm_pot} by maximizing over all possible arrangements:
\begin{equation}
\max_i\norm{\sum_{j\ne i}^{\eta-1} V_{ij}+\sum_{j\ne i}^{\eta-1}\sum_{k\ne i,j}^{\eta-1} V_{ijk}}_A \le \max\left\{\left| \frac{C}{2} \right|, \left| C+\frac{G}{6}\right|, \left| \frac{3C}{2} + \frac{G}{2} \right|\right\}\;.
\end{equation}
The first term corresponds to a state with occupation $1$ on the site, the second with occupation $2$ and the last with $3$. Of course if only the $i$-th particle is on the site is the contribution is zero.

Considering the kinetic energy, in the same way, we could say that there are at most $4$ fermions with the same momentum at a time. A simpler bound can be found directly noticing that, for a fixed $i$, we will have
\begin{equation}
    \max_i\norm{T_i}_A \le dK2^{2m-2}=\lambda_T/\eta\;,
\end{equation}
where we set the momentum of particle $i$ at its maximum possible value of $\pi/a$ for each cartesian direction. We are now ready to compute the norm of the commutator between the kinetic and potential energy:
\begin{equation}
\begin{split}
    \norm{[T,V]}_A &\le 2 \sum_{i=0}^{\eta-1} \left(\max_i\norm{T_i}_A \right)\left(\max_i\norm{2\sum_{j\ne i}^{\eta-1} V_{ij}+3\sum_{j\ne i}^{\eta-1}\sum_{k\ne i,j}^{\eta-1} V_{ijk}}_A \right)\\
    &\le 2\lambda_T\max\left\{\left| C \right|, \left| 2C+\frac{G}{2}\right|, \left| 3C + \frac{3G}{2} \right|\right\}\;.
\end{split}
\end{equation}
with $\lambda_T$ from Eq.~\eqref{eq:lambdaT}. For the full calculation, refer to Appendix \ref{appendix_fermionic_seminorm_Trotter_error}.

Then, for simplicity, we will define $\alpha_1$ as half of the bound we just computed
\begin{equation}
    \alpha_1 = \lambda_T\max\left\{\left| C \right|, \left| 2C+\frac{G}{2}\right|, \left| 3C + \frac{3G}{2} \right|\right\}=O(\eta)\;,
    \label{eq:alpha1}
\end{equation}
so that $\norm{[T, V]}_A \le 2\alpha_1$. For the second and fourth order Trotter, we will define the upper bound to the error with $\alpha_2$ and $\alpha_4$ respectively, and their expressions are given at the end of Appendix \ref{appendix_fermionic_seminorm_Trotter_error}.

\begin{theorem}[First-order Trotter]
    Consider a system with $\eta$ particles on a $d$ dimensional lattice with $M=2^{m}$ sites per direction described by the pionless Hamiltonian $H = T+V$ with
    the kinetic and potential terms
    given in Eq.~\eqref{eq:Tdiscrete} and Eq.~\eqref{eq:Vdiscrete}.
    For any $t\in \mathbb{R}$ and $\epsilon > 0$ there exists a quantum circuit that implements a unitary operator $U(t)$ such that $\norm{U(t) - e^{-iHt}} < \epsilon$ with a $T$-gate count equal to
    \begin{equation}
    \label{eq:tcount_trotter1st}
    \begin{split}
        T^{(1)}_{tr} (\epsilon, t) &= r\left( T_T \left( \frac{\epsilon}{3r}, \frac{t}{r} \right) + T_V \left( \frac{\epsilon}{3r} \right) \right) \\
        &= O\left[ \frac{t^2 \eta^2}{\epsilon}\left(m+\log \left( \frac{\eta t}{\epsilon} \right)\right) \left(\eta^2+m\right) \right]\\
        \end{split}
    \end{equation}
    where $T_T$ and $T_V$ are given by Lemma~\ref{theorem:kinetic_energy_operator} and Lemma~\ref{theorem:complact_Trotter_potential_energy} respectively.
    The number of ancilla qubits needed is
 \begin{equation}
    b_{QFT}+b_{DIAG}+\max \left( 2b_{QFT}-1,b_{DIAG}, m(m-1), dm-1 \right)\;,
    \end{equation}
    where we have defined
    \begin{equation}
    \begin{split}
b_{DIAG}&=\left\lceil\log_2\left(\frac{18d\eta t\lambda_T}{\epsilon}\log_2\left(\frac{18d\eta t\lambda_T}{\epsilon}\right)\right)\right\rceil\\
b_{QFT}&=\min\left(m-1,\left\lceil\log_2\left(\frac{9d\eta mr}{\epsilon}\right)\right\rceil\right)+1\;.
    \end{split}
    \end{equation}
    In all expressions, $r = \left \lceil \frac{3t^2}{\epsilon}\alpha_1 \right \rceil$ and $\alpha_1$ is defined in Eq.~\eqref{eq:alpha1}.
    For the asymptotic behavior in Eq.~\eqref{eq:tcount_trotter1st} we used the bound $\lambda_T$ from Eq.~\eqref{eq:lambdaT} for the spectral norm of the kinetic energy.
    \label{theorem:first_order_Trotter}
\end{theorem}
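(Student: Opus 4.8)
The plan is to peel the target error $\epsilon$ into three equal parts: one for the genuine first-order Trotter error, and one each for the circuit-synthesis errors of the kinetic and potential blocks $U_T$ and $U_V$ provided by Lemma~\ref{theorem:kinetic_energy_operator} and Lemma~\ref{theorem:complact_Trotter_potential_energy}. Write the single first-order step as $S_1(\tau)=e^{-iV\tau}e^{-iT\tau}$ with $\tau=t/r$, let $U(t)=\big(U_V(t/r)U_T(t/r)\big)^r$, and split
\begin{equation}
\norm{e^{-iHt}-U(t)}_A \le \norm{e^{-iHt}-S_1(t/r)^r}_A + \norm{S_1(t/r)^r-U(t)}_A\;,
\end{equation}
where $\norm{\cdot}_A$ is the fermionic semi-norm of Appendix~\ref{appendix:norm}, i.e.\ the physically relevant error once the state lies in the antisymmetric sector. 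I would then handle the two terms separately.

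For the first term I would invoke the standard first-order product-formula estimate, which over $r$ steps reads $\norm{e^{-iHt}-S_1(t/r)^r}_A\le \tfrac{t^2}{2r}\norm{[T,V]}_A$; since $T$ and $V$ both commute with the antisymmetrizer $\Pi_A$, the semi-norm is sub-multiplicative here and the commutator bound $\norm{[T,V]}_A\le 2\alpha_1$ of Eq.~\eqref{eq:alpha1} (proved in Appendix~\ref{appendix_fermionic_seminorm_Trotter_error}) applies verbatim, so the Trotter error is at most $t^2\alpha_1/r$. Choosing $r=\lceil 3t^2\alpha_1/\epsilon\rceil$ makes this $\le\epsilon/3$. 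For the second term I would peel off the $2r$ unitary factors one at a time: because $\norm{\cdot}_A\le\norm{\cdot}$ and every factor is unitary, $\norm{S_1(t/r)^r-U(t)}_A\le r\big(\norm{e^{-iTt/r}-U_T(t/r)}+\norm{e^{-iVt/r}-U_V(t/r)}\big)$, and requesting precision $\epsilon_T=\epsilon_V=\epsilon/(3r)$ from the two Lemmas bounds this by $2\epsilon/3$. Summing gives a total error below $\epsilon$.

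For the resource count I would simply add the $T$-costs of the $r$ steps, $T^{(1)}_{tr}(\epsilon,t)=r\big(T_T(\epsilon/(3r),t/r)+T_V(\epsilon/(3r))\big)$, and substitute the asymptotic expressions from Lemma~\ref{theorem:kinetic_energy_operator} and Lemma~\ref{theorem:complact_Trotter_potential_energy}, using $\lambda_T,\alpha_1=O(\eta)$ (hence $r=O(t^2\eta/\epsilon)$) and $m=O(\log\Omega)$; the two factors $\widetilde O[\eta m(m+\log(t\eta/\epsilon))]$ and $O[\eta^3(m+\log(\eta/\epsilon))]$ combine with $r$ to give the stated $\widetilde O\!\left[\tfrac{t^2\eta^2}{\epsilon}(m+\log(\eta t/\epsilon))(\eta^2+m)\right]$. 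The ancilla count is the larger of the two subroutines' requirements with registers reused between steps, which adds the $dm-1$ slots for the multi-controlled gates of $U_V$ inside the maximum of Lemma~\ref{theorem:kinetic_energy_operator}; the explicit $b_{DIAG}$ and $b_{QFT}$ follow by evaluating that Lemma at the rescaled arguments $t/r$ and $\epsilon/(3r)$, which turns $6d\eta t\lambda_T/\epsilon$ into $18 d\eta t\lambda_T/\epsilon$ and $3d\eta m/\epsilon$ into $9d\eta m r/\epsilon$.

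The main obstacle is not any isolated computation but the joint bookkeeping of error and cost budgets, together with the clean justification that the Trotter step may be estimated in the fermionic semi-norm while the gate-synthesis errors are controlled in the ordinary operator norm — these compose correctly only because $\norm{\cdot}_A\le\norm{\cdot}$ and $\Pi_A$ is preserved by $T$, $V$, and hence by $S_1$, and this point should be stated explicitly. A secondary subtlety is that $r$ is itself a function of $\epsilon$, so the sub-precisions $\epsilon/(3r)$ passed into the building blocks, and through them the sizes $b_{DIAG}$ and $b_{QFT}$, inherit an extra factor of $r$ that must be carried through to produce the logarithms $\log(\eta t/\epsilon)$ in the final gate and qubit counts.
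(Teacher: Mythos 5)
Your proposal is correct and follows essentially the same route as the paper's own proof: the same three-way split of $\epsilon$ into Trotter error plus the two synthesis errors, the same fermionic semi-norm commutator bound $\norm{[T,V]}_A\le 2\alpha_1$ leading to $r=\left\lceil 3t^2\alpha_1/\epsilon\right\rceil$, and the same substitution of the rescaled arguments $t/r$ and $\epsilon/(3r)$ into Lemma~\ref{theorem:kinetic_energy_operator} and Lemma~\ref{theorem:complact_Trotter_potential_energy} to obtain the gate count, the asymptotics, and the registers $b_{DIAG}$ and $b_{QFT}$. Your explicit remark that the semi-norm bound on the Trotter step and the operator-norm bounds on the synthesis errors compose because $\norm{\cdot}_A\le\norm{\cdot}$ and $\Pi_A$ commutes with $T$ and $V$ is a point the paper leaves implicit, but it does not change the argument.
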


\begin{proof}
    Consider the first order Trotter formula, so that
    \begin{equation}
        U(t) = U_T(t)U_V(t)
    \end{equation}
    where $U_T(t)$ and $U_V(t)$ implement $e^{itT}$ and $e^{itV}$ with error $\epsilon_T$ and $\epsilon_V$ respectively. Then, we can write the total error of the Trotter formula as:
    \begin{equation}
        \norm{e^{itH} - \left(U\left(\frac{t}{r}\right)\right)^r} \le r\left( \epsilon_{Trotter}^{(1)} + \epsilon_T + \epsilon_V \right) \le \epsilon
    \end{equation}
    where $r$ is the number of Trotter steps and $\epsilon_{Trotter}^{(1)}$ an upperbound on the Trotter error at the first order given by (see Appendix \ref{appendix_fermionic_seminorm_Trotter_error} for the full calculation of the norm of the commutator):
    \begin{equation}
        \epsilon_{Trotter}^{(1)} = \left( \frac{t}{r} \right)^2 \frac{\norm{ [T, V]}}{2}, \quad\quad \frac{\norm{ [T, V] }}{2} \le \alpha_1 = \lambda_T\max\left\{\left| C \right|, \left| 2C+\frac{G}{2}\right|, \left| 3C + \frac{3G}{2} \right|\right\}\;.
        \label{eq:first_order_Trotter_error}
    \end{equation}
    For simplicity, let us assume now that $\epsilon_T = \epsilon_V = \epsilon/3r$, so that the error inequality becomes:
    \begin{equation}
        r\epsilon_{Trotter}^{(1)} \le \frac{\epsilon}{3}
    \end{equation}
    From here we can find $r$ by setting it equal to the minimum possible value it can assume:
    \begin{equation}
        r = \left \lceil \frac{3t^2}{\epsilon}\alpha_1 \right \rceil
    \end{equation}
    Now that we have the number of oracle calls, we can evaluate the cost of the algorithm. The oracle $U(\tau) = U_T(\tau)U_V(\tau)$ will be applied $r$ times, and $\tau = t/r$. So the total number of $T$-gates is equal to:
    \begin{equation}
        T^{(1)}_{tr} (\epsilon, t) = r\left( T_T \left( \frac{\epsilon}{3r}, \frac{t}{r} \right) + T_V \left( \frac{\epsilon}{3r} \right) \right)
    \end{equation}

    The number of ancilla qubits needed to implement this algorithm is equal to the maximum between the number of ancilla qubits required to implement every single piece (assuming we can use many times the same set of ancilla qubits). It then follows that the number of ancilla qubits required is:
    \begin{equation}
    b_{QFT}+b_{DIAG}+\max \left( 2b_{QFT}-1,b_{DIAG}, m(m-1), dm-1 \right)\;,
    \end{equation}
    where the size of the phase register $b_{DIAG}$ for the diagonal unitary is now given by
    \begin{equation}
    b_{DIAG}=\left\lceil\log_2\left(\frac{18d\eta t\lambda_T}{\epsilon}\log_2\left(\frac{18d\eta t\lambda_T}{\epsilon}\right)\right)\right\rceil\;,
    \end{equation}
    while in order to implement the $QFT$ using Theorem~\ref{theorem:QFT} we require a phase register of size
    \begin{equation}
b_{QFT}=\left\lceil\log_2\left(\frac{9d\eta mr}{\epsilon}\right)\right\rceil+1\;.
    \end{equation}
\end{proof}

Using the same implementations of the exponential of the kinetic and potential energy from Lemma~\ref{theorem:kinetic_energy_operator} and Lemma~\ref{theorem:complact_Trotter_potential_energy} we can also devise higher order Trotter decompositions. In particular for second and fourth order we have the following
\begin{theorem}[Second and fourth-order Trotter-Suzuki]
 Consider a system with $\eta$ particles on a $d$ dimensional lattice with $M=2^{m}$ sites per direction described by the pionless Hamiltonian $H = T+V$ with    the kinetic and potential terms
    given in Eq.~\eqref{eq:Tdiscrete} and Eq.~\eqref{eq:Vdiscrete}.
    For any $t\in \mathbb{R}$ and $\epsilon > 0$ there exists a quantum circuits that implements a unitary operator $U(t)$ such that $\norm{U(t) - e^{-iHt}} < \epsilon$ with a $T$-gate count equal to
    \begin{equation}
        T^{(2)}_{tr}(\epsilon, t) = r_2T_V \left( \frac{\epsilon}{4r_2} \right) + (r_2-1) T_T \left( \frac{\epsilon}{4r_2}, \frac{t}{r_2} \right) + 2T_T\left( \frac{\epsilon}{8r_2}, \frac{t}{2r_2} \right)\;,
    \end{equation}
    and
    \begin{equation}
    T^{(4)}_{tr}(\epsilon, t) = 5r_4T_V \left( \frac{\epsilon}{12r_4} \right) + 6r_4 T_T \left( \frac{\epsilon}{12r_4}, \frac{t}{r_4} \frac{1}{4-4^{1/3}}\right)\;,
    \end{equation}
    where $T_T$ and $T_V$ are given by Lemma~\ref{theorem:kinetic_energy_operator} and Lemma~\ref{theorem:complact_Trotter_potential_energy} respectively. The number of steps for the two implementations are
    \begin{equation}
    r_2 = \left \lceil \sqrt{\frac{4t^3}{\epsilon}\alpha_2} \;\right \rceil \quad\text{and}\quad r_4 = \left \lceil \left( \frac{12t^5}{\epsilon}\alpha_4 \right)^{1/4} \right \rceil\;,
    \end{equation}
    with $\alpha_2$ and $\alpha_4$ from Eq.~\eqref{eq:alpha2} and Eq.~\eqref{eq:alpha4} respectively. The number of ancilla qubits required is given by
    \begin{equation}
    b_{QFT}^{(k)} + b_{DIAG}^{(k)} + \max \left( 2b_{QFT}^{(k)}-1, b_{DIAG}^{(k)}, m(m-1), dm-1 \right)\;,
    \end{equation}
    where $k=2$ or $k=4$ for the two implementations respectively, and
    \begin{equation}
    b_{DIAG}^{(2)}=\left\lceil\log_2\left(\frac{24d\eta t\lambda_T}{\epsilon}\log_2\left(\frac{24d\eta t\lambda_T}{\epsilon}\right)\right)\right\rceil\;,
    \end{equation}
    \begin{equation}
    b_{DIAG}^{(4)}=\left\lceil\log_2\left(\frac{6d\eta t\lambda_T}{\epsilon}\frac{12}{4-4^{1/3}}\log_2\left(\frac{6d\eta t\lambda_T}{\epsilon}\frac{12}{4-4^{1/3}}\right)\right)\right\rceil\;,
    \end{equation}
    \begin{equation}
    b_{QFT}^{(k)} = \min\left( m-1, \left\lceil\log_2\left(\frac{12(k-1)d\eta mr}{\epsilon}\right)\right\rceil \right) +1\;,
    \end{equation}
    \label{theorem:2nd_and_fourth_Trotter}
\end{theorem}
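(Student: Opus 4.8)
The plan is to mirror the proof of Theorem~\ref{theorem:first_order_Trotter}, replacing the first-order splitting by the symmetric Suzuki formulas of order two and four. I would take the second-order step with the kinetic factor on the outside, $S_2(\tau)=U_T(\tau/2)\,U_V(\tau)\,U_T(\tau/2)$, so that the relatively expensive compact potential exponential of Lemma~\ref{theorem:complact_Trotter_potential_energy} is applied only once per step at full strength and consecutive kinetic factors merge across steps, and the fourth-order step as the standard recursion $S_4(\tau)=S_2(u\tau)^2\,S_2((1-4u)\tau)\,S_2(u\tau)^2$ with $u=1/(4-4^{1/3})$. Since the register is initialized in the anti-symmetric $\eta$-particle subspace, which is left invariant by $e^{-iHt}$ and by each of $U_T$, $U_V$, all error estimates can be made in the fermionic semi-norm $\norm{\cdot}_A$ of Appendix~\ref{appendix:norm}; on that subspace it agrees with the operator norm appearing in the statement. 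Throughout, $V=V_2+V_3$ is kept as a single term, which is legitimate because $[V_2,V_3]=0$ lets us realize $e^{-iVt}$ with the single gadget of Lemma~\ref{theorem:complact_Trotter_potential_energy}, with no additional splitting error.

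Next I would divide $\epsilon$ into a ``Trotter'' part and an ``implementation'' part as in the first-order proof. The ideal formulas satisfy $\norm{S_2(t/r)^r-e^{-iHt}}_A=O(\alpha_2 t^3/r^2)$ and $\norm{S_4(t/r)^r-e^{-iHt}}_A=O(\alpha_4 t^5/r^4)$, where $\alpha_2$ and $\alpha_4$ bound the relevant combinations of nested commutators of $T$ and $V$ in the semi-norm --- of order three for the second-order formula and of order five for the fourth-order one. These bounds are established in Appendix~\ref{appendix_fermionic_seminorm_Trotter_error}, where $\alpha_2$ and $\alpha_4$ are normalized so that demanding the Trotter part be a fixed fraction of $\epsilon$ yields exactly $r_2=\lceil\sqrt{4t^3\alpha_2/\epsilon}\,\rceil$ and $r_4=\lceil(12t^5\alpha_4/\epsilon)^{1/4}\rceil$ (Eq.~\eqref{eq:alpha2} and Eq.~\eqref{eq:alpha4}). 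The remaining budget is spread over the gate-level sub-unitaries: after telescoping the product and merging adjacent kinetic exponentials, $S_2(t/r_2)^{r_2}$ contains $r_2$ potential exponentials, $r_2-1$ full-strength kinetic exponentials of argument $t/r_2$, and $2$ half-strength ones of argument $t/(2r_2)$, while $S_4(t/r_4)^{r_4}$ contains $5r_4$ potential exponentials and at most $6r_4$ kinetic ones, whose costs I would bound uniformly by $T_T$ evaluated at the largest argument $\frac{t}{r_4}\frac{1}{4-4^{1/3}}$. Assigning the sub-unitary error budgets as they appear in the statement --- $\epsilon/(4r_2)$ per full-strength and $\epsilon/(8r_2)$ per half-strength exponential in the second-order case, and $\epsilon/(12r_4)$ uniformly in the fourth-order case --- and inserting $T_T$ and $T_V$ from Lemma~\ref{theorem:kinetic_energy_operator} and Lemma~\ref{theorem:complact_Trotter_potential_energy} reproduces the quoted $T^{(2)}_{tr}$ and $T^{(4)}_{tr}$.

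For the qubit count I would again take the maximum over the transient workspaces of the sub-circuits --- $m(m-1)$ qubits to square a momentum component, $2b^{(k)}_{QFT}-1$ for the approximate QFT, $b^{(k)}_{DIAG}$ for the diagonal phase gate, and $dm-1$ for the multi-controlled gates inside $U_V$ --- plus the two persistent phase registers of sizes $b^{(k)}_{DIAG}$ and $b^{(k)}_{QFT}$, since, exactly as in Theorem~\ref{theorem:first_order_Trotter}, all workspace can be reused between the kinetic and potential parts. The explicit $b^{(k)}_{DIAG}$ and $b^{(k)}_{QFT}$ for $k=2,4$ then follow by inserting the per-call error budgets and the corresponding evolution times (e.g. $t/r_2$, $t/(2r_2)$, or $\frac{t}{r_4}\frac{1}{4-4^{1/3}}$) into the formulas of Lemma~\ref{theorem:kinetic_energy_operator}.

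The genuinely new ingredient, and the main obstacle, is the evaluation of $\alpha_2$ and $\alpha_4$ in the fermionic semi-norm. The hard part will be extending the occupation-counting argument behind Table~\ref{tab:potential_per_site} from single commutators to the order-three and order-five nested commutators, keeping track of which particles must share a lattice site (for each $V$ factor) or a momentum component (for each $T$ factor) in order to contribute, and using that at most four fermions can occupy the same site, so that the bookkeeping is tight enough to give $\alpha_2,\alpha_4=O(\eta)$; it is this $O(\eta)$ scaling that ultimately produces the favorable $\eta$-dependence reported in Table~\ref{tab:comparision_results}. That computation is carried out in Appendix~\ref{appendix_fermionic_seminorm_Trotter_error} and only invoked here.
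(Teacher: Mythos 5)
Your proposal is correct and follows essentially the same route as the paper's proof (which is carried out in Lemma~\ref{lemma:2nd_order_trotter} and Lemma~\ref{lemma:4th_order_trotter} of Appendix~\ref{appendix:higher_order_Trotter}): the symmetric second-order splitting with the kinetic factors on the outside and merged across steps, the fourth-order Suzuki recursion reduced to $5r_4$ potential and $6r_4$ kinetic exponentials with all kinetic arguments bounded by $\tfrac{t}{r_4}\tfrac{1}{4-4^{1/3}}$, the identical error-budget allocation yielding $r_2$ and $r_4$, and the same reuse-based ancilla accounting. The deferral of the $\alpha_2,\alpha_4$ commutator bounds to the fermionic semi-norm occupation-counting argument of Appendix~\ref{appendix_fermionic_seminorm_Trotter_error} is exactly what the paper does as well.
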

\begin{proof}
    See Lemma~\ref{lemma:2nd_order_trotter} and Lemma~\ref{lemma:4th_order_trotter} in Appendix~\ref{appendix:higher_order_Trotter}.
\end{proof}

Using the spectral norm upperbound estimate $\lambda_T$ from Eq.~\eqref{eq:lambdaT} and reintroducing the size of the single particle basis $\Omega$ we have then found that the gate count scales as
\begin{equation}
\label{eq:tcount_scaling_trotter}
T_{tr}^{(k)}(\epsilon,t) = \widetilde{O}\left(\frac{t^{1+1/k}\eta^{1+1/k}}{\epsilon^{1/k}}\log\left(\frac{t\eta\Omega}{\epsilon}\right)\left(\eta^2+\log\left(\Omega \right)\right)\right)\;,
\end{equation}
where $k=1,2,4$ for the first, second and fourth order product formula respectively. For all of these implementations, the total number of qubits needed scales as
\begin{equation}
Q(t,\epsilon) = \widetilde{O}\left(\log\left(\frac{t}{\epsilon}\right)+\eta\log(\Omega)+\log^2(\Omega)\right)\;.
\end{equation}

We can in principle generalize the results presented above to any product formula of even order $2p$. By defining $S_{2p}(t)$ the $2p$-th order oracle, it can be shown that (see Ref.~\cite{PhysRevX.11.011020})
\begin{equation}
    \norm{e^{-itH} - S_{2p}(t)} \le t^{2p+1} \delta_{2p}
\end{equation}
where $\delta_{2p}$ is a linear combination of norms of commutators between the kinetic energy $T$ and the potential energy $V$. Since there is always at least one $T$ in these commutators, and since $T$ is a one-body operator, one can show in general that $\delta_{2p} = O(\eta)$. By introducing the time discretization we can find the number of oracle calls needed in terms of the maximum admitted error $\epsilon$
\begin{equation}
    \norm{e^{-itH} - S_{2p}\left( \frac{t}{r} \right)^r} \le r\left(\frac{t}{r}\right)^{2p+1} \delta_{2p} \le \epsilon \quad\quad \Rightarrow \quad\quad r = O\left( \frac{t^{1+\frac{1}{2p}} \eta^{\frac{1}{2p}}}{\epsilon^{\frac{1}{2p}}} \right)
\end{equation}
The cost of this algorithm will be $r$ times the cost of implementing $e^{-itT}$ and $e^{-itV}$ and we then recover the same asymptotic scaling from Eq.~\eqref{eq:tcount_scaling_trotter} for any $k=2p$ for integer $p$.
However, it is important to notice that the prefactor in the gate count is proportional to the number of times the oracle $S_{2p}$ contains $e^{-itT}$ and $e^{-itV}$, and this grows exponentially in the order $p$. Therefore the cost of product formulas will have a minimum, which usually is around order $4$, and that's why we stopped at the fourth order.

\subsection{Quantum Signal Processing implementation}
\label{sec:qspimpl}

In this section we present our implementation of the evolution operator for pionless EFT on a Lattice using Quantum Signal Processing~\cite{Low2017, Low_2019} and we will use the "Linear Combination of Unitaries" (LCU) method to block-encode the Hamiltonian~\cite{childs2012hamiltonian, Low_2019,Gily_n_2019}.

We start by providing our implementations of the block encoding of both the kinetic energy and the potential energy. For the kinetic energy operator $T$ we use the same strategy employed in Ref.~\cite{Su2021}. We start by writing the kinetic energy term $T$ as follows
\begin{equation}
\label{eq:Tdiscrete_forLCU}
    T = K \left(\text{QFT}^{\otimes d\eta}\right)^{\dagger} \sum_{i=0}^{\eta -1} \sum_{w=0}^{d-1} \sum_{p=0}^{2^{m}-1}\sum_{r=0}^{m-2}\sum_{s=0}^{m-2} 2^{r+s} p_rp_s \Pi_{w,i}(p) \left(\text{QFT}^{\otimes d\eta}\right)\;,
\end{equation}
where, in constrast to Eq.~\eqref{eq:Tdiscrete} we have explicitly written out the squared momentum in terms of the bits of $p$, denoted with $p_r$ and $p_s$. We then aim at preparing a state proportional to
\begin{equation}
\label{eq:lcu_prep_statet_T}
    \ket{+}_b \sum_{i=0}^{\eta-1} \ket{i}_f \sum_{\omega=0}^{d-1} \ket{\omega}_g \sum_{r=0}^{m-2} 2^{r/2}\ket{r}_h \sum_{s=0}^{m-2} 2^{s/2}\ket{s}_k\;.
\end{equation}
Using this state, we can then implement a block-encoding of $T$ by flipping the $b$ register state to $\ket{-}_b$ unless the $r$ and $s$ bits of the $\omega$ component of the momentum of particle $i$ are both equal to one. In this way we have effectively turned the (non unitary) projection operators $\Pi_{w,i}(p)$ from Eq.~\eqref{eq:Tdiscrete_forLCU} into controlled Z operations on the $b$ ancilla and can thus now perform a LCU. We thus have

\begin{lemma}(LCU Kinetic Energy)
    Consider the kinetic energy operator $T$ of Eq.~\eqref{eq:Tdiscrete_forLCU}. We can implement a block encoding $U_T$ of $T$ such that $\norm{ {}_a\langle0\lvert U_T\rvert0\rangle_a - T/\lambda_T}<\epsilon$ for
    \begin{equation}
    \lambda_T = dK\eta 2^{2(m-1)}\;,
    \end{equation}
    using $Q_a=\eta+2m+6$ qubits for the block encoding, and additional
    \begin{equation}
    b_r+b_{QFT}+\max\left(2b_{QFT}-1,m+6\right)\quad\text{with}\quad b_{QFT}=\min\left(m-1,\left\lceil\log_2\left(\frac{4dm}{\epsilon}\right)\right\rceil\right)+1\;,
    \end{equation}
    ancilla qubits, including the $b_{QFT}$ qubits needed to store the phase-gradient state used to implement the $QFT$ using the construction from Theorem.~\ref{theorem:QFT}. The gate count in terms of $T$-gates is instead given by:
    \begin{equation}
    \begin{split}
         T^{LCU}_T (\epsilon)
         &=2dT_{QFT}\left(m,\frac{\epsilon}{4d}\right) + 24n_\eta+32b_r+18m + 4(2\eta-1) dm-4d-68\;,
    \end{split}
    \end{equation}
where $T_{QFT}$ is defined in Theorem.~\ref{theorem:QFT} in the Appendix and we introduced
\begin{equation}
n_\eta=\lceil\log_2(\eta)\rceil\quad\quad b_r = \left\lceil\frac{1}{2}\log_2\left(\frac{9\pi^2}{\epsilon}\right)\right\rceil \;.
\end{equation}
\label{lemma:LCU_T}
\end{lemma}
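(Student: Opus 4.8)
The approach is to realize $U_T$ as a quantum‑Fourier‑transform–conjugated Linear Combination of Unitaries, following the strategy of Ref.~\cite{Su2021}. The starting point is the bit‑level form of the kinetic energy in Eq.~\eqref{eq:Tdiscrete_forLCU}: in the momentum basis $T/K$ is the diagonal operator with eigenvalue $\sum_{r,s=0}^{m-2}2^{r+s}p_rp_s=\big(\sum_{r=0}^{m-2}2^rp_r\big)^2$ on the $\omega$‑component of particle $i$, so $T/K$ is a linear combination over the index $(i,\omega,r,s)$, with coefficients $2^{r+s}$, of the (non‑unitary) projectors onto momenta with $p_r=p_s=1$ on the $\omega$‑register of particle $i$ — dilated to unitaries acting on the signature qubit $b$ through the $\ket{+}_b$/controlled‑$Z$ construction described after Eq.~\eqref{eq:lcu_prep_statet_T}. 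The bare one‑norm of this decomposition is $dK\eta\big(\sum_{r=0}^{m-2}2^r\big)^2=dK\eta(2^{m-1}-1)^2$; I would pad the $h$ and $k$ registers indexing $r$ and $s$ with one extra basis state each, carrying amplitude $1$ and handled by SELECT as an unconditional $Z$ on $b$. Such padded branches contribute nothing to the block but lift the per‑$(i,\omega)$ one‑norm to $2^{m-1}\cdot 2^{m-1}=2^{2(m-1)}$, so the realized subnormalization is exactly $\lambda_T=dK\eta 2^{2(m-1)}$, consistent with $\lambda_T=\tfrac{\hbar^2}{2\mu a^2}d\eta\pi^2$ from Eq.~\eqref{eq:lambdaT}.

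First I would build the PREPARE circuit for the state in Eq.~\eqref{eq:lcu_prep_statet_T}: a Hadamard on $b$; a uniform superposition over the particle index $i$, stored in a one‑hot register of $\eta$ qubits so that SELECT becomes a linear swap network; a uniform superposition over the $d$ directions $\omega$ (a constant cost since $d=O(1)$); and two copies of the amplitude‑loading map $\ket{0}\mapsto\tfrac{1}{\sqrt{2^{m-1}}}\big(\sum_{r=0}^{m-2}2^{r/2}\ket{r}+\ket{r_\star}\big)$ on the $h$ and $k$ registers. Only the amplitude‑loading step is nontrivial, since the amplitudes $2^{r/2}$ are irrational; loading them to $b_r$ bits of precision introduces an error of order $2^{-b_r}$ on the block‑encoded operator, and choosing $b_r=\big\lceil\tfrac12\log_2(9\pi^2/\epsilon)\big\rceil$ (the $\pi^2$ tracing back to the normalization above) keeps this contribution below $\epsilon/2$. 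Collecting the $T$‑counts of the uniform superpositions, the one‑hot preparation, and the two amplitude loads — each compiled from the Appendix primitives with Toffolis realized via Ref.~\cite{Gidney_2018} — accounts for the $12n_\eta+16b_r+16m+\dots$ terms in $T^{LCU}_T(\epsilon)$, with $n_\eta=\lceil\log_2\eta\rceil$.

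Next I would build the SELECT circuit. Conditioned on the one‑hot index $i$, a swap network moves particle $i$'s position register into a fixed particle slot; we then apply the QFT to that slot, act on $b$ with a small doubly‑multiplexed controlled‑$Z$ that — using the $h$ and $k$ registers — fires unless both selected bits of the $\omega$‑component equal one, and finally undo the QFT and the swap network. Since a swap conjugates the QFT on the fixed slot into the QFT on particle $i$'s register, the sequence ``swap in, QFT, phase, QFT${}^{\dagger}$, swap out'' implements exactly $\text{QFT}_i^{\dagger}(\cdot)\text{QFT}_i$ on particle $i$, so only $2d$ QFTs on $m$ qubits are needed, independently of $\eta$; this is why $T^{LCU}_T(\epsilon)$ carries just $2dT_{QFT}(m,\epsilon/(4d))$. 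The swap network gives the $4(2\eta-1)dm$ term, the multiplexed controlled‑$Z$ the $16m$ term, and each QFT is realized through the phase‑gradient construction of Theorem~\ref{theorem:QFT} to precision $\epsilon/(4d)$, which over the $2d$ calls supplies the remaining $\epsilon/2$ of error and fixes $b_{QFT}=\min\!\big(m-1,\lceil\log_2(4dm/\epsilon)\rceil\big)+1$ together with the $\max(2b_{QFT}-1,m+6)$ scratch qubits of that theorem.

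It then remains to check that the exact (unapproximated) circuit block‑encodes $T/\lambda_T$ on the nose: the $\ket{+}_b$/controlled‑$Z$ mechanism collapses each $(i,\omega,r,s)$ branch to the projector onto momenta with $p_r=p_s=1$, summing against the coefficients $2^{r+s}$ reconstructs the diagonal $\big(\sum_r 2^rp_r\big)^2$ of Eq.~\eqref{eq:Tdiscrete_forLCU}, the padded branches vanish, and the QFT sandwich restores position space; adding the state‑preparation and QFT errors yields $\norm{{}_a\bra{0}U_T\ket{0}_a-T/\lambda_T}<\epsilon$. Counting qubits — the one‑hot index ($\eta$), the $h$, $k$, $g$ registers together with $b$ and scratch ($2m+6$ in total), plus the QFT ancillas $b_{QFT}+\max(2b_{QFT}-1,m+6)$ — reproduces the statement. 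I expect the SELECT circuit to be the crux: arranging the swap‑conjugation so that the QFT count is $O(d)$ rather than $O(d\eta)$ while keeping the swap network and the multiplexed controlled‑$Z$ at the advertised (near‑)linear sizes, and confirming that this composite — together with the one‑hot multiplexing and the padding of the $h,k$ registers — block‑encodes $T/\lambda_T$ exactly.
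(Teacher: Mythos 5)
Your proposal follows essentially the same construction as the paper: the PREPARE of Eq.~\eqref{eq:lcu_prep_statet_T} via the Sec.~II.B strategy of Ref.~\cite{Su2021} (with the enlarged normalization $\lambda_T=dK\eta 2^{2(m-1)}$ absorbing the imperfect amplitude loading on the $h,k$ registers), a controlled swap network so that only $2d$ QFTs on a fixed particle slot are needed, the doubly-multiplexed controlled-$Z$ on the signature qubit, and the same $\epsilon/2$--$\epsilon/2$ error split between state preparation and the approximate QFTs, reproducing the stated gate and qubit counts. The only cosmetic imprecision is the attribution of the $16m$ term, which in the paper's accounting is split between the PREPARE Toffolis and the SELECT bit-copying rather than belonging wholly to either.
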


\begin{proof}
As a first step we describe how to prepare the state in Eq.~\eqref{eq:lcu_prep_statet_T} with a high success probability following Ref.~\cite{Su2021}. The $\ket{+}$ state in register $b$ can be prepared with a single Hadamard. The rest can be prepared with a large success probability following the same construction as in Sec.II.B of Ref.~\cite{Su2021} using a total of $\eta+2m+6$ qubits, out of which $4$ are flagging the successful preparation being in $\ket{0}$, and two are used for rotations and left in $\ket{0}$ upon success. The small failure probability in the preparation of the $h$ and $k$ register leads to an increase in the normalization of the block encoding which becomes~\cite{Su2021}
\begin{equation}
\lambda_T = dK\eta 2^{2(m-1)}\;,
\end{equation}
upon success of the other state prep. The total number of Toffoli required for this state preparation is $3n_\eta+4b_r+2m-16$, where $n_\eta=\lceil\log_2(\eta)\rceil$, and we can choose the bits of precision $b_r$ for rotations in order to control the final error in the block encoding due to subnormalization. Using the bound on the failure probability from Appendix.~J of Ref~\cite{Su2021} we can choose
\begin{equation}
b_r = \left\lceil\frac{1}{2}\log_2\left(\frac{9\pi^2}{\epsilon}\right)\right\rceil
\end{equation}
in order to guarantee that we commit an error less then $\epsilon$ in block-encoding $T/\lambda_T$ provided the error for $SELECT$ is also bounded by $\epsilon/2$. The $SELECT$ operation $S_T$ instead acts as
\begin{equation}
    S_T \ket{+}_b \ket{k}_f \ket{w}_g \ket{r}_h \ket{s}_i \ket{p_{w}} = Z^{q_{w,r}q_{w,s}\oplus 1} \ket{b}_b \ket{k}_f \ket{\omega}_g \ket{r}_h \ket{s}_i \ket{p_{w}}
\end{equation}
and can be performed by applying the following steps (see Ref.~\cite{Su2021} for additional details)
\begin{itemize}
    \item controlled on the register $f$ we swap the register of particle $1$ with the one of particle $i$ using $(\eta-1) d m$ controlled SWAP gates (since we can skip the case corresponding to $i=0$ that requires no SWAP). The total cost is then $(\eta-1) d m$ Toffoli gates.
    \item we apply a $m$-bit $QFT$ on the $d$ registers for the first particle
    \item controlled on the value $w$ on register $g$, we copy the leftmost $m-1$ bits of $p_{q,0}$ in an ancilla. This costs $d(m-1)$ Toffoli gates
    \item controlled on the $h$ register, we copy the $r$ bit of the momentum in an ancilla bit for a cost of $m-1$ Toffoli
    \item controlled on the $k$ register, we copy the $s$ bit of the momentum in an ancilla bit for a cost of $m-1$ Toffoli
    \item controlled on the last two ancilla bit, as well as anticontrolled on the $4$ ancilla bits used to flag success in the state preparation, apply a $Z$ gate to the $a$ register unless both ancilla are $1$. This can be done applying first a $Z$ gate without controls and then a $C^6Z$ gate. Using Theorem~\ref{theorem:Z_rotations} this requires $20$ T gates and $5$ additional ancilla qubits
    \item uncompute all the $m+1$ ancilla bits employed during the previous steps. This can be done with measurements and Clifford gates
    \item we finish by undoing the $QFT$ and the controlled $SWAP$
\end{itemize}

The full $SELECT$ protocol is depicted in the circuit in Fig.~\ref{fig:T_LCU_complete_circuit}, where the uncomputation of the ancilla in the next to last step is indicated with a $RST$ operation. The $4$ flag qubits used to denoted a successful state preparation are denoted by $ \ket{f}_f$, $ \ket{f}_g$, $ \ket{f}_h$ and $ \ket{f}_k$. With a slight abuse of notation we denoted as simple controls even controls on registers with multiple qubits, this has to be understood by doing a unary iteration on the individual qubits of the control. In order to be more explicit, we give a direct implementation of the controlled $SWAP$ network in Fig.~\ref{fig:controlled_SN_example_circuit} which is implemented with the controlled SWAP circuit from Fig.~\ref{fig:cSWAP_circuit}.

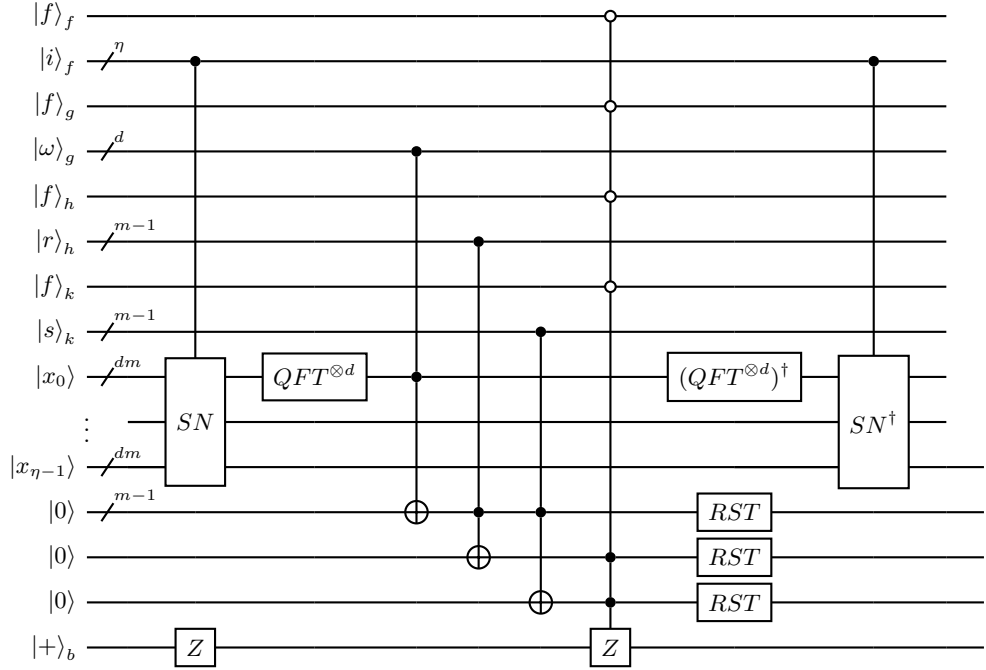
\begin{figure}
\centering
    \begin{quantikz}[row sep={0.6cm,between origins}]
    \lstick{$ \ket{f}_f$} & \qw & \qw                  & \qw      & \qw    & \qw    & \qw    &\octrl{2}&  \qw & \qw & \qw \\
    \lstick{$ \ket{i}_f$} & \qwbundle{\eta} & \ctrl{7} & \qw      & \qw    & \qw    & \qw    & \qw     &  \qw & \ctrl{7} & \qw \\
    \lstick{$\ket{f}_g$} & \qw & \qw                   & \qw      & \qw    & \qw    & \qw    &\octrl{2}&  \qw & \qw & \qw \\
    \lstick{$\ket{\omega}_g$} & \qwbundle{d} & \qw     & \qw      &\ctrl{5}& \qw    & \qw    & \qw     &  \qw & \qw & \qw \\
    \lstick{$\ket{f}_h$} & \qw & \qw                   & \qw      & \qw    & \qw    & \qw    &\octrl{2}&  \qw & \qw & \qw \\
    \lstick{$\ket{r}_h$} & \qwbundle{m-1} & \qw        & \qw      & \qw    &\ctrl{6}& \qw    & \qw     &  \qw & \qw & \qw \\
    \lstick{$\ket{f}_k$} & \qw & \qw                   & \qw      & \qw    & \qw    & \qw    &\octrl{6}&  \qw & \qw & \qw \\
    \lstick{$\ket{s}_k$} & \qwbundle{m-1} & \qw        & \qw      & \qw    & \qw    &\ctrl{4}& \qw     &  \qw & \qw & \qw \\
    \lstick{$\ket{x_0}$} & \qwbundle{dm} & \gate[3]{SN}&\gate{QFT^{\otimes d}}&\ctrl{3}& \qw    & \qw    & \qw     &  \gate{(QFT^{\otimes d})^\dag} & \gate[3]{SN^\dag} & \qw \\
    \vdots                                   & & \qw   & \qw      & \qw    & \qw    & \qw    & \qw     & \qw      &  \qw & \qw \\
    \lstick{$\ket{x_{\eta-1}}$} & \qwbundle{dm}& \qw   & \qw      & \qw    & \qw    & \qw    & \qw     & \qw      &  \qw & \qw & \qw \\
    \lstick{$\ket{0}$} & \qwbundle{m-1}        & \qw   & \qw      & \targ{}&\ctrl{1}&\ctrl{2}& \qw     &\gate{RST}&  \qw & \qw & \qw \\
    \lstick{$\ket{0}$} & \qw & \qw                     & \qw      & \qw    & \targ{}& \qw    & \ctrl{1}&\gate{RST}&  \qw & \qw & \qw \\
    \lstick{$\ket{0}$} & \qw & \qw                     & \qw      & \qw    & \qw    & \targ{}& \ctrl{1}&\gate{RST}&  \qw & \qw & \qw \\
    \lstick{$\ket{+}_b$} & \qw            & \gate{Z}   & \qw      & \qw    & \qw    & \qw    &\gate{Z} & \qw & \qw & \qw & \qw \\
    \end{quantikz}
    \caption{Circuit implementation of the $SELECT$ unitary for the kinetic energy operator.}
    \label{fig:T_LCU_complete_circuit}
\end{figure}

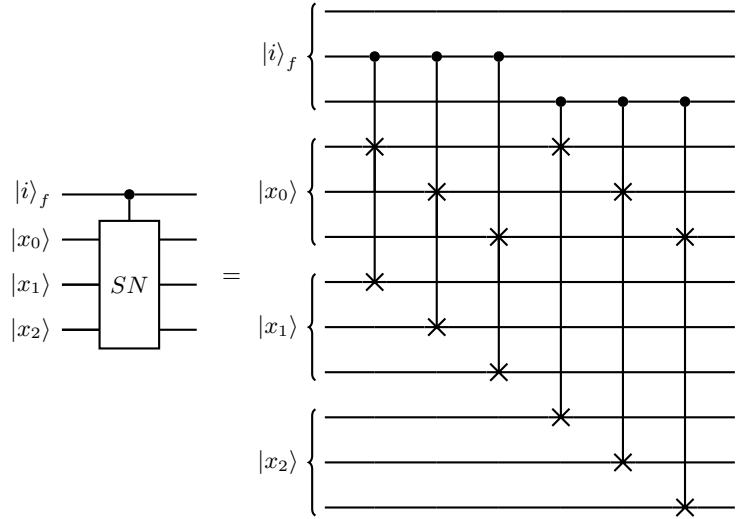
\begin{figure}
\centering
    \begin{quantikz}[row sep={0.6cm,between origins}]
    \lstick{$\ket{i}_f$} & \ctrl{1}&  \qw   \\
    \lstick{$\ket{x_0}$} &\gate[3]{SN}&  \qw   \\
    \lstick{$\ket{x_1}$} & \qw&  \qw   \\
    \lstick{$\ket{x_2}$} & \qw     &  \qw   \\
    \end{quantikz}\quad=     \begin{quantikz}[row sep={0.6cm,between origins}]
    \lstick[3]{$\ket{i}_f$} &  \qw    &  \qw   &  \qw   &  \qw   &  \qw   &  \qw   &  \qw\\
    & \ctrl{3}                        &\ctrl{4}&\ctrl{5}&  \qw   &  \qw   &  \qw   &  \qw\\
    &  \qw                            &  \qw   & \qw    &\ctrl{2}&\ctrl{3}&\ctrl{4}& \qw\\
    \lstick[3]{$\ket{x_0}$} & \swap{3}&  \qw   &  \qw   &\swap{6}&  \qw   &  \qw   &  \qw\\
    & \qw                             &\swap{3}&\qw     &  \qw   &\swap{6}&  \qw   &  \qw\\
    & \qw                             &\qw     &\swap{3}&  \qw   &  \qw   &\swap{6}&  \qw   \\
    \lstick[3]{$\ket{x_1}$} & \targX{}&  \qw   &  \qw   &  \qw   &  \qw   &  \qw   &  \qw\\
    & \qw                             &\targX{}&  \qw   &  \qw   &  \qw   &  \qw   &  \qw\\
    & \qw                             &  \qw   &\targX{}&  \qw   &  \qw   &  \qw   &  \qw\\
    \lstick[3]{$\ket{x_2}$} & \qw     &  \qw   &  \qw   &\targX{}&  \qw   &  \qw   &  \qw\\
    & \qw                             &  \qw   &  \qw   &  \qw   &\targX{}&  \qw   &  \qw\\
    & \qw                             &  \qw   &  \qw   &  \qw   &  \qw   &\targX{}&  \qw\\
    \end{quantikz}
    \caption{Circuit implementation of the controlled $SN$ unitary for a case with $\eta=3$ and $dm=3$.}
    \label{fig:controlled_SN_example_circuit}
\end{figure}

\begin{figure}
    \centering
    \begin{quantikz}[row sep={0.6cm,between origins}]
    \lstick{$\ket{a}$} & \ctrl{2}&  \qw   \\
    \lstick{$\ket{b}$} &\swap{1}&  \qw   \\
    \lstick{$\ket{c}$} & \targX{}&  \qw   \\
    \end{quantikz}\quad=    \begin{quantikz}[row sep={0.6cm,between origins}]
    \lstick{$\ket{a}$} &\qw      & \ctrl{1}&  \qw&  \qw   \\
    \lstick{$\ket{b}$} &\targ{}  &\ctrl{1} &\targ{}&  \qw   \\
    \lstick{$\ket{c}$} &\ctrl{-1}& \targ{}&\ctrl{-1}&  \qw   \\
    \end{quantikz}
    \caption{Circuit implementation of a controlled SWAP with a single Toffoli gate.}
    \label{fig:cSWAP_circuit}
\end{figure}
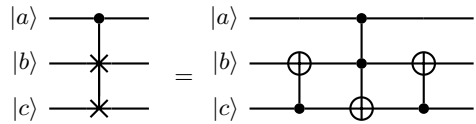

The total number of ancilla qubits required is thus $\eta+2m+6$ for the $PREPARE$ while for the $SELECT$ we need
\begin{equation}
    b_{QFT}+\max\left(2b_{QFT}-1,m+6\right)\quad\text{with}\quad b_{QFT}=\min\left(m-1,\left\lceil\log_2\left(\frac{4dm}{\epsilon}\right)\right\rceil\right)+1\;,
\end{equation}
ancilla qubits, including the $b_{QFT}$ qubits needed to store the phase-gradient state used to implement the $QFT$ using the construction from Theorem.~\ref{theorem:QFT}. Accounting for the need to apply the PREPARE and its inverse, for the gate count instead we need a number of Toffoli gates given by
\begin{equation}
6n_\eta+8b_r+4m-32 + 2(\eta-1)dm+(d+2)(m-1)
\end{equation}
which when converted into $T$ gates and added to the cost of 2d $QFT$ and the $C^6X$ gate gives a total of
\begin{equation}
    T^{LCU}_T(\epsilon) = 2dT_{QFT}\left(m,\frac{\epsilon}{4d}\right) + 24n_\eta+32b_r+18m + 4(2\eta-1) dm-4d-68 \\
\end{equation}
\end{proof}

Before continuing with the discussion of our implementation for the block-encoding of the potential $V$ we want to first point out that,
at least for the $SU(4)$ symmetric interaction in Eq.~\eqref{eq:Vdiscrete}, we can express the diagonal matrix elements of the potential in the following form:
\begin{equation}
\begin{split}
    V(\vec{r}_0,\dots,\vec{r}_{\eta-1}) &= \frac{C}{2} \sum_{i=0}^{\eta-1} \sum_{\substack{j=0\\ j\ne i}}^{\eta-1} \delta_{\Vec{r}_i,\Vec{r}_j} + \frac{G}{6} \sum_{i=0}^{\eta-1} \sum_{\substack{j=0\\ j\ne i}}^{\eta-1} \sum_{\substack{k=0\\ k\ne i \\ k\ne j}}^{\eta-1} \delta_{\Vec{r}_i,\Vec{r}_j} \delta_{\Vec{r}_i,\Vec{r}_k} \\
    &= \frac{C}{2} \sum_{i=0}^{\eta-1} \sum_{\substack{j=0\\ j\ne i}}^{\eta-1} \delta_{\Vec{r}_i,\Vec{r}_j} + \frac{G}{6} \sum_{i=0}^{\eta-1} \sum_{\substack{j=0\\ j\ne i}}^{\eta-1} \delta_{\Vec{r}_i,\Vec{r}_j} \sum_{\substack{k=0\\ k\ne i}}^{\eta-1} \delta_{\Vec{r}_i,\Vec{r}_k} - \frac{G}{6} \sum_{i=0}^{\eta-1} \sum_{\substack{j=0\\ j\ne i}}^{\eta-1} \delta_{\Vec{r}_i,\Vec{r}_j} \;,
\end{split}
\end{equation}
where we used the fact that $\delta_{\Vec{r}_i,\Vec{r}_j}  \delta_{\Vec{r}_i,\Vec{r}_k} =  \delta_{\Vec{r}_i, \Vec{r}_j}$ if $j=k$. From here, we can write the potential as follows:
\begin{equation}
\label{eq:pot_compact}
    V(\vec{r}_0,\dots,\vec{r}_{\eta-1}) = \left( \frac{C}{2} - \frac{G}{6} \right) \sum_{i=0}^{\eta-1} \sum_{\substack{j=0\\ j\ne i}}^{\eta-1} \delta_{\Vec{r}_i,\Vec{r}_j} + \frac{G}{6} \sum_{i=0}^{\eta-1} \left( \sum_{\substack{j=0\\ j\ne i}}^{\eta-1} \delta_{\Vec{r}_i,\Vec{r}_j} \right)^2
\end{equation}
Using this expression to evaluate the matrix elements the potential energy operator would take the form
\begin{equation}
V=\left(\frac{C}{2}-\frac{G}{6}\right) \sum_{i=0}^{\eta-1} \sum_{\substack{j=0\\ j\ne i}}^{\eta-1}\sum_{\Vec{r}_i,\Vec{r}_j} \delta_{\Vec{r}_i,\Vec{r}_j}\Pi_i(\Vec{r}_i)\Pi_j(\Vec{r}_j)+\frac{G}{6} \sum_{i=0}^{\eta-1}\sum_{\Vec{r}_i}\Pi_i(\Vec{r}_i) \left( \sum_{\substack{j=0\\ j\ne i}}^{\eta-1} \sum_{\Vec{r}_j}\delta_{\Vec{r}_i,\Vec{r}_j} \Pi_j(\Vec{r}_j)\right)^2\;.
\end{equation}
In order to employ this efficiently as a basis for a block encoding, we first introduce the following oracle
\begin{equation}
\label{eq:umatch}
U_{\text{match}}\ket{0}_S\ket{\Vec{r}_0}\cdots\ket{\Vec{r}_{\eta-1}}=\bigg\rvert\sum_{\substack{j=1}}^{\eta-1}\delta_{\Vec{r}_0,\Vec{r}_j}\bigg\rangle_S\ket{\Vec{r}_0}\cdots\ket{\Vec{r}_{\eta-1}}:=\ket{\Lambda}_S\ket{\Vec{r}_0}\cdots\ket{\Vec{r}_{\eta-1}}\;.
\end{equation}
The integer value $\Lambda$ counts the number of particles present on the same site as the one where we have the first particle $i=0$.Since we only have $4$ types of fermions, on any site we can at most have $4$ particles which means $\Lambda=0,1,2,3$ and we only need $S=2$ qubits to store its value without risking overflow. In order to implement the potential we first SWAP the coordinates of particle $i$, flagged by the value in unary in the $f$ register with $\eta$ qubit, with those of the first particle using the $cSN$ unitary used also for the kinetic energy (and given explicitly in Fig.~\ref{fig:controlled_SN_example_circuit}), apply $U_{match}$ and then SWAP back at the end. Using this oracle, the potential $V$ can then be written directly as
\begin{equation}
V=\sum_{i=0}^{\eta-1}\sum_{\Lambda=0}^3 \left(-\left(\frac{|C|}{2}+\frac{G}{6}\right)  \Lambda  +\frac{G}{6} \Lambda^2 \right)U^\dagger_{\text{match}}cSN\rvert i\rangle_f\langle i\lvert\rvert \Lambda\rangle\langle \Lambda\lvert cSNU_{\text{match}}\;,
\end{equation}
where we used that $C<0$ while $G>0$. If we make then explicit the bit sums as we did in Eq.~\eqref{eq:Tdiscrete_forLCU} we have
\begin{equation}
\begin{split}
\label{eq:v_for_b_enc}
V=&-\left(\frac{|C|}{2}+\frac{G}{6}\right)\sum_{i=0}^{\eta-1}\sum_{\Lambda=0}^3 \sum_{u=0}^1  2^u\Lambda_u U^\dagger_{\text{match}}cSN\rvert i\rangle_f\langle i\lvert\rvert \Lambda\rangle\langle \Lambda\lvert cSNU_{\text{match}}\\
&+\frac{G}{6} \sum_{i=0}^{\eta-1}\sum_{\Lambda=0}^3 \sum_{u=0}^1\sum_{v=0}^12^{u+v}\Lambda_u\Lambda_v U^\dagger_{\text{match}}cSN\rvert i\rangle_f\langle i\lvert\rvert \Lambda\rangle\langle \Lambda\lvert cSNU_{\text{match}}\;,
\end{split}
\end{equation}
We aim then to prepare a state given by
\begin{equation}
\label{eq:lcu_prep_statet_V}
\ket{\Phi_V}  =  \ket{+}_b \left(\frac{1}{\sqrt{\eta}}\sum_{i=0}^{\eta-1} \ket{i}_f \right)\ket{\phi}_l\ket{\kappa}_p\ket{\kappa}_q\;,
\end{equation}
where the $b$ and $f$ registers we can reuse those initialized for the kinetic energy, on $l$ we have
\begin{equation}
\label{eq:phi_state}
\ket{\phi}_l = \sqrt{\frac{6}{3|C|+2G}}\left(\sqrt{\frac{G}{6}}\ket{0}_l+\sqrt{\frac{|C|}{2}+\frac{G}{6}}\ket{1}_l\right)
\end{equation}
while on the $p$ and $q$ register we have
\begin{equation}
\label{eq:kappa_state}
\ket{\kappa} = \frac{1}{\sqrt{3}}\left(\ket{0}+\sqrt{2}\ket{1}\right)\;.
\end{equation}
We prepare all three of these states with arbitrary rotation synthesis using Theorem~\ref{theorem:Z_rotations}. The $SELECT$ is then implemented in a similar way as for the kinetic energy by applying a $Z$ gate to the $b$ register unless the correct bits of $\Lambda$ are set. We then have the following result

\begin{lemma}(LCU Potential Energy)
    Consider the full potential energy $V$ with $2$ and $3$-body operators $V_2$ and $V_3$ from Eq.~\eqref{eq:Vdiscrete}. We can implement a block encoding $U_V$ of $V$
    such that $\norm{ {}_a\langle0\lvert U_V\rvert0\rangle_a-V/\lambda_V} \le \epsilon$ for
    \begin{equation}
    \lambda_V =\eta\frac{3|C|+4G}{2}
    \end{equation}
    using the state preparation employed for the kinetic energy, additional $dm+4$ ancilla qubits and a $T$-gate count equal to:
    \begin{equation}
    \begin{split}
        T^{LCU}_V (\epsilon) &= 16(\eta-1) dm-8\eta+44+6T_{ROT}(\epsilon/12)\\
        &= O\left( \eta m + \log\left( \frac{1}{\epsilon} \right) \right)\;.
    \end{split}
    \end{equation}
    \label{lemma:LCU_V}
\end{lemma}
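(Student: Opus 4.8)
The plan is to build $U_V$ as a textbook LCU block encoding, $\mathrm{PREPARE}$--$\mathrm{SELECT}$--$\mathrm{PREPARE}^\dagger$, starting from the compact rewriting of the potential in Eqs.~\eqref{eq:pot_compact}--\eqref{eq:v_for_b_enc} and reusing as much of the kinetic-energy machinery of Lemma~\ref{lemma:LCU_T} as possible. Concretely, $\mathrm{PREPARE}$ reuses the $b$ register (already in $\ket{+}_b$) and the $f$ register (already holding $\tfrac{1}{\sqrt\eta}\sum_i\ket{i}_f$) prepared for the kinetic term, and additionally prepares the single-qubit states $\ket{\phi}_l$, $\ket{\kappa}_p$, $\ket{\kappa}_q$ of Eqs.~\eqref{eq:phi_state}--\eqref{eq:kappa_state} with three rotation syntheses (Theorem~\ref{theorem:Z_rotations}); the coincidence-counting oracle is obtained by conjugating $U_{\text{match}}$ of Eq.~\eqref{eq:umatch} with the controlled swap network $cSN$ already used for $T$ (Fig.~\ref{fig:controlled_SN_example_circuit}), so that the distinguished particle becomes particle $i$ as flagged in $f$ and the count $\hat\Lambda$ is written into a $2$-qubit register $S$; $\mathrm{SELECT}$ then applies a multi-controlled $Z$ to $b$, firing unless the bit(s) of $\Lambda$ picked out by $l$, $p$, $q$ are all set --- exactly the mechanism used for $T$, which turns the (non-unitary) projectors onto ``$\Lambda$-bit $=1$'' into controlled-$Z$s whose $\bra{+}_b\cdots\ket{+}_b$ expectation reproduces the required coefficients; one finishes by uncomputing $\hat\Lambda$ and $cSN$ and applying $\mathrm{PREPARE}^\dagger$. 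We additionally anticontrol on the kinetic-prep success flags so that $U_V$ is silent whenever the shared state preparation has failed.

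For correctness, first check that $U_{\text{match}}$ indeed produces $\Lambda=\sum_{j\ge1}\delta_{\vec r_0,\vec r_j}$ and that $S=2$ qubits never overflow: with only $4$ fermion species and an antisymmetric wavefunction at most $4$ particles share a site, so $\Lambda\le3$. Next verify the compact form of $V$: the identity $\delta_{\vec r_i,\vec r_j}\delta_{\vec r_i,\vec r_k}=\delta_{\vec r_i,\vec r_j}$ at $j=k$ turns the genuine three-body term into $\tfrac{G}{6}\sum_i(\sum_j\delta)^2$ plus a two-body correction, i.e.\ Eq.~\eqref{eq:pot_compact}, which in $\Lambda$-language reads $-(\tfrac{|C|}{2}+\tfrac{G}{6})\Lambda+\tfrac{G}{6}\Lambda^2$ per distinguished particle (using $C<0<G$). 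Finally compute $\langle0|_a U_V|0\rangle_a$ and match it to $V/\lambda_V$: the $f$-sum gives $\tfrac1\eta\sum_i$; the geometric amplitudes in $\ket{\kappa}_p,\ket{\kappa}_q$ reconstruct $\Lambda$ (one register) and $\Lambda^2$ (two registers) from the bit indicators selected by $\mathrm{SELECT}$; the two amplitudes in $\ket{\phi}_l$ split the weight between the linear-in-$\Lambda$ and quadratic-in-$\Lambda$ branches; and the $\ket{+}_b$/controlled-$Z$ sandwich converts each indicator into the correct real coefficient. Collecting the factors pins the subnormalization to the LCU $\ell_1$-norm, $\lambda_V=\eta\big[(\tfrac{|C|}{2}+\tfrac{G}{6})\sum_u 2^u+\tfrac{G}{6}\sum_{u,v}2^{u+v}\big]=\eta\,\tfrac{3|C|+4G}{2}$, i.e.\ $\eta$ times the per-particle coefficient mass, maximal at $\Lambda=3$.

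For resources, the $T$-count is dominated by the two applications of $cSN$ and of $U_{\text{match}}$ (forward and inverse): $cSN$ costs $(\eta-1)dm$ Toffolis as in the $T$ construction, while each call of $U_{\text{match}}$ does $\eta-1$ equality tests on $dm$-bit registers plus an increment of the $2$-qubit counter; converting Toffolis to $T$ gates via Ref.~\cite{Gidney_2018} yields $24(\eta-1)dm$ plus $O(\eta)$ lower-order terms together with the three rotation syntheses, giving $T^{LCU}_V(\epsilon)=24(\eta-1)dm-8\eta+44+3T_{ROT}(\epsilon/6)=O(\eta m+\log(1/\epsilon))$. The additional ancillas beyond the reused kinetic registers are the $2$-qubit $S$ register, a reusable $dm$-qubit scratch for the equality comparisons, and two more for the multi-controlled $Z$ and its flags, i.e.\ $dm+4$. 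The block-encoding error is entirely from finite-precision rotation synthesis: since $\mathrm{PREPARE}$ enters twice and there are three rotations, a triangle-inequality bound gives $2\cdot3\cdot(\epsilon/6)=\epsilon$, hence each rotation to precision $\epsilon/6$.

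The step I expect to be most delicate is the amplitude bookkeeping in the second paragraph --- getting $\ket{\phi}_l$, $\ket{\kappa}_p$, $\ket{\kappa}_q$, the per-bit geometric weights, the two-term split, and the $\bra{+}_b$ sandwich to compose into exactly $V/\lambda_V$ with the stated normalization --- together with the closely related task of handling the registers shared with the kinetic block encoding ($b$, $f$, and the success flags) consistently, so that no spurious term enters $\langle0|_a U_V|0\rangle_a$ and so that $U_T$ and $U_V$ can be combined into a block encoding of $H=T+V$ at the next step. The remaining circuit-level pieces ($cSN$, $U_{\text{match}}$, the multi-controlled $Z$, the measurement-based uncomputation of $\hat\Lambda$) are routine given the tools already in hand for the kinetic term.
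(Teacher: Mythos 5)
Your proposal is correct and follows essentially the same route as the paper's proof: the same reuse of the $b$ and $f$ registers from the kinetic-energy preparation, the same three single-qubit states $\ket{\phi}_l,\ket{\kappa}_p,\ket{\kappa}_q$ with error budget $\epsilon/6$ per rotation, the same $cSN$-conjugated $U_{\text{match}}$ writing $\Lambda$ into a $2$-qubit register, the same bit-indicator $SELECT$ acting by (multi-)controlled $Z$ on $b$, and the same bookkeeping yielding $\lambda_V=\eta\bigl[(\tfrac{|C|}{2}+\tfrac{G}{6})\cdot 3+\tfrac{G}{6}\cdot 9\bigr]=\eta\tfrac{3|C|+4G}{2}$, the $24(\eta-1)dm$ leading $T$-count, and the $dm+4$ ancillas. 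The only cosmetic difference is the error accounting (you charge $2\times 3$ rotations across $PREPARE$ and $PREPARE^\dagger$, the paper charges $3$ rotations at $\epsilon/6$ against the residual budget after the $\epsilon/2$ already spent on the $f$-register superposition), and both land on the same per-rotation precision.
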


\begin{proof}
Assuming that we have already performed the state preparation for the kinetic energy, we only need to prepare the states $\ket{\phi}$ from Eq.~\eqref{eq:phi_state} on register $l$ and two copies of the state $\ket{\kappa}$ from Eq.~\eqref{eq:kappa_state} on registers $p$ and $q$. Since the $SELECT$ will be done exactly and the uniform superposition on the $f$ register comes already with an error $\epsilon/2$, we perform the $3$ rotations needed for these states with target error $\epsilon/12$. The resulting one norm we can use for the potential energy is thus given by
\begin{equation}
\lambda_V = \left(\frac{3|C|+G}{6}3+\frac{G}{6}9\right)\eta=\eta\frac{3|C|+4G}{2}
\end{equation}

For the $SELECT$ we first need to implement the unitary $U_{\text{match}}$ from Eq.~\eqref{eq:umatch}. The control on the label $i$ for the particle can be done controlling a SWAP network on register $f$ as we did for the kinetic energy. In order to implement $U_{\text{match}}$ we perform, for all the $(\eta-1)$ particles not in the first register a sequence of $dm$ CNOT gates with controls on the coordinate register of the first particle and targets of those of the other particle. We now perform $(\eta-1)$ incrementers on a register $S$ with two bits controlled on the coordinate bits of all the other particles, one at a time. We can set one ancilla bit with a, zero controlled, $C^{dm}X$ gate, we then add this to the two qubit register using a two qubit adder for $4$ T gates and using an additional ancilla and finish with a $C^{dm}X$ gate. Since for the first controlled incrementer the $S$ register start in $\ket{0}_S$, we can replace a full incrementer with just a CNOT and the total cost for implementing $U_{\text{match}}$ is then $4(\eta-2)$ T gates plus $2(\eta-1)$ $C^{dm}X$ gates. It also requires $2$ additional ancilla qubits, one for the adder and one for flagging equality. We show a complete circuit implementation of $U_{\text{match}}$ in a simple case with $\eta=3$ and $dm=3$ in Fig.~\ref{fig:Umatch_example_circuit}.

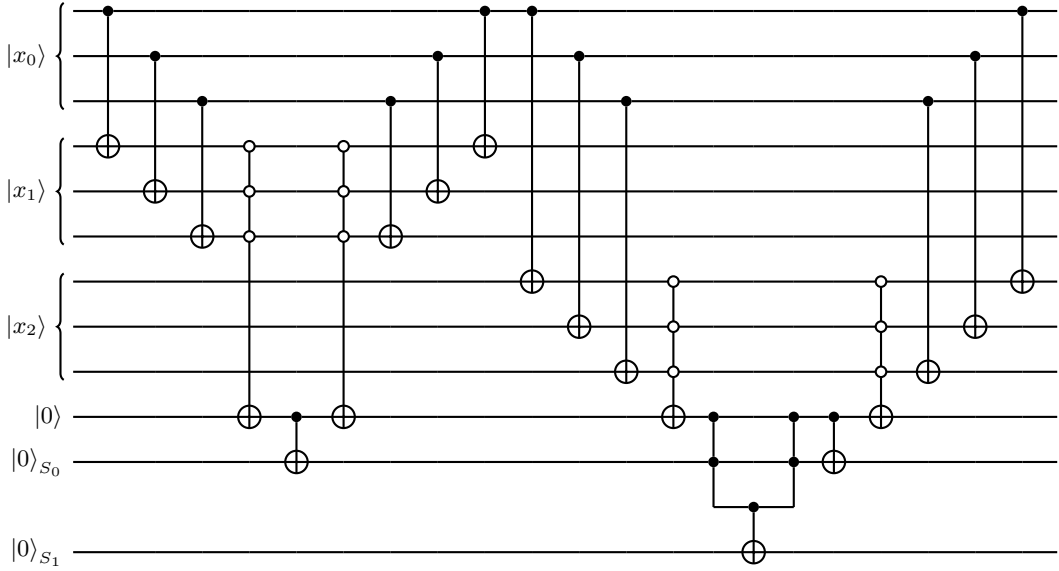
\begin{figure}
\centering
\begin{quantikz}[row sep={0.6cm,between origins},column sep={0.3cm}]
    \lstick[3]{$\ket{x_0}$}&\ctrl{3}&  \qw   &  \qw   &  \qw    &\qw     &\qw      &\qw     &\qw     &\ctrl{3}&\ctrl{6}&\qw     &\qw     &\qw      &\qw     &\qw     &\qw     &\qw     &\qw      &\qw     &\qw     &\ctrl{6}&\qw\\
    & \qw                           &\ctrl{3}&  \qw   &  \qw    &\qw     &\qw      &\qw     &\ctrl{3}&\qw     &\qw     &\ctrl{6}&\qw     &\qw      &\qw     &\qw     &\qw     &\qw     &\qw      &\qw     &\ctrl{6}&\qw     &\qw\\
    & \qw                           &  \qw   &\ctrl{3}&  \qw    &\qw     &\qw      &\ctrl{3}&\qw     &\qw     &\qw     &\qw     &\ctrl{6}&\qw      &\qw     &\qw     &\qw     &\qw     &\qw      &\ctrl{6}&\qw     &\qw     &\qw\\
    \lstick[3]{$\ket{x_1}$} &\targ{}&  \qw   &  \qw   &\octrl{1}&\qw     &\octrl{1}&\qw     &\qw     &\targ{} &\qw     &\qw     &\qw     &\qw      &\qw     &\qw     &\qw     &\qw     &\qw      &\qw     &\qw     &\qw     &\qw\\
    & \qw                           &\targ{} &  \qw   &\octrl{1}&\qw     &\octrl{1}&\qw     &\targ{} &\qw     &\qw     &\qw     &\qw     &\qw      &\qw     &\qw     &\qw     &\qw     &\qw      &\qw     &\qw     &\qw     &\qw\\
    & \qw                           &  \qw   & \targ{}&\octrl{4}&\qw     &\octrl{4}&\targ{} &\qw     &\qw     &\qw     &\qw     &\qw     &\qw      & \qw    &\qw     &\qw     &\qw     &\qw      &\qw     &\qw     &\qw     &\qw\\
    \lstick[3]{$\ket{x_2}$} &  \qw  &  \qw   &  \qw   &  \qw    &\qw     &\qw      &\qw     &\qw     &\qw     &\targ{} &\qw     &\qw     &\octrl{1}&\qw     &\qw     &\qw     &\qw     &\octrl{1}&\qw     &\qw     &\targ{} &\qw\\
    & \qw                           &  \qw   &   \qw  &  \qw    &\qw     &\qw      &\qw     &\qw     &\qw     &\qw     &\targ{} &\qw     &\octrl{1}&\qw     &\qw     &\qw     &\qw     &\octrl{1}&\qw     &\targ{} &\qw     &\qw\\
    & \qw                           &  \qw   &  \qw   &  \qw    &\qw     &\qw      &\qw     &\qw     &\qw     &\qw     &\qw     &\targ{} &\octrl{1}&\qw     &\qw     &\qw     &\qw     &\octrl{1}&\targ{} &\qw     &\qw     &\qw\\
    \lstick{$\ket{0}$}&\qw          &  \qw   &  \qw   &\targ{}  &\ctrl{1}&\targ{}  &\qw     &\qw     &\qw     &\qw     &\qw     &\qw     &\targ{}  &\ctrl{1}&\qw     &\ctrl{1}&\ctrl{1}&\targ{}  &\qw     &\qw     &\qw     &\qw\\
    \lstick{$\ket{0}_{S_0}$}&\qw    &  \qw   &  \qw   &\qw      &\targ{} &\qw      &\qw     &\qw     &\qw     &\qw     &\qw     &\qw     &\qw      &\ctrl{1}&\qw     &\ctrl{1}&\targ{} &\qw      &\qw     &\qw     &\qw     &\qw\\
                            &       &        &        &         &        &         &        &        &        &        &        &        &         &        &\ctrl{1}&\qw     &\\
    \lstick{$\ket{0}_{S_1}$}&\qw    &  \qw   &  \qw   &\qw      &\qw     &\qw      &\qw     &\qw     &\qw     &\qw     &\qw     &\qw     &\qw      &\qw     &\targ{} &\qw     &\qw     &\qw      &\qw     &\qw     &\qw     &\qw \\
    \end{quantikz}
    \caption{Circuit implementation of the unitary $U_{\text{match}}$ for a case with $\eta=3$ and $dm=3$. We use the same notation as Ref.~\cite{Gidney_2018} for the computation of the logical-AND of two qubits in an ancilla.}
    \label{fig:Umatch_example_circuit}
\end{figure}

After having applied $U_{\text{match}}$ we can finish the $SELECT$ using
\begin{itemize}
    \item we apply a $Z$ gate to the register $b$
    \item controlled on the $p$ register we copy the first or second bit of $\Lambda$ in an ancilla $a_1$ using $2$ Toffoli
    \item controlled on the $q$ register we copy the first or second bit of $\Lambda$ in an ancilla $a_2$ using $2$ Toffoli
    \item apply a CNOT with control on the $l$ qubit and target on the $b$ qubit in order to get the negative sign for the first term in Eq.~\eqref{eq:v_for_b_enc}
    \item in order to finish applying the first line of Eq.~\eqref{eq:v_for_b_enc} we apply a $Z$ gate to $b$ but now controlled on the success flag for the state in the $f$ register being $\ket{0}$, on the state of the $l$ register being $\ket{1}$ and the $a_1$ ancilla being $\ket{1}$. This requires a $C^3X$ gate plus Clifford
    \item in order to apply the second line of Eq.~\eqref{eq:v_for_b_enc} we apply a $Z$ gate to $b$ but now controlled on the success flag for the state in the $f$ register being $\ket{0}$, on the state of the $l$ register being $\ket{0}$ and the $a_1$ and $a_2$ ancillas being $\ket{1}$. This requires a $C^4X$ gate plus Clifford
    \item we uncompute the $a_1$ and $a_2$ ancilla qubits with Clifford and measurements
    \item we undo the $U_{\text{match}}$ gate
\end{itemize}

We report the schematic implementation of this $SELECT$ unitary as a circuit in Fig.~\ref{fig:V_LCU_complete_circuit} using the same notation and conventions used in Fig.~\ref{fig:T_LCU_complete_circuit} above.

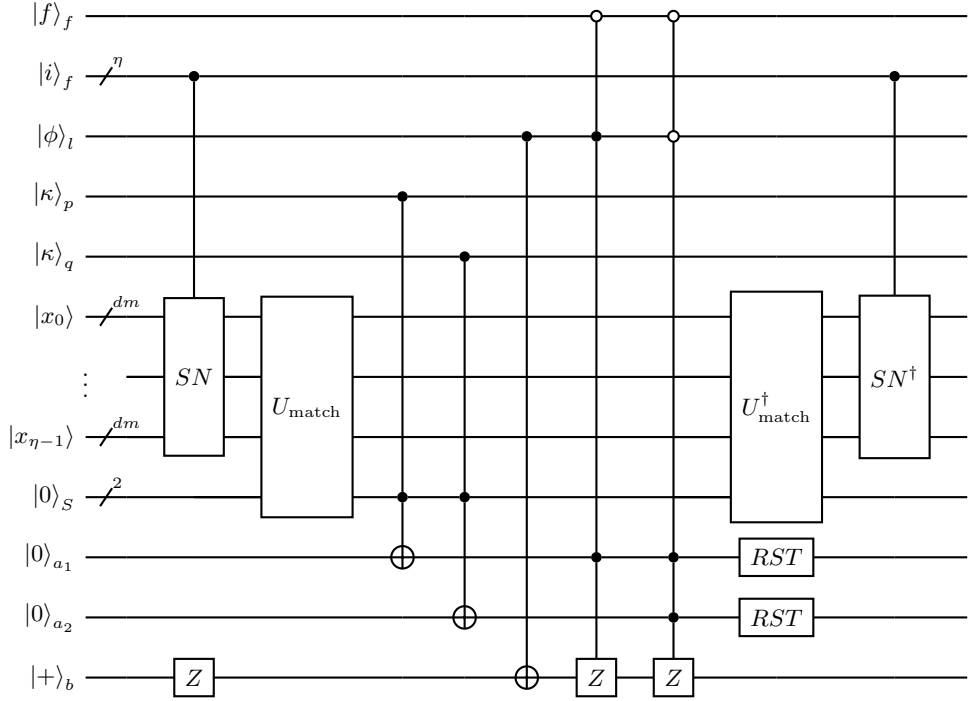
\begin{figure}
\centering
    \begin{quantikz}[row sep={0.8cm,between origins}]
    \lstick{$ \ket{f}_f$} & \qw & \qw                  & \qw                      & \qw    & \qw    &\qw     &\octrl{2}&\octrl{2}& \qw                              & \qw & \qw\\
    \lstick{$ \ket{i}_f$}&\qwbundle{\eta} &   \ctrl{4} & \qw                      & \qw    & \qw    & \qw    &  \qw    & \qw     & \qw                              & \ctrl{4}& \qw \\
    \lstick{$\ket{\phi}_l$}     & \qw & \qw            & \qw                      & \qw    & \qw    &\ctrl{9}&\ctrl{7} &\octrl{7}& \qw                              & \qw & \qw\\
    \lstick{$\ket{\kappa}_p$}& \qw & \qw               & \qw                      &\ctrl{5}& \qw    &\qw     &  \qw    & \qw     & \qw                              & \qw & \qw\\
    \lstick{$\ket{\kappa}_q$}& \qw & \qw               & \qw                      & \qw    &\ctrl{4}&\qw     &  \qw    & \qw     & \qw                              & \qw & \qw\\
    \lstick{$\ket{x_0}$} & \qwbundle{dm} & \gate[3]{SN}&\gate[4]{U_{\text{match}}}& \qw    & \qw    & \qw    &  \qw    & \qw     &\gate[4]{U^\dagger_{\text{match}}}&\gate[3]{SN^\dag} & \qw \\
    \vdots                                   & & \qw   & \qw                      & \qw    & \qw    & \qw    & \qw     &  \qw    & \qw                              &\qw& \qw \\
    \lstick{$\ket{x_{\eta-1}}$} & \qwbundle{dm}& \qw   & \qw                      & \qw    & \qw    & \qw    & \qw     &  \qw    & \qw                              &\qw & \qw \\
    \lstick{$\ket{0}_{S}$} & \qwbundle{2} & \qw        & \qw                      &\ctrl{1}&\ctrl{2}& \qw    &\qw      &  \qw    & \qw                               & \qw& \qw \\
    \lstick{$\ket{0}_{a_1}$} & \qw & \qw               & \qw                      &\targ{} & \qw    & \qw    &\ctrl{2} &\ctrl{1} & \gate{RST}                       & \qw & \qw\\
    \lstick{$\ket{0}_{a_2}$} & \qw & \qw               & \qw                      & \qw    &\targ{} & \qw    &\qw      & \ctrl{1}& \gate{RST}                       & \qw & \qw\\
    \lstick{$\ket{+}_b$} & \qw            & \gate{Z}   & \qw                      & \qw    & \qw    &\targ{} &\gate{Z} &\gate{Z} & \qw                              & \qw & \qw\\
    \end{quantikz}
    \caption{Circuit implementation of the $SELECT$ unitary for the potential energy operator.}
    \label{fig:V_LCU_complete_circuit}
\end{figure}

Summarizing we need $8\eta$ T gates, where we include those coming from the $4$ Toffoli used for $a_1$ and $a_2$ and for two $U_{\text{match}}$, $4(\eta-1)$ $C^{dm}X$ gates, one $C^3X$ and one $C^4X$ gate. For the last two we can reuse the ancilla employed for the $C^{dm}X$ gates. The total cost, including the two controlled SWAPS we could reuse from the kinetic energy when doing the full Hamiltonian, is then
\begin{equation}
16(\eta-1) dm-8\eta+36+6T_{ROT}(\epsilon/12)+8(\eta-1)dm
\end{equation}
T gates and a total of $dm+4$ ancilla, including those for state preparation.
\end{proof}

Combining these two results we finally arrive at

\begin{theorem}(LCU for Pionless-EFT)
    Consider the full Pionless-EFT Hamiltonian $H = T+V_2+V_3$ where the kinetic and potential terms are given in Eq.~\eqref{eq:Tdiscrete} and Eq.~\eqref{eq:Vdiscrete}. We can implement a block encoding $U_H$ for $H$
    such that $\norm{ {}_a\langle0\lvert U_H\rvert0\rangle_a - H/\lambda_H}\le \epsilon$ using a number of $T$ gates given by:
    \begin{equation}
    \begin{split}
        T^{LCU}(\epsilon) =& 2T^P(\epsilon) + T^S(\epsilon) \\
        =& O\left( m\left(\eta+\min\left[\log\left( \frac{m}{\epsilon} \right),m\right]\right)+\log\left( \frac{1}{\epsilon} \right)\right) \;,\\
    \end{split}
    \end{equation}
    where
    \begin{equation}
    \begin{split}
        T^P\left( \epsilon \right) &= 12 n_\eta+16b_r+8m-64+4T_{ROT}\left( \frac{\epsilon}{32}\right) \;. \\
        T^S(\epsilon) &= 24(\eta-1)dm+4(d+2)(m-1)+16+2dT_{QFT}\left(m,\frac{\epsilon}{4d}\right)-8\eta+52
    \end{split}
    \end{equation}
    and the one-norm $\lambda_H$ is given by
    \begin{equation}
    \lambda_H = \eta\left(\frac{3|C|+4G}{2}+dK 2^{2(m-1)}\right)\;.
    \end{equation}

    The number of ancilla qubits employed for the block-encoding is $a=\eta+2m+10$ and we need an number of ancillas given by
    \begin{equation}
    b_r+b_{QFT}+\max\left(2b_{QFT}-1,(2d+1)m+6\right)\;,
    \end{equation}
    where
    \begin{equation}
    b_r = \left\lceil\frac{1}{2}\log_2\left(\frac{18\pi^2}{\epsilon}\right)\right\rceil\quad\text{and}\quad b_{QFT}=\min\left(m-1,\left\lceil\log_2\left(\frac{4dm}{\epsilon}\right)\right\rceil\right)+1
    \end{equation}
    including the $b_r$ and $b_{QFT}$ qubits needed to store the phase-gradient states.
\label{theorem:LCU_pionless}
\end{theorem}

\begin{proof}
    Let us start by defining the $PREPARE$ operator. First of all, we need a qubit to distinguish between $T$ and $V$, and use it to give them the right coupling constants. Calling that qubit as $\ket{\cdot}_a$, we will need the following superposition:
    \begin{equation}
    \label{eq:prep_tv}
        \sqrt{\frac{\lambda_{T}}{\lambda_{T} + \lambda_{V}}}\ket{0}_a + \sqrt{\frac{\lambda_{V}}{\lambda_{T} +\lambda_{V}}}\ket{1}_a\;,
    \end{equation}
    with the one-norms $\lambda_T$ and $\lambda_V$ from Lemma~\ref{lemma:LCU_T} and  Lemma~\ref{lemma:LCU_V} respectively. This can be prepared with a single-qubit rotation that can be approximated with $T_{ROT}(\epsilon_R)$ $T$-gates up to error $\epsilon_R$. The uniform superposition in the $f$ register will introduce an error $\epsilon_U$ when using a phase gradient state of size
\begin{equation}
b_r = \left\lceil\frac{1}{2}\log_2\left(\frac{9\pi^2}{4\epsilon_U}\right)\right\rceil\;,
\end{equation}
    while the $3$ rotations needed for the PREPARE of the potential energy will add an error $3\epsilon_R$ if we use $T_{ROT}(\epsilon_R)$ $T$-gates each. We aim for the total error in the PREPARE to be $\epsilon/4$ so that we can afford an error $\epsilon/2$ in the SELECT. We therefore need to choose the individual errors such that $4(\epsilon_U+4\epsilon_R)=\epsilon$. Here we take $\epsilon_U=\epsilon/8$ and $\epsilon_R=\epsilon/32$.

    Then, the $PREPARE$ requires $\eta + 2m+9$ qubits, for both $T$ and $V$ together, with a number of $T$ gates given by
    \begin{equation}
    T^P\left( \epsilon \right) = 12 n_\eta+16b_r+8m-64+4T_{ROT}\left( \frac{\epsilon}{32}\right) \;,
    \end{equation}
    where the phase gradient state uses a numbr of qubits given by
\begin{equation}
b_r = \left\lceil\frac{1}{2}\log_2\left(\frac{18\pi^2}{\epsilon}\right)\right\rceil\;.
\end{equation}

    As for the $SELECT$, we need to apply the select operators for $V$ and $T$, adding a single control on the qubit $\ket{\cdot}_a$. This additional control can be implemented by adding one more control to the multi-controlled gates that act on the $b$ register: we have one for $T$ and 2 for $V$. In addition, the CNOT controlled on $l$ and with target $b$, that we used to get the right sign for the two body potential, will now need to be converted into a Toffoli. This can however be avoided exploiting the relation in Figure \ref{fig:CNOT_relation}, substituting the Toffoli with 2 control-$Z$ operations.
    \begin{figure}[t]
      \centering
      \begin{quantikz}[row sep={1cm,between origins}, column sep={0.4cm}, baseline=(current bounding box.center)]
          \lstick{$\ket{\cdot}_l$} & \qw & \ctrl{1} & \qw \\
          \lstick{$\ket{+}$}        & \gate{Z} & \targ{} & \qw
        \end{quantikz}
        \;=\;
       \begin{quantikz}[row sep={1cm,between origins}, column sep={0.4cm}, baseline=(current bounding box.center)]
          \lstick{$\ket{\cdot}_l$} & \gate{Z} & \qw \\
          \lstick{$\ket{+}$}        & \gate{Z} & \qw
        \end{quantikz}
      \caption{Relation for quantum circuits: exploiting the fact that we know the state of the second qubit, we can remove the CNOT by adding a single-qubit gate.}
      \label{fig:CNOT_relation}
    \end{figure}
    Additionally, in order for this block encoding to be easy to use with QSP later on, we would like the full $SELECT$ to be self-inverse. The individual $SELECT$ of $T$ and $V$ already are but they do not commute since the Quantum Fourier Transform and $U_{\text{match}}$ both act on the coordinate register of the first particle. Since, as shown in Fig.~\ref{fig:Umatch_example_circuit}, within $U_{\text{match}}$ this register has only controls, we can make the two individual $SELECT$ commute if we keep separate the $m+6$ qubits used as targets and perform the QFT on a copy of the register holding $\ket{x_0}$, which can be done adding $dm$ ancilla qubits and $2dm$ CNOT gates. Reusing the SWAP circuit for both $T$ and $V$ the total $T$-gate count for $SELECT$ becomes then
    \begin{equation}
    T^S(\epsilon) = 24(\eta-1)dm+4(d+2)(m-1)+16+2dT_{QFT}\left(m,\frac{\epsilon}{4d}\right)-8\eta+52
    \end{equation}
    T gates. We also need $6+(2d+1)m$ ancilla qubits in addition to the qubits required to perform the QFT and store its phase gradient state for a total of
    \begin{equation}
    b_r+b_{QFT}+\max\left(2b_{QFT}-1,(2d+1)m+6\right)\quad\text{with}\quad b_{QFT}=\min\left(m-1,\left\lceil\log_2\left(\frac{4dm}{\epsilon}\right)\right\rceil\right)+1\;.
    \end{equation}

    For the full block encoding we need then a number of $T$ gates equal to
    \begin{equation}
        T^{LCU}(\epsilon) = 2T^P(\epsilon) + T^S(\epsilon)
    \end{equation}
\end{proof}

We can now use the block encoding of the Hamiltonian from the previous Theorem to simulate time evolution generated by it using Quantum Signal Processing (QSP)~\cite{Low_2019,Gily_n_2019}. Here we briefly review the main steps required to use QSP. Consider the block-encoding of the Hamiltonian $U_H$ as a product of a $PREPARE$ unitary $P$ and a $SELECT$ unitary $U_S$ as follows
\begin{equation}
{}_a\langle0\lvert U_H\rvert0\rangle_a = {}_a\langle0\lvert \left(P^\dagger\otimes\mathbb{1}_s\right) U_S\left(P\otimes\mathbb{1}_s\right)\rvert0\rangle_a\equiv{}_a\langle P\lvert U_S\rvert P\rangle_a\;,
\end{equation}
where $a$ is the register containing the ancillas for the block encoding and $s$ is the system register. Then we construct two operators $W$ and $V_{\varphi}$ as follows
\begin{equation}
\label{eq:qubiterate}
    W = \left(\left(2\ket{P}\bra{P} - \mathbb{1} \right)_a \otimes \mathbb{1}_s \right) U_S
\end{equation}
\begin{equation}
    V_{\varphi} = \left( e^{-i\varphi Z/2} \otimes \mathbb{1}_s \right) V_0 \left( e^{i\varphi Z/2} \otimes \mathbb{1}_s \right)
\end{equation}
where $V_0 = \ket{+}\bra{+} \otimes \mathbb{1}_s + e^{i\pi/2}\ket{-}\bra{-} \otimes W$. The unitaries $V_\varphi$ can be implemented using the following circuit
\begin{equation}
\begin{quantikz}
    \lstick{$\ket{+}$} & \gate[3]{V_{\varphi}} & \qw \\
    \lstick{$\ket{P}_a$} & & \qw \\
    \lstick{$\ket{\psi}_s$} & & \qw \\
\end{quantikz} =
\begin{quantikz}
    \lstick{$\ket{+}$} & \gate{R_z(\varphi)} & \gate{H} &\gate{S}& \ctrl{1} & \gate{H} & \gate{R_z(\varphi)^\dag} & \qw \\
    \lstick{$\ket{P}_a$} & \qw& \qw & \qw & \gate[2]{W} & \qw & \qw & \qw \\
    \lstick{$\ket{\psi}_s$} & \qw & \qw& \qw & & \qw & \qw & \qw \\
\end{quantikz}
\end{equation}
where the $S$ gate is used to get the right phase of $e^{i\frac{\pi}{2}}$. For the controlled $W$ operation instead we can use
\begin{equation}
\begin{quantikz}
    \lstick{$\ket{+}$} & \ctrl{1} & \qw \\
    \lstick{$\ket{P}_a$} & \gate[2]{W} & \qw \\
    \lstick{$\ket{\psi}_s$} & & \qw \\
\end{quantikz} =
\begin{quantikz}
    \lstick{$\ket{+}$} & \ctrl{1} & \qw & \ctrl{0} & \qw & \qw \\
    \lstick{$\ket{P}_a$} & \gate[2]{U_S} & \gate{P^\dag} & \octrl{-1} & \gate{P} & \qw \\
    \lstick{$\ket{\psi}_s$} & & \qw & \qw & \qw & \qw \\
    \label{circuit:W_QSP}
\end{quantikz}
\end{equation}
By repeatedly applying $V_{\varphi}$ and its inverse for $Q$ times each, with possibly different rotation angles $\varphi_0,\dots,\varphi_{2Q}$, we can implement a large variety of polynomials of order $Q$ in the Hamiltonian~\cite{Low_2019,Gily_n_2019}. The total circuit is therefore of the following form
\begin{equation}
\label{eq:circ_qsp}
\begin{quantikz}
    \lstick{$\ket{+}$} & \qw & \gate[3]{V_{\varphi_1}} & \gate[3]{V_{\varphi_2+\pi}^\dag} & \gate[3]{V_{\varphi_3}}& \qw & & & \gate[3]{V_{\varphi_{2Q}+\pi}^\dag} & \qw & \qw\rstick{$\ket{+}$} \\
    \lstick{$\ket{0}_a$} & \gate{P} & && & \qw & \cdots & & & \gate{P^\dag} &\qw \rstick{$\ket{0}_a$} \\
    \lstick{$\ket{\psi}_s$} & \qw & && & \qw & & & & \qw \rstick{$e^{-iHt}\ket{\psi}_s$} \\
\end{quantikz}
\end{equation}
Note that this construction requires the unitary $W$ to be a reflection implying that, for the construction in Eq.~\eqref{eq:qubiterate} to work, the $SELECT$ unitary $U_S$ needs to be self-inverse. The $U_S$ unitary of the block encoding of Theorem~\ref{theorem:LCU_pionless} was constructed explicitly to satisfy this condition. 
For a fixed polynomial order $Q$, the circuit in Eq.~\eqref{eq:circ_qsp} requires $2Q$ controlled $U_S$, $Q+1$ uses of both $P$ and $P^\dagger$, $2Q+1$ arbitrary $Z$ rotations and $2Q$ multi-controlled $Z$ gates for the reflections.

\begin{theorem}[QSP for Pionless-EFT]
 Consider a system with $\eta$ particles on a $d$ dimensional lattice with $M=2^{m}$ sites per direction described by the pionless Hamiltonian $H = T+V$ with the kinetic and potential terms given in Eq.~\eqref{eq:Tdiscrete} and Eq.~\eqref{eq:Vdiscrete}.
    Then for any $t\in \mathbb{R}$ and $\frac{1}{2}>\epsilon>0$ there exists a quantum circuit that implements a unitary operator $U(t)$ such that $\norm{U(t) - e^{-iHt}}< \epsilon$ with a one-norm equal to
\begin{equation}
\lambda_H = \eta\left(\frac{3|C|+4G}{2}+dK 2^{2(m-1)}\right)=O(\eta)\;
\end{equation}
    using a number of $T$ gates given by
\begin{equation}
\begin{split}
    T^{QSP}(\epsilon, t) &= 2Q\left(t, \frac{\epsilon}{2}\right) \left( T^S\left( \frac{\epsilon}{4\lambda_H |t|} \right) + 2T^P\left( \frac{\epsilon}{4\lambda_H |t|} \right) + 3 \right) + T^P\left( \frac{\epsilon}{4\lambda_H |t|} \right) \\
    &+ 2\left(Q\left(t,\frac{\epsilon}{2}\right)+1\right) T_{ROT}\left( \frac{\epsilon}{4\left(Q\left(t,\frac{\epsilon}{2}\right)+1\right)} \right) \\
    &=\widetilde{O}\left(\left(\eta t+\log\left(\frac{1}{\epsilon}\right)\right) \left(\eta m + m\log\left(\frac{m t}{\epsilon}\right)\right)\right)\\
\end{split}
\end{equation}
where we introduced the total number of repetitions
\begin{equation}
Q(t,\epsilon_t) = \left\lceil 2\lambda_H|t|+3\log\left(\frac{6}{\epsilon_t}\right)\right\rceil\;.
\end{equation}
and used $T^S(\epsilon),T^P(\epsilon)$ from Theorem~\ref{theorem:LCU_pionless}.
The quantity $T_{ROT}$ is instead defined in Theorem~\ref{theorem:Z_rotations}. Finally, the algorithm requires a total number of qubits given by
\begin{equation}
    \eta+(2+d)m+12+b_r+b_{QFT}+\max\left(2b_{QFT}-1,dm+6,\eta+2m+9\right)=O\left(\eta+m+\log\left(\frac{t}{\epsilon}\right)\right)\;,
\end{equation}
where we have defined
\begin{equation}
b_r = \left\lceil\frac{1}{2}\log_2\left(\frac{72\pi^2\lambda_H |t|}{\epsilon}\right)\right\rceil\quad\text{and}\quad b_{QFT}=\min\left(m-1,\left\lceil\log_2\left(\frac{16d\lambda_H |t|m}{\epsilon}\right)\right\rceil\right)+1\;,
\end{equation}
in addition to the $n_s=\eta d m$ qubits used to encode the system.
\label{theorem:QSP}
\end{theorem}

\begin{proof}

To compute the cost of the algorithm, we first recall from Lemma~57 of Ref.~\cite{gilyen2018quantum} that using the Jacobi-Anger expansion we can build a polynomial $P_{Q}(t)$ of order $Q$ such that
\begin{equation}
\norm{e^{it}-P_Q(t)}_{[-1,1]}\leq2\sum_{q=Q+1}^\infty\left|J_q(t)\right|^{q}\;,
\end{equation}
where $J_q$ are Bessel polynomials of the first kind. Using a similar calculation than the one used in Ref.~\cite{gilyen2018quantum} we have the following sequence of bounds
\begin{equation}
2\sum_{q=K}^\infty\left|J_q(t)\right|^{q}\leq2\sum_{q=0}^\infty \frac{(|t|/2)^{K+q}}{(K+q)!}\leq 2\frac{(|t|/2)^{K}}{K!}\sum_{q=0}^\infty 2^{-q}=4\frac{(|t|/2)^{K}}{K!}\leq\sqrt{\frac{8}{\pi K}}\left|\frac{et}{2K}\right|^K\;.
\end{equation}
We now take $r(t,\delta)$ such that $\delta=(t/r(t,\delta))^r(t,\delta)$ and for $K= Q+1\geq r(e|t|/2,\delta)$ we have
\begin{equation}
2\sum_{q=K}^\infty\left|J_q(t)\right|^{q}\leq \sqrt{\frac{8}{\pi K}}\left|\frac{et}{2K}\right|^K\leq\frac{2}{\sqrt{\pi}}\left|\frac{et}{2K}\right|^K\leq \frac{2}{\sqrt{\pi}}\delta\;.
\end{equation}
If we want an error $\epsilon_P$ for this approximation then we can just take
\begin{equation}
Q=\left\lceil r\left(\frac{e|t|}{2},\frac{\sqrt{\pi}}{2}\epsilon_P\right)-1\right\rceil\;.
\end{equation}
After the truncation of the Jacobi-Anger expansion the resulting polynomial $P_Q(t)$ satisfies the requirements for being implementable with QSP only approximately. In particular
\begin{equation}
1-2\epsilon_P\leq\left|P_Q(t)\right|^2\leq 1+2\epsilon_P+\epsilon_P^2\;,
\end{equation}
instead of $\left|P_Q(t)\right|^2=1$, while $P_Q(0)=1$ as it should. Following the construction from Lemma~14 of Ref.~\cite{Low_2019} we can however find an implementable polynomial $\widetilde{P}_{Q}(t)$ of the same degree such that
\begin{equation}
\norm{e^{it}-\widetilde{P}_Q(t)}_{[-1,1]}\leq3\epsilon_P+\epsilon_P^2\leq A \epsilon_P\;,
\label{eq:poly_bound}
\end{equation}
for a parameter $A>3$ to be determined later.
We note that the same would hold for the polynomials that need to be built for applications of the generalized QSP as derived in Ref.~\cite{PhysRevA.110.012612}.

 Now we can notice that, using Theorem~\ref{theorem:LCU_pionless} for accuracy $\epsilon_H$, our block encoding of $H$ satisfies
\begin{equation}
\norm{ \widetilde{H} - H}\le \epsilon_H \lambda_H\quad\text{where}\quad \widetilde{H}=\lambda_H\;{}_a\langle0\lvert U_H\rvert0\rangle_a\;.
\end{equation}
If we then choose $\epsilon_H=\epsilon_t/|2\lambda_H t|$ we can use QSP to build a unitary $\widetilde{U}(t)$ such that $\norm{\widetilde{U}(t)-e^{it\widetilde{H}}} < \epsilon_t/2$ which will also be an approximation to the evolution operator $e^{itH}$ with $\norm{U(t)-e^{itH}} < \epsilon_t$. For this purpose we will need a polynomial of order
\begin{equation}
Q\geq\left\lceil r\left(\frac{e\lambda_H|t|}{2},\frac{\sqrt{\pi}}{4A}\epsilon_t\right)-1\right\rceil\;
\end{equation}
to use in QSP. However, for every $q\in \mathbb{R}$ we have
\begin{equation}
\begin{split}
 r\left(\frac{e\lambda_H|t|}{2},\frac{\sqrt{\pi}}{4A}\epsilon_t\right)-1&\leq e^q \frac{e\lambda_H|t|}{2}+\frac{1}{q}\log\left(\frac{4A}{\sqrt{\pi}\epsilon_t}\right)-1\\
 &\leq e^q \frac{e\lambda_H|t|}{2}+\frac{1}{q}\log\left(\frac{4A}{e^q\sqrt{\pi}\epsilon_t}\right)\leq 2\lambda_H|t|+3\log\left(\frac{6}{\epsilon_t}\right)\;,
\end{split}
\end{equation}
where we chose $A=3.5$, which satisfies Eq.~\eqref{eq:poly_bound} when $\epsilon_P\leq1/2$, and $q=1/3$. The final polynomial order we choose for the decomposition is therefore given by
\begin{equation}
Q(t,\epsilon_t) = \left\lceil 2\lambda_H|t|+3\log\left(\frac{6}{\epsilon_t}\right)\right\rceil\;.
\end{equation}

Remember from Theorem~\ref{theorem:LCU_pionless} that the number of ancillas in the register for block encoding is
\begin{equation}
n_a=\eta+2m+10\;,
\end{equation}
so that the total cost for reflections done using $C^{n_a}X$ gates amounts to
\begin{equation}
8Q(t,\epsilon_t)(n_a-1) = 8Q(t,\epsilon)(\eta+2m+9)\;
\end{equation}
T gates.
Since we alternate between controlled $W$ and controlled $W^\dagger$ we can cancel a $P$ with a $P^\dagger$ every pair resulting in a total of at most $2(Q(t,\epsilon_t)+1)$ uses of $P$ and $P^\dagger$ for a T gate cost of
\begin{equation}
2(Q(t,\epsilon_t)+1)T^P\left( \frac{\epsilon_t}{2\lambda_H |t|} \right)\;,
\end{equation}
where we used an error $\epsilon_H=\epsilon_t/|2\lambda_H t|$ as determined above, while $T^P(\epsilon)$ is defined in Theorem~\ref{theorem:LCU_pionless}. For the $SELECT$ instead, accounting for the $3$ more Toffoli gates required for controlling it, we have a total cost in T gates of
\begin{equation}
    2Q(t,\epsilon_t)\left( T^S\left( \frac{\epsilon_t}{2\lambda_H |t|} \right) + 3 \right)
\end{equation}
where again $T^S(\epsilon)$ is defined in Theorem~\ref{theorem:LCU_pionless}.

In summary, the total T gate count for the algorithm so far is given by
\begin{equation}
    T_{W}(\epsilon_t) = 2Q(t, \epsilon_t)\left( T^S\left( \frac{\epsilon_t}{2\lambda_H |t|} \right) + T^P\left( \frac{\epsilon_t}{2\lambda_H |t|} \right) + 3 \right) +2 T^P\left( \frac{\epsilon_t}{2\lambda_H |t|} \right)
\end{equation}

Finally, since we also need $2(Q(t,\epsilon_t)+1)$ rotations for the full algorithm we can do those such that the total accumulated error is below $\epsilon/2$ and therefore choose $\epsilon_t=\epsilon/2$ above. The total gate cost, adding everything together, is given by
\begin{equation}
\begin{split}
    T^{QSP}(\epsilon, t) &= T_{W}\left( \frac{\epsilon}{2} \right) + 2(Q(t,\epsilon_t)+1) T_{ROT}\left( \frac{\epsilon}{4(Q(t,\epsilon_t)+1)} \right) \\
    &=\widetilde{O}\left(\left(\eta t+\log\left(\frac{1}{\epsilon}\right)\right) \left(\eta m + m\log\left(\frac{m\eta t}{\epsilon}\right)\right)\right)\\
\end{split}
\end{equation}

The total ancilla cost is then $n_a+2=\eta+2m+12$ qubits for the full block encoding of the evolution operator, together with an additional register of size
\begin{equation}
    b_r+b_{QFT}+\max\left(2b_{QFT}-1,dm+6,\eta+2m+9\right)\;,
\end{equation}
where we defined
\begin{equation}
b_r = \left\lceil\frac{1}{2}\log_2\left(\frac{72\pi^2\lambda_H |t|}{\epsilon}\right)\right\rceil\quad\text{and}\quad b_{QFT}=\min\left(m-1,\left\lceil\log_2\left(\frac{16d\lambda_H |t|m}{\epsilon}\right)\right\rceil\right)+1\;,
\end{equation}
which includes the space to store the phase gradient states for the PREPARE and the QFT, plus an additional register of $dm$ qubits to perform the QFT in a separate register so that the SELECT is self-inverse.

\end{proof}

The QSP-based scheme just presented can be further improved by a factor of $\approx 2$ using the Generalized Quantum Signal Processing algorithm of Ref.~\cite{PhysRevA.110.012612}.

\begin{theorem}[G-QSP for Pionless-EFT]
Consider a system with $\eta$ particles on a $d$ dimensional lattice with $M=2^{m}$ sites per direction described by the pionless Hamiltonian $H = T+V$ with the kinetic and potential terms given in Eq.~\eqref{eq:Tdiscrete} and Eq.~\eqref{eq:Vdiscrete}.
Then for any $t\in \mathbb{R}$ and $\frac{1}{2}>\epsilon>0$ there exists a quantum circuit that implements a unitary operator $U(t)$ such that $\norm{U(t) - e^{-iHt}}< \epsilon$ with a one-norm equal to
\begin{equation}
\lambda_H = \eta\left(\frac{3|C|+4G}{2}+dK 2^{2(m-1)}\right)=O(\eta)\;
\end{equation}
    using a number of $T$ gates given by
\begin{equation}
\begin{split}
    T^{GQSP}(\epsilon, t) &= Q\left(t, \frac{\epsilon}{2}\right)T^{S}\left(\frac{\epsilon}{4\lambda_H |t|}\right) + 2\left( Q\left(t, \frac{\epsilon}{2}\right)+1 \right) T^{P} \left( \frac{\epsilon}{4\lambda_H |t|} \right) \\
    &+ 3\left( Q\left(t, \frac{\epsilon}{2}\right)+1 \right) T_{ROT} \left( \frac{\epsilon}{6\left( Q\left(t, \frac{\epsilon}{2}\right)+1 \right)} \right) +2Q\left(t, \frac{\epsilon}{2}\right)T_{MCX}(\eta+2m+10)\\
    &=O\left(\left(\eta t+\log\left(\frac{1}{\epsilon}\right)\right) \left(\eta m + m\log\left(\frac{m t}{\epsilon}\right)\right)\right)\\
\end{split}
\end{equation}
where we introduced the total number of repetitions
\begin{equation}
Q(t,\epsilon_t) = \left\lceil 2\lambda_H|t|+3\log\left(\frac{6}{\epsilon_t}\right)\right\rceil\;.
\end{equation}
and used $T^S(\epsilon),T^P(\epsilon)$ from Theorem~\ref{theorem:LCU_pionless}.
The quantities $T_{ROT}$ and $T_{MCX}$ are instead defined in Theorem~\ref{theorem:Z_rotations} and Theorem~\ref{theorem:MCX} respectively. Finally, the algorithm requires a total number of qubits given by
\begin{equation}
    \eta+(2+d)m+13+b_r+b_{QFT}+\max\left(2b_{QFT}-1,dm+6,\eta+2m+9\right)=O\left(\eta+m+\log\left(\frac{t}{\epsilon}\right)\right)\;,
\end{equation}
where we have defined
\begin{equation}
b_r = \left\lceil\frac{1}{2}\log_2\left(\frac{72\pi^2\lambda_H |t|}{\epsilon}\right)\right\rceil\quad\text{and}\quad b_{QFT}=\min\left(m-1,\left\lceil\log_2\left(\frac{16d\lambda_H |t|m}{\epsilon}\right)\right\rceil\right)+1\;,
\end{equation}
in addition to the $n_s=\eta d m$ qubits used to encode the system.
\end{theorem}

\begin{proof}
In this generalized version of QSP, we can define a new operator $V_{\alpha, \beta}$:
\begin{equation}
\begin{quantikz}
    \lstick{$\ket{+}$} & \gate[3]{V_{\alpha, \beta}} & \qw \\
    \lstick{$\ket{P}_a$} & & \qw \\
    \lstick{$\ket{\psi}_s$} & & \qw \\
\end{quantikz} =
\begin{quantikz}
    \lstick{$\ket{+}$} & \gate{R(\alpha, \beta)} & \ctrl{1} & \octrl{1} & \qw \\
    \lstick{$\ket{P}_a$} & \qw & \gate[2]{W^\dagger} & \gate[2]{W} & \qw \\
    \lstick{$\ket{\psi}_s$} & \qw & &  & \qw \\
\end{quantikz}
\end{equation}
where the operation $R(\alpha, \beta)$ introduced in the circuit is a generic $SU(2)$ single qubit gate.
The full circuit required to implement a polynomial of order $Q$ will require $Q$ applications of this $V_{\alpha, \beta}$ operator with different angles, plus one last rotation. However, note that we can exploit the following relation:
\begin{equation}
\begin{quantikz}
    & \ctrl{1} & \octrl{1} & \qw \\
    & \gate[2]{W^\dagger} & \gate[2]{W} & \qw \\
    & &  & \qw \\
\end{quantikz} =
\begin{quantikz}
    & \qw & \ctrl{1} & \qw & \qw & \qw & \octrl{1} & \qw & \qw  \\
    & \gate{P} & \ctrl{0} & \gate{P^\dagger} & \gate[2]{S} & \gate{P} & \ctrl{0} & \gate{P^{\dagger}} & \qw \\
    & \qw & \qw & \qw & & \qw & \qw & \qw & \qw \\
\end{quantikz}
\end{equation}
where we used the fact that control-$U$ times anticontrol-$U$ is just $U$. On the other hand, when concatenating multiple $V_{\alpha, \beta}$, one $P$ and one $P^{\dagger}$ simplify. This means that the total cost of implementing the time evolution is: $2Q+1$ applications of $P$ or $P^\dagger$, $Q$ applications of $S$, and $Q+1$ $SU(2)$ rotations, that can take up to $3$ $R_Z$ rotations and Hadamard gates, and finally $2Q$ reflections on the $ a=\eta+2m+10$ qubits where $P$ acts. The total number of $T$-gates is:
\begin{equation}
\begin{split}
    T^{GQSP}(\epsilon) &= Q\left(t, \frac{\epsilon}{2}\right)T^{S}\left(\frac{\epsilon}{4\lambda_H |t|}\right) + 2\left( Q\left(t, \frac{\epsilon}{2}\right)+1 \right) T^{P} \left( \frac{\epsilon}{4\lambda_H |t|} \right) \\
    &+ 3\left( Q\left(t, \frac{\epsilon}{2}\right)+1 \right) T_{ROT} \left( \frac{\epsilon}{6\left( Q\left(t, \frac{\epsilon}{2}\right)+1 \right)} \right) +2Q\left(t, \frac{\epsilon}{2}\right)T_{MCX}(\eta+2m+10)\\
\end{split}
\end{equation}

Where the value of $Q\left(t, \frac{\epsilon}{2}\right)$, the errors, and number of qubits are the same as in Theorem~\ref{theorem:QSP}. We also used $T^S(\epsilon),T^P(\epsilon)$ from Theorem~\ref{theorem:LCU_pionless}, $T_{ROT}$ from Theorem~\ref{theorem:Z_rotations} and $T_{MCX}$ from Theorem~\ref{theorem:MCX}.

\end{proof}

\section{Results}
\label{sec:results}

In this section, we present numerical results for both the gate count and the number of qubits using realistic model parameters. In particular, we use the coefficients reported in Tab.~\ref{tab:parameters} which are the same used in previous publications on pionless EFT~\cite{Roggero_2020,watson2023quantumalgorithmssimulatingnuclear}. These correspond to a $SU(4)$ symmetric contact interaction with a lattice spacing $a=1.4$ fm and a nucleon mass $\mu=939$ MeV. We provide resource estimates for consider two separate situations
\begin{itemize}
    \item simulations for a time $t_{cross}$ comparable to the time it would take a nucleon with energy $E=10$ MeV to cross the full lattice (as introduced in Ref.~\cite{watson2023quantumalgorithmssimulatingnuclear})
\begin{equation}
    t_{cross}(L) = \frac{aL}{\hbar c} \sqrt{\frac{\mu}{2E}}\;,
    \label{equation:crossing-time}
\end{equation}
    where $L=2^m$ is the dimensionless linear size of the lattice.
    \item simulations for a time $t_r(\Delta \omega)$ required to reconstruct the nuclear response with a resolution of $\Delta \omega$
    \begin{equation}
    t = \left( \left\lceil \frac{\Delta H}{\Delta \omega} \right\rceil -1 \right) \frac{2\pi}{\Delta H}
    \label{equation:simulation_time}
\end{equation}
where $\Delta H = \norm{T} + \norm{V} + 18\eta$ (see Ref.~\cite{Roggero_2020} with the norm estimates is in Appendix \ref{appendix:norm}. In these calculations we consider $\Delta \omega = 100$ MeV.
\end{itemize}

The first type of calculation gives an idea of the effort required to study nuclear collisions or (semi-)-exclusive processes where nucleon propagation needs to be taken into account. The second type of calculation is more amenable to study nuclear cross sections within linear response~\cite{PhysRevC.100.034610,PhysRevA.102.022409,hartse2023faster,PhysRevE.105.055310,PhysRevLett.134.192701}.

\begin{table}[b]
    \centering
    \begin{tabular}{|c|c|c|}
    \hline
        $\frac{\hbar^2}{2\mu a^2}\; [$MeV$]$ & $C\; [$MeV$]$ & $G\; [$MeV$]$ \\
        \hline
        $10.58$ & $-98.23$ & $127.84$ \\
        \hline
    \end{tabular}
    \caption{Hamiltonian parameters for a lattice spacing $a=1.4$ fm extracted from Ref.~\cite{Rokash_2014}.}
    \label{tab:parameters}
\end{table}

As in the main part of the paper above, we will use $\Omega$ to denote the number of possible single-particle states, which will be $4$ (due to spin and isospin) per site of the lattice, meaning $\Omega = 4\cdot 2^{dm}=4L^d$, where $d=3$ is the number of spatial dimensions, and $m$ is the number of qubits used for every spatial dimension.

\begin{figure}[th]
    \centering
    \includegraphics[width=0.48\linewidth]{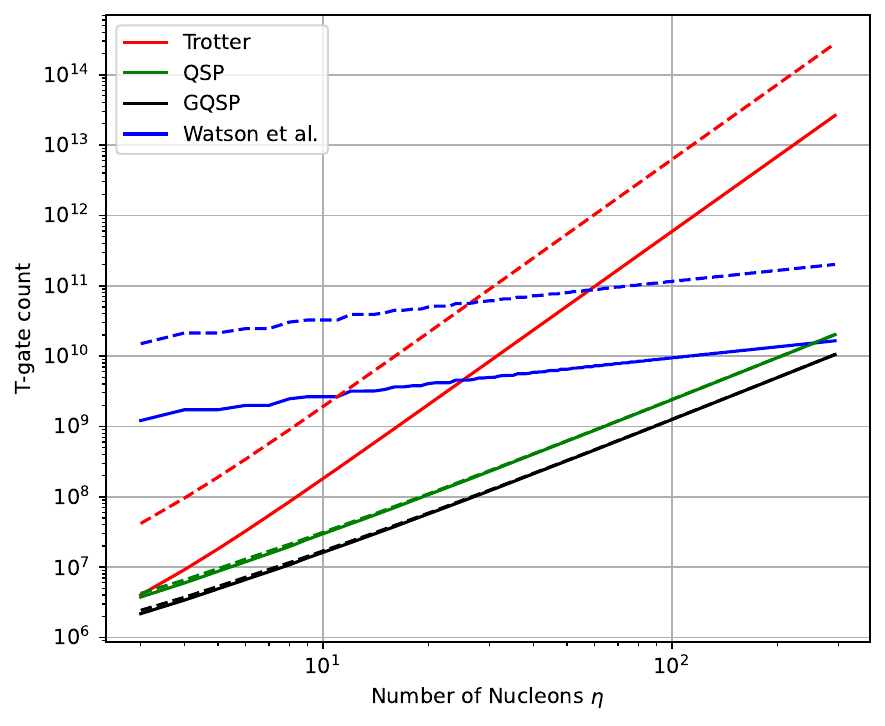}
    \includegraphics[width=0.48\linewidth]{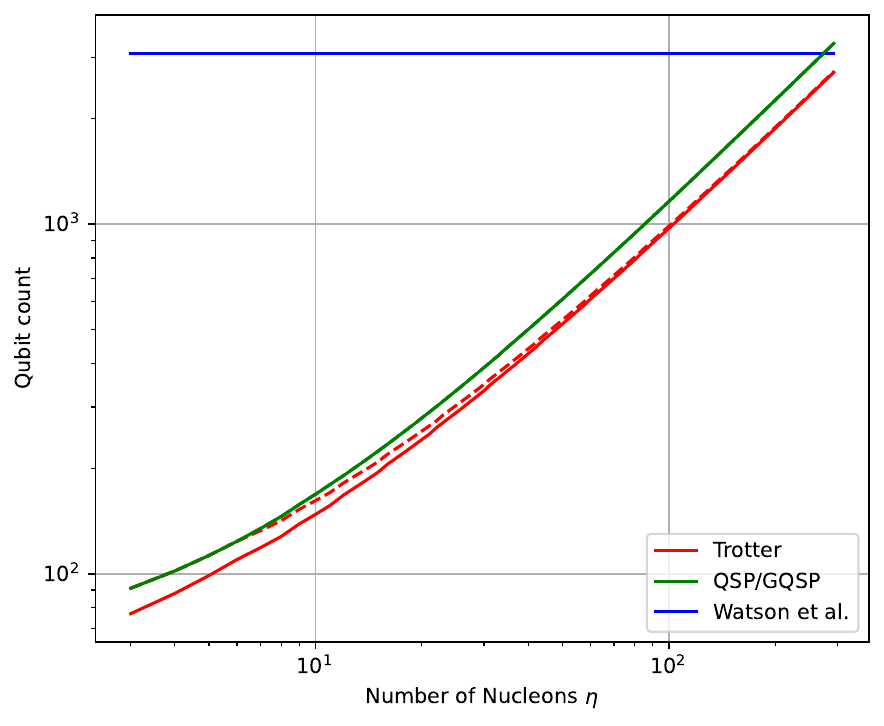}
    \caption{The left (right) panel shows the T-gate (qubit) count as a function of the number of nucleons for the crossing time $t_{cross}$ in Eq.~\eqref{equation:crossing-time} in a $8\times8\times8$ box. The red curves correspond to a second-order product formula in first quantization, the green and black curves correspond to the QSP and GQSP methods in first quantization and, finally, the blue curves are the second-order product formula as implemented in Ref.~\cite{watson2023quantumalgorithmssimulatingnuclear}. For each method we show results for two different target errors: $\epsilon=10^{-1}$ (solid line), and $\epsilon=10^{-3}$ (dashed line).}
    \label{figure:eta_comparison}
\end{figure}

We start from the first type of calculations and show in Fig~\ref{figure:eta_comparison} the $T$-gate and qubit counts in terms of the number of nucleons $\eta$ in the system for $m=3$, corresponding to a $8\times8\times8$ box. We show two different target errors corresponding to $\epsilon=10^{=1}$ (solid lines) and $\epsilon=10^{-3}$ (dashed lines). In this figure we compare three methods: the second-order product formula in either first quantization (red) or second quantization (blue) as well as the QSP and GQSP methods in first quantization (green and black respectively). The implementation for the second quantization scheme follows the previous work by  Watson \textit{et al.} Ref.~\cite{watson2023quantumalgorithmssimulatingnuclear}. We start by noticing that, as expected, the number of qubits required for simulations in first quantization is always smaller than in second quantization, even accounting for the additional ancilla space required to perform the former scheme. For a medium mass nucleus with $\eta=40$ first quantization requires around $420$ qubits while in second quantization one need $3072$ qubits regardless on the number of particles. The gate cost shows instead a more interesting behavior: for the Trotter method the gate cost in second quantization grows more gently with $\eta$ and becomes the preferred scheme for all but the smallest nuclei. This was expected since, as summarized in Tab.~\ref{tab:comparision_results}, for fixed time, error and box size we have
\begin{equation}
C_{1st} = \widetilde{O}\left(\eta^{7/2}\log(\eta)\right)\quad\quad C_{2nd}=O\left(\eta^{1/2}\log(\eta)\right)\;,
\end{equation}
for the first and second quantization product formula respectively.
On the other hand, the GQSP simulation in first quantization appears to outperform the other techniques for all nuclei considered (up to $\eta=294$) even for the largest error tolerance $\epsilon=10^{-1}$. For large enough systems the $O(\eta^2)$ of QSP/GQSP can start  to favor Trotter in second quantization, at least for large error tolerances. In our calculations this happens only for QSP where second-quantized Trotter starts to be preferred at $\eta=255$. Thanks to their logarithmic scaling with the error $\epsilon$, QSP and GQSP will progressively become advantageous as the target error $\epsilon$ is reduced. As we can see on the right panel, this does not come with an appreciable change in the number of qubits required with QSP requiring more logical qubits than the algorithm in second quantization only for $\eta\geq276$. In the right panel we have a single line for both QSP and GQSP since they bothy require exactly the same number of qubits.
We can conclude that, with the parameters we used ($\epsilon = 10^{-1},10^{-3}$ and a small $m=3$ lattice in three spatial dimensions), the GQSP algorithm in first quantization is the preferred scheme since it requires fewer gates, fewer qubits, and it allows us to easily go to smaller errors.

\begin{figure}[t]
    \centering
    \includegraphics[width=0.48\linewidth]{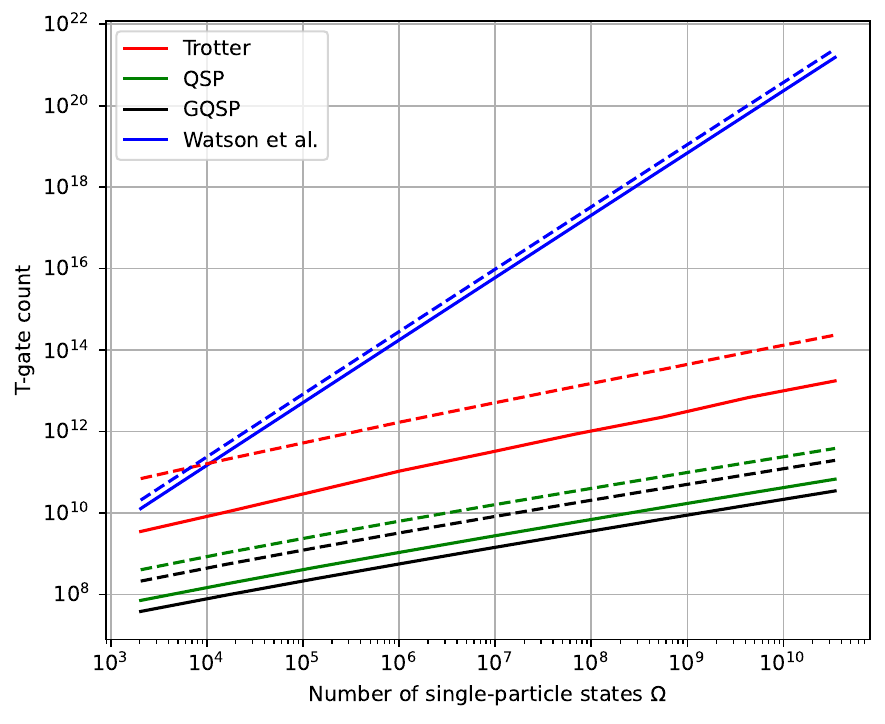}
    \includegraphics[width=0.48\linewidth]{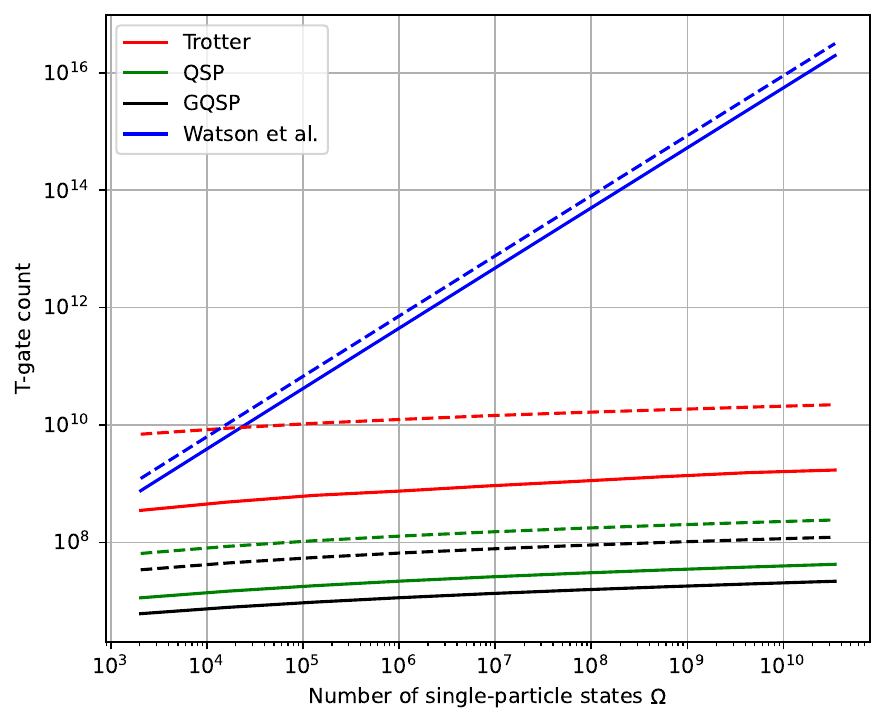}
    \caption{The two panels show the T-gate count as a function of the size of the single particle space $\Omega = 4\cdot 8^{m}$ for $3\leq m\leq 12$. The left panel shows simulations up to $t_{cross}$ from Eq.~\eqref{equation:crossing-time} while the right panel is for simulations up to $t_{r}(\Delta\omega)$ from Eq.~\eqref{equation:simulation_time} with $\Delta\omega=100$ MeV. The color coding is the same as in Fig.~\ref{figure:eta_comparison} but here the solid line correspond to a sytem with $\eta=16$ while the dashed line to $\eta=40$. In all cases we set $\epsilon=10^{-2}$.
    }
    \label{figure:omega_comparison}
\end{figure}

Of course, the main appeal of formulating this problem in first quantization is the favorable scaling of the cost with the size of the single particle space $\Omega$. To show the impact of this, we present on the left panel of Fig~\ref{figure:omega_comparison} the gate count as a function of the number of single-particles states $\Omega = 4\cdot 2^{dm}$ for $d=3$, total time equal to $t_{cross}$, target error $\epsilon=10^{-2}$ and two separate nuclei with $\eta = 16$ (solid lines) and $\eta=40$ (dashed lines). The results span a range going from $m=3$, corresponding to the $8^3$ lattice used for the previous results, to $m=12$ which correspond to a $4096^3$ lattice. For reference, for the first quantization algorithms we reach the same qubit register need for the second quantization scheme for $m=24$ at $\eta=40$ and at $\eta=16$ for $m=54$ (a $10^{16}$ cubed lattice) . It is important to notice that the time $t_{cross}$ used to produce these results grows with the single particle space size as $t_{cross}\propto \Omega^{1/3}$ in $d=3$ spatial dimensions. Using then the results from Tab.~\ref{tab:comparision_results} we see that, for fixed error and particle number the gate cost scales as
\begin{equation}
C_{1st}=\widetilde{O}\left(\Omega^{1/2}\right)\quad C_{2nd}=\widetilde{O}\left(\Omega^{3/2}\right)\quad C_{QSP/GQSP}=\widetilde{O}\left(\Omega^{1/3}\right)\;,
\end{equation}
where $C_{1st}$ and $C_{2nd}$ refer to the second-order Trotter in first and second quantization respectively. From the results shown in the left panel of Fig.~\ref{figure:omega_comparison} we clearly see this trend with the GQSP method outperforming all the others. On the right panel of Fig.~\ref{figure:omega_comparison} we show instead the T gate count for $\eta=16$ (solid lines) and $\eta=40$ (dashed lines) but for the total simulation time $t_r(\Delta \omega)$ with $\Delta \omega =100$ MeV which is largely independent on $\eta$ and approximately given by $t_r\approx 0.063$ MeV${}^{-1}$. The difference between methods is now stronger as we expect, from Tab.~\ref{tab:comparision_results}, the following scaling
\begin{equation}
C_{1st}=\widetilde{O}\left(\log(\Omega)^{2}\right)\quad C_{2nd}=O(\Omega\log\left(\Omega)\right)\quad C_{QSP/GQSP}=O\left(\log(\Omega)^{2}\right)\;.
\end{equation}
The logarithmic scaling with $\Omega$ of the gate cost is now apparent with for instance the gate cost for the $\eta=16$ system in QSP/GQSP changing by less than a factor of $5$ when $\Omega$ increases by more than $8$ orders of magnitude. We can now also clearly see that the second quantization method has a linear scaling in the size of the lattice.

In summary, we conclude that for a purely pion-less nuclear model the second quantization approach brings an advantage only in the limit of high $\eta$ and small $m$, which is the case of a lattice system where the majority of sites is full. When considering scattering simulations, we are mostly interested in the opposite limit, since we have few particles, at most hundreds of nucleons, and very large lattices. In this setting, the first quantization approaches appear to be cheaper both in $T$-gate and qubit count in every parameter range (given $\eta/\Omega \ll 1$). In particular, the QSP/GQSP algorithm brings a significant advantage in the gate count even in regimes where both target errors are not overly small or total times overly large. In order to provide a more concrete resource assessment, we provide in Tab.~\ref{tab:results_numbers_cross} both T gate and logical qubit counts for a minimal simulation in a $8\times8\times8$ spatial lattice with error $\epsilon=0.1$ and two systems with $\eta=16$ and $40$ respectively. For all instances GQSP provides the best gate count at the price of a modest increase in the required number of logical qubits. The results shown for GQSP approaching values that suggest these simulations are reaching an efficiency that could make them candidates for early fault-tolerant regimes.

\begin{table}[h]
    \centering
    \begin{tabular}{|c|c|c|c|c|c|}
        \hline
        Method & Number of nucleons & T-count [$t_{cross}$]& qubit count [$t_{cross}$]& T-count [$t_{\Delta\omega}$]& qubit count [$t_{\Delta\omega}$]\\
        \hline
        Trotter 2nd order & 16 & $3.64\cdot 10^9$ & 3072& $2.15\cdot 10^8$ & 3072\\\cline{2-6}
        second-quantization \cite{watson2023quantumalgorithmssimulatingnuclear}& 40 & $5.89\cdot 10^9$ & 3072& $3.47\cdot 10^8$ & 3072\\
        \hline
        Trotter 2nd order & 16 & $9.3\cdot 10^8$ & $206$ & $5.87\cdot 10^7$ & $200$ \\\cline{2-6}
        first-quantization& 40 & $2.34\cdot 10^{10}$ & $422$ & $1.50\cdot 10^9$ & $423$ \\
        \hline
        QSP & 16 & $9.74\cdot 10^7$ & $235$ & $1.59\cdot 10^7$ & $234$ \\\cline{2-6}
        first-quantization& 40 & $5.81\cdot 10^8$ & $500$ & $9.39\cdot 10^7$ & $498$ \\
        \hline
        GQSP & 16 & $3.75\cdot 10^7$ & $235$ & $6.02\cdot 10^6$ & $234$ \\\cline{2-6}
        first-quantization& 40 & $2.11\cdot 10^8$ & $500$ & $3.38\cdot 10^7$ & $498$ \\
        \hline
    \end{tabular}
    \caption{Summary of the gate and qubit cost for simulations up to $t_{cross}$ (columns 3 and 4) and $t_\Delta\omega$ (columns 5 and 6) for four methods and two different numbers of particle $\eta=16$ and $\eta=40$. All estimates use a $8\times8\times8$ lattice and target error $\epsilon=0.1$.}
    \label{tab:results_numbers_cross}
\end{table}

\section{Summary and conclusions}
\label{sec:conclusions}

In this work, we compared various algorithms for the simulation of nuclear dynamics at low energies using a minimal model of the nuclear Hamiltonian based on pion-less Effective Field Theory. When formulated on a spatial lattice, this model shares many similarities with the standard Fermi Hubbard model but it includes more fermionic species and three-body interactions. The other main difference with simulations of the Hubbard model, or the electron gas, is that for those models one is often interested in keeping the particle density fixed as the volume is made larger while when studying nuclear reactions we are instead interested in studying systems with a fixed particle number in the limit of large volumes. This provides a great opportunity for a formulation of the problem in first quantization where the memory requirement scales only as the logarithm of the volume. In our first two main results, Theorem~\ref{theorem:first_order_Trotter} and Theorem~\ref{theorem:2nd_and_fourth_Trotter}, we provided a complete resource estimation for first, second and fourth order product formulas. We achieve this by proposing an efficient strategy to apply the evolution under the potential energy operator using an equality testing approach in Lemma~\ref{theorem:complact_Trotter_potential_energy} and by carefully estimating the commutator bounds for product formulae using a generalization of fermionic semi-norms in first quantization. We have also provided a full block encoding of the pionless EFT Hamiltonian in Theorem~\ref{theorem:LCU_pionless} using a standard approach from Ref.~\cite{Su2021} for the kinetic energy but develop a novel block encoding for contact interactions that exploits the fermionic nature of the system allowing for an efficient implementation with only $O(\eta)$ gate cost even in the presence of two and three-body interactions. This block encoding is then used together with Quantum Signal Processing in Theorem~\ref{theorem:QSP} to provide a complete resources estimation for simulations of time evolution of pionless EFT in first quantization with a near optimal simulation method. In terms of the number of gates, the scaling is nearly linear in time as $\widetilde{O}(t)$, polylog in error as $O(\log(1/\epsilon)^2)$ and at most quadratic in the number of particles $\eta$. A full comparison of the gate and logical qubit scaling of a various techniques for the simulation of time-evolution in the pionless EFT model is provided in Tab.~\ref{tab:comparision_results} above. For a fixed total time $t$ and target error $\epsilon$, the main difference between first and second quantization schemes is the scaling with the system size: a product formula of order $2p$ requires $\widetilde{O}(\eta^{1/2p}\,\Omega)$ T-gates and $O(\Omega)$ logical qubits while in first quantization these become $\widetilde{O}(\eta^{1+1/2p}\log(\Omega)(\eta^2+\log(\Omega)))$ and $\widetilde{O}(\log(\Omega)(\eta+\log(\Omega)))$ respectively. Using QSP in first quantization brings the gate cost down to $\widetilde{O}(\eta^2\log(\Omega)$ on $O(\eta\log(\Omega))$ logical qubits.

In order to go beyond a study of the asymptotic scaling, in Sec.~\ref{sec:results} we compared directly the cost of the second order product formula in both first and second quantization as well as QSP in first quantization for system parameters appropriate for a minimal simulation of nuclear scattering at low energies. The results show that consistently formulating the problem in first quantization not only requires exponentially fewer qubits to encode the system, but it also requires fewer $T$-gates to be implemented. Due to a stronger dependence on the number of particles $\eta$ of the first quantization methods we expect that there is a regime where second quantization would be preferred. From our estimations, even for a small $8^3$ lattice, the 2nd order formula in first quantization becomes more expensive than the second quantization version for $\eta\approx 10-20$ while the crossing point for QSP happens only for large systems with $\eta\approx 200$.

The cost estimates reported in Tab.~\ref{tab:results_numbers_cross} suggest that early studies of low energy nuclear scattering on simple targets could be performed with around $10^7$ T gates on a few hundred logical qubits (e.g. 212 for ${}^{16}$O) which is possibly within reach for early fault-tolerant quantum devices. As a comparison, this is about one order of magnitude more expensive in terms of non-Clifford gates than Quantum Phase Estimation on the Fermi Hubbard model~\cite{Campbell_2021}, one of the paradigmatic early fault-tolerant applications of quantum simulations. Using the second quantization based implementation of Ref.~\cite{watson2023quantumalgorithmssimulatingnuclear} will require for the same systems at least one order of magnitude more $T$ gates and logical qubits and is then more comparable to ground state simulation of molecules~\cite{Lee_2021,doi:10.1073/pnas.2203533119} or factoring of 2048-bit RSA integers~\cite{Gidney2021howtofactorbit}.

It is important to note that the estimates presented here deal only with the time evolution portion of the scattering simulation. The progress that we were able to demonstrate calls now for a full end-to-end resource estimation for nuclear scattering problems to assess the practical viability of studying these physical process in the early fault-tolerant era. Future steps towards this goal will need to include the cost of state preparation of the initial nuclear target and an estimate of the number of circuit repetitions required to extract physical observables. It will be also important to explore the implementation of improved Hamiltonians with more complex interactions which are needed for higher accuracy and for higher collision energies~\cite{RevModPhys.92.025004}. Even though the results presented here are specific to the SU(4) invariant Hamiltonian from Eq.~\eqref{equation:Hamiltonian}, the construction we follow can be readily extended to more accurate models that also include an $SU(4)$ breaking term that distinguishes the spin singlet and triplet states.

Several further optimization are still available to reduce the final gate costs of the methods presented here. For instance, the exponentiation of the potential energy operator can be sped up from $O(\eta^3)$ to $O(\eta^2)$ using a variant of the trick used to construct the block encoding in Lemma~\ref{lemma:LCU_V} where we use the expression in Eq.~\eqref{eq:pot_compact} for the potential matrix elements and store the inner sum over $j$ in a two qubit register for each particle $i$ and finally perform a phase kickback with it's value. This scheme will incur in an increase in ancilla space of $O(\log(\eta))$ and we didn't fully pursue it here because from preliminary estimates the QSP-based scheme was still performing better. Of course this could change if better bounds on the Trotter error are pursued~\cite{Campbell_2021,PhysRevB.108.195105,baysmidt2025faulttolerantquantumsimulationgeneralized}. The choice of kinetic energy operator will play a role in this: here we used a QFT based implementation of the kinetic energy $T$ but since we are working with finite UV regulators anyway it is also possible to use finite differences approximations which could lead to smaller Trotter errors~\cite{watson2023quantumalgorithmssimulatingnuclear} and might still be implemented efficiently even at high orders~\cite{Kivlichan_2017}.

In addition, our main focus in this has been the optimization of the total number of $T$-gates but for practical early fault-tolerant calculations it might be helpful to also minimize the gate depth. For instance, for the block encoding of the potential energy we use the $f$ register in Eq.~\eqref{eq:lcu_prep_statet_T} in unary and employ the swap network construction shown in Fig.~\ref{fig:controlled_SN_example_circuit} which uses $\eta$ ancilla qubits, and $(\eta-1)dm$ controlled-swap gates. In order to achieve the same task we could have used instead the $f$ register in bit encoding and use a swap network with an ancilla register of $\lceil \log_2 \eta \rceil$ qubits, and $2^{\lceil \log_2\eta \rceil}dm$ controlled-swap gates~\cite{Kivlichan_2017}. The alternative circuit could cost a factor of $2$ more than the previous one, but it would save ancilla qubits (from $\eta$ to $\lceil \log_2\eta \rceil$), and reduce the depth from $O(\eta)$ to $O(\log\eta)$. For the Trotter evolution of the potential we have to perform a large number of rotations with fixed angles, $O(\eta^2)$ per step for the two body and $O(\eta^3)$ per step for the three body. By parallelizing the application of these rotations, which will also reduce the gate depth, we can then leverage techniques like Hamming Weight Phasing~\cite{Gidney_2018,Kivlichan2020improvedfault} to reduce the non-Clifford cost of implementing the rotations. As a final example,in our algorithm, we use many $N$-controlled Toffoli gates, which with the implementation we considered (see Theorem~\ref{theorem:MCX} in Appendix~\ref{app:general_subroutines}) have linear depth. A recent proposal from Ref.~\cite{nie2024quantumcircuitmultiqubittoffoli} would require instead a single ancilla qubit and reduce the depth to logarithmic in $N$. We want to conclude by noting that the focus in our resource estimation has been concentrated solely on minimizing $T$ gates, as these operation become cheaper and other quantum error correcting besides the surface code receive increased attention~\cite{xu2024constant,Bravyi_2024,yoder2025tourgrossmodularquantum} future work in design of quantum algorithms for nuclear reactions could take a different approach for optimizing the overall cost.

\section{Acknowledgments}
We want to thank J. Carlson for stimulating discussions about simulating low energy nuclear scattering and C. Kane for discussions on the first draft of the paper. This project was supported by the Provincia Autonoma di Trento, and by Q@TN, the joint lab between the University of Trento, FBK-Fondazione Bruno Kessler, INFN-National Institute for Nuclear Physics and CNR-National Research Council.


\begin{thebibliography}{88}%
\makeatletter
\providecommand \@ifxundefined [1]{%
 \@ifx{#1\undefined}
}%
\providecommand \@ifnum [1]{%
 \ifnum #1\expandafter \@firstoftwo
 \else \expandafter \@secondoftwo
 \fi
}%
\providecommand \@ifx [1]{%
 \ifx #1\expandafter \@firstoftwo
 \else \expandafter \@secondoftwo
 \fi
}%
\providecommand \natexlab [1]{#1}%
\providecommand \enquote  [1]{``#1''}%
\providecommand \bibnamefont  [1]{#1}%
\providecommand \bibfnamefont [1]{#1}%
\providecommand \citenamefont [1]{#1}%
\providecommand \href@noop [0]{\@secondoftwo}%
\providecommand \href [0]{\begingroup \@sanitize@url \@href}%
\providecommand \@href[1]{\@@startlink{#1}\@@href}%
\providecommand \@@href[1]{\endgroup#1\@@endlink}%
\providecommand \@sanitize@url [0]{\catcode `\\12\catcode `\$12\catcode
  `\&12\catcode `\#12\catcode `\^12\catcode `\_12\catcode `\%12\relax}%
\providecommand \@@startlink[1]{}%
\providecommand \@@endlink[0]{}%
\providecommand \url  [0]{\begingroup\@sanitize@url \@url }%
\providecommand \@url [1]{\endgroup\@href {#1}{\urlprefix }}%
\providecommand \urlprefix  [0]{URL }%
\providecommand \Eprint [0]{\href }%
\providecommand \doibase [0]{https://doi.org/}%
\providecommand \selectlanguage [0]{\@gobble}%
\providecommand \bibinfo  [0]{\@secondoftwo}%
\providecommand \bibfield  [0]{\@secondoftwo}%
\providecommand \translation [1]{[#1]}%
\providecommand \BibitemOpen [0]{}%
\providecommand \bibitemStop [0]{}%
\providecommand \bibitemNoStop [0]{.\EOS\space}%
\providecommand \EOS [0]{\spacefactor3000\relax}%
\providecommand \BibitemShut  [1]{\csname bibitem#1\endcsname}%
\let\auto@bib@innerbib\@empty
\bibitem [{\citenamefont {Adelberger}\ \emph {et~al.}(2011)\citenamefont
  {Adelberger} \emph {et~al.}}]{RevModPhys.83.195}%
  \BibitemOpen
  \bibfield  {author} {\bibinfo {author} {\bibfnamefont {E.~G.}\ \bibnamefont
  {Adelberger}} \emph {et~al.},\ }\bibfield  {title} {\bibinfo {title} {Solar
  fusion cross sections. ii. the $pp$ chain and cno cycles},\ }\href
  {https://doi.org/10.1103/RevModPhys.83.195} {\bibfield  {journal} {\bibinfo
  {journal} {Rev. Mod. Phys.}\ }\textbf {\bibinfo {volume} {83}},\ \bibinfo
  {pages} {195} (\bibinfo {year} {2011})}\BibitemShut {NoStop}%
\bibitem [{\citenamefont {Acharya}\ \emph {et~al.}(2024)\citenamefont {Acharya}
  \emph {et~al.}}]{acharya2024solarfusioniiinew}%
  \BibitemOpen
  \bibfield  {author} {\bibinfo {author} {\bibfnamefont {B.}~\bibnamefont
  {Acharya}} \emph {et~al.},\ }\href {https://arxiv.org/abs/2405.06470}
  {\bibinfo {title} {Solar fusion iii: New data and theory for hydrogen-burning
  stars}} (\bibinfo {year} {2024}),\ \Eprint {https://arxiv.org/abs/2405.06470}
  {arXiv:2405.06470 [astro-ph.SR]} \BibitemShut {NoStop}%
\bibitem [{\citenamefont {Wiescher}\ \emph {et~al.}(2025)\citenamefont
  {Wiescher}, \citenamefont {Bertulani}, \citenamefont {Brune}, \citenamefont
  {deBoer}, \citenamefont {Diaz-Torres}, \citenamefont {Gasques}, \citenamefont
  {Langanke}, \citenamefont {Navr\'atil}, \citenamefont {Nazarewicz},
  \citenamefont {Oko\l{}owicz}, \citenamefont {Phillips}, \citenamefont
  {P\l{}oszajczak}, \citenamefont {Quaglioni},\ and\ \citenamefont
  {Tumino}}]{articleQuantStar}%
  \BibitemOpen
  \bibfield  {author} {\bibinfo {author} {\bibfnamefont {M.}~\bibnamefont
  {Wiescher}}, \bibinfo {author} {\bibfnamefont {C.~A.}\ \bibnamefont
  {Bertulani}}, \bibinfo {author} {\bibfnamefont {C.~R.}\ \bibnamefont
  {Brune}}, \bibinfo {author} {\bibfnamefont {R.~J.}\ \bibnamefont {deBoer}},
  \bibinfo {author} {\bibfnamefont {A.}~\bibnamefont {Diaz-Torres}}, \bibinfo
  {author} {\bibfnamefont {L.~R.}\ \bibnamefont {Gasques}}, \bibinfo {author}
  {\bibfnamefont {K.}~\bibnamefont {Langanke}}, \bibinfo {author}
  {\bibfnamefont {P.}~\bibnamefont {Navr\'atil}}, \bibinfo {author}
  {\bibfnamefont {W.}~\bibnamefont {Nazarewicz}}, \bibinfo {author}
  {\bibfnamefont {J.}~\bibnamefont {Oko\l{}owicz}}, \bibinfo {author}
  {\bibfnamefont {D.~R.}\ \bibnamefont {Phillips}}, \bibinfo {author}
  {\bibfnamefont {M.}~\bibnamefont {P\l{}oszajczak}}, \bibinfo {author}
  {\bibfnamefont {S.}~\bibnamefont {Quaglioni}},\ and\ \bibinfo {author}
  {\bibfnamefont {A.}~\bibnamefont {Tumino}},\ }\bibfield  {title} {\bibinfo
  {title} {Quantum physics of stars},\ }\href
  {https://doi.org/10.1103/RevModPhys.97.025003} {\bibfield  {journal}
  {\bibinfo  {journal} {Rev. Mod. Phys.}\ }\textbf {\bibinfo {volume} {97}},\
  \bibinfo {pages} {025003} (\bibinfo {year} {2025})}\BibitemShut {NoStop}%
\bibitem [{\citenamefont {Proto-Collaboration}\ \emph
  {et~al.}(2015)\citenamefont {Proto-Collaboration}, \citenamefont {Abe} \emph
  {et~al.}}]{10.1093/ptep/ptv061}%
  \BibitemOpen
  \bibfield  {author} {\bibinfo {author} {\bibfnamefont {H.-K.}\ \bibnamefont
  {Proto-Collaboration}}, \bibinfo {author} {\bibfnamefont {K.}~\bibnamefont
  {Abe}}, \emph {et~al.},\ }\bibfield  {title} {\bibinfo {title} {Physics
  potential of a long-baseline neutrino oscillation experiment using a j-parc
  neutrino beam and hyper-kamiokande},\ }\href
  {https://doi.org/10.1093/ptep/ptv061} {\bibfield  {journal} {\bibinfo
  {journal} {Progress of Theoretical and Experimental Physics}\ }\textbf
  {\bibinfo {volume} {2015}},\ \bibinfo {pages} {053C02} (\bibinfo {year}
  {2015})}\BibitemShut {NoStop}%
\bibitem [{\citenamefont {Collaboration}\ \emph {et~al.}(2016)\citenamefont
  {Collaboration}, \citenamefont {Acciarri} \emph
  {et~al.}}]{dunecollaboration2016longbaselineneutrinofacilitylbnf}%
  \BibitemOpen
  \bibfield  {author} {\bibinfo {author} {\bibfnamefont {D.}~\bibnamefont
  {Collaboration}}, \bibinfo {author} {\bibfnamefont {R.}~\bibnamefont
  {Acciarri}}, \emph {et~al.},\ }\href {https://arxiv.org/abs/1512.06148}
  {\bibinfo {title} {Long-baseline neutrino facility (lbnf) and deep
  underground neutrino experiment (dune) conceptual design report volume 2: The
  physics program for dune at lbnf}} (\bibinfo {year} {2016}),\ \Eprint
  {https://arxiv.org/abs/1512.06148} {arXiv:1512.06148 [physics.ins-det]}
  \BibitemShut {NoStop}%
\bibitem [{\citenamefont {Huber}\ \emph {et~al.}(2022)\citenamefont {Huber}
  \emph {et~al.}}]{huber2022snowmassneutrinofrontierreport}%
  \BibitemOpen
  \bibfield  {author} {\bibinfo {author} {\bibfnamefont {P.}~\bibnamefont
  {Huber}} \emph {et~al.},\ }\href {https://arxiv.org/abs/2211.08641} {\bibinfo
  {title} {Snowmass neutrino frontier report}} (\bibinfo {year} {2022}),\
  \Eprint {https://arxiv.org/abs/2211.08641} {arXiv:2211.08641 [hep-ex]}
  \BibitemShut {NoStop}%
\bibitem [{\citenamefont {Alvarez~Ruso}\ \emph {et~al.}(2025)\citenamefont
  {Alvarez~Ruso} \emph {et~al.}}]{Alvarez_Ruso_2025}%
  \BibitemOpen
  \bibfield  {author} {\bibinfo {author} {\bibfnamefont {L.}~\bibnamefont
  {Alvarez~Ruso}} \emph {et~al.},\ }\bibfield  {title} {\bibinfo {title}
  {Theoretical tools for neutrino scattering: interplay between lattice qcd,
  efts, nuclear physics, phenomenology, and neutrino event generators},\
  }\bibfield  {journal} {\bibinfo  {journal} {Journal of Physics G: Nuclear and
  Particle Physics}\ }\href {https://doi.org/10.1088/1361-6471/adae26}
  {10.1088/1361-6471/adae26} (\bibinfo {year} {2025})\BibitemShut {NoStop}%
\bibitem [{\citenamefont {Pastore}\ \emph {et~al.}(2018)\citenamefont
  {Pastore}, \citenamefont {Carlson}, \citenamefont {Cirigliano}, \citenamefont
  {Dekens}, \citenamefont {Mereghetti},\ and\ \citenamefont
  {Wiringa}}]{PhysRevC.97.014606}%
  \BibitemOpen
  \bibfield  {author} {\bibinfo {author} {\bibfnamefont {S.}~\bibnamefont
  {Pastore}}, \bibinfo {author} {\bibfnamefont {J.}~\bibnamefont {Carlson}},
  \bibinfo {author} {\bibfnamefont {V.}~\bibnamefont {Cirigliano}}, \bibinfo
  {author} {\bibfnamefont {W.}~\bibnamefont {Dekens}}, \bibinfo {author}
  {\bibfnamefont {E.}~\bibnamefont {Mereghetti}},\ and\ \bibinfo {author}
  {\bibfnamefont {R.~B.}\ \bibnamefont {Wiringa}},\ }\bibfield  {title}
  {\bibinfo {title} {Neutrinoless double-$\ensuremath{\beta}$ decay matrix
  elements in light nuclei},\ }\href
  {https://doi.org/10.1103/PhysRevC.97.014606} {\bibfield  {journal} {\bibinfo
  {journal} {Phys. Rev. C}\ }\textbf {\bibinfo {volume} {97}},\ \bibinfo
  {pages} {014606} (\bibinfo {year} {2018})}\BibitemShut {NoStop}%
\bibitem [{\citenamefont {Cirigliano}\ \emph {et~al.}(2022)\citenamefont
  {Cirigliano}, \citenamefont {Davoudi}, \citenamefont {Engel}, \citenamefont
  {Furnstahl}, \citenamefont {Hagen}, \citenamefont {Heinz}, \citenamefont
  {Hergert}, \citenamefont {Horoi}, \citenamefont {Johnson}, \citenamefont
  {Lovato}, \citenamefont {Mereghetti}, \citenamefont {Nazarewicz},
  \citenamefont {Nicholson}, \citenamefont {Papenbrock}, \citenamefont
  {Pastore}, \citenamefont {Plumlee}, \citenamefont {Phillips}, \citenamefont
  {Shanahan}, \citenamefont {Stroberg}, \citenamefont {Viens}, \citenamefont
  {Walker-Loud}, \citenamefont {Wendt},\ and\ \citenamefont
  {Wild}}]{Cirigliano_2022}%
  \BibitemOpen
  \bibfield  {author} {\bibinfo {author} {\bibfnamefont {V.}~\bibnamefont
  {Cirigliano}}, \bibinfo {author} {\bibfnamefont {Z.}~\bibnamefont {Davoudi}},
  \bibinfo {author} {\bibfnamefont {J.}~\bibnamefont {Engel}}, \bibinfo
  {author} {\bibfnamefont {R.~J.}\ \bibnamefont {Furnstahl}}, \bibinfo {author}
  {\bibfnamefont {G.}~\bibnamefont {Hagen}}, \bibinfo {author} {\bibfnamefont
  {U.}~\bibnamefont {Heinz}}, \bibinfo {author} {\bibfnamefont
  {H.}~\bibnamefont {Hergert}}, \bibinfo {author} {\bibfnamefont
  {M.}~\bibnamefont {Horoi}}, \bibinfo {author} {\bibfnamefont {C.~W.}\
  \bibnamefont {Johnson}}, \bibinfo {author} {\bibfnamefont {A.}~\bibnamefont
  {Lovato}}, \bibinfo {author} {\bibfnamefont {E.}~\bibnamefont {Mereghetti}},
  \bibinfo {author} {\bibfnamefont {W.}~\bibnamefont {Nazarewicz}}, \bibinfo
  {author} {\bibfnamefont {A.}~\bibnamefont {Nicholson}}, \bibinfo {author}
  {\bibfnamefont {T.}~\bibnamefont {Papenbrock}}, \bibinfo {author}
  {\bibfnamefont {S.}~\bibnamefont {Pastore}}, \bibinfo {author} {\bibfnamefont
  {M.}~\bibnamefont {Plumlee}}, \bibinfo {author} {\bibfnamefont {D.~R.}\
  \bibnamefont {Phillips}}, \bibinfo {author} {\bibfnamefont {P.~E.}\
  \bibnamefont {Shanahan}}, \bibinfo {author} {\bibfnamefont {S.~R.}\
  \bibnamefont {Stroberg}}, \bibinfo {author} {\bibfnamefont {F.}~\bibnamefont
  {Viens}}, \bibinfo {author} {\bibfnamefont {A.}~\bibnamefont {Walker-Loud}},
  \bibinfo {author} {\bibfnamefont {K.~A.}\ \bibnamefont {Wendt}},\ and\
  \bibinfo {author} {\bibfnamefont {S.~M.}\ \bibnamefont {Wild}},\ }\bibfield
  {title} {\bibinfo {title} {Towards precise and accurate calculations of
  neutrinoless double-beta decay},\ }\href
  {https://doi.org/10.1088/1361-6471/aca03e} {\bibfield  {journal} {\bibinfo
  {journal} {Journal of Physics G: Nuclear and Particle Physics}\ }\textbf
  {\bibinfo {volume} {49}},\ \bibinfo {pages} {120502} (\bibinfo {year}
  {2022})}\BibitemShut {NoStop}%
\bibitem [{\citenamefont {Epelbaum}\ \emph {et~al.}(2009)\citenamefont
  {Epelbaum}, \citenamefont {Hammer},\ and\ \citenamefont
  {Mei\ss{}ner}}]{RevModPhys.81.1773}%
  \BibitemOpen
  \bibfield  {author} {\bibinfo {author} {\bibfnamefont {E.}~\bibnamefont
  {Epelbaum}}, \bibinfo {author} {\bibfnamefont {H.-W.}\ \bibnamefont
  {Hammer}},\ and\ \bibinfo {author} {\bibfnamefont {U.-G.}\ \bibnamefont
  {Mei\ss{}ner}},\ }\bibfield  {title} {\bibinfo {title} {Modern theory of
  nuclear forces},\ }\href {https://doi.org/10.1103/RevModPhys.81.1773}
  {\bibfield  {journal} {\bibinfo  {journal} {Rev. Mod. Phys.}\ }\textbf
  {\bibinfo {volume} {81}},\ \bibinfo {pages} {1773} (\bibinfo {year}
  {2009})}\BibitemShut {NoStop}%
\bibitem [{\citenamefont {Epelbaum}\ \emph {et~al.}(2015)\citenamefont
  {Epelbaum}, \citenamefont {Krebs},\ and\ \citenamefont
  {Meißner}}]{Epelbaum2015}%
  \BibitemOpen
  \bibfield  {author} {\bibinfo {author} {\bibfnamefont {E.}~\bibnamefont
  {Epelbaum}}, \bibinfo {author} {\bibfnamefont {H.}~\bibnamefont {Krebs}},\
  and\ \bibinfo {author} {\bibfnamefont {U.~G.}\ \bibnamefont {Meißner}},\
  }\bibfield  {title} {\bibinfo {title} {Improved chiral nucleon-nucleon
  potential up to next-to-next-to-next-to-leading order},\ }\href
  {https://doi.org/https://doi.org/10.1140/epja/i2015-15053-8} {\bibfield
  {journal} {\bibinfo  {journal} {The European Physical Journal A}\ }\textbf
  {\bibinfo {volume} {51}},\ \bibinfo {pages} {53} (\bibinfo {year}
  {2015})}\BibitemShut {NoStop}%
\bibitem [{\citenamefont {Drischler}\ \emph {et~al.}(2020)\citenamefont
  {Drischler}, \citenamefont {Furnstahl}, \citenamefont {Melendez},\ and\
  \citenamefont {Phillips}}]{PhysRevLett.125.202702}%
  \BibitemOpen
  \bibfield  {author} {\bibinfo {author} {\bibfnamefont {C.}~\bibnamefont
  {Drischler}}, \bibinfo {author} {\bibfnamefont {R.~J.}\ \bibnamefont
  {Furnstahl}}, \bibinfo {author} {\bibfnamefont {J.~A.}\ \bibnamefont
  {Melendez}},\ and\ \bibinfo {author} {\bibfnamefont {D.~R.}\ \bibnamefont
  {Phillips}},\ }\bibfield  {title} {\bibinfo {title} {How well do we know the
  neutron-matter equation of state at the densities inside neutron stars? a
  bayesian approach with correlated uncertainties},\ }\href
  {https://doi.org/10.1103/PhysRevLett.125.202702} {\bibfield  {journal}
  {\bibinfo  {journal} {Phys. Rev. Lett.}\ }\textbf {\bibinfo {volume} {125}},\
  \bibinfo {pages} {202702} (\bibinfo {year} {2020})}\BibitemShut {NoStop}%
\bibitem [{\citenamefont {Hergert}(2020)}]{10.3389/fphy.2020.00379}%
  \BibitemOpen
  \bibfield  {author} {\bibinfo {author} {\bibfnamefont {H.}~\bibnamefont
  {Hergert}},\ }\bibfield  {title} {\bibinfo {title} {A guided tour of ab
  initio nuclear many-body theory},\ }\bibfield  {journal} {\bibinfo  {journal}
  {Frontiers in Physics}\ }\textbf {\bibinfo {volume} {8}},\ \href
  {https://doi.org/10.3389/fphy.2020.00379} {10.3389/fphy.2020.00379} (\bibinfo
  {year} {2020})\BibitemShut {NoStop}%
\bibitem [{\citenamefont {Ekström}\ \emph {et~al.}(2023)\citenamefont
  {Ekström}, \citenamefont {Forssén}, \citenamefont {Hagen}, \citenamefont
  {Jansen}, \citenamefont {Jiang},\ and\ \citenamefont
  {Papenbrock}}]{10.3389/fphy.2023.1129094}%
  \BibitemOpen
  \bibfield  {author} {\bibinfo {author} {\bibfnamefont {A.}~\bibnamefont
  {Ekström}}, \bibinfo {author} {\bibfnamefont {C.}~\bibnamefont {Forssén}},
  \bibinfo {author} {\bibfnamefont {G.}~\bibnamefont {Hagen}}, \bibinfo
  {author} {\bibfnamefont {G.~R.}\ \bibnamefont {Jansen}}, \bibinfo {author}
  {\bibfnamefont {W.}~\bibnamefont {Jiang}},\ and\ \bibinfo {author}
  {\bibfnamefont {T.}~\bibnamefont {Papenbrock}},\ }\bibfield  {title}
  {\bibinfo {title} {What is ab initio in nuclear theory?},\ }\bibfield
  {journal} {\bibinfo  {journal} {Frontiers in Physics}\ }\textbf {\bibinfo
  {volume} {11}},\ \href {https://doi.org/10.3389/fphy.2023.1129094}
  {10.3389/fphy.2023.1129094} (\bibinfo {year} {2023})\BibitemShut {NoStop}%
\bibitem [{\citenamefont {Navrátil}\ \emph {et~al.}(2016)\citenamefont
  {Navrátil}, \citenamefont {Quaglioni}, \citenamefont {Hupin}, \citenamefont
  {Romero-Redondo},\ and\ \citenamefont {Calci}}]{Navrátil_2016}%
  \BibitemOpen
  \bibfield  {author} {\bibinfo {author} {\bibfnamefont {P.}~\bibnamefont
  {Navrátil}}, \bibinfo {author} {\bibfnamefont {S.}~\bibnamefont
  {Quaglioni}}, \bibinfo {author} {\bibfnamefont {G.}~\bibnamefont {Hupin}},
  \bibinfo {author} {\bibfnamefont {C.}~\bibnamefont {Romero-Redondo}},\ and\
  \bibinfo {author} {\bibfnamefont {A.}~\bibnamefont {Calci}},\ }\bibfield
  {title} {\bibinfo {title} {Unified ab initio approaches to nuclear structure
  and reactions},\ }\href {https://doi.org/10.1088/0031-8949/91/5/053002}
  {\bibfield  {journal} {\bibinfo  {journal} {Physica Scripta}\ }\textbf
  {\bibinfo {volume} {91}},\ \bibinfo {pages} {053002} (\bibinfo {year}
  {2016})}\BibitemShut {NoStop}%
\bibitem [{\citenamefont {Johnson}\ \emph {et~al.}(2020)\citenamefont
  {Johnson}, \citenamefont {Launey}, \citenamefont {Auerbach}, \citenamefont
  {Bacca}, \citenamefont {Barrett}, \citenamefont {R~Brune}, \citenamefont
  {Caprio}, \citenamefont {Descouvemont}, \citenamefont {Dickhoff},
  \citenamefont {Elster}, \citenamefont {Fasano}, \citenamefont {Fossez},
  \citenamefont {Hergert}, \citenamefont {Hjorth-Jensen}, \citenamefont
  {Hlophe}, \citenamefont {Hu}, \citenamefont {Id~Betan}, \citenamefont
  {Idini}, \citenamefont {König}, \citenamefont {Kravvaris}, \citenamefont
  {Lee}, \citenamefont {Lei}, \citenamefont {Mercenne}, \citenamefont {Perez},
  \citenamefont {Nazarewicz}, \citenamefont {Nunes}, \citenamefont
  {Płoszajczak}, \citenamefont {Rotureau}, \citenamefont {Rupak},
  \citenamefont {Shirokov}, \citenamefont {Thompson}, \citenamefont {Vary},
  \citenamefont {Volya}, \citenamefont {Xu}, \citenamefont {Zegers},
  \citenamefont {Zelevinsky},\ and\ \citenamefont {Zhang}}]{Johnson_2020}%
  \BibitemOpen
  \bibfield  {author} {\bibinfo {author} {\bibfnamefont {C.~W.}\ \bibnamefont
  {Johnson}}, \bibinfo {author} {\bibfnamefont {K.~D.}\ \bibnamefont {Launey}},
  \bibinfo {author} {\bibfnamefont {N.}~\bibnamefont {Auerbach}}, \bibinfo
  {author} {\bibfnamefont {S.}~\bibnamefont {Bacca}}, \bibinfo {author}
  {\bibfnamefont {B.~R.}\ \bibnamefont {Barrett}}, \bibinfo {author}
  {\bibfnamefont {C.}~\bibnamefont {R~Brune}}, \bibinfo {author} {\bibfnamefont
  {M.~A.}\ \bibnamefont {Caprio}}, \bibinfo {author} {\bibfnamefont
  {P.}~\bibnamefont {Descouvemont}}, \bibinfo {author} {\bibfnamefont {W.~H.}\
  \bibnamefont {Dickhoff}}, \bibinfo {author} {\bibfnamefont {C.}~\bibnamefont
  {Elster}}, \bibinfo {author} {\bibfnamefont {P.~J.}\ \bibnamefont {Fasano}},
  \bibinfo {author} {\bibfnamefont {K.}~\bibnamefont {Fossez}}, \bibinfo
  {author} {\bibfnamefont {H.}~\bibnamefont {Hergert}}, \bibinfo {author}
  {\bibfnamefont {M.}~\bibnamefont {Hjorth-Jensen}}, \bibinfo {author}
  {\bibfnamefont {L.}~\bibnamefont {Hlophe}}, \bibinfo {author} {\bibfnamefont
  {B.}~\bibnamefont {Hu}}, \bibinfo {author} {\bibfnamefont {R.~M.}\
  \bibnamefont {Id~Betan}}, \bibinfo {author} {\bibfnamefont {A.}~\bibnamefont
  {Idini}}, \bibinfo {author} {\bibfnamefont {S.}~\bibnamefont {König}},
  \bibinfo {author} {\bibfnamefont {K.}~\bibnamefont {Kravvaris}}, \bibinfo
  {author} {\bibfnamefont {D.}~\bibnamefont {Lee}}, \bibinfo {author}
  {\bibfnamefont {J.}~\bibnamefont {Lei}}, \bibinfo {author} {\bibfnamefont
  {A.}~\bibnamefont {Mercenne}}, \bibinfo {author} {\bibfnamefont {R.~N.}\
  \bibnamefont {Perez}}, \bibinfo {author} {\bibfnamefont {W.}~\bibnamefont
  {Nazarewicz}}, \bibinfo {author} {\bibfnamefont {F.~M.}\ \bibnamefont
  {Nunes}}, \bibinfo {author} {\bibfnamefont {M.}~\bibnamefont {Płoszajczak}},
  \bibinfo {author} {\bibfnamefont {J.}~\bibnamefont {Rotureau}}, \bibinfo
  {author} {\bibfnamefont {G.}~\bibnamefont {Rupak}}, \bibinfo {author}
  {\bibfnamefont {A.~M.}\ \bibnamefont {Shirokov}}, \bibinfo {author}
  {\bibfnamefont {I.}~\bibnamefont {Thompson}}, \bibinfo {author}
  {\bibfnamefont {J.~P.}\ \bibnamefont {Vary}}, \bibinfo {author}
  {\bibfnamefont {A.}~\bibnamefont {Volya}}, \bibinfo {author} {\bibfnamefont
  {F.}~\bibnamefont {Xu}}, \bibinfo {author} {\bibfnamefont {R.~G.~T.}\
  \bibnamefont {Zegers}}, \bibinfo {author} {\bibfnamefont {V.}~\bibnamefont
  {Zelevinsky}},\ and\ \bibinfo {author} {\bibfnamefont {X.}~\bibnamefont
  {Zhang}},\ }\bibfield  {title} {\bibinfo {title} {White paper: from bound
  states to the continuum},\ }\href {https://doi.org/10.1088/1361-6471/abb129}
  {\bibfield  {journal} {\bibinfo  {journal} {Journal of Physics G: Nuclear and
  Particle Physics}\ }\textbf {\bibinfo {volume} {47}},\ \bibinfo {pages}
  {123001} (\bibinfo {year} {2020})}\BibitemShut {NoStop}%
\bibitem [{\citenamefont {Launey}\ \emph {et~al.}(2021)\citenamefont {Launey},
  \citenamefont {Mercenne},\ and\ \citenamefont
  {Dytrych}}]{annurev:/content/journals/10.1146/annurev-nucl-102419-033316}%
  \BibitemOpen
  \bibfield  {author} {\bibinfo {author} {\bibfnamefont {K.~D.}\ \bibnamefont
  {Launey}}, \bibinfo {author} {\bibfnamefont {A.}~\bibnamefont {Mercenne}},\
  and\ \bibinfo {author} {\bibfnamefont {T.}~\bibnamefont {Dytrych}},\
  }\bibfield  {title} {\bibinfo {title} {Nuclear dynamics and reactions in the
  ab initio symmetry-adapted framework},\ }\href
  {https://doi.org/https://doi.org/10.1146/annurev-nucl-102419-033316}
  {\bibfield  {journal} {\bibinfo  {journal} {Annual Review of Nuclear and
  Particle Science}\ }\textbf {\bibinfo {volume} {71}},\ \bibinfo {pages} {253}
  (\bibinfo {year} {2021})}\BibitemShut {NoStop}%
\bibitem [{\citenamefont {Lovato}\ \emph {et~al.}(2018)\citenamefont {Lovato},
  \citenamefont {Gandolfi}, \citenamefont {Carlson}, \citenamefont {Lusk},
  \citenamefont {Pieper},\ and\ \citenamefont
  {Schiavilla}}]{PhysRevC.97.022502}%
  \BibitemOpen
  \bibfield  {author} {\bibinfo {author} {\bibfnamefont {A.}~\bibnamefont
  {Lovato}}, \bibinfo {author} {\bibfnamefont {S.}~\bibnamefont {Gandolfi}},
  \bibinfo {author} {\bibfnamefont {J.}~\bibnamefont {Carlson}}, \bibinfo
  {author} {\bibfnamefont {E.}~\bibnamefont {Lusk}}, \bibinfo {author}
  {\bibfnamefont {S.~C.}\ \bibnamefont {Pieper}},\ and\ \bibinfo {author}
  {\bibfnamefont {R.}~\bibnamefont {Schiavilla}},\ }\bibfield  {title}
  {\bibinfo {title} {Quantum monte carlo calculation of neutral-current
  $\ensuremath{\nu}\ensuremath{-}^{12}\mathrm{C}$ inclusive quasielastic
  scattering},\ }\href {https://doi.org/10.1103/PhysRevC.97.022502} {\bibfield
  {journal} {\bibinfo  {journal} {Phys. Rev. C}\ }\textbf {\bibinfo {volume}
  {97}},\ \bibinfo {pages} {022502} (\bibinfo {year} {2018})}\BibitemShut
  {NoStop}%
\bibitem [{\citenamefont {Rocco}\ and\ \citenamefont
  {Barbieri}(2018)}]{PhysRevC.98.025501}%
  \BibitemOpen
  \bibfield  {author} {\bibinfo {author} {\bibfnamefont {N.}~\bibnamefont
  {Rocco}}\ and\ \bibinfo {author} {\bibfnamefont {C.}~\bibnamefont
  {Barbieri}},\ }\bibfield  {title} {\bibinfo {title} {Inclusive
  electron-nucleus cross section within the self-consistent green's function
  approach},\ }\href {https://doi.org/10.1103/PhysRevC.98.025501} {\bibfield
  {journal} {\bibinfo  {journal} {Phys. Rev. C}\ }\textbf {\bibinfo {volume}
  {98}},\ \bibinfo {pages} {025501} (\bibinfo {year} {2018})}\BibitemShut
  {NoStop}%
\bibitem [{\citenamefont {Barbieri}\ \emph {et~al.}(2019)\citenamefont
  {Barbieri}, \citenamefont {Rocco},\ and\ \citenamefont
  {Som\`a}}]{PhysRevC.100.062501}%
  \BibitemOpen
  \bibfield  {author} {\bibinfo {author} {\bibfnamefont {C.}~\bibnamefont
  {Barbieri}}, \bibinfo {author} {\bibfnamefont {N.}~\bibnamefont {Rocco}},\
  and\ \bibinfo {author} {\bibfnamefont {V.}~\bibnamefont {Som\`a}},\
  }\bibfield  {title} {\bibinfo {title} {Lepton scattering from
  $^{40}\mathrm{Ar}$ and $^{48}\mathrm{Ti}$ in the quasielastic peak region},\
  }\href {https://doi.org/10.1103/PhysRevC.100.062501} {\bibfield  {journal}
  {\bibinfo  {journal} {Phys. Rev. C}\ }\textbf {\bibinfo {volume} {100}},\
  \bibinfo {pages} {062501} (\bibinfo {year} {2019})}\BibitemShut {NoStop}%
\bibitem [{\citenamefont {Lovato}\ \emph {et~al.}(2020)\citenamefont {Lovato},
  \citenamefont {Carlson}, \citenamefont {Gandolfi}, \citenamefont {Rocco},\
  and\ \citenamefont {Schiavilla}}]{PhysRevX.10.031068}%
  \BibitemOpen
  \bibfield  {author} {\bibinfo {author} {\bibfnamefont {A.}~\bibnamefont
  {Lovato}}, \bibinfo {author} {\bibfnamefont {J.}~\bibnamefont {Carlson}},
  \bibinfo {author} {\bibfnamefont {S.}~\bibnamefont {Gandolfi}}, \bibinfo
  {author} {\bibfnamefont {N.}~\bibnamefont {Rocco}},\ and\ \bibinfo {author}
  {\bibfnamefont {R.}~\bibnamefont {Schiavilla}},\ }\bibfield  {title}
  {\bibinfo {title} {Ab initio study of
  $({\ensuremath{\nu}}_{\ensuremath{\ell}},{\ensuremath{\ell}}^{\ensuremath{-}})$
  and
  $({\overline{\ensuremath{\nu}}}_{\ensuremath{\ell}},{\ensuremath{\ell}}^{+})$
  inclusive scattering in $^{12}\mathrm{C}$: Confronting the miniboone and t2k
  ccqe data},\ }\href {https://doi.org/10.1103/PhysRevX.10.031068} {\bibfield
  {journal} {\bibinfo  {journal} {Phys. Rev. X}\ }\textbf {\bibinfo {volume}
  {10}},\ \bibinfo {pages} {031068} (\bibinfo {year} {2020})}\BibitemShut
  {NoStop}%
\bibitem [{\citenamefont {Pastore}\ \emph {et~al.}(2020)\citenamefont
  {Pastore}, \citenamefont {Carlson}, \citenamefont {Gandolfi}, \citenamefont
  {Schiavilla},\ and\ \citenamefont {Wiringa}}]{PhysRevC.101.044612}%
  \BibitemOpen
  \bibfield  {author} {\bibinfo {author} {\bibfnamefont {S.}~\bibnamefont
  {Pastore}}, \bibinfo {author} {\bibfnamefont {J.}~\bibnamefont {Carlson}},
  \bibinfo {author} {\bibfnamefont {S.}~\bibnamefont {Gandolfi}}, \bibinfo
  {author} {\bibfnamefont {R.}~\bibnamefont {Schiavilla}},\ and\ \bibinfo
  {author} {\bibfnamefont {R.~B.}\ \bibnamefont {Wiringa}},\ }\bibfield
  {title} {\bibinfo {title} {Quasielastic lepton scattering and back-to-back
  nucleons in the short-time approximation},\ }\href
  {https://doi.org/10.1103/PhysRevC.101.044612} {\bibfield  {journal} {\bibinfo
   {journal} {Phys. Rev. C}\ }\textbf {\bibinfo {volume} {101}},\ \bibinfo
  {pages} {044612} (\bibinfo {year} {2020})}\BibitemShut {NoStop}%
\bibitem [{\citenamefont {Andreoli}\ \emph {et~al.}(2022)\citenamefont
  {Andreoli}, \citenamefont {Carlson}, \citenamefont {Lovato}, \citenamefont
  {Pastore}, \citenamefont {Rocco},\ and\ \citenamefont
  {Wiringa}}]{PhysRevC.105.014002}%
  \BibitemOpen
  \bibfield  {author} {\bibinfo {author} {\bibfnamefont {L.}~\bibnamefont
  {Andreoli}}, \bibinfo {author} {\bibfnamefont {J.}~\bibnamefont {Carlson}},
  \bibinfo {author} {\bibfnamefont {A.}~\bibnamefont {Lovato}}, \bibinfo
  {author} {\bibfnamefont {S.}~\bibnamefont {Pastore}}, \bibinfo {author}
  {\bibfnamefont {N.}~\bibnamefont {Rocco}},\ and\ \bibinfo {author}
  {\bibfnamefont {R.~B.}\ \bibnamefont {Wiringa}},\ }\bibfield  {title}
  {\bibinfo {title} {Electron scattering on $a=3$ nuclei from quantum monte
  carlo based approaches},\ }\href
  {https://doi.org/10.1103/PhysRevC.105.014002} {\bibfield  {journal} {\bibinfo
   {journal} {Phys. Rev. C}\ }\textbf {\bibinfo {volume} {105}},\ \bibinfo
  {pages} {014002} (\bibinfo {year} {2022})}\BibitemShut {NoStop}%
\bibitem [{\citenamefont {Sobczyk}\ \emph {et~al.}(2021)\citenamefont
  {Sobczyk}, \citenamefont {Acharya}, \citenamefont {Bacca},\ and\
  \citenamefont {Hagen}}]{PhysRevLett.127.072501}%
  \BibitemOpen
  \bibfield  {author} {\bibinfo {author} {\bibfnamefont {J.~E.}\ \bibnamefont
  {Sobczyk}}, \bibinfo {author} {\bibfnamefont {B.}~\bibnamefont {Acharya}},
  \bibinfo {author} {\bibfnamefont {S.}~\bibnamefont {Bacca}},\ and\ \bibinfo
  {author} {\bibfnamefont {G.}~\bibnamefont {Hagen}},\ }\bibfield  {title}
  {\bibinfo {title} {Ab initio computation of the longitudinal response
  function in $^{40}\mathrm{Ca}$},\ }\href
  {https://doi.org/10.1103/PhysRevLett.127.072501} {\bibfield  {journal}
  {\bibinfo  {journal} {Phys. Rev. Lett.}\ }\textbf {\bibinfo {volume} {127}},\
  \bibinfo {pages} {072501} (\bibinfo {year} {2021})}\BibitemShut {NoStop}%
\bibitem [{\citenamefont {Sobczyk}\ and\ \citenamefont
  {Bacca}(2024)}]{PhysRevC.109.044314}%
  \BibitemOpen
  \bibfield  {author} {\bibinfo {author} {\bibfnamefont {J.~E.}\ \bibnamefont
  {Sobczyk}}\ and\ \bibinfo {author} {\bibfnamefont {S.}~\bibnamefont
  {Bacca}},\ }\bibfield  {title} {\bibinfo {title} {$^{16}\mathrm{O}$ spectral
  function from coupled-cluster theory: Applications to lepton-nucleus
  scattering},\ }\href {https://doi.org/10.1103/PhysRevC.109.044314} {\bibfield
   {journal} {\bibinfo  {journal} {Phys. Rev. C}\ }\textbf {\bibinfo {volume}
  {109}},\ \bibinfo {pages} {044314} (\bibinfo {year} {2024})}\BibitemShut
  {NoStop}%
\bibitem [{\citenamefont {Andreoli}\ \emph {et~al.}(2024)\citenamefont
  {Andreoli}, \citenamefont {King}, \citenamefont {Pastore}, \citenamefont
  {Piarulli}, \citenamefont {Carlson}, \citenamefont {Gandolfi},\ and\
  \citenamefont {Wiringa}}]{PhysRevC.110.064004}%
  \BibitemOpen
  \bibfield  {author} {\bibinfo {author} {\bibfnamefont {L.}~\bibnamefont
  {Andreoli}}, \bibinfo {author} {\bibfnamefont {G.~B.}\ \bibnamefont {King}},
  \bibinfo {author} {\bibfnamefont {S.}~\bibnamefont {Pastore}}, \bibinfo
  {author} {\bibfnamefont {M.}~\bibnamefont {Piarulli}}, \bibinfo {author}
  {\bibfnamefont {J.}~\bibnamefont {Carlson}}, \bibinfo {author} {\bibfnamefont
  {S.}~\bibnamefont {Gandolfi}},\ and\ \bibinfo {author} {\bibfnamefont
  {R.~B.}\ \bibnamefont {Wiringa}},\ }\bibfield  {title} {\bibinfo {title}
  {Quantum monte carlo calculations of electron scattering from
  $^{12}\mathrm{C}$ in the short-time approximation},\ }\href
  {https://doi.org/10.1103/PhysRevC.110.064004} {\bibfield  {journal} {\bibinfo
   {journal} {Phys. Rev. C}\ }\textbf {\bibinfo {volume} {110}},\ \bibinfo
  {pages} {064004} (\bibinfo {year} {2024})}\BibitemShut {NoStop}%
\bibitem [{\citenamefont {Cloët}\ \emph {et~al.}(2019)\citenamefont {Cloët},
  \citenamefont {Dietrich}, \citenamefont {Arrington}, \citenamefont {Bazavov},
  \citenamefont {Bishof}, \citenamefont {Freese}, \citenamefont {Gorshkov},
  \citenamefont {Grassellino}, \citenamefont {Hafidi}, \citenamefont {Jacob},
  \citenamefont {McGuigan}, \citenamefont {Meurice}, \citenamefont {Meziani},
  \citenamefont {Mueller}, \citenamefont {Muschik}, \citenamefont {Osborn},
  \citenamefont {Otten}, \citenamefont {Petreczky}, \citenamefont {Polakovic},
  \citenamefont {Poon}, \citenamefont {Pooser}, \citenamefont {Roggero},
  \citenamefont {Saffman}, \citenamefont {VanDevender}, \citenamefont {Zhang},\
  and\ \citenamefont {Zohar}}]{cloët2019opportunitiesnuclearphysics}%
  \BibitemOpen
  \bibfield  {author} {\bibinfo {author} {\bibfnamefont {I.~C.}\ \bibnamefont
  {Cloët}}, \bibinfo {author} {\bibfnamefont {M.~R.}\ \bibnamefont
  {Dietrich}}, \bibinfo {author} {\bibfnamefont {J.}~\bibnamefont {Arrington}},
  \bibinfo {author} {\bibfnamefont {A.}~\bibnamefont {Bazavov}}, \bibinfo
  {author} {\bibfnamefont {M.}~\bibnamefont {Bishof}}, \bibinfo {author}
  {\bibfnamefont {A.}~\bibnamefont {Freese}}, \bibinfo {author} {\bibfnamefont
  {A.~V.}\ \bibnamefont {Gorshkov}}, \bibinfo {author} {\bibfnamefont
  {A.}~\bibnamefont {Grassellino}}, \bibinfo {author} {\bibfnamefont
  {K.}~\bibnamefont {Hafidi}}, \bibinfo {author} {\bibfnamefont
  {Z.}~\bibnamefont {Jacob}}, \bibinfo {author} {\bibfnamefont
  {M.}~\bibnamefont {McGuigan}}, \bibinfo {author} {\bibfnamefont
  {Y.}~\bibnamefont {Meurice}}, \bibinfo {author} {\bibfnamefont {Z.-E.}\
  \bibnamefont {Meziani}}, \bibinfo {author} {\bibfnamefont {P.}~\bibnamefont
  {Mueller}}, \bibinfo {author} {\bibfnamefont {C.}~\bibnamefont {Muschik}},
  \bibinfo {author} {\bibfnamefont {J.}~\bibnamefont {Osborn}}, \bibinfo
  {author} {\bibfnamefont {M.}~\bibnamefont {Otten}}, \bibinfo {author}
  {\bibfnamefont {P.}~\bibnamefont {Petreczky}}, \bibinfo {author}
  {\bibfnamefont {T.}~\bibnamefont {Polakovic}}, \bibinfo {author}
  {\bibfnamefont {A.}~\bibnamefont {Poon}}, \bibinfo {author} {\bibfnamefont
  {R.}~\bibnamefont {Pooser}}, \bibinfo {author} {\bibfnamefont
  {A.}~\bibnamefont {Roggero}}, \bibinfo {author} {\bibfnamefont
  {M.}~\bibnamefont {Saffman}}, \bibinfo {author} {\bibfnamefont
  {B.}~\bibnamefont {VanDevender}}, \bibinfo {author} {\bibfnamefont
  {J.}~\bibnamefont {Zhang}},\ and\ \bibinfo {author} {\bibfnamefont
  {E.}~\bibnamefont {Zohar}},\ }\href {https://arxiv.org/abs/1903.05453}
  {\bibinfo {title} {Opportunities for nuclear physics \& quantum information
  science}} (\bibinfo {year} {2019}),\ \Eprint
  {https://arxiv.org/abs/1903.05453} {arXiv:1903.05453 [nucl-th]} \BibitemShut
  {NoStop}%
\bibitem [{\citenamefont {Klco}\ \emph {et~al.}(2022)\citenamefont {Klco},
  \citenamefont {Roggero},\ and\ \citenamefont {Savage}}]{Klco_2022}%
  \BibitemOpen
  \bibfield  {author} {\bibinfo {author} {\bibfnamefont {N.}~\bibnamefont
  {Klco}}, \bibinfo {author} {\bibfnamefont {A.}~\bibnamefont {Roggero}},\ and\
  \bibinfo {author} {\bibfnamefont {M.~J.}\ \bibnamefont {Savage}},\ }\bibfield
   {title} {\bibinfo {title} {Standard model physics and the digital quantum
  revolution: thoughts about the interface},\ }\href
  {https://doi.org/10.1088/1361-6633/ac58a4} {\bibfield  {journal} {\bibinfo
  {journal} {Reports on Progress in Physics}\ }\textbf {\bibinfo {volume}
  {85}},\ \bibinfo {pages} {064301} (\bibinfo {year} {2022})}\BibitemShut
  {NoStop}%
\bibitem [{\citenamefont {Bauer}\ \emph {et~al.}(2023)\citenamefont {Bauer},
  \citenamefont {Davoudi}, \citenamefont {Balantekin}, \citenamefont
  {Bhattacharya}, \citenamefont {Carena}, \citenamefont {de~Jong},
  \citenamefont {Draper}, \citenamefont {El-Khadra}, \citenamefont {Gemelke},
  \citenamefont {Hanada}, \citenamefont {Kharzeev}, \citenamefont {Lamm},
  \citenamefont {Li}, \citenamefont {Liu}, \citenamefont {Lukin}, \citenamefont
  {Meurice}, \citenamefont {Monroe}, \citenamefont {Nachman}, \citenamefont
  {Pagano}, \citenamefont {Preskill}, \citenamefont {Rinaldi}, \citenamefont
  {Roggero}, \citenamefont {Santiago}, \citenamefont {Savage}, \citenamefont
  {Siddiqi}, \citenamefont {Siopsis}, \citenamefont {Van~Zanten}, \citenamefont
  {Wiebe}, \citenamefont {Yamauchi}, \citenamefont {Yeter-Aydeniz},\ and\
  \citenamefont {Zorzetti}}]{PRXQuantum.4.027001}%
  \BibitemOpen
  \bibfield  {author} {\bibinfo {author} {\bibfnamefont {C.~W.}\ \bibnamefont
  {Bauer}}, \bibinfo {author} {\bibfnamefont {Z.}~\bibnamefont {Davoudi}},
  \bibinfo {author} {\bibfnamefont {A.~B.}\ \bibnamefont {Balantekin}},
  \bibinfo {author} {\bibfnamefont {T.}~\bibnamefont {Bhattacharya}}, \bibinfo
  {author} {\bibfnamefont {M.}~\bibnamefont {Carena}}, \bibinfo {author}
  {\bibfnamefont {W.~A.}\ \bibnamefont {de~Jong}}, \bibinfo {author}
  {\bibfnamefont {P.}~\bibnamefont {Draper}}, \bibinfo {author} {\bibfnamefont
  {A.}~\bibnamefont {El-Khadra}}, \bibinfo {author} {\bibfnamefont
  {N.}~\bibnamefont {Gemelke}}, \bibinfo {author} {\bibfnamefont
  {M.}~\bibnamefont {Hanada}}, \bibinfo {author} {\bibfnamefont
  {D.}~\bibnamefont {Kharzeev}}, \bibinfo {author} {\bibfnamefont
  {H.}~\bibnamefont {Lamm}}, \bibinfo {author} {\bibfnamefont {Y.-Y.}\
  \bibnamefont {Li}}, \bibinfo {author} {\bibfnamefont {J.}~\bibnamefont
  {Liu}}, \bibinfo {author} {\bibfnamefont {M.}~\bibnamefont {Lukin}}, \bibinfo
  {author} {\bibfnamefont {Y.}~\bibnamefont {Meurice}}, \bibinfo {author}
  {\bibfnamefont {C.}~\bibnamefont {Monroe}}, \bibinfo {author} {\bibfnamefont
  {B.}~\bibnamefont {Nachman}}, \bibinfo {author} {\bibfnamefont
  {G.}~\bibnamefont {Pagano}}, \bibinfo {author} {\bibfnamefont
  {J.}~\bibnamefont {Preskill}}, \bibinfo {author} {\bibfnamefont
  {E.}~\bibnamefont {Rinaldi}}, \bibinfo {author} {\bibfnamefont
  {A.}~\bibnamefont {Roggero}}, \bibinfo {author} {\bibfnamefont {D.~I.}\
  \bibnamefont {Santiago}}, \bibinfo {author} {\bibfnamefont {M.~J.}\
  \bibnamefont {Savage}}, \bibinfo {author} {\bibfnamefont {I.}~\bibnamefont
  {Siddiqi}}, \bibinfo {author} {\bibfnamefont {G.}~\bibnamefont {Siopsis}},
  \bibinfo {author} {\bibfnamefont {D.}~\bibnamefont {Van~Zanten}}, \bibinfo
  {author} {\bibfnamefont {N.}~\bibnamefont {Wiebe}}, \bibinfo {author}
  {\bibfnamefont {Y.}~\bibnamefont {Yamauchi}}, \bibinfo {author}
  {\bibfnamefont {K.}~\bibnamefont {Yeter-Aydeniz}},\ and\ \bibinfo {author}
  {\bibfnamefont {S.}~\bibnamefont {Zorzetti}},\ }\bibfield  {title} {\bibinfo
  {title} {Quantum simulation for high-energy physics},\ }\href
  {https://doi.org/10.1103/PRXQuantum.4.027001} {\bibfield  {journal} {\bibinfo
   {journal} {PRX Quantum}\ }\textbf {\bibinfo {volume} {4}},\ \bibinfo {pages}
  {027001} (\bibinfo {year} {2023})}\BibitemShut {NoStop}%
\bibitem [{\citenamefont {Ayral}\ \emph {et~al.}(2023)\citenamefont {Ayral},
  \citenamefont {Besserve}, \citenamefont {Lacroix},\ and\ \citenamefont
  {Ruiz~Guzman}}]{Ayral2023}%
  \BibitemOpen
  \bibfield  {author} {\bibinfo {author} {\bibfnamefont {T.}~\bibnamefont
  {Ayral}}, \bibinfo {author} {\bibfnamefont {P.}~\bibnamefont {Besserve}},
  \bibinfo {author} {\bibfnamefont {D.}~\bibnamefont {Lacroix}},\ and\ \bibinfo
  {author} {\bibfnamefont {E.~A.}\ \bibnamefont {Ruiz~Guzman}},\ }\bibfield
  {title} {\bibinfo {title} {Quantum computing with and for many-body
  physics},\ }\href {https://doi.org/10.1140/epja/s10050-023-01141-1}
  {\bibfield  {journal} {\bibinfo  {journal} {The European Physical Journal A}\
  }\textbf {\bibinfo {volume} {59}},\ \bibinfo {pages} {227} (\bibinfo {year}
  {2023})}\BibitemShut {NoStop}%
\bibitem [{\citenamefont {Beck}\ \emph {et~al.}(2023)\citenamefont {Beck},
  \citenamefont {Carlson}, \citenamefont {Davoudi}, \citenamefont {Formaggio},
  \citenamefont {Quaglioni}, \citenamefont {Savage}, \citenamefont {Barata},
  \citenamefont {Bhattacharya}, \citenamefont {Bishof}, \citenamefont {Cloet},
  \citenamefont {Delgado}, \citenamefont {DeMarco}, \citenamefont {Fink},
  \citenamefont {Florio}, \citenamefont {Francois}, \citenamefont {Grabowska},
  \citenamefont {Hoogerheide}, \citenamefont {Huang}, \citenamefont {Ikeda},
  \citenamefont {Illa}, \citenamefont {Joo}, \citenamefont {Kharzeev},
  \citenamefont {Kowalski}, \citenamefont {Lai}, \citenamefont {Leach},
  \citenamefont {Loer}, \citenamefont {Low}, \citenamefont {Martin},
  \citenamefont {Moore}, \citenamefont {Mehen}, \citenamefont {Mueller},
  \citenamefont {Mulligan}, \citenamefont {Mumm}, \citenamefont {Pederiva},
  \citenamefont {Pisarski}, \citenamefont {Ploskon}, \citenamefont {Reddy},
  \citenamefont {Rupak}, \citenamefont {Singh}, \citenamefont {Singh},
  \citenamefont {Stetcu}, \citenamefont {Stryker}, \citenamefont {Szypryt},
  \citenamefont {Valgushev}, \citenamefont {VanDevender}, \citenamefont
  {Watkins}, \citenamefont {Wilson}, \citenamefont {Yao}, \citenamefont
  {Afanasev}, \citenamefont {Balantekin}, \citenamefont {Baroni}, \citenamefont
  {Bunker}, \citenamefont {Chakraborty}, \citenamefont {Chernyshev},
  \citenamefont {Cirigliano}, \citenamefont {Clark}, \citenamefont {Dhiman},
  \citenamefont {Du}, \citenamefont {Dutta}, \citenamefont {Edwards},
  \citenamefont {Flores}, \citenamefont {Galindo-Uribarri}, \citenamefont
  {Ruiz}, \citenamefont {Gueorguiev}, \citenamefont {Guo}, \citenamefont
  {Hansen}, \citenamefont {Hernandez}, \citenamefont {Hattori}, \citenamefont
  {Hauke}, \citenamefont {Hjorth-Jensen}, \citenamefont {Jankowski},
  \citenamefont {Johnson}, \citenamefont {Lacroix}, \citenamefont {Lee},
  \citenamefont {Lin}, \citenamefont {Liu}, \citenamefont {Llanes-Estrada},
  \citenamefont {Looney}, \citenamefont {Lukin}, \citenamefont {Mercenne},
  \citenamefont {Miller}, \citenamefont {Mottola}, \citenamefont {Mueller},
  \citenamefont {Nachman}, \citenamefont {Negele}, \citenamefont {Orrell},
  \citenamefont {Patwardhan}, \citenamefont {Phillips}, \citenamefont {Poole},
  \citenamefont {Qualters}, \citenamefont {Rumore}, \citenamefont {Schaefer},
  \citenamefont {Scott}, \citenamefont {Singh}, \citenamefont {Vary},
  \citenamefont {Galvez-Viruet}, \citenamefont {Wendt}, \citenamefont {Xing},
  \citenamefont {Yang}, \citenamefont {Young},\ and\ \citenamefont
  {Zhao}}]{beck2023quantuminformationsciencetechnology}%
  \BibitemOpen
  \bibfield  {author} {\bibinfo {author} {\bibfnamefont {D.}~\bibnamefont
  {Beck}}, \bibinfo {author} {\bibfnamefont {J.}~\bibnamefont {Carlson}},
  \bibinfo {author} {\bibfnamefont {Z.}~\bibnamefont {Davoudi}}, \bibinfo
  {author} {\bibfnamefont {J.}~\bibnamefont {Formaggio}}, \bibinfo {author}
  {\bibfnamefont {S.}~\bibnamefont {Quaglioni}}, \bibinfo {author}
  {\bibfnamefont {M.}~\bibnamefont {Savage}}, \bibinfo {author} {\bibfnamefont
  {J.}~\bibnamefont {Barata}}, \bibinfo {author} {\bibfnamefont
  {T.}~\bibnamefont {Bhattacharya}}, \bibinfo {author} {\bibfnamefont
  {M.}~\bibnamefont {Bishof}}, \bibinfo {author} {\bibfnamefont
  {I.}~\bibnamefont {Cloet}}, \bibinfo {author} {\bibfnamefont
  {A.}~\bibnamefont {Delgado}}, \bibinfo {author} {\bibfnamefont
  {M.}~\bibnamefont {DeMarco}}, \bibinfo {author} {\bibfnamefont
  {C.}~\bibnamefont {Fink}}, \bibinfo {author} {\bibfnamefont {A.}~\bibnamefont
  {Florio}}, \bibinfo {author} {\bibfnamefont {M.}~\bibnamefont {Francois}},
  \bibinfo {author} {\bibfnamefont {D.}~\bibnamefont {Grabowska}}, \bibinfo
  {author} {\bibfnamefont {S.}~\bibnamefont {Hoogerheide}}, \bibinfo {author}
  {\bibfnamefont {M.}~\bibnamefont {Huang}}, \bibinfo {author} {\bibfnamefont
  {K.}~\bibnamefont {Ikeda}}, \bibinfo {author} {\bibfnamefont
  {M.}~\bibnamefont {Illa}}, \bibinfo {author} {\bibfnamefont {K.}~\bibnamefont
  {Joo}}, \bibinfo {author} {\bibfnamefont {D.}~\bibnamefont {Kharzeev}},
  \bibinfo {author} {\bibfnamefont {K.}~\bibnamefont {Kowalski}}, \bibinfo
  {author} {\bibfnamefont {W.~K.}\ \bibnamefont {Lai}}, \bibinfo {author}
  {\bibfnamefont {K.}~\bibnamefont {Leach}}, \bibinfo {author} {\bibfnamefont
  {B.}~\bibnamefont {Loer}}, \bibinfo {author} {\bibfnamefont {I.}~\bibnamefont
  {Low}}, \bibinfo {author} {\bibfnamefont {J.}~\bibnamefont {Martin}},
  \bibinfo {author} {\bibfnamefont {D.}~\bibnamefont {Moore}}, \bibinfo
  {author} {\bibfnamefont {T.}~\bibnamefont {Mehen}}, \bibinfo {author}
  {\bibfnamefont {N.}~\bibnamefont {Mueller}}, \bibinfo {author} {\bibfnamefont
  {J.}~\bibnamefont {Mulligan}}, \bibinfo {author} {\bibfnamefont
  {P.}~\bibnamefont {Mumm}}, \bibinfo {author} {\bibfnamefont {F.}~\bibnamefont
  {Pederiva}}, \bibinfo {author} {\bibfnamefont {R.}~\bibnamefont {Pisarski}},
  \bibinfo {author} {\bibfnamefont {M.}~\bibnamefont {Ploskon}}, \bibinfo
  {author} {\bibfnamefont {S.}~\bibnamefont {Reddy}}, \bibinfo {author}
  {\bibfnamefont {G.}~\bibnamefont {Rupak}}, \bibinfo {author} {\bibfnamefont
  {H.}~\bibnamefont {Singh}}, \bibinfo {author} {\bibfnamefont
  {M.}~\bibnamefont {Singh}}, \bibinfo {author} {\bibfnamefont
  {I.}~\bibnamefont {Stetcu}}, \bibinfo {author} {\bibfnamefont
  {J.}~\bibnamefont {Stryker}}, \bibinfo {author} {\bibfnamefont
  {P.}~\bibnamefont {Szypryt}}, \bibinfo {author} {\bibfnamefont
  {S.}~\bibnamefont {Valgushev}}, \bibinfo {author} {\bibfnamefont
  {B.}~\bibnamefont {VanDevender}}, \bibinfo {author} {\bibfnamefont
  {S.}~\bibnamefont {Watkins}}, \bibinfo {author} {\bibfnamefont
  {C.}~\bibnamefont {Wilson}}, \bibinfo {author} {\bibfnamefont
  {X.}~\bibnamefont {Yao}}, \bibinfo {author} {\bibfnamefont {A.}~\bibnamefont
  {Afanasev}}, \bibinfo {author} {\bibfnamefont {A.~B.}\ \bibnamefont
  {Balantekin}}, \bibinfo {author} {\bibfnamefont {A.}~\bibnamefont {Baroni}},
  \bibinfo {author} {\bibfnamefont {R.}~\bibnamefont {Bunker}}, \bibinfo
  {author} {\bibfnamefont {B.}~\bibnamefont {Chakraborty}}, \bibinfo {author}
  {\bibfnamefont {I.}~\bibnamefont {Chernyshev}}, \bibinfo {author}
  {\bibfnamefont {V.}~\bibnamefont {Cirigliano}}, \bibinfo {author}
  {\bibfnamefont {B.}~\bibnamefont {Clark}}, \bibinfo {author} {\bibfnamefont
  {S.~K.}\ \bibnamefont {Dhiman}}, \bibinfo {author} {\bibfnamefont
  {W.}~\bibnamefont {Du}}, \bibinfo {author} {\bibfnamefont {D.}~\bibnamefont
  {Dutta}}, \bibinfo {author} {\bibfnamefont {R.}~\bibnamefont {Edwards}},
  \bibinfo {author} {\bibfnamefont {A.}~\bibnamefont {Flores}}, \bibinfo
  {author} {\bibfnamefont {A.}~\bibnamefont {Galindo-Uribarri}}, \bibinfo
  {author} {\bibfnamefont {R.~F.~G.}\ \bibnamefont {Ruiz}}, \bibinfo {author}
  {\bibfnamefont {V.}~\bibnamefont {Gueorguiev}}, \bibinfo {author}
  {\bibfnamefont {F.}~\bibnamefont {Guo}}, \bibinfo {author} {\bibfnamefont
  {E.}~\bibnamefont {Hansen}}, \bibinfo {author} {\bibfnamefont
  {H.}~\bibnamefont {Hernandez}}, \bibinfo {author} {\bibfnamefont
  {K.}~\bibnamefont {Hattori}}, \bibinfo {author} {\bibfnamefont
  {P.}~\bibnamefont {Hauke}}, \bibinfo {author} {\bibfnamefont
  {M.}~\bibnamefont {Hjorth-Jensen}}, \bibinfo {author} {\bibfnamefont
  {K.}~\bibnamefont {Jankowski}}, \bibinfo {author} {\bibfnamefont
  {C.}~\bibnamefont {Johnson}}, \bibinfo {author} {\bibfnamefont
  {D.}~\bibnamefont {Lacroix}}, \bibinfo {author} {\bibfnamefont
  {D.}~\bibnamefont {Lee}}, \bibinfo {author} {\bibfnamefont {H.-W.}\
  \bibnamefont {Lin}}, \bibinfo {author} {\bibfnamefont {X.}~\bibnamefont
  {Liu}}, \bibinfo {author} {\bibfnamefont {F.~J.}\ \bibnamefont
  {Llanes-Estrada}}, \bibinfo {author} {\bibfnamefont {J.}~\bibnamefont
  {Looney}}, \bibinfo {author} {\bibfnamefont {M.}~\bibnamefont {Lukin}},
  \bibinfo {author} {\bibfnamefont {A.}~\bibnamefont {Mercenne}}, \bibinfo
  {author} {\bibfnamefont {J.}~\bibnamefont {Miller}}, \bibinfo {author}
  {\bibfnamefont {E.}~\bibnamefont {Mottola}}, \bibinfo {author} {\bibfnamefont
  {B.}~\bibnamefont {Mueller}}, \bibinfo {author} {\bibfnamefont
  {B.}~\bibnamefont {Nachman}}, \bibinfo {author} {\bibfnamefont
  {J.}~\bibnamefont {Negele}}, \bibinfo {author} {\bibfnamefont
  {J.}~\bibnamefont {Orrell}}, \bibinfo {author} {\bibfnamefont
  {A.}~\bibnamefont {Patwardhan}}, \bibinfo {author} {\bibfnamefont
  {D.}~\bibnamefont {Phillips}}, \bibinfo {author} {\bibfnamefont
  {S.}~\bibnamefont {Poole}}, \bibinfo {author} {\bibfnamefont
  {I.}~\bibnamefont {Qualters}}, \bibinfo {author} {\bibfnamefont
  {M.}~\bibnamefont {Rumore}}, \bibinfo {author} {\bibfnamefont
  {T.}~\bibnamefont {Schaefer}}, \bibinfo {author} {\bibfnamefont
  {J.}~\bibnamefont {Scott}}, \bibinfo {author} {\bibfnamefont
  {R.}~\bibnamefont {Singh}}, \bibinfo {author} {\bibfnamefont
  {J.}~\bibnamefont {Vary}}, \bibinfo {author} {\bibfnamefont {J.-J.}\
  \bibnamefont {Galvez-Viruet}}, \bibinfo {author} {\bibfnamefont
  {K.}~\bibnamefont {Wendt}}, \bibinfo {author} {\bibfnamefont
  {H.}~\bibnamefont {Xing}}, \bibinfo {author} {\bibfnamefont {L.}~\bibnamefont
  {Yang}}, \bibinfo {author} {\bibfnamefont {G.}~\bibnamefont {Young}},\ and\
  \bibinfo {author} {\bibfnamefont {F.}~\bibnamefont {Zhao}},\ }\href
  {https://arxiv.org/abs/2303.00113} {\bibinfo {title} {Quantum information
  science and technology for nuclear physics. input into u.s. long-range
  planning, 2023}} (\bibinfo {year} {2023}),\ \Eprint
  {https://arxiv.org/abs/2303.00113} {arXiv:2303.00113 [nucl-ex]} \BibitemShut
  {NoStop}%
\bibitem [{\citenamefont {{van Kolck}}(1999)}]{VANKOLCK1999273}%
  \BibitemOpen
  \bibfield  {author} {\bibinfo {author} {\bibfnamefont {U.}~\bibnamefont {{van
  Kolck}}},\ }\bibfield  {title} {\bibinfo {title} {Effective field theory of
  short-range forces},\ }\href
  {https://doi.org/https://doi.org/10.1016/S0375-9474(98)00612-5} {\bibfield
  {journal} {\bibinfo  {journal} {Nuclear Physics A}\ }\textbf {\bibinfo
  {volume} {645}},\ \bibinfo {pages} {273} (\bibinfo {year}
  {1999})}\BibitemShut {NoStop}%
\bibitem [{\citenamefont {Kaplan}\ \emph {et~al.}(1998)\citenamefont {Kaplan},
  \citenamefont {Savage},\ and\ \citenamefont {Wise}}]{KAPLAN1998390}%
  \BibitemOpen
  \bibfield  {author} {\bibinfo {author} {\bibfnamefont {D.~B.}\ \bibnamefont
  {Kaplan}}, \bibinfo {author} {\bibfnamefont {M.~J.}\ \bibnamefont {Savage}},\
  and\ \bibinfo {author} {\bibfnamefont {M.~B.}\ \bibnamefont {Wise}},\
  }\bibfield  {title} {\bibinfo {title} {A new expansion for nucleon-nucleon
  interactions},\ }\href
  {https://doi.org/https://doi.org/10.1016/S0370-2693(98)00210-X} {\bibfield
  {journal} {\bibinfo  {journal} {Physics Letters B}\ }\textbf {\bibinfo
  {volume} {424}},\ \bibinfo {pages} {390} (\bibinfo {year}
  {1998})}\BibitemShut {NoStop}%
\bibitem [{\citenamefont {Hammer}\ \emph {et~al.}(2020)\citenamefont {Hammer},
  \citenamefont {K\"onig},\ and\ \citenamefont {van
  Kolck}}]{RevModPhys.92.025004}%
  \BibitemOpen
  \bibfield  {author} {\bibinfo {author} {\bibfnamefont {H.-W.}\ \bibnamefont
  {Hammer}}, \bibinfo {author} {\bibfnamefont {S.}~\bibnamefont {K\"onig}},\
  and\ \bibinfo {author} {\bibfnamefont {U.}~\bibnamefont {van Kolck}},\
  }\bibfield  {title} {\bibinfo {title} {Nuclear effective field theory: Status
  and perspectives},\ }\href {https://doi.org/10.1103/RevModPhys.92.025004}
  {\bibfield  {journal} {\bibinfo  {journal} {Rev. Mod. Phys.}\ }\textbf
  {\bibinfo {volume} {92}},\ \bibinfo {pages} {025004} (\bibinfo {year}
  {2020})}\BibitemShut {NoStop}%
\bibitem [{\citenamefont {Wigner}(1937)}]{PhysRev.51.106}%
  \BibitemOpen
  \bibfield  {author} {\bibinfo {author} {\bibfnamefont {E.}~\bibnamefont
  {Wigner}},\ }\bibfield  {title} {\bibinfo {title} {On the consequences of the
  symmetry of the nuclear hamiltonian on the spectroscopy of nuclei},\ }\href
  {https://doi.org/10.1103/PhysRev.51.106} {\bibfield  {journal} {\bibinfo
  {journal} {Phys. Rev.}\ }\textbf {\bibinfo {volume} {51}},\ \bibinfo {pages}
  {106} (\bibinfo {year} {1937})}\BibitemShut {NoStop}%
\bibitem [{\citenamefont {Lee}(2009)}]{Lee_2009}%
  \BibitemOpen
  \bibfield  {author} {\bibinfo {author} {\bibfnamefont {D.}~\bibnamefont
  {Lee}},\ }\bibfield  {title} {\bibinfo {title} {Lattice simulations for few-
  and many-body systems},\ }\href {https://doi.org/10.1016/j.ppnp.2008.12.001}
  {\bibfield  {journal} {\bibinfo  {journal} {Progress in Particle and Nuclear
  Physics}\ }\textbf {\bibinfo {volume} {63}},\ \bibinfo {pages} {117}
  (\bibinfo {year} {2009})}\BibitemShut {NoStop}%
\bibitem [{\citenamefont {Machleidt}\ and\ \citenamefont
  {Entem}(2011)}]{MACHLEIDT20111}%
  \BibitemOpen
  \bibfield  {author} {\bibinfo {author} {\bibfnamefont {R.}~\bibnamefont
  {Machleidt}}\ and\ \bibinfo {author} {\bibfnamefont {D.}~\bibnamefont
  {Entem}},\ }\bibfield  {title} {\bibinfo {title} {Chiral effective field
  theory and nuclear forces},\ }\href
  {https://doi.org/https://doi.org/10.1016/j.physrep.2011.02.001} {\bibfield
  {journal} {\bibinfo  {journal} {Physics Reports}\ }\textbf {\bibinfo {volume}
  {503}},\ \bibinfo {pages} {1} (\bibinfo {year} {2011})}\BibitemShut {NoStop}%
\bibitem [{\citenamefont {L{\"a}hde}\ and\ \citenamefont
  {Mei{\ss}ner}(2019)}]{lahde2019nuclear}%
  \BibitemOpen
  \bibfield  {author} {\bibinfo {author} {\bibfnamefont {T.~A.}\ \bibnamefont
  {L{\"a}hde}}\ and\ \bibinfo {author} {\bibfnamefont {U.-G.}\ \bibnamefont
  {Mei{\ss}ner}},\ }\href@noop {} {\emph {\bibinfo {title} {Nuclear lattice
  effective field theory: An introduction}}},\ Vol.\ \bibinfo {volume} {957}\
  (\bibinfo  {publisher} {Springer},\ \bibinfo {year} {2019})\BibitemShut
  {NoStop}%
\bibitem [{\citenamefont {Roggero}\ \emph {et~al.}(2020)\citenamefont
  {Roggero}, \citenamefont {Li}, \citenamefont {Carlson}, \citenamefont
  {Gupta},\ and\ \citenamefont {Perdue}}]{Roggero_2020}%
  \BibitemOpen
  \bibfield  {author} {\bibinfo {author} {\bibfnamefont {A.}~\bibnamefont
  {Roggero}}, \bibinfo {author} {\bibfnamefont {A.~C.}\ \bibnamefont {Li}},
  \bibinfo {author} {\bibfnamefont {J.}~\bibnamefont {Carlson}}, \bibinfo
  {author} {\bibfnamefont {R.}~\bibnamefont {Gupta}},\ and\ \bibinfo {author}
  {\bibfnamefont {G.~N.}\ \bibnamefont {Perdue}},\ }\bibfield  {title}
  {\bibinfo {title} {Quantum computing for neutrino-nucleus scattering},\
  }\bibfield  {journal} {\bibinfo  {journal} {Physical Review D}\ }\textbf
  {\bibinfo {volume} {101}},\ \href
  {https://doi.org/10.1103/physrevd.101.074038} {10.1103/physrevd.101.074038}
  (\bibinfo {year} {2020})\BibitemShut {NoStop}%
\bibitem [{\citenamefont {Watson}\ \emph {et~al.}(2023)\citenamefont {Watson},
  \citenamefont {Bringewatt}, \citenamefont {Shaw}, \citenamefont {Childs},
  \citenamefont {Gorshkov},\ and\ \citenamefont
  {Davoudi}}]{watson2023quantumalgorithmssimulatingnuclear}%
  \BibitemOpen
  \bibfield  {author} {\bibinfo {author} {\bibfnamefont {J.~D.}\ \bibnamefont
  {Watson}}, \bibinfo {author} {\bibfnamefont {J.}~\bibnamefont {Bringewatt}},
  \bibinfo {author} {\bibfnamefont {A.~F.}\ \bibnamefont {Shaw}}, \bibinfo
  {author} {\bibfnamefont {A.~M.}\ \bibnamefont {Childs}}, \bibinfo {author}
  {\bibfnamefont {A.~V.}\ \bibnamefont {Gorshkov}},\ and\ \bibinfo {author}
  {\bibfnamefont {Z.}~\bibnamefont {Davoudi}},\ }\href
  {https://arxiv.org/abs/2312.05344} {\bibinfo {title} {Quantum algorithms for
  simulating nuclear effective field theories}} (\bibinfo {year} {2023}),\
  \Eprint {https://arxiv.org/abs/2312.05344} {arXiv:2312.05344 [quant-ph]}
  \BibitemShut {NoStop}%
\bibitem [{\citenamefont {Lu}\ \emph {et~al.}(2019)\citenamefont {Lu},
  \citenamefont {Li}, \citenamefont {Elhatisari}, \citenamefont {Lee},
  \citenamefont {Epelbaum},\ and\ \citenamefont {Meißner}}]{LU2019134863}%
  \BibitemOpen
  \bibfield  {author} {\bibinfo {author} {\bibfnamefont {B.-N.}\ \bibnamefont
  {Lu}}, \bibinfo {author} {\bibfnamefont {N.}~\bibnamefont {Li}}, \bibinfo
  {author} {\bibfnamefont {S.}~\bibnamefont {Elhatisari}}, \bibinfo {author}
  {\bibfnamefont {D.}~\bibnamefont {Lee}}, \bibinfo {author} {\bibfnamefont
  {E.}~\bibnamefont {Epelbaum}},\ and\ \bibinfo {author} {\bibfnamefont
  {U.-G.}\ \bibnamefont {Meißner}},\ }\bibfield  {title} {\bibinfo {title}
  {Essential elements for nuclear binding},\ }\href
  {https://doi.org/https://doi.org/10.1016/j.physletb.2019.134863} {\bibfield
  {journal} {\bibinfo  {journal} {Physics Letters B}\ }\textbf {\bibinfo
  {volume} {797}},\ \bibinfo {pages} {134863} (\bibinfo {year}
  {2019})}\BibitemShut {NoStop}%
\bibitem [{\citenamefont {Mei\ss{}ner}\ \emph {et~al.}(2024)\citenamefont
  {Mei\ss{}ner}, \citenamefont {Shen}, \citenamefont {Elhatisari},\ and\
  \citenamefont {Lee}}]{PhysRevLett.132.062501}%
  \BibitemOpen
  \bibfield  {author} {\bibinfo {author} {\bibfnamefont {U.-G.}\ \bibnamefont
  {Mei\ss{}ner}}, \bibinfo {author} {\bibfnamefont {S.}~\bibnamefont {Shen}},
  \bibinfo {author} {\bibfnamefont {S.}~\bibnamefont {Elhatisari}},\ and\
  \bibinfo {author} {\bibfnamefont {D.}~\bibnamefont {Lee}},\ }\bibfield
  {title} {\bibinfo {title} {Ab initio calculation of the alpha-particle
  monopole transition form factor},\ }\href
  {https://doi.org/10.1103/PhysRevLett.132.062501} {\bibfield  {journal}
  {\bibinfo  {journal} {Phys. Rev. Lett.}\ }\textbf {\bibinfo {volume} {132}},\
  \bibinfo {pages} {062501} (\bibinfo {year} {2024})}\BibitemShut {NoStop}%
\bibitem [{\citenamefont {Jordan}\ and\ \citenamefont {Wigner}(1928)}]{jw}%
  \BibitemOpen
  \bibfield  {author} {\bibinfo {author} {\bibfnamefont {P.}~\bibnamefont
  {Jordan}}\ and\ \bibinfo {author} {\bibfnamefont {E.}~\bibnamefont
  {Wigner}},\ }\bibfield  {title} {\bibinfo {title} {Über das paulische
  Äquivalenzverbot},\ }\href {https://doi.org/10.1007/BF01331938} {\bibfield
  {journal} {\bibinfo  {journal} {Zeitschrift für Physik}\ }\textbf {\bibinfo
  {volume} {47}},\ \bibinfo {pages} {631} (\bibinfo {year} {1928})}\BibitemShut
  {NoStop}%
\bibitem [{\citenamefont {Verstraete}\ and\ \citenamefont
  {Cirac}(2005)}]{Verstraete_2005}%
  \BibitemOpen
  \bibfield  {author} {\bibinfo {author} {\bibfnamefont {F.}~\bibnamefont
  {Verstraete}}\ and\ \bibinfo {author} {\bibfnamefont {J.~I.}\ \bibnamefont
  {Cirac}},\ }\bibfield  {title} {\bibinfo {title} {Mapping local hamiltonians
  of fermions to local hamiltonians of spins},\ }\href
  {https://doi.org/10.1088/1742-5468/2005/09/P09012} {\bibfield  {journal}
  {\bibinfo  {journal} {Journal of Statistical Mechanics: Theory and
  Experiment}\ }\textbf {\bibinfo {volume} {2005}},\ \bibinfo {pages} {P09012}
  (\bibinfo {year} {2005})}\BibitemShut {NoStop}%
\bibitem [{\citenamefont {Baroni}\ \emph {et~al.}(2022)\citenamefont {Baroni},
  \citenamefont {Carlson}, \citenamefont {Gupta}, \citenamefont {Li},
  \citenamefont {Perdue},\ and\ \citenamefont {Roggero}}]{PhysRevD.105.074503}%
  \BibitemOpen
  \bibfield  {author} {\bibinfo {author} {\bibfnamefont {A.}~\bibnamefont
  {Baroni}}, \bibinfo {author} {\bibfnamefont {J.}~\bibnamefont {Carlson}},
  \bibinfo {author} {\bibfnamefont {R.}~\bibnamefont {Gupta}}, \bibinfo
  {author} {\bibfnamefont {A.~C.~Y.}\ \bibnamefont {Li}}, \bibinfo {author}
  {\bibfnamefont {G.~N.}\ \bibnamefont {Perdue}},\ and\ \bibinfo {author}
  {\bibfnamefont {A.}~\bibnamefont {Roggero}},\ }\bibfield  {title} {\bibinfo
  {title} {Nuclear two point correlation functions on a quantum computer},\
  }\href {https://doi.org/10.1103/PhysRevD.105.074503} {\bibfield  {journal}
  {\bibinfo  {journal} {Phys. Rev. D}\ }\textbf {\bibinfo {volume} {105}},\
  \bibinfo {pages} {074503} (\bibinfo {year} {2022})}\BibitemShut {NoStop}%
\bibitem [{\citenamefont {Kiss}\ \emph {et~al.}(2025)\citenamefont {Kiss},
  \citenamefont {Grossi},\ and\ \citenamefont {Roggero}}]{PhysRevD.111.034504}%
  \BibitemOpen
  \bibfield  {author} {\bibinfo {author} {\bibfnamefont {O.}~\bibnamefont
  {Kiss}}, \bibinfo {author} {\bibfnamefont {M.}~\bibnamefont {Grossi}},\ and\
  \bibinfo {author} {\bibfnamefont {A.}~\bibnamefont {Roggero}},\ }\bibfield
  {title} {\bibinfo {title} {Quantum error mitigation for fourier moment
  computation},\ }\href {https://doi.org/10.1103/PhysRevD.111.034504}
  {\bibfield  {journal} {\bibinfo  {journal} {Phys. Rev. D}\ }\textbf {\bibinfo
  {volume} {111}},\ \bibinfo {pages} {034504} (\bibinfo {year}
  {2025})}\BibitemShut {NoStop}%
\bibitem [{\citenamefont {Abrams}\ and\ \citenamefont
  {Lloyd}(1997)}]{PhysRevLett.79.2586}%
  \BibitemOpen
  \bibfield  {author} {\bibinfo {author} {\bibfnamefont {D.~S.}\ \bibnamefont
  {Abrams}}\ and\ \bibinfo {author} {\bibfnamefont {S.}~\bibnamefont {Lloyd}},\
  }\bibfield  {title} {\bibinfo {title} {Simulation of many-body fermi systems
  on a universal quantum computer},\ }\href
  {https://doi.org/10.1103/PhysRevLett.79.2586} {\bibfield  {journal} {\bibinfo
   {journal} {Phys. Rev. Lett.}\ }\textbf {\bibinfo {volume} {79}},\ \bibinfo
  {pages} {2586} (\bibinfo {year} {1997})}\BibitemShut {NoStop}%
\bibitem [{\citenamefont {Kassal}\ \emph {et~al.}(2008)\citenamefont {Kassal},
  \citenamefont {Jordan}, \citenamefont {Love}, \citenamefont {Mohseni},\ and\
  \citenamefont {Aspuru-Guzik}}]{Kassal_2008}%
  \BibitemOpen
  \bibfield  {author} {\bibinfo {author} {\bibfnamefont {I.}~\bibnamefont
  {Kassal}}, \bibinfo {author} {\bibfnamefont {S.~P.}\ \bibnamefont {Jordan}},
  \bibinfo {author} {\bibfnamefont {P.~J.}\ \bibnamefont {Love}}, \bibinfo
  {author} {\bibfnamefont {M.}~\bibnamefont {Mohseni}},\ and\ \bibinfo {author}
  {\bibfnamefont {A.}~\bibnamefont {Aspuru-Guzik}},\ }\bibfield  {title}
  {\bibinfo {title} {Polynomial-time quantum algorithm for the simulation of
  chemical dynamics},\ }\href {https://doi.org/10.1073/pnas.0808245105}
  {\bibfield  {journal} {\bibinfo  {journal} {Proceedings of the National
  Academy of Sciences}\ }\textbf {\bibinfo {volume} {105}},\ \bibinfo {pages}
  {18681} (\bibinfo {year} {2008})}\BibitemShut {NoStop}%
\bibitem [{\citenamefont {Jones}\ \emph {et~al.}(2012)\citenamefont {Jones},
  \citenamefont {Whitfield}, \citenamefont {McMahon}, \citenamefont {Yung},
  \citenamefont {Meter}, \citenamefont {Aspuru-Guzik},\ and\ \citenamefont
  {Yamamoto}}]{Jones_2012}%
  \BibitemOpen
  \bibfield  {author} {\bibinfo {author} {\bibfnamefont {N.~C.}\ \bibnamefont
  {Jones}}, \bibinfo {author} {\bibfnamefont {J.~D.}\ \bibnamefont
  {Whitfield}}, \bibinfo {author} {\bibfnamefont {P.~L.}\ \bibnamefont
  {McMahon}}, \bibinfo {author} {\bibfnamefont {M.-H.}\ \bibnamefont {Yung}},
  \bibinfo {author} {\bibfnamefont {R.~V.}\ \bibnamefont {Meter}}, \bibinfo
  {author} {\bibfnamefont {A.}~\bibnamefont {Aspuru-Guzik}},\ and\ \bibinfo
  {author} {\bibfnamefont {Y.}~\bibnamefont {Yamamoto}},\ }\bibfield  {title}
  {\bibinfo {title} {Faster quantum chemistry simulation on fault-tolerant
  quantum computers},\ }\href {https://doi.org/10.1088/1367-2630/14/11/115023}
  {\bibfield  {journal} {\bibinfo  {journal} {New Journal of Physics}\ }\textbf
  {\bibinfo {volume} {14}},\ \bibinfo {pages} {115023} (\bibinfo {year}
  {2012})}\BibitemShut {NoStop}%
\bibitem [{\citenamefont {Babbush}\ \emph {et~al.}(2019)\citenamefont
  {Babbush}, \citenamefont {Berry}, \citenamefont {McClean},\ and\
  \citenamefont {Neven}}]{Babbush_2019}%
  \BibitemOpen
  \bibfield  {author} {\bibinfo {author} {\bibfnamefont {R.}~\bibnamefont
  {Babbush}}, \bibinfo {author} {\bibfnamefont {D.~W.}\ \bibnamefont {Berry}},
  \bibinfo {author} {\bibfnamefont {J.~R.}\ \bibnamefont {McClean}},\ and\
  \bibinfo {author} {\bibfnamefont {H.}~\bibnamefont {Neven}},\ }\bibfield
  {title} {\bibinfo {title} {Quantum simulation of chemistry with sublinear
  scaling in basis size},\ }\bibfield  {journal} {\bibinfo  {journal} {npj
  Quantum Information}\ }\textbf {\bibinfo {volume} {5}},\ \href
  {https://doi.org/10.1038/s41534-019-0199-y} {10.1038/s41534-019-0199-y}
  (\bibinfo {year} {2019})\BibitemShut {NoStop}%
\bibitem [{\citenamefont {Su}\ \emph {et~al.}(2021{\natexlab{a}})\citenamefont
  {Su}, \citenamefont {Berry}, \citenamefont {Wiebe}, \citenamefont {Rubin},\
  and\ \citenamefont {Babbush}}]{Su2021}%
  \BibitemOpen
  \bibfield  {author} {\bibinfo {author} {\bibfnamefont {Y.}~\bibnamefont
  {Su}}, \bibinfo {author} {\bibfnamefont {D.~W.}\ \bibnamefont {Berry}},
  \bibinfo {author} {\bibfnamefont {N.}~\bibnamefont {Wiebe}}, \bibinfo
  {author} {\bibfnamefont {N.}~\bibnamefont {Rubin}},\ and\ \bibinfo {author}
  {\bibfnamefont {R.}~\bibnamefont {Babbush}},\ }\bibfield  {title} {\bibinfo
  {title} {Fault-tolerant quantum simulations of chemistry in first
  quantization},\ }\href {https://doi.org/10.1103/PRXQuantum.2.040332}
  {\bibfield  {journal} {\bibinfo  {journal} {PRX Quantum}\ }\textbf {\bibinfo
  {volume} {2}},\ \bibinfo {pages} {040332} (\bibinfo {year}
  {2021}{\natexlab{a}})}\BibitemShut {NoStop}%
\bibitem [{\citenamefont {Georges}\ \emph {et~al.}(2025)\citenamefont
  {Georges}, \citenamefont {Bothe}, \citenamefont {S{\"u}nderhauf},
  \citenamefont {Berntson}, \citenamefont {Izs{\'a}k},\ and\ \citenamefont
  {Ivanov}}]{georges2025quantum}%
  \BibitemOpen
  \bibfield  {author} {\bibinfo {author} {\bibfnamefont {T.~N.}\ \bibnamefont
  {Georges}}, \bibinfo {author} {\bibfnamefont {M.}~\bibnamefont {Bothe}},
  \bibinfo {author} {\bibfnamefont {C.}~\bibnamefont {S{\"u}nderhauf}},
  \bibinfo {author} {\bibfnamefont {B.~K.}\ \bibnamefont {Berntson}}, \bibinfo
  {author} {\bibfnamefont {R.}~\bibnamefont {Izs{\'a}k}},\ and\ \bibinfo
  {author} {\bibfnamefont {A.~V.}\ \bibnamefont {Ivanov}},\ }\bibfield  {title}
  {\bibinfo {title} {Quantum simulations of chemistry in first quantization
  with any basis set},\ }\href@noop {} {\bibfield  {journal} {\bibinfo
  {journal} {npj Quantum Information}\ }\textbf {\bibinfo {volume} {11}},\
  \bibinfo {pages} {55} (\bibinfo {year} {2025})}\BibitemShut {NoStop}%
\bibitem [{\citenamefont {Weiss}\ \emph {et~al.}(2025)\citenamefont {Weiss},
  \citenamefont {Baroni}, \citenamefont {Carlson},\ and\ \citenamefont
  {Stetcu}}]{weiss2025solvingreactiondynamicsquantum}%
  \BibitemOpen
  \bibfield  {author} {\bibinfo {author} {\bibfnamefont {R.}~\bibnamefont
  {Weiss}}, \bibinfo {author} {\bibfnamefont {A.}~\bibnamefont {Baroni}},
  \bibinfo {author} {\bibfnamefont {J.}~\bibnamefont {Carlson}},\ and\ \bibinfo
  {author} {\bibfnamefont {I.}~\bibnamefont {Stetcu}},\ }\href
  {https://arxiv.org/abs/2404.00202} {\bibinfo {title} {Solving reaction
  dynamics with quantum computing algorithms}} (\bibinfo {year} {2025}),\
  \Eprint {https://arxiv.org/abs/2404.00202} {arXiv:2404.00202 [quant-ph]}
  \BibitemShut {NoStop}%
\bibitem [{\citenamefont {Suzuki}(1991)}]{Suzuki1991}%
  \BibitemOpen
  \bibfield  {author} {\bibinfo {author} {\bibfnamefont {M.}~\bibnamefont
  {Suzuki}},\ }\bibfield  {title} {\bibinfo {title} {General theory of fractal
  path integrals with applications to many‐body theories and statistical
  physics},\ }\href {https://doi.org/10.1063/1.529425} {\bibfield  {journal}
  {\bibinfo  {journal} {Journal of Mathematical Physics}\ }\textbf {\bibinfo
  {volume} {32}},\ \bibinfo {pages} {400} (\bibinfo {year} {1991})}\BibitemShut
  {NoStop}%
\bibitem [{\citenamefont {Lloyd}(1996)}]{Lloyd1996}%
  \BibitemOpen
  \bibfield  {author} {\bibinfo {author} {\bibfnamefont {S.}~\bibnamefont
  {Lloyd}},\ }\bibfield  {title} {\bibinfo {title} {Universal quantum
  simulators},\ }\href {https://doi.org/10.1126/science.273.5278.1073}
  {\bibfield  {journal} {\bibinfo  {journal} {Science}\ }\textbf {\bibinfo
  {volume} {273}},\ \bibinfo {pages} {1073} (\bibinfo {year}
  {1996})}\BibitemShut {NoStop}%
\bibitem [{\citenamefont {Low}\ and\ \citenamefont {Chuang}(2017)}]{Low2017}%
  \BibitemOpen
  \bibfield  {author} {\bibinfo {author} {\bibfnamefont {G.~H.}\ \bibnamefont
  {Low}}\ and\ \bibinfo {author} {\bibfnamefont {I.~L.}\ \bibnamefont
  {Chuang}},\ }\bibfield  {title} {\bibinfo {title} {Optimal hamiltonian
  simulation by quantum signal processing},\ }\href
  {https://doi.org/10.1103/PhysRevLett.118.010501} {\bibfield  {journal}
  {\bibinfo  {journal} {Phys. Rev. Lett.}\ }\textbf {\bibinfo {volume} {118}},\
  \bibinfo {pages} {010501} (\bibinfo {year} {2017})}\BibitemShut {NoStop}%
\bibitem [{\citenamefont {Low}\ and\ \citenamefont {Chuang}(2019)}]{Low_2019}%
  \BibitemOpen
  \bibfield  {author} {\bibinfo {author} {\bibfnamefont {G.~H.}\ \bibnamefont
  {Low}}\ and\ \bibinfo {author} {\bibfnamefont {I.~L.}\ \bibnamefont
  {Chuang}},\ }\bibfield  {title} {\bibinfo {title} {Hamiltonian simulation by
  qubitization},\ }\href {https://doi.org/10.22331/q-2019-07-12-163} {\bibfield
   {journal} {\bibinfo  {journal} {Quantum}\ }\textbf {\bibinfo {volume} {3}},\
  \bibinfo {pages} {163} (\bibinfo {year} {2019})}\BibitemShut {NoStop}%
\bibitem [{\citenamefont {Motlagh}\ and\ \citenamefont
  {Wiebe}(2023)}]{motlagh2023generalized}%
  \BibitemOpen
  \bibfield  {author} {\bibinfo {author} {\bibfnamefont {D.}~\bibnamefont
  {Motlagh}}\ and\ \bibinfo {author} {\bibfnamefont {N.}~\bibnamefont
  {Wiebe}},\ }\href@noop {} {\bibinfo {title} {Generalized quantum signal
  processing}} (\bibinfo {year} {2023}),\ \Eprint
  {https://arxiv.org/abs/2308.01501} {arXiv:2308.01501 [quant-ph]} \BibitemShut
  {NoStop}%
\bibitem [{\citenamefont {Berry}\ \emph {et~al.}(2018)\citenamefont {Berry},
  \citenamefont {Kieferová}, \citenamefont {Scherer}, \citenamefont {Sanders},
  \citenamefont {Low}, \citenamefont {Wiebe}, \citenamefont {Gidney},\ and\
  \citenamefont {Babbush}}]{Berry_2018}%
  \BibitemOpen
  \bibfield  {author} {\bibinfo {author} {\bibfnamefont {D.~W.}\ \bibnamefont
  {Berry}}, \bibinfo {author} {\bibfnamefont {M.}~\bibnamefont {Kieferová}},
  \bibinfo {author} {\bibfnamefont {A.}~\bibnamefont {Scherer}}, \bibinfo
  {author} {\bibfnamefont {Y.~R.}\ \bibnamefont {Sanders}}, \bibinfo {author}
  {\bibfnamefont {G.~H.}\ \bibnamefont {Low}}, \bibinfo {author} {\bibfnamefont
  {N.}~\bibnamefont {Wiebe}}, \bibinfo {author} {\bibfnamefont
  {C.}~\bibnamefont {Gidney}},\ and\ \bibinfo {author} {\bibfnamefont
  {R.}~\bibnamefont {Babbush}},\ }\bibfield  {title} {\bibinfo {title}
  {Improved techniques for preparing eigenstates of fermionic hamiltonians},\
  }\bibfield  {journal} {\bibinfo  {journal} {npj Quantum Information}\
  }\textbf {\bibinfo {volume} {4}},\ \href
  {https://doi.org/10.1038/s41534-018-0071-5} {10.1038/s41534-018-0071-5}
  (\bibinfo {year} {2018})\BibitemShut {NoStop}%
\bibitem [{\citenamefont {Kivlichan}\ \emph {et~al.}(2017)\citenamefont
  {Kivlichan}, \citenamefont {Wiebe}, \citenamefont {Babbush},\ and\
  \citenamefont {Aspuru-Guzik}}]{Kivlichan_2017}%
  \BibitemOpen
  \bibfield  {author} {\bibinfo {author} {\bibfnamefont {I.~D.}\ \bibnamefont
  {Kivlichan}}, \bibinfo {author} {\bibfnamefont {N.}~\bibnamefont {Wiebe}},
  \bibinfo {author} {\bibfnamefont {R.}~\bibnamefont {Babbush}},\ and\ \bibinfo
  {author} {\bibfnamefont {A.}~\bibnamefont {Aspuru-Guzik}},\ }\bibfield
  {title} {\bibinfo {title} {Bounding the costs of quantum simulation of
  many-body physics in real space},\ }\href
  {https://doi.org/10.1088/1751-8121/aa77b8} {\bibfield  {journal} {\bibinfo
  {journal} {Journal of Physics A: Mathematical and Theoretical}\ }\textbf
  {\bibinfo {volume} {50}},\ \bibinfo {pages} {305301} (\bibinfo {year}
  {2017})}\BibitemShut {NoStop}%
\bibitem [{\citenamefont {Gidney}(2018)}]{Gidney_2018}%
  \BibitemOpen
  \bibfield  {author} {\bibinfo {author} {\bibfnamefont {C.}~\bibnamefont
  {Gidney}},\ }\bibfield  {title} {\bibinfo {title} {Halving the cost of
  quantum addition},\ }\href {https://doi.org/10.22331/q-2018-06-18-74}
  {\bibfield  {journal} {\bibinfo  {journal} {Quantum}\ }\textbf {\bibinfo
  {volume} {2}},\ \bibinfo {pages} {74} (\bibinfo {year} {2018})}\BibitemShut
  {NoStop}%
\bibitem [{\citenamefont {Su}\ \emph {et~al.}(2021{\natexlab{b}})\citenamefont
  {Su}, \citenamefont {Huang},\ and\ \citenamefont
  {Campbell}}]{Su_2021seminorm}%
  \BibitemOpen
  \bibfield  {author} {\bibinfo {author} {\bibfnamefont {Y.}~\bibnamefont
  {Su}}, \bibinfo {author} {\bibfnamefont {H.-Y.}\ \bibnamefont {Huang}},\ and\
  \bibinfo {author} {\bibfnamefont {E.~T.}\ \bibnamefont {Campbell}},\
  }\bibfield  {title} {\bibinfo {title} {Nearly tight trotterization of
  interacting electrons},\ }\href {https://doi.org/10.22331/q-2021-07-05-495}
  {\bibfield  {journal} {\bibinfo  {journal} {Quantum}\ }\textbf {\bibinfo
  {volume} {5}},\ \bibinfo {pages} {495} (\bibinfo {year}
  {2021}{\natexlab{b}})}\BibitemShut {NoStop}%
\bibitem [{\citenamefont {Childs}\ \emph {et~al.}(2021)\citenamefont {Childs},
  \citenamefont {Su}, \citenamefont {Tran}, \citenamefont {Wiebe},\ and\
  \citenamefont {Zhu}}]{PhysRevX.11.011020}%
  \BibitemOpen
  \bibfield  {author} {\bibinfo {author} {\bibfnamefont {A.~M.}\ \bibnamefont
  {Childs}}, \bibinfo {author} {\bibfnamefont {Y.}~\bibnamefont {Su}}, \bibinfo
  {author} {\bibfnamefont {M.~C.}\ \bibnamefont {Tran}}, \bibinfo {author}
  {\bibfnamefont {N.}~\bibnamefont {Wiebe}},\ and\ \bibinfo {author}
  {\bibfnamefont {S.}~\bibnamefont {Zhu}},\ }\bibfield  {title} {\bibinfo
  {title} {Theory of trotter error with commutator scaling},\ }\href
  {https://doi.org/10.1103/PhysRevX.11.011020} {\bibfield  {journal} {\bibinfo
  {journal} {Phys. Rev. X}\ }\textbf {\bibinfo {volume} {11}},\ \bibinfo
  {pages} {011020} (\bibinfo {year} {2021})}\BibitemShut {NoStop}%
\bibitem [{\citenamefont {Childs}\ and\ \citenamefont
  {Wiebe}(2012)}]{childs2012hamiltonian}%
  \BibitemOpen
  \bibfield  {author} {\bibinfo {author} {\bibfnamefont {A.~M.}\ \bibnamefont
  {Childs}}\ and\ \bibinfo {author} {\bibfnamefont {N.}~\bibnamefont {Wiebe}},\
  }\bibfield  {title} {\bibinfo {title} {Hamiltonian simulation using linear
  combinations of unitary operations},\ }\href@noop {} {\bibfield  {journal}
  {\bibinfo  {journal} {Quantum Information \& Computation}\ }\textbf {\bibinfo
  {volume} {12}},\ \bibinfo {pages} {901} (\bibinfo {year} {2012})}\BibitemShut
  {NoStop}%
\bibitem [{\citenamefont {Gily{\'{e}}n}\ \emph {et~al.}(2019)\citenamefont
  {Gily{\'{e}}n}, \citenamefont {Su}, \citenamefont {Low},\ and\ \citenamefont
  {Wiebe}}]{Gily_n_2019}%
  \BibitemOpen
  \bibfield  {author} {\bibinfo {author} {\bibfnamefont {A.}~\bibnamefont
  {Gily{\'{e}}n}}, \bibinfo {author} {\bibfnamefont {Y.}~\bibnamefont {Su}},
  \bibinfo {author} {\bibfnamefont {G.~H.}\ \bibnamefont {Low}},\ and\ \bibinfo
  {author} {\bibfnamefont {N.}~\bibnamefont {Wiebe}},\ }\bibfield  {title}
  {\bibinfo {title} {Quantum singular value transformation and beyond:
  exponential improvements for quantum matrix arithmetics},\ }in\ \href
  {https://doi.org/10.1145/3313276.3316366} {\emph {\bibinfo {booktitle}
  {Proceedings of the 51st Annual {ACM} {SIGACT} Symposium on Theory of
  Computing}}}\ (\bibinfo  {publisher} {{ACM}},\ \bibinfo {year}
  {2019})\BibitemShut {NoStop}%
\bibitem [{\citenamefont {Gily{\'e}n}\ \emph {et~al.}(2018)\citenamefont
  {Gily{\'e}n}, \citenamefont {Su}, \citenamefont {Low},\ and\ \citenamefont
  {Wiebe}}]{gilyen2018quantum}%
  \BibitemOpen
  \bibfield  {author} {\bibinfo {author} {\bibfnamefont {A.}~\bibnamefont
  {Gily{\'e}n}}, \bibinfo {author} {\bibfnamefont {Y.}~\bibnamefont {Su}},
  \bibinfo {author} {\bibfnamefont {G.~H.}\ \bibnamefont {Low}},\ and\ \bibinfo
  {author} {\bibfnamefont {N.}~\bibnamefont {Wiebe}},\ }\bibfield  {title}
  {\bibinfo {title} {Quantum singular value transformation and beyond:
  exponential improvements for quantum matrix arithmetics},\ }\href@noop {}
  {\bibfield  {journal} {\bibinfo  {journal} {arXiv preprint arXiv:1806.01838}\
  } (\bibinfo {year} {2018})}\BibitemShut {NoStop}%
\bibitem [{\citenamefont {Berry}\ \emph {et~al.}(2024)\citenamefont {Berry},
  \citenamefont {Motlagh}, \citenamefont {Pantaleoni},\ and\ \citenamefont
  {Wiebe}}]{PhysRevA.110.012612}%
  \BibitemOpen
  \bibfield  {author} {\bibinfo {author} {\bibfnamefont {D.~W.}\ \bibnamefont
  {Berry}}, \bibinfo {author} {\bibfnamefont {D.}~\bibnamefont {Motlagh}},
  \bibinfo {author} {\bibfnamefont {G.}~\bibnamefont {Pantaleoni}},\ and\
  \bibinfo {author} {\bibfnamefont {N.}~\bibnamefont {Wiebe}},\ }\bibfield
  {title} {\bibinfo {title} {Doubling the efficiency of hamiltonian simulation
  via generalized quantum signal processing},\ }\href
  {https://doi.org/10.1103/PhysRevA.110.012612} {\bibfield  {journal} {\bibinfo
   {journal} {Phys. Rev. A}\ }\textbf {\bibinfo {volume} {110}},\ \bibinfo
  {pages} {012612} (\bibinfo {year} {2024})}\BibitemShut {NoStop}%
\bibitem [{\citenamefont {Roggero}\ and\ \citenamefont
  {Carlson}(2019)}]{PhysRevC.100.034610}%
  \BibitemOpen
  \bibfield  {author} {\bibinfo {author} {\bibfnamefont {A.}~\bibnamefont
  {Roggero}}\ and\ \bibinfo {author} {\bibfnamefont {J.}~\bibnamefont
  {Carlson}},\ }\bibfield  {title} {\bibinfo {title} {Dynamic linear response
  quantum algorithm},\ }\href {https://doi.org/10.1103/PhysRevC.100.034610}
  {\bibfield  {journal} {\bibinfo  {journal} {Phys. Rev. C}\ }\textbf {\bibinfo
  {volume} {100}},\ \bibinfo {pages} {034610} (\bibinfo {year}
  {2019})}\BibitemShut {NoStop}%
\bibitem [{\citenamefont {Roggero}(2020)}]{PhysRevA.102.022409}%
  \BibitemOpen
  \bibfield  {author} {\bibinfo {author} {\bibfnamefont {A.}~\bibnamefont
  {Roggero}},\ }\bibfield  {title} {\bibinfo {title} {Spectral-density
  estimation with the gaussian integral transform},\ }\href
  {https://doi.org/10.1103/PhysRevA.102.022409} {\bibfield  {journal} {\bibinfo
   {journal} {Phys. Rev. A}\ }\textbf {\bibinfo {volume} {102}},\ \bibinfo
  {pages} {022409} (\bibinfo {year} {2020})}\BibitemShut {NoStop}%
\bibitem [{\citenamefont {Hartse}\ and\ \citenamefont
  {Roggero}(2023)}]{hartse2023faster}%
  \BibitemOpen
  \bibfield  {author} {\bibinfo {author} {\bibfnamefont {J.}~\bibnamefont
  {Hartse}}\ and\ \bibinfo {author} {\bibfnamefont {A.}~\bibnamefont
  {Roggero}},\ }\bibfield  {title} {\bibinfo {title} {Faster spectral density
  calculation using energy moments},\ }\href@noop {} {\bibfield  {journal}
  {\bibinfo  {journal} {The European Physical Journal A}\ }\textbf {\bibinfo
  {volume} {59}},\ \bibinfo {pages} {41} (\bibinfo {year} {2023})}\BibitemShut
  {NoStop}%
\bibitem [{\citenamefont {Sobczyk}\ and\ \citenamefont
  {Roggero}(2022)}]{PhysRevE.105.055310}%
  \BibitemOpen
  \bibfield  {author} {\bibinfo {author} {\bibfnamefont {J.~E.}\ \bibnamefont
  {Sobczyk}}\ and\ \bibinfo {author} {\bibfnamefont {A.}~\bibnamefont
  {Roggero}},\ }\bibfield  {title} {\bibinfo {title} {Spectral density
  reconstruction with chebyshev polynomials},\ }\href
  {https://doi.org/10.1103/PhysRevE.105.055310} {\bibfield  {journal} {\bibinfo
   {journal} {Phys. Rev. E}\ }\textbf {\bibinfo {volume} {105}},\ \bibinfo
  {pages} {055310} (\bibinfo {year} {2022})}\BibitemShut {NoStop}%
\bibitem [{\citenamefont {Sobczyk}\ \emph {et~al.}(2025)\citenamefont
  {Sobczyk}, \citenamefont {Jiang},\ and\ \citenamefont
  {Roggero}}]{PhysRevLett.134.192701}%
  \BibitemOpen
  \bibfield  {author} {\bibinfo {author} {\bibfnamefont {J.~E.}\ \bibnamefont
  {Sobczyk}}, \bibinfo {author} {\bibfnamefont {W.}~\bibnamefont {Jiang}},\
  and\ \bibinfo {author} {\bibfnamefont {A.}~\bibnamefont {Roggero}},\
  }\bibfield  {title} {\bibinfo {title} {Spin response of neutron matter in ab
  initio approach},\ }\href {https://doi.org/10.1103/PhysRevLett.134.192701}
  {\bibfield  {journal} {\bibinfo  {journal} {Phys. Rev. Lett.}\ }\textbf
  {\bibinfo {volume} {134}},\ \bibinfo {pages} {192701} (\bibinfo {year}
  {2025})}\BibitemShut {NoStop}%
\bibitem [{\citenamefont {Rokash}\ \emph {et~al.}(2013)\citenamefont {Rokash},
  \citenamefont {Epelbaum}, \citenamefont {Krebs}, \citenamefont {Lee},\ and\
  \citenamefont {Meißner}}]{Rokash_2014}%
  \BibitemOpen
  \bibfield  {author} {\bibinfo {author} {\bibfnamefont {A.}~\bibnamefont
  {Rokash}}, \bibinfo {author} {\bibfnamefont {E.}~\bibnamefont {Epelbaum}},
  \bibinfo {author} {\bibfnamefont {H.}~\bibnamefont {Krebs}}, \bibinfo
  {author} {\bibfnamefont {D.}~\bibnamefont {Lee}},\ and\ \bibinfo {author}
  {\bibfnamefont {U.-G.}\ \bibnamefont {Meißner}},\ }\bibfield  {title}
  {\bibinfo {title} {Finite volume effects in low-energy neutron–deuteron
  scattering},\ }\href {https://doi.org/10.1088/0954-3899/41/1/015105}
  {\bibfield  {journal} {\bibinfo  {journal} {Journal of Physics G: Nuclear and
  Particle Physics}\ }\textbf {\bibinfo {volume} {41}},\ \bibinfo {pages}
  {015105} (\bibinfo {year} {2013})}\BibitemShut {NoStop}%
\bibitem [{\citenamefont {Campbell}(2021)}]{Campbell_2021}%
  \BibitemOpen
  \bibfield  {author} {\bibinfo {author} {\bibfnamefont {E.~T.}\ \bibnamefont
  {Campbell}},\ }\bibfield  {title} {\bibinfo {title} {Early fault-tolerant
  simulations of the hubbard model},\ }\href
  {https://doi.org/10.1088/2058-9565/ac3110} {\bibfield  {journal} {\bibinfo
  {journal} {Quantum Science and Technology}\ }\textbf {\bibinfo {volume}
  {7}},\ \bibinfo {pages} {015007} (\bibinfo {year} {2021})}\BibitemShut
  {NoStop}%
\bibitem [{\citenamefont {Lee}\ \emph {et~al.}(2021)\citenamefont {Lee},
  \citenamefont {Berry}, \citenamefont {Gidney}, \citenamefont {Huggins},
  \citenamefont {McClean}, \citenamefont {Wiebe},\ and\ \citenamefont
  {Babbush}}]{Lee_2021}%
  \BibitemOpen
  \bibfield  {author} {\bibinfo {author} {\bibfnamefont {J.}~\bibnamefont
  {Lee}}, \bibinfo {author} {\bibfnamefont {D.~W.}\ \bibnamefont {Berry}},
  \bibinfo {author} {\bibfnamefont {C.}~\bibnamefont {Gidney}}, \bibinfo
  {author} {\bibfnamefont {W.~J.}\ \bibnamefont {Huggins}}, \bibinfo {author}
  {\bibfnamefont {J.~R.}\ \bibnamefont {McClean}}, \bibinfo {author}
  {\bibfnamefont {N.}~\bibnamefont {Wiebe}},\ and\ \bibinfo {author}
  {\bibfnamefont {R.}~\bibnamefont {Babbush}},\ }\bibfield  {title} {\bibinfo
  {title} {Even more efficient quantum computations of chemistry through tensor
  hypercontraction},\ }\bibfield  {journal} {\bibinfo  {journal} {{PRX}
  Quantum}\ }\textbf {\bibinfo {volume} {2}},\ \href
  {https://doi.org/10.1103/prxquantum.2.030305} {10.1103/prxquantum.2.030305}
  (\bibinfo {year} {2021})\BibitemShut {NoStop}%
\bibitem [{\citenamefont {Goings}\ \emph {et~al.}(2022)\citenamefont {Goings},
  \citenamefont {White}, \citenamefont {Lee}, \citenamefont {Tautermann},
  \citenamefont {Degroote}, \citenamefont {Gidney}, \citenamefont {Shiozaki},
  \citenamefont {Babbush},\ and\ \citenamefont
  {Rubin}}]{doi:10.1073/pnas.2203533119}%
  \BibitemOpen
  \bibfield  {author} {\bibinfo {author} {\bibfnamefont {J.~J.}\ \bibnamefont
  {Goings}}, \bibinfo {author} {\bibfnamefont {A.}~\bibnamefont {White}},
  \bibinfo {author} {\bibfnamefont {J.}~\bibnamefont {Lee}}, \bibinfo {author}
  {\bibfnamefont {C.~S.}\ \bibnamefont {Tautermann}}, \bibinfo {author}
  {\bibfnamefont {M.}~\bibnamefont {Degroote}}, \bibinfo {author}
  {\bibfnamefont {C.}~\bibnamefont {Gidney}}, \bibinfo {author} {\bibfnamefont
  {T.}~\bibnamefont {Shiozaki}}, \bibinfo {author} {\bibfnamefont
  {R.}~\bibnamefont {Babbush}},\ and\ \bibinfo {author} {\bibfnamefont {N.~C.}\
  \bibnamefont {Rubin}},\ }\bibfield  {title} {\bibinfo {title} {Reliably
  assessing the electronic structure of cytochrome p450 on today’s classical
  computers and tomorrow’s quantum computers},\ }\href
  {https://doi.org/10.1073/pnas.2203533119} {\bibfield  {journal} {\bibinfo
  {journal} {Proceedings of the National Academy of Sciences}\ }\textbf
  {\bibinfo {volume} {119}},\ \bibinfo {pages} {e2203533119} (\bibinfo {year}
  {2022})}\BibitemShut {NoStop}%
\bibitem [{\citenamefont {Gidney}\ and\ \citenamefont
  {Eker{\aa{}}}(2021)}]{Gidney2021howtofactorbit}%
  \BibitemOpen
  \bibfield  {author} {\bibinfo {author} {\bibfnamefont {C.}~\bibnamefont
  {Gidney}}\ and\ \bibinfo {author} {\bibfnamefont {M.}~\bibnamefont
  {Eker{\aa{}}}},\ }\bibfield  {title} {\bibinfo {title} {How to factor 2048
  bit {RSA} integers in 8 hours using 20 million noisy qubits},\ }\href
  {https://doi.org/10.22331/q-2021-04-15-433} {\bibfield  {journal} {\bibinfo
  {journal} {{Quantum}}\ }\textbf {\bibinfo {volume} {5}},\ \bibinfo {pages}
  {433} (\bibinfo {year} {2021})}\BibitemShut {NoStop}%
\bibitem [{\citenamefont {Schubert}\ and\ \citenamefont
  {Mendl}(2023)}]{PhysRevB.108.195105}%
  \BibitemOpen
  \bibfield  {author} {\bibinfo {author} {\bibfnamefont {A.}~\bibnamefont
  {Schubert}}\ and\ \bibinfo {author} {\bibfnamefont {C.~B.}\ \bibnamefont
  {Mendl}},\ }\bibfield  {title} {\bibinfo {title} {Trotter error with
  commutator scaling for the fermi-hubbard model},\ }\href
  {https://doi.org/10.1103/PhysRevB.108.195105} {\bibfield  {journal} {\bibinfo
   {journal} {Phys. Rev. B}\ }\textbf {\bibinfo {volume} {108}},\ \bibinfo
  {pages} {195105} (\bibinfo {year} {2023})}\BibitemShut {NoStop}%
\bibitem [{\citenamefont {Bay-Smidt}\ \emph {et~al.}(2025)\citenamefont
  {Bay-Smidt}, \citenamefont {Klausen}, \citenamefont {Sünderhauf},
  \citenamefont {Izsák}, \citenamefont {Solomon},\ and\ \citenamefont
  {Blunt}}]{baysmidt2025faulttolerantquantumsimulationgeneralized}%
  \BibitemOpen
  \bibfield  {author} {\bibinfo {author} {\bibfnamefont {A.~J.}\ \bibnamefont
  {Bay-Smidt}}, \bibinfo {author} {\bibfnamefont {F.~R.}\ \bibnamefont
  {Klausen}}, \bibinfo {author} {\bibfnamefont {C.}~\bibnamefont
  {Sünderhauf}}, \bibinfo {author} {\bibfnamefont {R.}~\bibnamefont {Izsák}},
  \bibinfo {author} {\bibfnamefont {G.~C.}\ \bibnamefont {Solomon}},\ and\
  \bibinfo {author} {\bibfnamefont {N.~S.}\ \bibnamefont {Blunt}},\ }\href
  {https://arxiv.org/abs/2501.10314} {\bibinfo {title} {Fault-tolerant quantum
  simulation of generalized hubbard models}} (\bibinfo {year} {2025}),\ \Eprint
  {https://arxiv.org/abs/2501.10314} {arXiv:2501.10314 [quant-ph]} \BibitemShut
  {NoStop}%
\bibitem [{\citenamefont {Kivlichan}\ \emph {et~al.}(2020)\citenamefont
  {Kivlichan}, \citenamefont {Gidney}, \citenamefont {Berry}, \citenamefont
  {Wiebe}, \citenamefont {McClean}, \citenamefont {Sun}, \citenamefont {Jiang},
  \citenamefont {Rubin}, \citenamefont {Fowler}, \citenamefont {Aspuru-Guzik},
  \citenamefont {Neven},\ and\ \citenamefont
  {Babbush}}]{Kivlichan2020improvedfault}%
  \BibitemOpen
  \bibfield  {author} {\bibinfo {author} {\bibfnamefont {I.~D.}\ \bibnamefont
  {Kivlichan}}, \bibinfo {author} {\bibfnamefont {C.}~\bibnamefont {Gidney}},
  \bibinfo {author} {\bibfnamefont {D.~W.}\ \bibnamefont {Berry}}, \bibinfo
  {author} {\bibfnamefont {N.}~\bibnamefont {Wiebe}}, \bibinfo {author}
  {\bibfnamefont {J.}~\bibnamefont {McClean}}, \bibinfo {author} {\bibfnamefont
  {W.}~\bibnamefont {Sun}}, \bibinfo {author} {\bibfnamefont {Z.}~\bibnamefont
  {Jiang}}, \bibinfo {author} {\bibfnamefont {N.}~\bibnamefont {Rubin}},
  \bibinfo {author} {\bibfnamefont {A.}~\bibnamefont {Fowler}}, \bibinfo
  {author} {\bibfnamefont {A.}~\bibnamefont {Aspuru-Guzik}}, \bibinfo {author}
  {\bibfnamefont {H.}~\bibnamefont {Neven}},\ and\ \bibinfo {author}
  {\bibfnamefont {R.}~\bibnamefont {Babbush}},\ }\bibfield  {title} {\bibinfo
  {title} {Improved {F}ault-{T}olerant {Q}uantum {S}imulation of
  {C}ondensed-{P}hase {C}orrelated {E}lectrons via {T}rotterization},\ }\href
  {https://doi.org/10.22331/q-2020-07-16-296} {\bibfield  {journal} {\bibinfo
  {journal} {{Quantum}}\ }\textbf {\bibinfo {volume} {4}},\ \bibinfo {pages}
  {296} (\bibinfo {year} {2020})}\BibitemShut {NoStop}%
\bibitem [{\citenamefont {Nie}\ \emph {et~al.}(2024)\citenamefont {Nie},
  \citenamefont {Zi},\ and\ \citenamefont
  {Sun}}]{nie2024quantumcircuitmultiqubittoffoli}%
  \BibitemOpen
  \bibfield  {author} {\bibinfo {author} {\bibfnamefont {J.}~\bibnamefont
  {Nie}}, \bibinfo {author} {\bibfnamefont {W.}~\bibnamefont {Zi}},\ and\
  \bibinfo {author} {\bibfnamefont {X.}~\bibnamefont {Sun}},\ }\href
  {https://arxiv.org/abs/2402.05053} {\bibinfo {title} {Quantum circuit for
  multi-qubit toffoli gate with optimal resource}} (\bibinfo {year} {2024}),\
  \Eprint {https://arxiv.org/abs/2402.05053} {arXiv:2402.05053 [quant-ph]}
  \BibitemShut {NoStop}%
\bibitem [{\citenamefont {Xu}\ \emph {et~al.}(2024)\citenamefont {Xu},
  \citenamefont {Bonilla~Ataides}, \citenamefont {Pattison}, \citenamefont
  {Raveendran}, \citenamefont {Bluvstein}, \citenamefont {Wurtz}, \citenamefont
  {Vasi{\'c}}, \citenamefont {Lukin}, \citenamefont {Jiang},\ and\
  \citenamefont {Zhou}}]{xu2024constant}%
  \BibitemOpen
  \bibfield  {author} {\bibinfo {author} {\bibfnamefont {Q.}~\bibnamefont
  {Xu}}, \bibinfo {author} {\bibfnamefont {J.~P.}\ \bibnamefont
  {Bonilla~Ataides}}, \bibinfo {author} {\bibfnamefont {C.~A.}\ \bibnamefont
  {Pattison}}, \bibinfo {author} {\bibfnamefont {N.}~\bibnamefont
  {Raveendran}}, \bibinfo {author} {\bibfnamefont {D.}~\bibnamefont
  {Bluvstein}}, \bibinfo {author} {\bibfnamefont {J.}~\bibnamefont {Wurtz}},
  \bibinfo {author} {\bibfnamefont {B.}~\bibnamefont {Vasi{\'c}}}, \bibinfo
  {author} {\bibfnamefont {M.~D.}\ \bibnamefont {Lukin}}, \bibinfo {author}
  {\bibfnamefont {L.}~\bibnamefont {Jiang}},\ and\ \bibinfo {author}
  {\bibfnamefont {H.}~\bibnamefont {Zhou}},\ }\bibfield  {title} {\bibinfo
  {title} {Constant-overhead fault-tolerant quantum computation with
  reconfigurable atom arrays},\ }\href@noop {} {\bibfield  {journal} {\bibinfo
  {journal} {Nature Physics}\ }\textbf {\bibinfo {volume} {20}},\ \bibinfo
  {pages} {1084} (\bibinfo {year} {2024})}\BibitemShut {NoStop}%
\bibitem [{\citenamefont {Bravyi}\ \emph {et~al.}(2024)\citenamefont {Bravyi},
  \citenamefont {Cross}, \citenamefont {Gambetta}, \citenamefont {Maslov},
  \citenamefont {Rall},\ and\ \citenamefont {Yoder}}]{Bravyi_2024}%
  \BibitemOpen
  \bibfield  {author} {\bibinfo {author} {\bibfnamefont {S.}~\bibnamefont
  {Bravyi}}, \bibinfo {author} {\bibfnamefont {A.~W.}\ \bibnamefont {Cross}},
  \bibinfo {author} {\bibfnamefont {J.~M.}\ \bibnamefont {Gambetta}}, \bibinfo
  {author} {\bibfnamefont {D.}~\bibnamefont {Maslov}}, \bibinfo {author}
  {\bibfnamefont {P.}~\bibnamefont {Rall}},\ and\ \bibinfo {author}
  {\bibfnamefont {T.~J.}\ \bibnamefont {Yoder}},\ }\bibfield  {title} {\bibinfo
  {title} {High-threshold and low-overhead fault-tolerant quantum memory},\
  }\href {https://doi.org/10.1038/s41586-024-07107-7} {\bibfield  {journal}
  {\bibinfo  {journal} {Nature}\ }\textbf {\bibinfo {volume} {627}},\ \bibinfo
  {pages} {778–782} (\bibinfo {year} {2024})}\BibitemShut {NoStop}%
\bibitem [{\citenamefont {Yoder}\ \emph {et~al.}(2025)\citenamefont {Yoder},
  \citenamefont {Schoute}, \citenamefont {Rall}, \citenamefont {Pritchett},
  \citenamefont {Gambetta}, \citenamefont {Cross}, \citenamefont {Carroll},\
  and\ \citenamefont {Beverland}}]{yoder2025tourgrossmodularquantum}%
  \BibitemOpen
  \bibfield  {author} {\bibinfo {author} {\bibfnamefont {T.~J.}\ \bibnamefont
  {Yoder}}, \bibinfo {author} {\bibfnamefont {E.}~\bibnamefont {Schoute}},
  \bibinfo {author} {\bibfnamefont {P.}~\bibnamefont {Rall}}, \bibinfo {author}
  {\bibfnamefont {E.}~\bibnamefont {Pritchett}}, \bibinfo {author}
  {\bibfnamefont {J.~M.}\ \bibnamefont {Gambetta}}, \bibinfo {author}
  {\bibfnamefont {A.~W.}\ \bibnamefont {Cross}}, \bibinfo {author}
  {\bibfnamefont {M.}~\bibnamefont {Carroll}},\ and\ \bibinfo {author}
  {\bibfnamefont {M.~E.}\ \bibnamefont {Beverland}},\ }\href
  {https://arxiv.org/abs/2506.03094} {\bibinfo {title} {Tour de gross: A
  modular quantum computer based on bivariate bicycle codes}} (\bibinfo {year}
  {2025}),\ \Eprint {https://arxiv.org/abs/2506.03094} {arXiv:2506.03094
  [quant-ph]} \BibitemShut {NoStop}%
\bibitem [{\citenamefont {Nam}\ \emph {et~al.}(2020)\citenamefont {Nam},
  \citenamefont {Su},\ and\ \citenamefont {Maslov}}]{Nam_2020}%
  \BibitemOpen
  \bibfield  {author} {\bibinfo {author} {\bibfnamefont {Y.}~\bibnamefont
  {Nam}}, \bibinfo {author} {\bibfnamefont {Y.}~\bibnamefont {Su}},\ and\
  \bibinfo {author} {\bibfnamefont {D.}~\bibnamefont {Maslov}},\ }\bibfield
  {title} {\bibinfo {title} {Approximate quantum fourier transform with o(n
  log(n)) t gates},\ }\bibfield  {journal} {\bibinfo  {journal} {npj Quantum
  Information}\ }\textbf {\bibinfo {volume} {6}},\ \href
  {https://doi.org/10.1038/s41534-020-0257-5} {10.1038/s41534-020-0257-5}
  (\bibinfo {year} {2020})\BibitemShut {NoStop}%
\bibitem [{\citenamefont {Kliuchnikov}\ \emph {et~al.}(2023)\citenamefont
  {Kliuchnikov}, \citenamefont {Lauter}, \citenamefont {Minko}, \citenamefont
  {Paetznick},\ and\ \citenamefont {Petit}}]{Kliuchnikov_2023}%
  \BibitemOpen
  \bibfield  {author} {\bibinfo {author} {\bibfnamefont {V.}~\bibnamefont
  {Kliuchnikov}}, \bibinfo {author} {\bibfnamefont {K.}~\bibnamefont {Lauter}},
  \bibinfo {author} {\bibfnamefont {R.}~\bibnamefont {Minko}}, \bibinfo
  {author} {\bibfnamefont {A.}~\bibnamefont {Paetznick}},\ and\ \bibinfo
  {author} {\bibfnamefont {C.}~\bibnamefont {Petit}},\ }\bibfield  {title}
  {\bibinfo {title} {Shorter quantum circuits via single-qubit gate
  approximation},\ }\href {https://doi.org/10.22331/q-2023-12-18-1208}
  {\bibfield  {journal} {\bibinfo  {journal} {Quantum}\ }\textbf {\bibinfo
  {volume} {7}},\ \bibinfo {pages} {1208} (\bibinfo {year} {2023})}\BibitemShut
  {NoStop}%
\bibitem [{\citenamefont {Sanders}\ \emph {et~al.}(2020)\citenamefont
  {Sanders}, \citenamefont {Berry}, \citenamefont {Costa}, \citenamefont
  {Tessler}, \citenamefont {Wiebe}, \citenamefont {Gidney}, \citenamefont
  {Neven},\ and\ \citenamefont {Babbush}}]{Sanders_2020}%
  \BibitemOpen
  \bibfield  {author} {\bibinfo {author} {\bibfnamefont {Y.~R.}\ \bibnamefont
  {Sanders}}, \bibinfo {author} {\bibfnamefont {D.~W.}\ \bibnamefont {Berry}},
  \bibinfo {author} {\bibfnamefont {P.~C.}\ \bibnamefont {Costa}}, \bibinfo
  {author} {\bibfnamefont {L.~W.}\ \bibnamefont {Tessler}}, \bibinfo {author}
  {\bibfnamefont {N.}~\bibnamefont {Wiebe}}, \bibinfo {author} {\bibfnamefont
  {C.}~\bibnamefont {Gidney}}, \bibinfo {author} {\bibfnamefont
  {H.}~\bibnamefont {Neven}},\ and\ \bibinfo {author} {\bibfnamefont
  {R.}~\bibnamefont {Babbush}},\ }\bibfield  {title} {\bibinfo {title}
  {Compilation of fault-tolerant quantum heuristics for combinatorial
  optimization},\ }\href {https://doi.org/10.1103/PRXQuantum.1.020312}
  {\bibfield  {journal} {\bibinfo  {journal} {PRX Quantum}\ }\textbf {\bibinfo
  {volume} {1}},\ \bibinfo {pages} {020312} (\bibinfo {year}
  {2020})}\BibitemShut {NoStop}%
\bibitem [{\citenamefont {Watrous}(2018)}]{watrous_2018}%
  \BibitemOpen
  \bibfield  {author} {\bibinfo {author} {\bibfnamefont {J.}~\bibnamefont
  {Watrous}},\ }\bibinfo {title} {Frontmatter},\ in\ \href@noop {} {\emph
  {\bibinfo {booktitle} {The Theory of Quantum Information}}}\ (\bibinfo
  {publisher} {Cambridge University Press},\ \bibinfo {year} {2018})\ pp.\
  \bibinfo {pages} {i--iv}\BibitemShut {NoStop}%
\end{thebibliography}
%

\appendix

\section{Subroutines}
\label{app:general_subroutines}
In this Appendix we provide a collection of useful results from other works that we will use repeatedly in our derivations.

\begin{theorem}[Squaring, Lemma 7~\cite{Su2021}]
    Consider an $N$-qubit register. The cost in T gates to compute its square in a $(2N)$-qubit register is given by
    \begin{equation}
        T_{SQU}(N) = 4N^2 - 4N
    \end{equation}
    and requires $N(N-1)$ ancilla qubit.
    \label{theorem:multiplication_better}
\end{theorem}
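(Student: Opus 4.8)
The plan is to reduce the squaring of an $N$-bit register to a short sequence of controlled integer additions and then to invoke the known $T$-counts for quantum adders. Writing the input as $x=\sum_{j=0}^{N-1}2^{j}x_{j}$ with $x_{j}\in\{0,1\}$ and using the idempotency $x_{j}^{2}=x_{j}$, one expands
\begin{equation}
x^{2}=\sum_{j=0}^{N-1}2^{2j}x_{j}+2\!\!\sum_{0\le j<l\le N-1}\!\!2^{j+l}\,x_{j}x_{l}\;.
\end{equation}
The first (diagonal) sum merely copies the input bits into the even-index positions of the $2N$-qubit output register, which is realized with CNOTs and hence costs no $T$ gates. The remaining task is to accumulate the off-diagonal double sum.

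First I would rewrite the off-diagonal term as $\sum_{j=0}^{N-2}x_{j}\bigl(2^{\,j+1}\sum_{l>j}2^{l}x_{l}\bigr)$ and note that the inner quantity $\sum_{l>j}2^{l}x_{l}$ is simply a contiguous sub-register of the input $x$, so no arithmetic is needed to form it. The output register can then be built in $N-1$ rounds: in round $j$ (for $j=0,\dots,N-2$), controlled on the single qubit $x_{j}$, one adds the suitably shifted sub-register of $x$ into the accumulator. Each such controlled addition of an (at most) $N$-bit addend is implemented with the in-place ripple-carry adder of Ref.~\cite{Gidney_2018}, whose carry chain uses $N$ logical-AND/Toffoli gates realized with $4$ $T$ gates each, while the uncomputation of the carry ancillas is done with Clifford gates and measurements only. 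Summing over the $N-1$ rounds yields $T_{SQU}(N)=(N-1)\cdot 4N=4N^{2}-4N$; allocating a fresh $N$-qubit carry-scratch register for each round (so that the uncomputations can be deferred) gives $N(N-1)$ ancilla qubits, matching the statement. Correctness follows because summing the round contributions reproduces exactly $2\sum_{j<l}2^{j+l}x_{j}x_{l}$, and the diagonal was already handled by the bit copy.

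I expect the delicate part to be the precise bookkeeping of the controlled adder: one must verify that controlling a Cuccaro/Gidney-type ripple-carry adder on the bit $x_{j}$ keeps its $T$-count at $4N$ per round rather than a larger multiple, and that every ancilla introduced (including the AND qubits of the carry chain) is returned to $\lvert 0\rangle$ using only Clifford operations and measurements, so that the bound $4N^{2}-4N$ is an exact equality and the $N(N-1)$ ancilla count is tight rather than merely $O(N^{2})$. These details are carried out in Lemma~7 of Ref.~\cite{Su2021}, from which this statement is taken.
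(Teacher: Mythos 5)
This statement is not proved in the paper at all: it appears in Appendix~\ref{app:general_subroutines} as an imported subroutine, quoted verbatim from Lemma~7 of Ref.~\cite{Su2021}, so there is no in-paper argument to compare yours against. Your reconstruction via schoolbook multiplication --- expand $x^{2}$ into a diagonal part handled by CNOTs plus a sequence of controlled additions implemented with Gidney-style adders --- is indeed the route taken in the cited reference, and your final appeal to Lemma~7 of \cite{Su2021} is the same move the paper makes.

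That said, as a standalone proof your accounting has a genuine gap. In your decomposition the addend in round $j$ is $\sum_{l>j}2^{l}x_{l}$, which has only $N-1-j$ bits, yet you charge a uniform $4N$ $T$ gates per round; summing the actual addend lengths would give $4\sum_{j=0}^{N-2}(N-1-j)=2N(N-1)$, not $4N(N-1)$, so the stated formula is not reproduced by the decomposition you wrote down. The discrepancy is absorbed by two effects you do not resolve: (i) the carry of each partial addition must propagate through the high bits of the running $2N$-bit accumulator, so the per-round Toffoli count is governed by the accumulator, not the addend, and (ii) a \emph{controlled} in-place addition naively costs roughly twice an uncontrolled one (one AND to gate each addend bit plus one AND per carry), and obtaining the single factor of $4$ per bit requires merging the control-AND into the carry chain. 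Establishing that these two effects land exactly on $N$ Toffoli gates per round, uniformly, and that all AND ancillas are returned to $\lvert 0\rangle$ by measurement and Cliffords, is precisely the content of Lemma~7 of \cite{Su2021}; your sketch asserts it rather than derives it, and your ancilla count $N(N-1)$ inherits the same unjustified uniform-$N$-per-round assumption. Since the theorem is used here only as a cited black box, this does not affect the paper, but the proposal should not be read as an independent derivation of the constants.
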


\begin{theorem}[Approximate Quantum Fourier Transform, \cite{Nam_2020}]
    Let us denote with $QFT$ the matrix implementing the Quantum Fourier Transformation on $N$ qubits, and let $U$ be a unitary such that $\norm{QFT - U} \le \epsilon$. The cost in number of T gates of implementing $U$ for $N>2$ and $\epsilon<3/4$ is given by
    \begin{equation}
    \begin{split}
        T_{QFT}(N,\epsilon) &=7N-11+\sum_{n=3}^{N-1}\left(8\min\left(\left\lceil\log_2\left(\frac{N}{\epsilon}\right)\right\rceil,n\right)-15\right)  \\
        &=O\left(N\min\left(\log\left(\frac{N}{\epsilon}\right),N\right)\right)
    \end{split}
    \end{equation}
    The algorithm also requires the availability of a phase-gradient state over $b=(\min(N-1,\lceil\log_2(N/\epsilon)\rceil)+1)$ qubits which needs to be implemented only once, employing $O(b\log(b/\epsilon))$ T gates, and can be reused. The total number of ancilla required is $3b-1$.
    \label{theorem:QFT}
\end{theorem}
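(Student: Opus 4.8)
\section*{Proof proposal}

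The plan is to reduce the approximate QFT to a short sequence of controlled integer additions into a single reusable phase-gradient register, bound the truncation error separately, and then count only the non-Clifford cost of those additions. First I would write the exact QFT on $N$ qubits in standard form: one Hadamard per qubit interleaved with the triangular array of controlled phase rotations $\operatorname{diag}(1,e^{2\pi i/2^k})$, followed by the bit-reversal permutation. The key observation is that, for a fixed target qubit, the whole column of controlled rotations acting on it from the lower-order qubits is diagonal and depends only on the integer $x$ they encode: conditioned on the target being $1$ it applies the single phase $e^{2\pi i x/2^{\ell}}$. I would then invoke the phase-gradient (phase-kickback) identity — that an in-place addition of $x$ into the register holding $|\gamma\rangle = 2^{-b/2}\sum_y e^{-2\pi i y/2^b}|y\rangle$ multiplies the state by $e^{-2\pi i x/2^b}$ — so that one controlled addition of the lower-order register into the gradient register reproduces the entire column of rotations at once. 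This is the step that converts the $O(N)$ individual small-angle rotations per qubit into a single adder, and it is what makes the overall $T$-count linear-up-to-logs in $N$ rather than quadratic.

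Next I would introduce the approximation. Rotations $R_k$ with $k$ exceeding a cutoff $k_{\max}=\lceil\log_2(N/\epsilon)\rceil$ are dropped, equivalently the gradient register is truncated to $b=\min(N-1,k_{\max})+1$ qubits. To bound $\norm{QFT - U}$ I would apply the triangle inequality over the omitted gates: each dropped controlled-$R_k$ differs from the identity by at most $\lvert 1-e^{2\pi i/2^k}\rvert\le 2\pi/2^k$ in operator norm, and summing the geometric tail $\sum_{k>k_{\max}}2\pi/2^k$ over the $O(N)$ target positions gives a total error of order $N\,2^{-k_{\max}}$, which the stated choice of $k_{\max}$ forces below $\epsilon$. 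This argument simultaneously fixes $b$ and certifies the accuracy in the stated regime $\epsilon<3/4$, $N>2$.

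Finally I would count resources. Hadamards and the bit-reversal SWAPs are Clifford and contribute no $T$ gates, so the only non-Clifford cost is the controlled addition into the gradient register. Implementing each such adder with the measurement-based logical-AND construction of Ref.~\cite{Gidney_2018} costs $4$ $T$ gates per Toffoli, and for the qubit in position $n$ the adder acts on $\min(k_{\max},n)$ bits, giving $8\min(k_{\max},n)-15$ $T$ gates once the fixed offsets of the adder and of its controlled form are included. Summing over $n=3,\dots,N-1$ and collecting the width-independent per-qubit overhead and the small-register boundary corrections into the $7N-11$ term reproduces the claimed $T_{QFT}(N,\epsilon)$; its asymptotic form $O(N\min(\log(N/\epsilon),N))$ follows by bounding the sum. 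The gradient state is prepared once at cost $O(b\log(b/\epsilon))$ $T$ gates via iterative phase synthesis, and the adder needs roughly $2b-1$ carry/AND ancillas beyond the $b$-qubit gradient register, for $3b-1$ ancillas in total.

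The main obstacle will be the exact-constant bookkeeping rather than the conceptual structure: matching the per-qubit $8\min(\cdot)-15$ and the global $7N-11$ requires a careful gate-level accounting of the specific controlled-adder circuit, including how the controls are merged and how the smallest registers are special-cased, and the truncation estimate must be made quantitative enough to guarantee $\norm{QFT - U}\le\epsilon$ for the chosen $b$ rather than merely an $O(\epsilon)$ bound.
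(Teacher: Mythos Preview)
The paper does not prove this statement at all: Theorem~\ref{theorem:QFT} is listed in Appendix~\ref{app:general_subroutines} as an imported subroutine from Ref.~\cite{Nam_2020}, stated without proof alongside the other borrowed primitives (squaring, rotation synthesis, multi-controlled Toffoli). So there is no ``paper's own proof'' to compare against; your proposal is effectively a reconstruction of the Nam--Su--Maslov argument itself.

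That said, your sketch is faithful to the original construction. The three pillars you identify---(i) collapsing each column of controlled phase rotations into a single controlled addition into a reusable phase-gradient register, (ii) truncating rotations with $k>\lceil\log_2(N/\epsilon)\rceil$ and bounding the error via a geometric tail summed over $O(N)$ positions, and (iii) costing each adder with the Gidney logical-AND construction at $4$ $T$ gates per Toffoli---are exactly what drives the $O(N\log(N/\epsilon))$ scaling in Ref.~\cite{Nam_2020}. Your own caveat is the right one: the precise constants $7N-11$ and $8\min(\cdot)-15$ come from gate-level bookkeeping of the specific controlled-adder layout and the boundary cases for the smallest registers, and reproducing them exactly requires tracking the circuit of Ref.~\cite{Nam_2020} rather than a generic adder. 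But conceptually there is no gap.
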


\begin{theorem}[Z Rotations, \cite{Kliuchnikov_2023}]
    Let $R_Z(\theta) = e^{-i\theta Z/2}$. A $1$-qubit gate $U_Z(\theta)$ can be implemented such that $\norm{R_Z(\theta) - U_Z(\theta)} \le \epsilon$ with a $T$-gate cost of:
    \begin{equation}
        T_{ROT}(\epsilon) = 0.57\log_2\left( \frac{1}{\epsilon} \right) + 8.83
    \end{equation}
    \label{theorem:Z_rotations}
\end{theorem}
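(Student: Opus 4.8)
The plan is to prove this as a single-qubit gate-synthesis bound over the Clifford+$T$ gate set, reducing the continuous approximation problem to a number-theoretic one and then invoking the probabilistic (ensemble) synthesis strategy of the cited work to obtain the sub-logarithmic leading constant.

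First I would set up the exact-synthesis dictionary. By the Kliuchnikov--Maslov--Mosca / Matsumoto--Amano normal form, every single-qubit Clifford+$T$ unitary (up to global phase) is represented by a $2\times2$ matrix whose entries lie in the ring $\mathbb{Z}[1/\sqrt{2},i]$, and the minimal $T$-count of such an operator equals the least ``denominator exponent'' $k$, i.e. the smallest $k$ for which every entry can be written as $(a+b\sqrt{2}+i(c+d\sqrt{2}))/(\sqrt{2})^{k}$ with $a,b,c,d\in\mathbb{Z}$. Consequently, approximating $R_Z(\theta)=e^{-i\theta Z/2}$ to operator-norm error $\epsilon$ with $T$-count $k$ is equivalent to finding a ring element $u=(a+b\sqrt{2}+i(c+d\sqrt{2}))/(\sqrt{2})^{k}$ with $|u|\le 1$ and $|u-e^{-i\theta/2}|=O(\epsilon)$ that is completable to a unitary; the completion exists precisely when the residue $\xi=(\sqrt{2})^{2k}-|a+b\sqrt{2}|^{2}-|c+d\sqrt{2}|^{2}$ is expressible as a sum of two squares in $\mathbb{Z}[\sqrt{2}]$, a Diophantine condition controlled by the prime factorization of $\xi$ in that ring.

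Next I would recall that the direct Ross--Selinger route --- solve a one-dimensional grid problem for candidate $u$, then test the Diophantine/primality condition, retrying until it is met --- yields a $T$-count of roughly $3\log_2(1/\epsilon)$, which exceeds the target. The essential improvement, which I would import from \cite{Kliuchnikov_2023}, is to synthesize not a single unitary but a small probabilistic mixture of Clifford+$T$ unitaries whose induced mixed-unitary channel approximates the rotation channel $\rho\mapsto R_Z(\theta)\rho R_Z(\theta)^{\dagger}$. The key observation is that if each unitary $U_j$ in the ensemble satisfies $\norm{U_j-R_Z(\theta)}\le\delta$ and the ensemble is arranged so that the first-order error term $\sum_j p_j\,(U_j-R_Z(\theta))R_Z(\theta)^{\dagger}$ cancels (achieved by mixing approximants that straddle the target angle with appropriately balanced weights), then the residual channel error is $O(\delta^{2})$. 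This quadratic gain means per-unitary precision $\delta\sim\sqrt{\epsilon}$ suffices, so each member of the ensemble needs only about $\tfrac12\log_2(1/\epsilon)$ $T$ gates, and the operator-norm statement of the theorem is read as the bound on this effective (channel) approximation --- which legitimately undercuts the covering-number bound $k\gtrsim\log_2(1/\epsilon)$ that must hold for any single deterministic unitary.

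Finally I would bound the expected $T$-count. Using an effective density estimate for completable residues --- i.e. the prime-counting heuristic for norms in $\mathbb{Z}[\sqrt{2}]$ made rigorous along the lines of the reference --- I would show that inside the relaxed precision window of width $\sim\sqrt{\epsilon}$ one can find, with only $O(\mathrm{polylog})$ expected retries, a straddling pair (or small set) of completable grid points whose convex combination cancels the first-order term, each of $T$-count $\tfrac12\log_2(1/\epsilon)+O(1)$; averaging the $T$-count over the ensemble and over the randomized search gives the stated form $T_{ROT}(\epsilon)=0.57\log_2(1/\epsilon)+8.83$, with the precise leading constant and additive offset fixed by the optimized protocol and the empirical fit reported in \cite{Kliuchnikov_2023}. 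The hardest part is the joint requirement in the middle step: producing lattice points that simultaneously meet the number-theoretic completability condition \emph{and} realize the exact first-order cancellation while keeping all $T$-counts at the halved level, together with the density/existence argument that guarantees such configurations occur at the claimed rate --- this number-theoretic existence statement, rather than the norm-conversion bookkeeping, is the crux of the result.
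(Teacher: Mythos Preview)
The paper does not prove this statement at all: it is listed in Appendix~\ref{app:general_subroutines} as one of several subroutine bounds imported verbatim from the literature, with the citation to \cite{Kliuchnikov_2023} serving as the entire justification. There is therefore no ``paper's proof'' to compare against; your sketch is an attempt to reconstruct the argument of the cited reference itself.

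As such a reconstruction, your outline captures the central mechanism correctly---moving from deterministic Ross--Selinger synthesis to a probabilistic ensemble whose first-order error cancels, so that channel error scales as $\delta^{2}$ and per-member precision $\delta\sim\sqrt{\epsilon}$ suffices. You also correctly flag the tension between the theorem as stated (an operator-norm bound on a single unitary $U_Z(\theta)$) and what the mixing argument actually delivers (a diamond-norm bound on a mixed-unitary channel); the paper glosses over this distinction, and a deterministic single-unitary approximation cannot in fact beat roughly $3\log_2(1/\epsilon)$ by a covering argument, so the $0.57$ constant only makes sense under the channel interpretation.

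One arithmetic gap: if each ensemble member is produced by Ross--Selinger at precision $\sqrt{\epsilon}$, its $T$-count is about $3\log_2(1/\sqrt{\epsilon})=\tfrac{3}{2}\log_2(1/\epsilon)$, not $\tfrac{1}{2}\log_2(1/\epsilon)$ as you write. Getting from there down to the quoted $0.57$ leading constant requires the additional machinery in \cite{Kliuchnikov_2023}---in particular their use of fallback protocols and a more refined mixing construction---and ultimately the specific numbers $0.57$ and $8.83$ are obtained there by an empirical fit to the optimized protocol rather than a closed-form derivation. Your final sentence acknowledges this, but the intermediate step claiming $\tfrac{1}{2}\log_2(1/\epsilon)$ per member is not consistent with the rest of your argument.
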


\begin{theorem}[Multi-controlled Toffoli]
    It is possible to implement an $N$-controlled Toffoli gate (which appies an $X$ gate if and only if all $N$ controls are in state $1$) with a $T$-gate count of
    \begin{equation}
        T_{MCX}(N) = 4(N-1)
    \end{equation}
    and $N-1$ ancilla qubits
    \label{theorem:MCX}
\end{theorem}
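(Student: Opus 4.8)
The plan is to reduce the $N$-controlled $X$ gate to a linear cascade of two-input AND operations acting on a fresh ancilla register, and then to realize each such operation with the measurement-assisted ``temporary logical-AND'' gadget of Ref.~\cite{Gidney_2018}, so that only the forward (compute) half of the cascade carries any $T$-cost.

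Concretely, label the controls $c_1,\dots,c_N$ and introduce ancillas $a_1,\dots,a_{N-1}$ prepared in $\lvert 0\rangle$. First I would compute $a_1 = c_1\wedge c_2$, then $a_2 = a_1\wedge c_3$, and so on up to $a_{N-1} = a_{N-2}\wedge c_N$, so that $a_{N-1}=1$ exactly when all controls are $1$; then apply a CNOT from $a_{N-1}$ onto the target; finally uncompute $a_{N-1},\dots,a_1$ in reverse order to restore the ancillas. Correctness follows from the immediate induction $a_k = c_1\wedge\cdots\wedge c_{k+1}$, and the reversed half disentangles the ancillas coherently (by linearity this holds on superpositions of the controls, since the AND gadget acts correctly on each computational-basis input). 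The circuit uses $N-1$ ancilla qubits, performs $N-1$ AND-computations, one CNOT, and $N-1$ AND-uncomputations.

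For the gate count I would invoke Ref.~\cite{Gidney_2018}: an AND-computation $\lvert a\rangle\lvert b\rangle\lvert 0\rangle\mapsto\lvert a\rangle\lvert b\rangle\lvert a\wedge b\rangle$ costs $4$ $T$ gates, while its adjoint---the AND-uncomputation---costs $0$ $T$ gates, being implemented by an $X$-basis measurement of the AND qubit followed by a classically controlled $CZ$ on the two inputs (Clifford plus measurement); the lone CNOT is Clifford. Summing over the $N-1$ compute/uncompute pairs gives a total of $4(N-1)$ $T$ gates with $N-1$ ancillas, matching the statement; the degenerate cases are consistent, since $N=1$ is a bare CNOT with $4(N-1)=0$ and $N=2$ is a single Toffoli at $4$ $T$ gates.

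The only genuinely nontrivial point, and the one I would take care over, is the claim that the uncomputation is free of $T$ gates: this is specific to the logical-AND gadget, whose adjoint admits a measurement-based implementation precisely because the AND qubit, conditioned on the relevant measurement outcome, is in a state fixable by a Clifford correction; a generic Toffoli uncomputation would instead pay the full non-Clifford price. With that established the accounting above is immediate.
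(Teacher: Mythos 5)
Your proposal is correct and follows essentially the same route as the paper: a ladder of $N-1$ temporary logical-AND computations onto fresh ancillas using the construction of Ref.~\cite{Gidney_2018}, a Clifford CNOT onto the target, and measurement-based uncomputation that carries no $T$-cost, for a total of $4(N-1)$ $T$ gates and $N-1$ ancillas. Your write-up is in fact more explicit than the paper's about why the uncomputation is free, which is the only subtle point in the accounting.
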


\begin{proof}
    Let us assume to have $N-1$ clean ancilla qubits. We can decompose the $N$-controlled Toffoli in a ladder of $N-1$ Toffoli gates, each having a different ancilla qubit as target. Then, we could use the Gidney construction \cite{Gidney_2018} to decode all ancilla qubits. This would mean that the $N$-controlled Toffoli requires $N-1$ Toffoli gates, and each of them can be done with $4$ $T$ gates. It follows that we need a total number of $T$-gates equal to
    \begin{equation}
        T_{MCX}(N) = 4(N-1)
    \end{equation}
    and $N-1$ ancilla qubits.
\end{proof}

\begin{theorem}[Multi-controlled Z Rotation]
    It is possible to implement an $N$-controlled $R_Z(\theta)$ gate up to error $\epsilon$ by using two $N$-controlled Toffoli gates and two rotations. Using Theorem~\ref{theorem:MCX} and Theorem~\ref{theorem:Z_rotations} this amounts to a total number of T gates equal to
    \begin{equation}
        T_{MCRZ} (\epsilon, N) = 8(N - 1) + 2T_{ROT}\left( \frac{\epsilon}{2} \right)
    \end{equation}
    where $T_{ROT}(\epsilon)$ is given in Theorem \ref{theorem:Z_rotations}. The required number of ancilla qubits is $N-1$.
    \label{theorem:MCRZ}
\end{theorem}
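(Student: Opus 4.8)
The plan is to realize $C^N R_Z(\theta)$ by computing the logical AND of the $N$ control qubits into an ancilla, applying an ordinary controlled-$R_Z(\theta)$ from that ancilla onto the rotation target, and then uncomputing the ancilla. Explicitly, I would use the identity
\begin{equation}
C^N R_Z(\theta) \;=\; \left(C^N X\right)_{c\to a}\;\left(CR_Z(\theta)\right)_{a\to\mathrm{tgt}}\;\left(C^N X\right)_{c\to a}\;,
\end{equation}
where the two multi-controlled Toffoli gates act on the $N$ controls $c$ with target an ancilla $a$ prepared in $\ket{0}$, and the central gate is a singly-controlled $R_Z(\theta)$ with control $a$ and target the rotation qubit. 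Correctness is immediate: if all $N$ controls are $\ket{1}$ the first $C^N X$ sets $a\to\ket{1}$, the middle gate applies $R_Z(\theta)$, and the second $C^N X$ restores $a\to\ket{0}$; otherwise $a$ stays $\ket{0}$ throughout and the net action is the identity.

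Next I would decompose the controlled rotation in the standard way,
\begin{equation}
CR_Z(\theta) \;=\; \left(\mathbb{1}\otimes R_Z(\theta/2)\right)\,\mathrm{CNOT}\,\left(\mathbb{1}\otimes R_Z(-\theta/2)\right)\,\mathrm{CNOT}\;,
\end{equation}
which one checks by tracking the accumulated phase on the four computational basis states of the (control, target) pair, using $R_Z(\phi)=\mathrm{diag}(e^{-i\phi/2},e^{i\phi/2})$. This uses exactly two single-qubit $Z$-rotations and two CNOTs. Each $R_Z$ is synthesized approximately via Theorem~\ref{theorem:Z_rotations} to error $\epsilon/2$; since the multi-controlled Toffoli gates of Theorem~\ref{theorem:MCX} are exact and CNOTs are Clifford, the only approximation error enters through these two rotations, so by unitary invariance of the operator norm and the triangle inequality the error of the full gate is at most $\epsilon/2+\epsilon/2=\epsilon$.

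For the cost, the two $C^N X$ gates contribute $2\,T_{MCX}(N)=8(N-1)$ T gates by Theorem~\ref{theorem:MCX}, the two rotations contribute $2\,T_{ROT}(\epsilon/2)$, and the CNOTs are free in the $T$ metric, which yields $T_{MCRZ}(\epsilon,N)=8(N-1)+2\,T_{ROT}(\epsilon/2)$. For the ancilla budget, the ladder-of-Toffolis implementation of $C^N X$ in Theorem~\ref{theorem:MCX} already produces the AND of the controls in its last internal ancilla; I would simply use that qubit as the control $a$ for $CR_Z(\theta)$ before running the uncomputation, so that the two multi-controlled Toffolis and the controlled rotation all share the same $N-1$ ancilla register and no separate AND register is needed. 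The construction is essentially routine; the only points that need care are the error bookkeeping for the two approximate rotations and this observation about sharing the ancilla carrying the AND, which is what holds the ancilla count at $N-1$ rather than $N$.
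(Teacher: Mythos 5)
Your construction is exactly the one the paper intends (the theorem is stated without a separate proof, but its wording already prescribes ``two $N$-controlled Toffoli gates and two rotations''), and your error split $\epsilon/2+\epsilon/2$, the $T$-count $2T_{MCX}(N)+2T_{ROT}(\epsilon/2)=8(N-1)+2T_{ROT}(\epsilon/2)$, and the observation that the AND can live in the last ladder ancilla so the budget stays at $N-1$ all check out. Correct, and essentially the same approach as the paper.
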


\section{Phase kick-back method}
\label{appendix:kickback}

The phase kick-back method is a standard procedure to implement diagonal unitaries when the phase can be calculated efficiently (see e.g.~\cite{Kassal_2008,Jones_2012}).
This method requires using one quantum Fourier transform (QFT) at the beginning of the simulation and a single oracle call per step to a unitary $Q_H$ defined below.
Let us consider a general qubit register with $b$ qubits. We initialize the register state as the Fourier transform of the $\ket{0\cdots001}_b$ state, leading to:
\begin{equation}
\label{eq:onebar}
    \rvert \bar{1}\rangle_b =\frac{1}{2^{b/2}}\sum_{j=0}^{2^{b}-1}e^{-i\frac{2\pi j}{2^b}}\rvert j\rangle\;.
\end{equation}
Performing the bitstring addition $\bar 1 \oplus K$ (modulo $2^b$) to this state, with $K$ an arbitrary integer, results in
\begin{equation}
\rvert \bar{1}\oplus K\rangle = \frac{1}{2^{b/2}}\sum_{j=0}^{2^{b}-1}e^{-i\frac{2\pi j}{2^b}}\rvert j\oplus K\rangle
    = \frac{e^{i\frac{2\pi K}{2^b}}}{2^{b/2}}\sum_{j=0}^{2^{b}-1}e^{-i\frac{2\pi j}{2^b}}\rvert j\rangle={e^{-i\frac{2\pi K}{2^b}}}\rvert\bar{1}\rangle\;.
\end{equation}

The goal is to approximate the evolution for a time $t$ under the diagonal Hamiltonian $H = \sum_m \rvert m \rangle \lambda_m \langle m \lvert$, where we assume the eigenvalues are ordered as $\lambda_m\leq\lambda_{m+1}$ for all $m$. In order to simplify the analysis, we will assume that $\lambda_m\geq0$ (this can always be ensured by adding a global phase). The goal is then to find a unitary $Q_H$ acting on both the system register and the $b$ ancillae such that
\begin{equation}
\label{eq:phkick_err}
\left\|e^{-itH} - \langle \bar{1}\lvert Q_H \rvert\bar{1}\rangle\right\|\leq \epsilon\;.
\end{equation}
Using the fact that $H$ is diagonal in the computational basis, we can then use a controlled unitary adder
\begin{equation}
\label{eq:qh_oracle}
Q_H \rvert m\rangle\rvert z\rangle = \rvert m\rangle\left\rvert z\oplus K_m\right\rangle\;,
\end{equation}
with a $b$-bit integer $K_m$ chosen such that
\begin{equation}
\left|t\lambda_m - \frac{2\pi}{2^b}K_m\right|\left(mod\; 2\pi\right)\leq \epsilon\;.
\end{equation}
This is a direct consequence of the zero$^{th}$-order bound $\|e^{iX}+\mathbb{1}\|\leq \|X\|$, valid for any Hermitian bounded operator $X$. This can be achieved if we select $K_m$ to be
\begin{equation}
K_m = \left\lfloor \frac{2^b}{2\pi}t\lambda_m\right\rceil\left(mod\; 2^b\right)\;,
\end{equation}
so that the error is bounded by
\begin{equation}
\epsilon\leq \frac{2\pi}{2^{b+1}}\;.
\end{equation}
We can then choose the precision as
\begin{equation}
b\geq \left\lceil\log_2\left(\frac{\pi}{\epsilon}\right)\right\rceil\;.
\end{equation}

A slight generalization of the construction is the situation where the eigenvalue $\lambda_m$ itself is being approximated by a $b_\lambda$-bit integer $l_m$ so that
\begin{equation}
\label{eq:prec_lambdam}
\max_m \left|2^{b_\lambda}\frac{\lambda_m}{\Lambda}-l_m\right|< 1\;,
\end{equation}
for some $\Lambda\geq \max_m \lambda_m=\|H\|$. A naive way of proceeding in this case would be to first approximate the constant $\gamma=t\Lambda/(2\pi)$ using $b_\gamma$ bits, then form the multiplication $\mu_m=2^{b-b_\lambda}\gamma l_m$ and finally add the result to the $b$-bit register. As pointed out already in recent work~\cite{Sanders_2020}, a more efficient strategy is to  use instead the fact that the bit representation of the classical constant $\gamma$ is known to simply add directly the integer $l_m$, with the appropriate shift, for every non-zero component of $\gamma$ into the ancilla register containing the $\rvert\bar{1}\rangle$. We can then show the following result
\begin{lemma}[Phase kickback with approximate eigenvalue]
\label{lemma:phkick} Let $t>0$, $0<\epsilon<1$, $H$ a $q$-qubit diagonal Hamiltonian and $\Lambda\geq\|H\|$ an upperbound on its spectral norm. Let also $O_{\lambda;b_\lambda}$ be an oracle that performs
\begin{equation}
O_{\lambda;b_\lambda}\rvert m\rangle_q\rvert0\rangle_{b_\lambda} =\rvert m\rangle_q\rvert l_m\rangle_{b_\lambda}\;,
\end{equation}
with $\rvert m\rangle$ a $q$-qubit input state and $\rvert l_m\rangle$ a $b_\lambda$-qubit state satisfying Eq.~\eqref{eq:prec_lambdam}. We can implement a $Q_H$ oracle that can be used, together with a $b$-qubit phase register initialized in $\rvert\bar{1}\rangle$ as in Eq.~\eqref{eq:onebar}, to approximate $U(t)=\exp(-itH)$ with error less than $\epsilon$ choosing
\begin{equation}
b=\left\lceil\log_2\left(\frac{3t\Lambda}{\epsilon}\log_2\left(\frac{3t\Lambda}{\epsilon}\right)\right)\right\rceil\;,
\end{equation}
qubits for the phase register and
\begin{equation}
b_\lambda = \left\lceil\log_2\left(\frac{3t\Lambda}{\epsilon}\right)\right\rceil\;,
\end{equation}
qubits of precision for the $O_{\lambda;b_\lambda}$ oracle. If the eigenvalue can be represented exactly with $b_\lambda$ bits, then one can choose instead
\begin{equation}
b=\left\lceil\log_2\left(\frac{2t\Lambda}{\epsilon}\log_2\left(\frac{2t\Lambda}{\epsilon}\right)\right)\right\rceil\;,
\label{eq:b_exact}
\end{equation}
The algorithm requires one call to $O_{\lambda;b_\lambda}$, one to $O^\dagger_{\lambda;b_\lambda}$ together with
\begin{equation}
T_{DIAG}=4b\min\left(\left\lceil\frac{b_\lambda+1}{2}\right\rceil,w_H\right)=O\left(\log^2\left(\frac{t\Lambda}{\epsilon}\right)\right)\;,
\end{equation}
$T$ gates and $b$ ancilla qubits. Here $w_H$ is the Hamming weight of  $\gamma=t\Lambda/(2\pi)$ written with $b_\lambda$ bits.
\end{lemma}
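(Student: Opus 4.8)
The plan is to exhibit the oracle $Q_H$ of Eq.~\eqref{eq:qh_oracle} explicitly, check that flanking it with the phase register prepared in $\rvert\bar 1\rangle$ of Eq.~\eqref{eq:onebar} produces a diagonal unitary whose eigenphases track $t\lambda_m$ to within $\epsilon$, and then tally the resources.

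\emph{Construction of $Q_H$.} First call $O_{\lambda;b_\lambda}$ once to write the $b_\lambda$-bit string $l_m$ into a scratch register, so the state becomes $\rvert m\rangle_q\rvert l_m\rangle_{b_\lambda}\rvert\bar 1\rangle_b$. Writing $\gamma=t\Lambda/(2\pi)$, the target phase obeys $t\lambda_m=\tfrac{2\pi}{2^b}\big(2^{\,b-b_\lambda}\gamma\,l_m\big)+\big(\text{eigenvalue-approximation error}\big)$, so it suffices to add into the phase register a $b$-bit integer $K_m$ approximating $2^{\,b-b_\lambda}\gamma\,l_m$ modulo $2^b$; by Eq.~\eqref{eq:onebar} this returns the register to $\rvert\bar1\rangle$ while imprinting the phase $e^{-i2\pi K_m/2^b}$ on the system register. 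Since $\gamma$ is a classical constant I would, instead of multiplying, precompute a signed-binary (non-adjacent-form) expansion of $\gamma$ and, for each of its nonzero digits at bit position $e$, add or subtract the quantum register $l_m$ into the phase register shifted by $b-b_\lambda+e$ places, all arithmetic taken mod $2^b$; the non-adjacent form of the $b_\lambda$-bit representation has at most $\lceil(b_\lambda+1)/2\rceil$ nonzero digits, and one may equally use the plain binary expansion with $w_H$ nonzero bits, so the number of in-place additions is $\min(\lceil(b_\lambda+1)/2\rceil,w_H)$. Finally call $O^\dagger_{\lambda;b_\lambda}$ to restore the scratch register to $\rvert0\rangle$. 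Hence $\langle\bar1\lvert Q_H\rvert\bar1\rangle=\sum_m e^{-i2\pi K_m/2^b}\rvert m\rangle\langle m\lvert$, a diagonal unitary on the system register.

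\emph{Error analysis.} By the zeroth-order bound $\norm{e^{iX}-\mathbb{1}}\le\norm{X}$ applied to the diagonal operator $e^{-itH}-\langle\bar1\lvert Q_H\rvert\bar1\rangle$, the error reduces to $\max_m\big|t\lambda_m-\tfrac{2\pi}{2^b}K_m\big|$ taken mod $2\pi$. I would split $\big|t\lambda_m-\tfrac{2\pi}{2^b}K_m\big|$ into three pieces: (i) the eigenvalue-approximation error, which by Eq.~\eqref{eq:prec_lambdam} is amplified by at most $2^{\,b-b_\lambda}\gamma$ in $K_m$, hence contributes at most $\tfrac{2\pi}{2^b}\,2^{\,b-b_\lambda}\gamma=t\Lambda\,2^{-b_\lambda}$ to the phase and is made $\lesssim\epsilon$ by $b_\lambda\ge\log_2(3t\Lambda/\epsilon)$; (ii) the error from truncating $\gamma$ to $b_\lambda$ significant bits, which is of the same order and controlled by the same choice (up to adjusting the constant); and (iii) the rounding forced by the fact that the shifted copies of $l_m$ are integers, so the digits of $\gamma$ falling below the register's least significant position are dropped — at most one unit per addition, hence at most $\min(\lceil(b_\lambda+1)/2\rceil,w_H)=O(\log(t\Lambda/\epsilon))$ units in total. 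For this cumulative round-off, amplified by $\tfrac{2\pi}{2^b}$, to remain below $\epsilon$ while (i) and (ii) use up the rest of the budget, $2^b$ must exceed $2^{b_\lambda}$ by a logarithmic factor, which is precisely the origin of the nested $\log_2$ in $b=\lceil\log_2(\tfrac{3t\Lambda}{\epsilon}\log_2\tfrac{3t\Lambda}{\epsilon})\rceil$. When $\lambda_m$ is represented exactly with $b_\lambda$ bits, source (i) disappears, which lets one tighten the constant to the value in Eq.~\eqref{eq:b_exact}.

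\emph{Resources and main obstacle.} The procedure uses exactly one call each to $O_{\lambda;b_\lambda}$ and $O^\dagger_{\lambda;b_\lambda}$, plus $\min(\lceil(b_\lambda+1)/2\rceil,w_H)$ in-place $b$-qubit additions, each costing $4b$ $T$ gates via the adder of Ref.~\cite{Gidney_2018}, giving $T_{DIAG}=4b\min(\lceil(b_\lambda+1)/2\rceil,w_H)=O(\log^2(t\Lambda/\epsilon))$; the persistent ancilla overhead is the $b$-qubit phase register, the $b_\lambda$-qubit scratch register being internal to $O_{\lambda;b_\lambda}$ and returned clean. I expect the delicate part to be sharpening the trade-off in (iii): tracking precisely how many of the additions actually round, bounding the cumulative round-off against the $t\Lambda\,2^{-b_\lambda}$-level margin left over from (i)--(ii), and verifying that the signed-digit recoding neither spoils the digit count nor produces a stray carry past the top bit — this bookkeeping is exactly what pins down the constant $3$ and the doubly-logarithmic refinement in the stated value of $b$.
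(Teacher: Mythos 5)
Your construction, three-way error decomposition (eigenvalue approximation, truncation of $\gamma$, per-addition round-off accumulating to $O(\log(t\Lambda/\epsilon))$ units), and resource tally coincide with the paper's proof, which bounds the total error by $\tfrac{t\Lambda}{2^b}\left(n+2^{b-b_\lambda}+2^{b-n}\right)$ and obtains the addition count $\min\left(\lceil(b_\lambda+1)/2\rceil,w_H\right)$ via the same signed-digit recoding you invoke (cited there from Ref.~\cite{Sanders_2020}). The proposal is correct and takes essentially the same route.
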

\begin{proof}
We first write the constant $\gamma=t\Lambda/(2\pi)$ as\begin{equation}
\label{eq:gamma_bin}
\gamma = \sum_{k=1}^\infty \gamma_k 2^{d_\gamma-k},
\end{equation}
with $\gamma_k=\{0,1\}$ and $d_\gamma=\lceil\log_2(\gamma)\rceil$. We then have
\begin{equation}
\begin{split}
\mu_m &= 2^{b-b_\lambda}\gamma l_m = \sum_{k=1}^\infty \gamma_k l_m 2^{d_\gamma+b-b_\lambda-k} =\widetilde{\mu_m} + \sum_{k=n+1}^\infty \gamma_k l_m 2^{d_\gamma+b-b_\lambda-k}\;,
\end{split}
\end{equation}
where we have introduced $\widetilde{\mu_m}$ as the approximation obtained by truncating the binary expansion of $\gamma$ to only $n$ bits of precision. The error can be bounded as
\begin{equation}
\begin{split}
\left|\mu_m-\widetilde{\mu}_m\right|&=\sum_{k=n+1}^\infty \gamma_k l_m 2^{d_\gamma+b-b_\lambda-k}\\
&<\sum_{k=n+1}^\infty \gamma_k 2^{d_\gamma+b-k}\\
&<\sum_{k=n+1}^\infty 2^{d_\gamma+b-k}=2^{d_\gamma+b-n}\;.
\end{split}
\end{equation}
Finally, we perform the multiplication by truncating $l_m/2^{b_\gamma}$ to $b$ bits of precision and add that to the $b$-bit ancilla register containing $\rvert\bar{1}\rangle$. We therefore introduce
\begin{equation}
\widetilde{\widetilde{\mu_m}} = \sum_{k=1}^n \gamma_k 2^{d_\gamma}\lfloor 2^{b-b_\lambda}l_m\rfloor\;,
\end{equation}
whose error in approximating $\widetilde{\mu_m}$ is bounded by
\begin{equation}
\begin{split}
\left|\widetilde{\mu_m}-\widetilde{\widetilde{\mu_m}}\right|&=\sum_{k=1}^n \gamma_k 2^{d_\gamma}\left(2^{b-b_\lambda}l_m-\lfloor 2^{b-b_\lambda}l_m\rfloor\right)\\
&<n2^{d_\gamma}\;.
\end{split}
\end{equation}

We can now sum all the errors together using
\begin{equation}
\left|\frac{2^{b}}{2\pi}t\lambda_m-\mu_m\right|=\gamma2^b\left|\frac{\lambda_m}{\Lambda}-\frac{l_m}{2^{b_\lambda}}\right|<\gamma2^{b-b_\lambda}\;,
\end{equation}
and also the errors from truncations
\begin{equation}
\left|\mu_m-\widetilde{\widetilde{\mu_m}}\right|<2^{d_\gamma}\left(n+2^{b-n}\right)<\gamma\left(n+2^{b-n}\right)\;.
\end{equation}

If we then use $K_m=\widetilde{\widetilde{\mu_m}}$ in the unitary $Q_H$ from Eq.~\eqref{eq:qh_oracle} the total error in the phase kickback approximation is
\begin{equation}
\begin{split}
\left\|e^{-itH} - \langle \bar{1}\lvert Q_H \rvert\bar{1}\rangle\right\|&<\frac{2\pi}{2^b}\gamma\left(n+2^{b-b_\lambda}+2^{b-n}\right)\\
&=\frac{t\Lambda}{2^b}\left(n+2^{b-b_\lambda}+2^{b-n}\right)
\end{split}
\end{equation}

At this point, in order to guarantee a total error below $\epsilon$ it is sufficient to make the following choices
\begin{equation}
b_\gamma=n=\left\lceil\log_2\left(\frac{3t\Lambda}{\epsilon}\right)\right\rceil\;,
\end{equation}
together with a phase register of size
\begin{equation}
b=\left\lceil\log_2\left(\frac{3t\Lambda}{\epsilon}\log_2\left(\frac{3t\Lambda}{\epsilon}\right)\right)\right\rceil\;.
\end{equation}
This can be easily seen by noticing that, for this choices,
\begin{equation}
n+2^{b-b_\lambda}+2^{b-n}<1+\log_2\left(\frac{3t\Lambda}{\epsilon}\right)+2^b\frac{2\epsilon}{3t\Lambda}\;.
\end{equation}

In summary, the complete implementation of the phase kickback scheme proceeds then as follows:
\begin{enumerate}
    \item we use the oracle $O_{\lambda,n}$ on a target register of size $n=\left\lceil\log_2\left(\frac{3t\Lambda}{\epsilon}\right)\right\rceil$ to produce
    \begin{equation}
    O_{\lambda;n} \rvert m\rangle_q\otimes\rvert0\rangle_n = \rvert m\rangle_q\otimes\rvert l_m\rangle_n\;.
    \end{equation}
    \item we precompute classically an approximation to $\gamma=t\Lambda/(2\pi)$ with $n=\left\lceil\log_2\left(\frac{3t\Lambda}{\epsilon}\right)\right\rceil$ bits of precision
    \begin{equation}
    \widetilde{\gamma}=\sum_{k=1}^n \gamma_k 2^{d_\gamma-l}\;,
    \end{equation}
    with $d_\gamma=\lceil\log_2(\gamma)\rceil$.
    \item for every non-zero bit in the above binary expansion of $\gamma$ we add to the $b$-qubit phase register the corresponding bit shifted version of the integer $l_m$ into the phase register. This requires a number of $b$-bit adders given by the Hamming weight $w_H=\sum_{k=1}^n\gamma_k$ of $\widetilde{\gamma}$. Using Gidney's adder~\cite{Gidney_2018}, this can be done using less than $4w_Hb$ $T$ gates, $b$ ancillas and additional Clifford gates.
    Finally, if the Hamming weight $w_H$ is larger than $\lceil(n+1)/2\rceil$ one can use the strategy described in Appendix A of~\cite{Sanders_2020} to requiring exactly this number of additions and subtractions.
    \item we complete the procedure by uncomputing the eigenvalue using $O_{\lambda,n}^\dagger$ to restore the ancillas.
\end{enumerate}

The total gate and ancilla cost follows by adding all the contributions described above.
\end{proof}

\section{Norms}
\label{appendix:norm}

In the following section we will prove the upper bounds to the norm of the commutators and operators that are used in the implementation of Profuct Formulas in Sec.~\ref{ssec:pf}. We start by defining the "fermionic semi-norm", which is the usual norm for operators with the difference that it is evaluated on the anti-symmetric part of the Hilbert space, not on the total Hilbert space. We will prove that it has the right properties and it satisfy the usual inequality relations. With the notion of this norm, we then proceed by setting upper bounds to the kinetic energy, the potential energy, and their commutators.

\subsection{Fermionic Semi-norm}
\label{app:fermionic_seminorm}
The fermionic semi-norm was defined in second quantization by Su et al.\cite{Su_2021seminorm}. It was used to quantify the error of the product formula algorithm, exploiting that the considered Hamiltonian was particle-preserving. This was done by evaluating the norm on the subspace of the Fock space with fixed particle number. In first quantization, every operator is particle-preserving, thus we do not have such a problem. However, the operators do not explicitly consider the statistics of the system. In fact, the statistic is encoded into the wave function, which is symmetric by exchange if the system is composed of bosons, and it is anti-symmetric if the system is fermionic. We want to extend the fermionic semi-norm in first quantization, and to do it we can follow an equivalent procedure to the one done in second quantization by Su et al.\cite{Su_2021seminorm}, by substituting the projectors on the fixed particle-number subspace, with the projectors on the anti-symmetric subspace.

Let's consider a single-particle Hilbert space $\mathcal{H}^1$, and consider its orthonormal basis $\{\ket{\phi} \}$.
An $\eta$ particle  Hilbert space $\mathcal{H}^{\eta}$ is tensor product of single-particle Hilbert spaces $\mathcal{H}^{\eta} = \bigotimes_{n=1}^{\eta}\mathcal{H}^1$, and its orthonormal basis is $\{\bigotimes_{n=1}^{\eta} \ket{\phi}_n \}$.
In general, a Hilbert space can be divided into symmetric and anti-symmetric subspace $\mathcal{H}^{\eta} = \mathcal{H}_S \oplus \mathcal{H}_A$.
Let's consider the permutation operator $P_{\pi}$, for each permutation $\pi \in S_{\eta}$ in the symmetric group of all permutations of the set $\{ 1,..., \eta\}$, which act as (see e.g.~\cite{watrous_2018})
\begin{equation}
    P_{\pi} \left(\ket{\phi}_1 \otimes \cdot\cdot\cdot \otimes \ket{\phi}_{\eta} \right) = \ket{\phi}_{\pi^{-1}(1)} \otimes \cdot\cdot\cdot \otimes \ket{\phi}_{\pi^{-1}(\eta)}
\end{equation}
on any state $\ket{\psi} = \ket{\phi}_1 \otimes \cdot\cdot\cdot \otimes \ket{\phi}_{\eta} $ in $\mathcal{H}^{\eta}$.
The symmetric and anti-symmetric subspaces are defined as
\begin{equation}
    \mathcal{H}_S = \left\{ \ket{\psi} \in \mathcal{H}^{\eta} : P_{\pi}\ket{\psi} = \ket{\psi} \text{ for every } \pi \in S_{\eta}\right\}
\end{equation}
\begin{equation}
    \mathcal{H}_A = \left\{ \ket{\psi} \in \mathcal{H}^{\eta} : P_{\pi}\ket{\psi} = \text{sign}(\pi)\ket{\psi} \text{ for every } \pi \in S_{\eta}\right\}
\end{equation}
and we will denote the corresponding states as $\ket{\psi_S}\in \mathcal{H}_S$ and $\ket{\psi_A}\in \mathcal{H}_A$.
Consider a $\eta$-tuple $c=\left(a_1, ..., a_{\eta} \right)$, with $a_i\ne a_j$ for every $i,j$,
we can write the unnormalized states
\begin{equation}
    \ket{\psi_S^c} = \sum_{\pi\in S_{\eta}} P_{\pi} \left(\ket{\phi_{a_1}}\cdot\cdot\cdot \ket{\phi_{a_{\eta}}} \right),
\end{equation}
\begin{equation}
    \ket{\psi_A^c} = \sum_{\pi\in S_{\eta}}\text{sign}(\pi) P_{\pi} \left(\ket{\phi_{a_1}}\cdot\cdot\cdot \ket{\phi_{a_{\eta}}} \right),
\end{equation}
Since $\ket{\psi_S}$ and $\ket{\psi_A}$ span the entire physical Hilbert space, every $\ket{\psi}\in \mathcal{H}^{\eta}$ can be written as a linear combination of them.

Let's focus only on fermionic systems because we are interested in them in this work. We can define the projector operator onto the anti-symmetric subspace $\mathcal{H_A}$ as (see e.g.~\cite{watrous_2018})
\begin{equation}
    \Pi_A = \frac{1}{\eta!}\sum_{\pi\in S_{\eta}} \text{sign}(\pi)P_{\pi}
\end{equation}
which has the following properties:
\begin{itemize}
    \item $\Pi_A = \Pi_A^{\dagger}$,
    \item $P_{\pi}\Pi_A = \text{sign}(\pi)\Pi_A$,
    \item for every distinct indices $i,j \in \{1,...,\eta \}$, $P_{(i j)}\Pi_A = -\Pi_A = \Pi_AP_{(i j)}$, with $(i j)$ being the permutation that swaps only $i$ and $j$.
\end{itemize}
On a symmetric wave function, it acts as
\begin{equation}
    \Pi_A\ket{\psi_S} = \frac{1}{\eta!}\sum_{\pi\in S_{\eta}} \text{sign}(\pi)P_{\pi}\ket{\psi_S} = \frac{1}{\eta!}\sum_{\pi\in S_{\eta}} \text{sign}(\pi\in S_{\eta})\ket{\psi_S} = 0
\end{equation}
since for $\eta>1$ there are as many odd permutations as even ones, while on an anti-symmetric state
\begin{equation}
    \Pi_A\ket{\psi_A} = \frac{1}{\eta!}\sum_{\pi\in S_{\eta}} \text{sign}(\pi)P_{\pi}\ket{\psi_A} = \frac{1}{\eta!}\sum_{\pi\in S_{\eta}} \text{sign}(\pi)^2\ket{\psi_A} = \ket{\psi_A}.
\end{equation}
Therefore, on a general $\ket{\psi} = \sum_i \alpha_i \ket{\phi_{A}}_i + \beta_i\ket{\phi_{S}}_i \in \mathcal{H}^{\eta}$, with $\alpha_i,\beta_i \in \mathbb{C}$, it acts as
\begin{equation}
    \Pi_A \ket{\psi} = \sum_i \alpha_i \ket{\phi_{A}}_i,
\end{equation}
and we will call $\Pi_A \ket{\psi} = \ket{\psi_A}$ which is the anti-symmetric part of $\ket{\psi}$.
Let's now consider a $\eta$-tuple $c=\left(a_1, ..., a_{\eta} \right)$ for which there exist two indices $i,j \in \{1, ..., \eta \}$ for which $a_i = a_j$, then
\begin{equation}
\begin{split}
    &\Pi_A \ket{\psi_A^c} = \Pi_A P_{(i j)} \ket{\psi_A^c} = - \Pi_A\ket{\psi_A^c} \text{   thus}\\
    &\Pi_A \ket{\psi_A^c} = 0,
\label{eq:pauliexclusion}
\end{split}
\end{equation}
and as expected the Pauli exclusion principle is satisfied.

Let's define the fermionic semi-norm for an operator $O$ symmetric under particle permutation, that is such that $[O,P_\pi] = 0$ for all $\pi\in S_\eta$, as follows
\begin{equation}
    \norm{O}_A = \max_{\ket{\psi_A},\ket{\varphi_A}} \left|\bra{\psi_A}O\ket{\varphi_A} \right|.
\end{equation}
The fermionic semi-norm has the following properties, for a pair $O,O'$ of permutation invariant operators
\begin{enumerate}
    \item $\norm{\lambda O}_A = |\lambda|\norm{O}_A$ with $\lambda \in \mathbb{C}$,
    \item $\norm{O + O'}_A \le \norm{O}_A + \norm{O'}_A$,
    \item $\norm{OO'}_A \le \norm{O}_A\norm{O'}_A$,
    \item $\norm{\mathbb{1}}_A = 1$
    \item $\norm{UOU'}_A = \norm{O}_A$ for $U,U'$ permutation invariant unitaries,
    \item $\norm{O^{\dagger}}_A = \norm{O}_A$
\end{enumerate}
In order to show these results we will follow the same procedure described in Ref.~\cite{Su_2021seminorm}.
In particular, we will prove the third property, as the remaining follow directly from the definition of the fermionic semi-norm.
Let's consider
\begin{equation}
    \begin{split}
\norm{OO'}_A &= \max_{\ket{\psi_A},\ket{\varphi_A}} \left|\bra{\psi_A}OO'\ket{\varphi_A} \right| \\
&= \max_{\ket{\psi_A},\ket{\varphi_A}} \left|\bra{\psi_A}O\Pi_A \Pi_A O'\ket{\varphi_A} \right| \\
&\le \max_{\ket{\psi_A}} \norm{\Pi_A O^{\dagger}\ket{\psi_A}}\quad \max_{\ket{\varphi_A}} \norm{\Pi_A O'\ket{\psi_A}}
    \end{split}
\end{equation}
using the Cauchy-Schwarz inequality.
Now consider an arbitrary state $\ket{\phi} \in \mathcal{H}^{\eta}$
\begin{equation}
    \begin{split}
\max_{\ket{\psi_A}} \norm{\Pi_A O^{\dagger}\ket{\psi_A}} &= \max_{\ket{\psi_A}} \max_{\ket{\phi}} \left|\bra{\phi}\Pi_A O^{\dagger} \ket{\psi_A} \right| \\
&= \max_{\ket{\psi_A}}\max_{\ket{\phi}} \norm{\Pi_A\ket{\phi}} \left| \frac{\bra{\phi}\Pi_A}{ \norm{\Pi_A\ket{\phi}}} O^{\dagger} \ket{\psi_A} \right| \\
&\le \norm{O^{\dagger}}_A = \norm{O}_A
    \end{split}
\end{equation}
assuming $\Pi_A\ket{\phi} \ne 0$ because the case $\Pi_A\ket{\phi}= 0$ never leads to maximality.
If we also consider
\begin{equation}
    \begin{split}
\norm{O}_A = \norm{O^{\dagger}}_A &= \max_{\ket{\phi_A},\ket{\psi_A}} \left|\bra{\phi_A}O^{\dagger}\ket{\psi_A} \right| \\
&= \max_{\ket{\phi_A},\ket{\psi_A}} \left|\bra{\phi_A}\Pi_A O^{\dagger}\ket{\psi_A} \right| \\
&\le \max_{\ket{\phi},\ket{\psi_A}} \left|\bra{\phi}\Pi_A O^{\dagger}\ket{\psi_A} \right| = \max_{\ket{\psi_A}} \norm{\Pi_A O^{\dagger}\ket{\psi_A}} \\
    \end{split}
\end{equation}
so that $\max_{\ket{\psi_A}} \norm{\Pi_A O^{\dagger}\ket{\psi_A}} = \norm{O}_A$, and also $\max_{\ket{\psi_A}} \norm{\Pi_A O'\ket{\psi_A}} = \norm{O'}_A$, proving the third property.

We can also prove that the fermionic semi-norm is a projected spectral norm. Therefore, that for any operator $O$ such that $[O,P_\pi]=0$ for all $\pi\in S_\eta$, it holds that
\begin{equation}
    \norm{O}_A = \norm{\Pi_A O \Pi_A}.
\end{equation}
Let's prove it.
Consider
\begin{equation}
    \begin{split}
\norm{O}_A &= \max_{\ket{\phi_A},\ket{\psi_A}} \left|\bra{\phi_A}O\ket{\psi_A} \right| \\
&= \max_{\ket{\phi_A},\ket{\psi_A}} \left|\bra{\phi_A}\Pi_A O \Pi_A \ket{\psi_A} \right| \\
&\le \max_{\ket{\phi},\ket{\psi}} \left|\bra{\phi}\Pi_A O \Pi_A \ket{\psi} \right| = \norm{\Pi_A O \Pi_A}.
    \end{split}
\end{equation}
On the other hand,
\begin{equation}
    \begin{split}
\norm{\Pi_A O \Pi_A} &= \max_{\ket{\phi},\ket{\psi}} \left|\bra{\phi}\Pi_A O \Pi_A \ket{\psi} \right| \\
&= \max_{\ket{\psi}}\norm{\Pi_A\ket{\psi}}\max_{\ket{\phi}}\norm{\Pi_A\ket{\phi}} \left|\frac{\bra{\phi} \Pi_A}{\norm{\Pi_A\ket{\phi}}} O \frac{\Pi_A\ket{\psi} }{\norm{\Pi_A\ket{\psi}}} \right| \\
&\le \max_{\ket{\phi_A},\ket{\psi_A}} \left|\bra{\phi_A}O\ket{\psi_A} \right| = \norm{O}_A
    \end{split}
\end{equation}
always assuming $\Pi_A\ket{\phi}\ne 0$ and $\Pi_A\ket{\psi}\ne 0$ since other choices do not lead to maximality.
Then
\begin{equation}
    \norm{O}_A = \norm{\Pi_A O \Pi_A}
\end{equation}

We can specialize the result to our fermionic system, where a single particle wave function is composed of spatial, spin, and isospin parts. The space is discretized on a spatial lattice, the spin can be up or down as well as the isospin.
Hence, a lattice site can host at most $4$ fermions. If we consider a total wave function $\ket{\psi_A}$ where there are more than $4$ fermions in the same lattice site, then $\Pi_A\ket{\psi_A} = 0$, as in Eq.~\eqref{eq:pauliexclusion}. This behavior agrees with the Pauli exclusion principle.

\subsection{Fermionic semi-norm of operators in the Hamiltonian}
\label{appendix_fermionic_seminorm_Trotter_error}

We can use the fermionic semi-norm to evaluate $\norm{H}_A$ where $H = T + V_2 + V_3$ is the pionless EFT Hamiltonian of our system with individual terms defined explicitly in Eq.~\eqref{eq:Tdiscrete} and Eq.~\eqref{eq:Vdiscrete} in the main text.
We can remember that
\begin{equation}
    \norm{H}_A  \le \norm{T}_A + \norm{V_2+V_3}_A \le \norm{T}_A + \norm{V_2}_A + \norm{V_3}_A
\end{equation}
and so let's estimate the fermionic semi-norm for each term of the Hamiltonian.
Regarding the kinetic energy, we can go into the momentum space without changing the norm, due to the fact that the Quantum Fourier Transform on all $d$ cartesian components and for all the $\eta$ is a permutation invariant unitary operator. The kinetic energy is diagonal in the momentum space, and its maximum possible value is obtained when the momentum of each particle is the maximum possible $p_{i,max}^2 = d\left(\frac{\pi}{a}\right)^2$.
Therefore we get
\begin{equation}
\label{eq:norm_T}
    \norm{T}_A = \max_{\ket{\psi_A}}\left|\bra{\psi_A}(QFT^{\otimes d\eta})^\dagger T (QFT^{\otimes d\eta})\ket{\psi_A} \right| \le \eta d \frac{\hbar^2}{2\mu} \left(\frac{\pi}{a} \right)^2 = dK\eta2^{2m-2}\;,
    \end{equation}
    where $a$ is the lattice spacing and $K$ is the total strength of the kinetic energy defined in Eq.~\eqref{eq:kkin}.
The potentials are diagonal in the coordinate space, so we can say that
\begin{equation}
    \norm{V}_A = \max_{\ket{\psi_A}}\left|\bra{\psi_A}V\ket{\psi_A} \right|.
\end{equation}
The maximum eigenvalue of the three-body potential can be evaluated as
\begin{equation}
\begin{split}
    \norm{V_3}_A &= \max_{\ket{\psi_A}} \left| \bra{\psi_A} \sum_{i=0}^{\eta-1}\sum_{j\ne i}^{\eta-1} \sum_{k\ne i \ne j}^{\eta-1} \sum_{\vec{r}_i,\vec{r}_j,\vec{r}_k} \frac{G}{6}\delta_{\Vec{r}_i,\Vec{r}_j}\delta_{\Vec{r}_i,\Vec{r}_k} \Pi_{i}(\Vec{r}_i)\Pi_{j}(\Vec{r}_j)\Pi_{k}(\Vec{r}_k)
    \ket{\psi_A}  \right| \\
    &= \frac{|G|}{6} \max_{\ket{\psi_A}} \left| \bra{\psi_A} \sum_{i=0}^{\eta-1}\sum_{j\ne i}^{\eta-1} \sum_{k\ne i \ne j}^{\eta-1} \sum_{\vec{r}} \Pi_{i}(\Vec{r})\Pi_{j}(\Vec{r})\Pi_{k}(\Vec{r}) \ket{\psi_A}  \right|. \\
    &= |G|(\text{max number of unordered triplets})
\end{split}
\end{equation}
To find the maximum, we must maximize the number of triplets for a given $\eta$.
Since we consider $4$ fermionic flavors, we cannot have more than $4$ particles in the same site due to the Pauli exclusion principle.
Therefore we can arrange particles in quadruplets, triplets, couples, and single particles. These two last cases can be ignored for the three-body potential because they give a null contribution.
Given $\eta$ particles, they can arrange in at most $\left \lfloor \frac{\eta}{4} \right \rfloor$ quadruplets, or in at most $\left \lfloor \frac{\eta}{3} \right \rfloor$ triplets.
Noticing that each quadruplet contain $4$ triplets, if we consider to have $x \in \left[0,\left \lfloor \frac{\eta}{4} \right \rfloor  \right]$ quadruplets and $\left \lfloor \frac{\eta-4x}{3} \right \rfloor$ single-triplets, the total number of triplets is
\begin{equation}
\label{eq:tripletstomax}
    N_t=4x + \left \lfloor \frac{\eta-4x}{3} \right \rfloor.
\end{equation}
We can see that it is maximized when $x$ is maximum.
Therefore, the maximum number of triplets is
\begin{equation}
    N_t = \begin{cases}
        4\left \lfloor \frac{\eta}{4} \right \rfloor +1 & \text{if } \eta = 4n+3 \\
        4\left \lfloor \frac{\eta}{4} \right \rfloor & \text{otherwise}
    \end{cases}
\end{equation}
with $n$ integer. Noticing that $N_t\le \eta$ for every $\eta$, the three-body potential norm is bounded by
\begin{equation}
\label{eq:norm_V3}
    \norm{V_3}_A \le \eta |G|.
\end{equation}
For the two-body potential, instead, we can have groups of $2$, $3$, or $4$ particles on the same lattice site.
In this case, we must evaluate the maximum number of couples. Every quadruplet contains $6$ couples, every triplet $3$, and every couple $1$. We must find the $x\in \left[0,\left \lfloor \frac{\eta}{4} \right \rfloor  \right]$ integer and the $y\in[0,\left \lfloor \frac{\eta-6x}{3}\right \rfloor]$ that maximizes
\begin{equation}
\label{eq:couplestomax}
    N_c = 6x + 3y + \left \lfloor \frac{\eta-6x - 3y}{2} \right \rfloor
\end{equation}
Following the same idea as before, we maximize the number of quadruplets, and then the number of triplets.
The maximum number of couples the system can form is
\begin{equation}
\label{eq:N_c}
N_c=\begin{cases}
    6\left \lfloor \frac{\eta}{4} \right \rfloor+ 3 & \text{if } \eta = 4n+3  \\
    6\left \lfloor \frac{\eta}{4} \right \rfloor+ 1 & \text{if } \eta = 4n+2 \\
    6\left \lfloor \frac{\eta}{4} \right \rfloor & \text{otherwise}\\
\end{cases}
\end{equation}
with $n$ integer.
Noticing that $N_c\le 3\eta/2$ for every $\eta$, the two-body potential norm is
\begin{equation}
\label{eq:norm_V2}
    \norm{V_2}_A  \le \frac{3}{2}\eta|C_0|
\end{equation}
A better bound on the potential can be found by realizing that the potential constants $C$ and $G$ have opposite signs, so their contribution partially cancels.
Therefore we want to evaluate $\norm{V_2 + V_3}_A$, but in this case, it is not for granted that the division in quadruplets maximizes the norm. In fact, it strongly depends on the value of the potential constants and on $\eta$. If we put together Eq.\eqref{eq:tripletstomax} and \eqref{eq:couplestomax}, we get
\begin{equation}
\begin{split}
     \norm{V_2 + V_3}_A &\le \text{max}_{x,y}\left|N_cC_0 + N_tG \right|\\
     &\le \text{max}\left\{|C_0|\left \lfloor \frac{\eta}{2} \right \rfloor, |3C_0+G|\left \lfloor \frac{\eta}{3} \right \rfloor, |6C_0+4G|\left \lfloor \frac{\eta}{4} \right \rfloor  \right\}.
\end{split}
\end{equation}

\paragraph{Commutator bounds}
Let us call $P_q(t)$ the $q$-th order oracle for the product formulas. We can then define the following general formula:
\begin{equation}
    \norm{e^{-iHt} - P_q(t)} \le t^{q+1} \delta_q
\end{equation}
where
\begin{equation}
\begin{split}
    \delta_1 &= \frac{1}{2}\norm{[T,V]}_A \\
    \delta_2 &= \frac{1}{12} \norm{\left[V, [V,T]\right]}_A + \frac{1}{24} \norm{\left[T, [T,V]\right]}_A \\
    \delta_4 &= 0.0047\norm{[T,[T, [T, [V, T]]]]}_A + 0.0057\norm{[T,[T, [V, [V, T]]]]}_A + 0.0046 \norm{[T,[V, [T, [V, T]]]]}_A +\\
    &+ 0.0074 \norm{[T,[V, [V, [V, T]]]]}_A + 0.0097 \norm{[V,[T, [T, [V, T]]]]}_A + 0.0097 \norm{[V,[T, [V, [V, T]]]]}_A +\\
    &+ 0.0173 \norm{[V,[V, [T, [V, T]]]]}_A + 0.0284 \norm{[V,[V, [V, [V, T]]]]}_A \\
    \label{eq:total_error_Trotter}
\end{split}
\end{equation}
We reported here the definition of $\delta_q$ for simplicity, but one can find their full derivation in \cite{PhysRevX.11.011020}. From this relations we can see the norm of which commutators we have to estimate to find an upper bound to the total error of the product formula. Let us call $\alpha_q$ that estimate, which will be an upper bound to the real value of the errors $\delta_q$ (so that $\delta_q \le \alpha_q$ for every $q$). The full expressions for $\alpha_q$ can be found in Eq.~\eqref{eq:commutatorTV} for $\alpha_1$, Eq.~\eqref{eq:alpha2} for $\alpha_2$ and Eq.~\eqref{eq:alpha4} for $\alpha_4$.

Let us start by computing $\alpha_1$. In order to compute the bound for the commutators between kinetic and potential energy, we can exploit the fact that the potentials are symmetric by the exchange of two particles and that the kinetic energy of a given particle $i$ will not commute only with the potential terms where $i$ is involved. Hence, we have the following properties:  $[T_i, V_{ij}] = [T_i, V_{ji}]$, and  $[T_k, V_{ij}] \ne 0$ only if $k=i$ or $k=j$ due to the commutation relation of position and momentum. This means that we can simplify the commutator as
\begin{equation}
\begin{split}
    [T,V_2] &= \left[ \sum_{n=0}^{\eta-1}T_n, \sum_{i=0}^{\eta-1}\sum_{j\ne i}^{\eta-1} V_{ij} \right]  = \sum_{i=0}^{\eta-1}\sum_{j\ne i}^{\eta-1}\left([T_i, V_{ij}] + [T_j, V_{ij}]\right) = 2\sum_{i=0}^{\eta-1}\sum_{j\ne i}^{\eta-1} [T_i, V_{ij}]\;,
\end{split}
\end{equation}
while for the $3$-body potential a similar argument gives
\begin{equation}
\begin{split}
    [T,V_3] &= \left[ \sum_{n=0}^{\eta-1}T_n, \sum_{i=0}^{\eta-1}\sum_{j\ne i}^{\eta-1} \sum_{k\ne i,j}^{\eta-1} V_{ijk} \right]  \\
    &= \sum_{i=0}^{\eta-1}\sum_{j\ne i}^{\eta-1}\sum_{k\ne i,j}^{\eta-1} \left([T_i, V_{ijk}] + [T_j, V_{ijk}] + [T_k, V_{ijk}]\right) \\
    &= 3\sum_{i=0}^{\eta-1}\sum_{j\ne i}^{\eta-1}\sum_{k\ne i,j}^{\eta-1} [T_i, V_{ijk}]\;.
\end{split}
\end{equation}
By computing the fermionic semi-norm, we can use the triangular inequality to get the following expression:
\begin{equation}
\begin{split}
    \norm{[T,V]}_A &\le \norm{2\sum_{i=0}^{\eta-1}\sum_{j\ne i}^{\eta-1} [T_i, V_{ij}]+3\sum_{i=0}^{\eta-1}\sum_{j\ne i}^{\eta-1}\sum_{k\ne i,j}^{\eta-1} [T_i, V_{ijk}]}_A  \\
    &\le 2 \sum_{i=0}^{\eta-1} \max_i\norm{T_i}_A \max_i\norm{2\sum_{j\ne i}^{\eta-1} V_{ij}+3\sum_{j\ne i}^{\eta-1}\sum_{k\ne i,j}^{\eta-1} V_{ijk}}_A
\end{split}
\end{equation}

So now we simply have to compute the maximum of the kinetic and potential terms. Starting with the kinetic energy, $\max_i\norm{T_i}_A$ is reached when particle $i$ has the maximum possible momentum, thus
\begin{equation}
    \max_i\norm{T_i}_A = dK2^{2m-2}\;
\end{equation}
For the potential term, if we fix a particle $i$, the interaction will depend on how many particles there are on that same site. The calculation is similar to the one done before to estimate $\norm{V_2+V_3}_A$, but since the sum over $i$ is outside the norm there are small differences. If we have $0$ or $1$ particles, the potential energy is zero. If there are $2$ particles we will have one couple, with $3$ particles we have $3$ couples and $1$ triplet, but the particle $i$ will be part of only $2$ couples and $1$ triplet. Finally, with $4$ particles, the $i$-th particle will be part of $3$ couples and $3$ triplets. Moreover, since we are considering the summation indices to go from $0$ to $\eta-1$, we have to rescale the coupling constant to $C_0/2$ and $G/6$ to take care of double counting. In the end, we have:
\begin{equation}
\max_i\norm{2\sum_{j\ne i}^{\eta-1} V_{ij}+3\sum_{j\ne i}^{\eta-1}\sum_{k\ne i,j}^{\eta-1} V_{ijk}}_A \le \max\left\{\left| C_0 \right|, \left| 2C_0+\frac{G}{2}\right|, \left| 3C_0 + \frac{3G}{2} \right|\right\}
\end{equation}
Putting all together, the bound for the norm of the commutator becomes:
\begin{equation}
\label{eq:commutatorTV}
    \frac{1}{2}\norm{[T,V]}_A \le \eta dK2^{2m-2}\max\left\{\left| C_0 \right|, \left| 2C_0+\frac{G}{2}\right|, \left| 3C_0 + \frac{3G}{2} \right|\right\} = \alpha_1 \;.
\end{equation}

In the same way, we can estimate the commutators for the second order Trotter formula:
\begin{equation}
\begin{split}
    [T,[T, V]] &= \sum_{l=0}^{\eta-1} \left[T_l, 2\sum_{i=0}^{\eta -1} \sum_{j\ne i}^{\eta-1} [T_i, V_{ij}] + 3\sum_{i=0}^{\eta -1} \sum_{j\ne i}^{\eta-1}\sum_{k\ne i\ne j}^{\eta-1} [T_i, V_{ijk}] \right] \\
    &= \sum_{i=0}^{\eta-1} 4\left[ T_i, \sum_{j\ne i}^{\eta-1} [T_i, V_{ij}] \right] + 9 \left[ T_i, \sum_{j\ne i}^{\eta-1}\sum_{k\ne i\ne j}^{\eta-1} [T_i, V_{ijk}] \right] \\
    &= \sum_{i=0}^{\eta-1} \left[ T_i, \left[ T_i, 4\sum_{j\ne i}^{\eta-1} V_{ij} + 9 \sum_{j\ne i}^{\eta-1}\sum_{k\ne i\ne j}^{\eta-1} V_{ijk} \right] \right]
\end{split}
\end{equation}
\begin{equation}
\begin{split}
    \norm{[T,[T, V]]}_A &\le 4 \sum_{i=0}^{\eta-1} \left( \max_i \norm{T_i}_A \right)^2 \max_i\norm{4\sum_{j\ne i}^{\eta-1} V_{ij} + 9\sum_{j\ne i}^{\eta-1}\sum_{k\ne i,j}^{\eta-1} V_{ijk}}_A \\
    &\le 4\eta \left( dK2^{2m-2} \right)^2 \max\left\{\left| 2C_0 \right|, \left| 4C_0+\frac{3G}{2}\right|, \left| 6C_0 + \frac{9G}{2} \right|\right\}\;
\end{split}
\end{equation}

\begin{equation}
\begin{split}
    [V,[V,T]] &= \sum_{i=0}^{\eta-1} \left[ 2\sum_{j\ne i}^{\eta-1} V_{ij} + 3\sum_{j\ne i}^{\eta-1}\sum_{k\ne i,j}^{\eta-1} V_{ijk}, \left[ 2\sum_{p\ne i}^{\eta-1} V_{ip} + 3\sum_{p\ne i}^{\eta-1}\sum_{q\ne i,j}^{\eta-1} V_{ipq}, T_i \right] \right]
\end{split}
\end{equation}
\begin{equation}
\begin{split}
    \norm{\left[V, \left[V,T \right] \right]}_A &\le 4 \sum_{i=0}^{\eta-1} \max_i \norm{T_i}_A \left( \max_i\norm{2\sum_{j\ne i}^{\eta-1} V_{ij} + 3\sum_{j\ne i}^{\eta-1}\sum_{k\ne i,j}^{\eta-1} V_{ijk}}_A \right)^2 \\
    &\le 4\eta dK2^{2m-2} \left( \max\left\{\left| C_0 \right|, \left| 2C_0+\frac{G}{2}\right|, \left| 3C_0 + \frac{3G}{2} \right|\right\} \right)^2
\end{split}
\end{equation}
This means that the upper bound to the error of the product formula will be:
\begin{equation}
\label{eq:alpha2}
\begin{split}
    \alpha_2 = &4\eta \frac{1}{24} \left(dK2^{2m-2} \right)^2 \max\left\{\left| 2C_0 \right|, \left| 4C_0+\frac{3G}{2}\right|, \left| 6C_0 + \frac{9G}{2} \right|\right\} \\
    + &4\eta \frac{1}{12} \left(dK2^{2m-2} \right) \left( \max\left\{\left| C_0 \right|, \left| 2C_0+\frac{G}{2}\right|, \left| 3C_0 + \frac{3G}{2} \right|\right\} \right)^2 \\
\end{split}
\end{equation}

In order to simplify the notation, let us define the quantity $M(s)$ as the upper bound to the potential term with modified coupling constants
\begin{equation}
    \max_i\norm{2^s\sum_{j\ne i}^{\eta-1} V_{ij} + 3^s\sum_{j\ne i}^{\eta-1}\sum_{k\ne i,j}^{\eta-1} V_{ijk}}_A \le \max\left\{\left| 2^s\frac{C_0}{2} \right|, \left| 2^sC_0+3^s\frac{G}{6}\right|, \left| 2^s\frac{3C_0}{2} + 3^s\frac{G}{2} \right|\right\} = M(s)
\end{equation}

Since we have that $\max_i\norm{T_i}_A = dK2^{2m-2}$ the commutators for the fourth order becomes:
\begin{equation}
\begin{split}
    \norm{[T,[T,[T,[T,V]]]]}_A &\le 2^4 \sum_i \left(\max_i\norm{T_i}_A^4\right) M(4) \\
    \norm{[T,[T,[V,[V,T]]]]}_A &\le 2^4 \sum_i \left( \max_i\norm{T_i}^3_A \right) M(3)^2 \\
    \norm{[T,[V,[V,[V,T]]]]}_A &\le 2^4 \sum_i \left( \max_i\norm{T_i}_A^2 \right) M(2)^3 \\
    \norm{[V,[V,[V,[V,T]]]]}_A &\le 2^4 \sum_i \left( \max_i\norm{T_i}_A \right) M(1)^4 \\
\end{split}
\label{equation:4-order-Trotter-bounds}
\end{equation}

Furthermore, due to our method to bound those norms, it does not matter the position of $T$ and $V$ but only how many $T$ and $V$ are inside the norm. In particular, our upper bounds hold uniformly for every element in the following sets of commutator norms
\begin{equation}
\begin{split}
    S_{3T2V} &=\{\norm{[T,[T,[V,[V,T]]]]}_A, \norm{[T,[V,[T,[V,T]]]]}_A, \norm{[V,[T,[T,[V,T]]]]}_A \}\\
    S_{2T3V} &=\{\norm{[T,[V,[V,[V,T]]]]}_A, \norm{[V,[T,[V,[V,T]]]]}_A, \norm{[V,[V,[T,[V,T]]]]}_A \}\;.
\end{split}
\end{equation}

With those relations, we can explicitly write the expression of $\alpha_4$ as
\begin{equation}
\begin{split}
    \alpha_4 &= 2^4\eta \left( 0,0047 \right) \left( dK2^{2m-2} \right)^4 M(4) + 2^4\eta \left( 0,02 \right) \left( dK2^{2m-2} \right)^3 M(3)^2 \\
    &+ 2^4\eta \left( 0,01883 \right) \left( dK2^{2m-2} \right)^2 M(2)^3 + 2^4\eta \left( 0,0284 \right) \left( dK2^{2m-2} \right)^1 M(1)^4 \\
\end{split}
\label{eq:alpha4}
\end{equation}

\section{Higher order Trotter-Suzuki formulas}
\label{appendix:higher_order_Trotter}

In this section we will present the proof of Theorem~\ref{theorem:2nd_and_fourth_Trotter} in the main text.

\begin{lemma}[Second-order Trotter-Suzuki]
 Consider a system with $\eta$ particles on a $d$ dimensional lattice with $M=2^{m}$ sites per direction described by the pionless Hamiltonian $H = T+V$ with
    the kinetic and potential terms
    given in Eq.~\eqref{eq:Tdiscrete} and Eq.~\eqref{eq:Vdiscrete}.
    For any $t\in \mathbb{R}$ and $\epsilon > 0$ there exists a quantum circuit that implements a unitary operator $U(t)$ such that $\norm{U(t) - e^{-iHt}} < \epsilon$ with a $T$-gate count equal to
    \begin{equation}
        T^{(2)}_{tr}(\epsilon, t) = rT_V \left( \frac{\epsilon}{4r} \right) + (r-1) T_T \left( \frac{\epsilon}{4r}, \frac{t}{r} \right) + 2T_T\left( \frac{\epsilon}{8r}, \frac{t}{2r} \right)\;.
    \end{equation}
    where $T_T$ and $T_V$ are given by Lemma~\ref{theorem:kinetic_energy_operator} and Lemma~\ref{theorem:complact_Trotter_potential_energy} respectively.
    The number of ancilla qubits will be
    \begin{equation}
        b_{QFT}^{(2)} + b_{DIAG}^{(2)} +\max \left( 2b_{QFT}^{(2)}-1, b_{DIAG}^{(2)}, m(m-1), dm-1 \right)
    \end{equation}
    with
    \begin{equation}
        b_{DIAG}^{(2)}=\left\lceil\log_2 \left( \frac{24d\eta t\lambda_T}{\epsilon}\log_2\left(\frac{24d\eta t\lambda_T}{\epsilon}\right)\right)\right\rceil\;,
    \end{equation}
    \begin{equation}
        b_{QFT}^{(2)} = \min\left( m-1,  \left\lceil\log_2\left(\frac{12d\eta mr}{\epsilon}\right)\right\rceil \right) +1\;;
    \end{equation}
    \begin{equation}
        b^{(2)}+\max \left( b^{(2)}, m(m-1), dm-1 \right)\quad\text{with}\quad b^{(2)}=\left\lceil\log_2\left(\frac{24d\eta t\lambda_T}{\epsilon}\log_2\left(\frac{24d\eta t\lambda_T}{\epsilon}\right)\right)\right\rceil\;.
    \end{equation}
    where $r = \left \lceil \sqrt{\frac{4t^3}{\epsilon}\alpha_2} \right \rceil$ and $\alpha_2$ can be found in Eq.~\eqref{eq:alpha2}.
    \label{lemma:2nd_order_trotter}
\end{lemma}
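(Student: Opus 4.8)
The plan is to build $U(t)$ from the symmetric second-order Trotter--Suzuki splitting
\begin{equation}
S_2(\tau)=e^{-iT\tau/2}\,e^{-iV\tau}\,e^{-iT\tau/2}\;,
\end{equation}
and to approximate $e^{-iHt}\approx S_2(t/r)^r$ for a suitable number of segments $r$. The structural point that fixes the shape of the gate count is that consecutive kinetic half-steps merge, $e^{-iT\tau/2}e^{-iT\tau/2}=e^{-iT\tau}$, so that with $\tau=t/r$,
\begin{equation}
S_2(t/r)^r = e^{-iT\tau/2}\,e^{-iV\tau}\left(\prod_{\ell=1}^{r-1}e^{-iT\tau}\,e^{-iV\tau}\right)e^{-iT\tau/2}\;,
\end{equation}
which contains exactly $r$ full potential exponentials $e^{-iV\tau}$, $r-1$ full kinetic exponentials $e^{-iT\tau}$, and $2$ half kinetic exponentials $e^{-iT\tau/2}$. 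Each factor is then realized by the circuits of Lemma~\ref{theorem:complact_Trotter_potential_energy} (for $U_V$) and Lemma~\ref{theorem:kinetic_energy_operator} (for $U_T$).

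First I would fix the error budget. Assign target error $\epsilon/(4r)$ to each of the $r$ potential unitaries $U_V(\tau)$ and to each of the $r-1$ full kinetic unitaries $U_T(\tau)$, and $\epsilon/(8r)$ to each of the two half-step unitaries $U_T(\tau/2)$. By subadditivity of the operator norm under composition of unitaries, the accumulated synthesis error is at most $\tfrac{\epsilon}{4}+\tfrac{(r-1)\epsilon}{4r}+\tfrac{2\epsilon}{8r}=\tfrac{\epsilon}{2}$, uniformly in $r\ge1$. It then remains to bound the intrinsic splitting error $\norm{e^{-iHt}-S_2(t/r)^r}$ by $\epsilon/2$.

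Next I would control that splitting error. Because $T$, $V$ and $H$ all commute with every permutation operator $P_\pi$ and the relevant dynamics takes place on the antisymmetric subspace, it suffices to bound the error in the fermionic semi-norm $\norm{\cdot}_A$ of Appendix~\ref{appendix:norm}. The standard second-order commutator estimate (see Ref.~\cite{PhysRevX.11.011020}) gives, per segment, $\norm{e^{-iH\tau}-S_2(\tau)}_A\le \tau^3\bigl(\tfrac{1}{12}\norm{[V,[V,T]]}_A+\tfrac{1}{24}\norm{[T,[T,V]]}_A\bigr)\le \tau^3\alpha_2$, with $\alpha_2$ the explicit bound of Eq.~\eqref{eq:alpha2} derived in Appendix~\ref{appendix_fermionic_seminorm_Trotter_error}. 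Telescoping over the $r$ segments gives $\norm{e^{-iHt}-S_2(t/r)^r}_A\le r\tau^3\alpha_2 = t^3\alpha_2/r^2$, so the choice $r=\left\lceil\sqrt{4t^3\alpha_2/\epsilon}\,\right\rceil$ makes this at most $\epsilon/4\le\epsilon/2$; together with the previous paragraph the total error is below $\epsilon$.

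Finally, the resource counts. Summing the per-factor costs from the two lemmas with the error arguments just fixed gives directly
\begin{equation}
T^{(2)}_{tr}(\epsilon,t)=rT_V\!\left(\frac{\epsilon}{4r}\right)+(r-1)T_T\!\left(\frac{\epsilon}{4r},\frac{t}{r}\right)+2T_T\!\left(\frac{\epsilon}{8r},\frac{t}{2r}\right)\;.
\end{equation}
For the ancilla count, since the sub-circuits are applied sequentially their ancilla registers can be reused, so one takes the maximum of the requirements of Lemma~\ref{theorem:kinetic_energy_operator} and Lemma~\ref{theorem:complact_Trotter_potential_energy}. Evaluating the register sizes of Lemma~\ref{theorem:kinetic_energy_operator} at the two kinetic calls $(\tau,\epsilon/4r)$ and $(\tau/2,\epsilon/8r)$, one checks that in both cases the relevant argument collapses to $24d\eta t\lambda_T/\epsilon$ (the factor $r$ cancelling between time and precision), producing the stated $b^{(2)}_{DIAG}$, while the phase-gradient register for the $QFT$ at precision $\epsilon/4r$ gives $b^{(2)}_{QFT}$; combining with the $dm-1$ ancillas of the potential circuit of Lemma~\ref{theorem:complact_Trotter_potential_energy} yields the displayed formula. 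The only genuinely delicate step is the error allocation together with the verification that the intrinsic error is measured in $\norm{\cdot}_A$ and is therefore bounded by $\alpha_2$ rather than by a cruder permutation-blind commutator norm; everything else is bookkeeping built on the outputs of Lemma~\ref{theorem:kinetic_energy_operator}, Lemma~\ref{theorem:complact_Trotter_potential_energy}, and the appendix bound on $\alpha_2$.
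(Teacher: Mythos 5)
Your proposal is correct and follows essentially the same route as the paper: the symmetric splitting with merged kinetic half-steps, the commutator bound $r\tau^3\alpha_2$ in the fermionic semi-norm from Appendix~\ref{appendix_fermionic_seminorm_Trotter_error}, the same choice of $r$, and the same per-factor cost accounting. Your error budget (assigning $\epsilon/(8r)$ to the two half-steps so that the $b_{DIAG}$ argument collapses to $24d\eta t\lambda_T/\epsilon$ for all kinetic calls) is in fact stated slightly more cleanly than in the paper, which reaches the same allocation by adjusting the half-step precision a posteriori to share a single phase register.
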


\begin{proof}
    Consider the error bound on the second order Trotter formula from Ref.~\cite{PhysRevX.11.011020}
    \begin{equation}
        \norm{e^{itH} - \left(e^{i\frac{t}{2r}T}e^{i\frac{t}{r}V}e^{i\frac{t}{2r}T}\right)^r} \le \frac{t^3}{r^2} \left(\frac{1}{12}\norm{\left[V, [V,T]\right]} + \frac{1}{24}\norm{\left[T, [T,V]\right]} \right) = r \epsilon_{Trotter}^{(2)}
    \end{equation}
    For simplicity, let us write
    \begin{equation}
        \epsilon_{Trotter}^{(2)} \le \frac{t^3}{r^3}\alpha_2
    \end{equation}
    where $\alpha_2$ is a constant whose value depends on the coupling constants $C, D$ and the number of particles $\eta$. As for the explicit calculation of $\alpha_2$, see Appendix \ref{appendix_fermionic_seminorm_Trotter_error} and in particular Equation \eqref{eq:total_error_Trotter}. Taking into account that we have access to oracles that only approximate $e^{iTt}$ and $e^{iVt}$ up to a finite error, we get
    \begin{equation}
        \norm{e^{itH} - \left( U\left(\frac{t}{r}\right) \right)^r} \le r\left( \epsilon_{Trotter}^{(2)} + 2\epsilon_T + \epsilon_V \right) \le \epsilon
    \end{equation}
    For simplicity, let us assume that $\epsilon_T + \epsilon_V = \epsilon/4r$, which leads to:
    \begin{equation}
        r\epsilon_{Trotter}^{(2)} \le \frac{\epsilon}{4}
    \end{equation}
    from which we can set $r$ as
    \begin{equation}
        r = \left \lceil \sqrt{\frac{4t^3}{\epsilon}\alpha_2} \right \rceil
    \end{equation}
    The total cost will be the number of oracle calls to $U_T$ times its cost, plus the number of oracle calls to $U_V$ times its cost. The oracle $U_V$ will be called $r$ times and its cost will be evaluated at time $t/r$. The oracle $U_T$ instead, will be called $r+1$ times, and its cost will be evaluated $r-1$ times at time $t/r$ and $2$ times at $t/2r$.
    \begin{equation}
         T^{(2)}_{tr}(\epsilon, t) = rT_V \left( \frac{\epsilon}{4r} \right) + (r-1) T_T \left( \frac{\epsilon}{4r}, \frac{t}{r} \right) + 2T_T\left( \frac{\epsilon}{4r}, \frac{t}{2r} \right)\;.
    \end{equation}
    In order to keep a single phase register for the diagonal part when implementing the exponentials of the kinetic energy, we choose instead to perform the initial and final evolution for time $t/2r$ using a target error of $\epsilon/8r$. The cost of this implementation is then
    \begin{equation}
         T^{(2)}_{tr}(\epsilon, t) = rT_V \left( \frac{\epsilon}{4r} \right) + (r-1) T_T \left( \frac{\epsilon}{4r}, \frac{t}{r} \right) + 2T_T\left( \frac{\epsilon}{8r}, \frac{t}{2r} \right)\;.
    \end{equation}
    As for the number of ancilla qubits, we can assume to apply one oracle at a time between $U_T$ and $U_V$, and to be able to use the same ancilla qubits as many times as we want. This means that the number of ancilla qubits required for this algorithm will be the maximum between the number of ancilla qubits required for every oracle taken alone, which is equal to:
    \begin{equation}
        b_{QFT}^{(2)} + b_{DIAG}^{(2)} +\max \left( 2b_{QFT}^{(2)}-1, b_{DIAG}^{(2)}, m(m-1), dm-1 \right)
    \end{equation}
    where the size $b_{DIAG}^{(2)}$ of the phase register needed for Lemma~\ref{lemma:phkick} is given by
    \begin{equation}
    b_{DIAG}^{(2)}= \left\lceil\log_2 \left(\frac{24d\eta t\lambda_T}{\epsilon}\log_2\left(\frac{24d\eta t\lambda_T}{\epsilon}\right)\right)\right\rceil\;,
    \end{equation}
    while the size of the register needed for the Quantum Fourier Transform is
    \begin{equation}
        b_{QFT}^{(2)} = \min\left( m-1, \left\lceil\log_2 \left(\frac{12d\eta mr}{\epsilon}\right)\right\rceil \right) +1\;.
    \end{equation}
\end{proof}

\begin{lemma}[Fourth-order Trotter-Suzuki]
 Consider a system with $\eta$ particles on a $d$ dimensional lattice with $M=2^{m}$ sites per direction described by the pionless Hamiltonian $H = T+V$ with 
    the kinetic and potential terms 
    given in Equations \ref{eq:Tdiscrete} and \ref{eq:Vdiscrete}.
    For any $t\in \mathbb{R}$ and $\epsilon > 0$ there exists a quantum circuit that implements a unitary operator $U(t)$ such that $\norm{U(t) - e^{-iHt}} < \epsilon$ with a $T$-gate count equal to
    \begin{equation}
    \begin{split}
    T^{(4)}_{tr}(\epsilon, t) = 5rT_V \left( \frac{\epsilon}{12r} \right) + 6r T_T \left( \frac{\epsilon}{12r}, \frac{t}{r} \frac{1}{4-4^{1/3}}\right)\;,
    \end{split}
    \end{equation}
    where $T_T$ and $T_V$ are given by Lemma~\ref{theorem:kinetic_energy_operator} and Lemma~\ref{theorem:complact_Trotter_potential_energy} respectively.
    The number of ancilla qubits will be
    \begin{equation}
       b_{QFT}^{(4)}+b_{DIAG}^{(4)}+ \max \left( 2b_{QFT}^{(4)}-1, b_{DIAG}^{(4)}, m(m-1), dm-1 \right)
    \end{equation}
    where $r = \left \lceil \left( \frac{12t^5}{\epsilon}\alpha_4 \right)^{1/4} \right \rceil$, $\alpha_4$ can be found in Eq.~\eqref{eq:alpha4}, the phase register size is
    \begin{equation}
    b_{DIAG}^{(4)}=\left\lceil \log_2\left(\frac{6d\eta t\lambda_T}{\epsilon}\frac{12}{4-4^{1/3}}\log_2\left(\frac{6d\eta t\lambda_T}{\epsilon}\frac{12}{4-4^{1/3}}\right)\right)\right\rceil\;,
    \end{equation}
    and the register for the number of ancilla qubits needed for the QFT is
    \begin{equation}
        b_{QFT}^{(4)} = \min\left( m-1, \left\lceil\log_2\left(\frac{36d\eta mr}{\epsilon}\right)\right\rceil \right)+1\;.
    \end{equation}
    \label{lemma:4th_order_trotter}
\end{lemma}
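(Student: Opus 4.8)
The plan is to follow the proof of Lemma~\ref{lemma:2nd_order_trotter} essentially verbatim, substituting Suzuki's recursive fourth-order product formula for the symmetric second-order one and carrying along the extra combinatorial factors this substitution introduces. First I would invoke the commutator-scaling error bound of Ref.~\cite{PhysRevX.11.011020}: writing $S_4(s)$ for the fourth-order Trotter--Suzuki oracle, one has
\begin{equation}
\norm{e^{-iHt} - S_4\!\left(\tfrac{t}{r}\right)^{r}} \le r\,\epsilon^{(4)}_{\mathrm{Trotter}}, \qquad \epsilon^{(4)}_{\mathrm{Trotter}} \le \left(\frac{t}{r}\right)^{5}\alpha_4,
\end{equation}
where $\alpha_4$ is the explicit combination of fermionic semi-norms of nested commutators of $T$ and $V$ recorded in Eq.~\eqref{eq:alpha4} and derived in Appendix~\ref{appendix_fermionic_seminorm_Trotter_error}.

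Next I would expand the recursion $S_4(s) = S_2(u_2 s)^{2}\,S_2\big((1-4u_2)s\big)\,S_2(u_2 s)^{2}$, with $u_2 = 1/(4-4^{1/3})$ and $S_2(x) = e^{-ixT/2}e^{-ixV}e^{-ixT/2}$. Composing $r$ copies and merging the four interior pairs of adjacent kinetic exponentials inside each $S_4$ block, each block contributes five exponentials of $V$ and six exponentials of $T$, hence $5r$ applications of $U_V$ and $6r$ applications of $U_T$ in total (merging further across consecutive blocks would only reduce the count, so $6r$ is a valid upper bound). The one elementary point that needs checking here is that the magnitude of the evolution time carried by any of these kinetic exponentials is at most $u_2\,t/r = \frac{t}{r}\frac{1}{4-4^{1/3}}$ — the endpoint half-steps carry $u_2 t/2r$, the $u_2$--$u_2$ merges carry $u_2 t/r$, and the $u_2$--$(1-4u_2)$ merges carry $|1-3u_2|t/2r < u_2 t/r$ — so, using that the cost $T_T(\epsilon,\tau)$ of Lemma~\ref{theorem:kinetic_energy_operator} is non-decreasing in $|\tau|$ (it enters only through $\lambda_T$ and the phase-register sizes), all six kinetic exponentials per block can be bounded by $T_T$ evaluated at the worst-case time $u_2 t/r$, while $T_V$ of Lemma~\ref{theorem:complact_Trotter_potential_energy} carries no time dependence at all.

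Then I would assemble the error budget. Accounting for the finite accuracies $\epsilon_T,\epsilon_V$ of the oracles $U_T,U_V$,
\begin{equation}
\norm{e^{-iHt} - \left(U\!\left(\tfrac{t}{r}\right)\right)^{r}} \le r\left(\epsilon^{(4)}_{\mathrm{Trotter}} + 6\epsilon_T + 5\epsilon_V\right) \le \epsilon,
\end{equation}
and take $\epsilon_T = \epsilon_V = \epsilon/(12r)$ so that the implementation errors contribute $11\epsilon/12$ and it remains to impose $r\,\epsilon^{(4)}_{\mathrm{Trotter}} \le \epsilon/12$; since $\epsilon^{(4)}_{\mathrm{Trotter}} \le (t/r)^5\alpha_4$, this is solved by $r = \left\lceil\big(\tfrac{12 t^5}{\epsilon}\alpha_4\big)^{1/4}\right\rceil$. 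The $T$-gate count then follows by adding $6r\,T_T\!\big(\tfrac{\epsilon}{12r}, \tfrac{t}{r}\tfrac{1}{4-4^{1/3}}\big)$ and $5r\,T_V\!\big(\tfrac{\epsilon}{12r}\big)$. For the ancillas, exactly as in the second-order case the oracles are applied one at a time and their work registers can be reused, so the requirement is the maximum over the individual subroutines; substituting the per-call time $\tau = u_2 t/r$ and accuracy $\epsilon' = \epsilon/(12r)$ into the formulas for $b_{DIAG}$ and $b_{QFT}$ of Lemma~\ref{theorem:kinetic_energy_operator} yields $b^{(4)}_{DIAG}$ and $b^{(4)}_{QFT}$ as stated, and taking the maximum with $m(m-1)$ (squaring) and $dm-1$ (the equality-testing register of Lemma~\ref{theorem:complact_Trotter_potential_energy}) gives the claimed ancilla count.

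The main obstacle is purely the bookkeeping in the second step: one must verify that the merging of adjacent kinetic factors leaves exactly six kinetic and five potential exponentials per block and that $u_2 t/r$ is genuinely the largest evolution time that occurs, so that the monotonicity argument can collapse the heterogeneous list of kinetic exponentials into a single worst-case term; everything after that is a mechanical substitution into Lemma~\ref{theorem:kinetic_energy_operator}, Lemma~\ref{theorem:complact_Trotter_potential_energy} and the error bound above.
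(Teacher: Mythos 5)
Your proposal is correct and follows essentially the same route as the paper's proof: the same error budget $r(\epsilon^{(4)}_{\mathrm{Trotter}}+6\epsilon_T+5\epsilon_V)\le\epsilon$ with $\epsilon_T=\epsilon_V=\epsilon/(12r)$, the same choice of $r$, and the same bounding of all six kinetic exponentials by the worst-case time $\tfrac{t}{r}\tfrac{1}{4-4^{1/3}}$. The only cosmetic difference is that you derive the $6T+5V$ structure by expanding the Suzuki recursion and merging adjacent kinetic factors, whereas the paper simply quotes the explicit eleven-factor form of $S_4$ with coefficients $a_1,\dots,a_6,b_1,\dots,b_5$ from Ref.~\cite{PhysRevX.11.011020}.
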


\begin{proof}
    The oracle for the fourth order Trotter formula can be found in Appendix M of Ref.~\cite{PhysRevX.11.011020}:
    \begin{equation}
        S_4(t) = e^{-ita_6T} e^{-itb_5V} e^{-ita_5T} e^{-itb_4V} e^{-ita_4T} e^{-itb_3V} e^{-ita_3T} e^{-itb_2V} e^{-ita_2T} e^{-itb_1V} e^{-ita_1T}
    \end{equation}
    where the definitions of $a_i$ and $b_i$ are, again, in Appendix M of Ref.~\cite{PhysRevX.11.011020}. If we denote the fourth-order Trotter error by $\epsilon_{Trotter}^{(4)}$, we will have that
    \begin{equation}
        \norm{e^{iHt} - S_4(t)} \le \epsilon_{tr}^{(4)} \le t^5\alpha_4
    \end{equation}
    where $\alpha_4$ can be found in Appendix \ref{appendix_fermionic_seminorm_Trotter_error} and in particular in Eq.~\eqref{eq:alpha4}. Moreover, let us call $U(t)$ the implemented unitary obtained by taking $S_4(t)$ and substituting $e^{iTt}$ and $e^{iVt}$ with unitaries $U_T(t)$ and $U_V(t)$ respectively such that
    \begin{equation}
     \max_{k=\{1,\dots,6\}}\norm{U_T(ta_k)-e^{-ita_kT}}<\epsilon_T\quad\text{and}\quad\max_{k=\{1,\dots,5\}}\norm{U_V(tb_k)-e^{-itb_kV}}<\epsilon_V\;.
    \end{equation}
    Then, we can say that
    \begin{equation}
        \norm{S_4(t) - U(t)} \le 6\epsilon_T + 5\epsilon_V
    \end{equation}
    and by discretizing the time interval we get the following total error
    \begin{equation}
        \norm{e^{iHt} - U \left( \frac{t}{r} \right)^r} \le r(\epsilon_{tr}^{(4)} + 6\epsilon_T + 5\epsilon_V) \le \epsilon
    \end{equation}
    For simplicity, we can assume $\epsilon_T = \epsilon_V = \epsilon/(12r)$, so that we can solve for $r$ obtaining
    \begin{equation}
        r \ge \left( \frac{12 t^5}{\epsilon} \alpha_4 \right)^{\frac{1}{4}}
    \end{equation}
    The cost for the implementation of the evolution under the kinetic energy operator using Lemma~\ref{theorem:kinetic_energy_operator} depends on the evolution time. In order to be able to use a single phase register for all the different times corresponding to the various values of the coefficients $a_j$ we prepare and use a phase register for the longest of these times and use it for all of them. From Ref.~\cite{PhysRevX.11.011020} we have
    \begin{equation}
    2a_1=a_2=a_5=2a_6=\frac{1}{4-4^{1/3}}\quad\text{and}\quad a_3=a_4 = \frac{1-3a_2}{2}\;,
    \end{equation}
    so that $\max_j|a_j| = a_2$. The cost of this implementation will then be bounded by
    \begin{equation}
    T^{(4)}_{tr}(\epsilon, t) = 5rT_V \left( \frac{\epsilon}{12r} \right) + 6r T_T \left( \frac{\epsilon}{12r}, \frac{t}{r} \frac{1}{4-4^{1/3}}\right)\;.
    \end{equation}
    As for the number of ancilla qubits, we can assume to apply one oracle at a time between $U_T$ and $U_V$, and to be able to use the same ancilla qubits as many times as we want. This means that the number of ancilla qubits required for this algorithm will be the maximum between the number of ancilla qubits required for every oracle taken alone. As explained above, the size of the phase gradient register  needed for Lemma~\ref{lemma:phkick} is obtained using the largest one corresponding to the longest times ($a_2$ and $a_4$) , which is equal to:
    \begin{equation}
    b_{DIAG}^{(4)}=\left\lceil \log_2\left(\frac{6d\eta t\lambda_T}{\epsilon}\frac{12}{4-4^{1/3}}\log_2\left(\frac{6d\eta t\lambda_T}{\epsilon}\frac{12}{4-4^{1/3}}\right)\right)\right\rceil\;,
    \end{equation}
    while the size of the register for the QFT is
    \begin{equation}
        b_{QFT}^{(4)} = \min\left( m-1, \left\lceil\log_2\left(\frac{36d\eta mr}{\epsilon}\right)\right\rceil \right)+1\;,
    \end{equation}
    leading to a total number of ancilla qubits required equal to
    \begin{equation}
        b_{QFT}^{(4)} + b_{DIAG}^{(4)}+\max\left(2b_{QFT}^{(4)}-1, b_{DIAG}^{(4)}, m(m-1), dm-1 \right)\;.
    \end{equation}
\end{proof}

\end{document}